\newcommand{\MED}{$\mathcal{MED}(\mathcal R)$}
\newcommand{\Diam}{$\mathcal{M}$}
\newcommand{\Diamo}{$\mathcal{M}'$}
\newcommand{\ED}{$\mathcal{ED}(R)$}
\newcommand{\CNTR}{$\mathcal{C}(\mathcal{M})$}
\newcommand{\Bd}[1]{\mathcal{P}(\mathcal{#1})}
\newtheorem{theorem}{Theorem}
\newtheorem{lemma}{Lemma}
\newtheorem{corollary}{Corollary}
\newtheorem{definition}{Definition}
\newtheorem{observation}{Observation}
\journal{}
\begin{document}

\begin{frontmatter}

%% Title, authors and addresses

%% use the tnoteref command within \title for footnotes;
%% use the tnotetext command for theassociated footnote;
%% use the fnref command within \author or \address for footnotes;
%% use the fntext command for theassociated footnote;
%% use the corref command within \author for corresponding author footnotes;
%% use the cortext command for theassociated footnote;
%% use the ead command for the email address,
%% and the form \ead[url] for the home page:
%% \title{Title\tnoteref{label1}}
%% \tnotetext[label1]{}
%% \author{Name\corref{cor1}\fnref{label2}}
%% \ead{email address}
%% \ead[url]{home page}
%% \fntext[label2]{}
%% \cortext[cor1]{}
%% \affiliation{organization={},
%%             addressline={},
%%             city={},
%%             postcode={},
%%             state={},
%%             country={}}
%% \fntext[label3]{}

\title{Min-Max Gathering on Infinite Grid}

%% use optional labels to link authors explicitly to addresses:
%% \author[label1,label2]{}
%% \affiliation[label1]{organization={},
%%             addressline={},
%%             city={},
%%             postcode={},
%%             state={},
%%             country={}}
%%
%% \affiliation[label2]{organization={},
%%             addressline={},
%%             city={},
%%             postcode={},
%%             state={},
%%             country={}}
% \tnotetext[mytitlenote]{ }
\author[inst1]{Abhinav Chakraborty}
\ead{abhinav.chakraborty06@gmail.com}

\author[inst2]{Pritam Goswami \corref{mycorrespondingauthor}}
 
\cortext[mycorrespondingauthor]{Corresponding Author: Pritam Goswami}
\ead{pgoswami.cs@gmail.com}

\author[inst3]{Satakshi Ghosh}
\ead{satakshi.ghos@gmail.com }

\affiliation[inst1]{organization={Department of Mathematics, Birla Institute of Technology, Mesra, Ranchi-835215, Jharkhand, India}}

\affiliation[inst2]{organization={Department of Computer Science and Engineering, Sister Nivedita University, Kolkata, India}}

\affiliation[inst3]{organization={Department of Basic Science and Humanities - International Institute of Information Technology, Bhubaneswar, India}}

\begin{abstract}
% Gathering is a fundamental coordination problem in swarm robotics. The goal of the gathering problem is to transform any initial configuration into a goal configuration where the robots move collectively to meet at a point that is not predetermined or known in advance. While the problem has mostly been studied in the continuous domain, some studies explore it in the discrete domain. In this paper, we consider the optimal gathering problem in the infinite grid model, where the objective constraint is to minimize the maximum distance traveled by any robot. This optimization criterion enhances the potential energy efficiency of the robots. The robots in our study are assumed to be autonomous, anonymous, homogeneous, identical, and oblivious. Within this setup, we have characterized all initial configurations for which the constrained optimal gathering problem is unsolvable. For the remaining configurations, we propose a deterministic distributed algorithm that solves the problem for all configurations with the number of robots $n \geq 9$. The proposed algorithm ensures gathering at one of the min-max nodes in the input grid graph. Additionally, this paper provides a complete characterization of the subgraph induced by the set of min-max nodes in infinite grid models.
Gathering is a fundamental coordination problem in swarm robotics, where the objective is to bring robots together at a point not known to them at the beginning. While most research focuses on continuous domains, some studies also examine the discrete domain. This paper addresses the optimal gathering problem on an infinite grid, aiming to improve the energy efficiency by minimizing the maximum distance any robot must travel. The robots are autonomous, anonymous, homogeneous, identical, and oblivious. We identify all initial configurations where the optimal gathering problem is unsolvable. For the remaining configurations, we introduce a deterministic distributed algorithm that effectively gathers $n$ robots ($n\ge 9$). The algorithm ensures that the robots gathers at one of the designated min-max nodes in the grid. Additionally, we provide a comprehensive characterization of the subgraph formed by the min-max nodes in this infinite grid model.
\end{abstract}

%%Graphical abstract
% \begin{graphicalabstract}
% %\includegraphics{grabs}
% \end{graphicalabstract}

% %%Research highlights
% \begin{highlights}
% \item Research highlight 1
% \item Research highlight 2
% \end{highlights}

\begin{keyword}
Gathering \sep Mobile Robots \sep Look-Compute-Move Cycle \sep{Asynchronous} \sep{Infinite Grids} \sep{Weber Nodes} \sep{Min-Max Nodes}
\end{keyword}

\end{frontmatter}

% \linenumbers

%% main text

%% The Appendices part is started with the command \appendix;
%% appendix sections are then done as normal sections
%% \appendix

%% \section{}
%% \label{}

%% If you have bibdatabase file and want bibtex to generate the
%% bibitems, please use
%%
\section{Introduction}
\textit{Swarm robotics} is a branch of robotics that draws inspiration from the behavior of social insects, such as bees, ants, and termites. In contrast to traditional robotics, robot swarms are distributed multi-robot systems consisting of simple and inexpensive robots that can collaboratively execute complex tasks, such as forming patterns, converging at a single point, exploring unknown environments, etc. Indeed, the coordination of autonomous robot swarms is a multidisciplinary field of study that encompasses various domains, including robotics, artificial intelligence, control theory, and distributed computing. Within the realm of distributed computing, researchers have been particularly interested in developing algorithms and methods for efficiently coordinating and controlling robot swarms. 

In the traditional model, the robots are generally assumed to be \textit{anonymous} (no unique identifiers), \textit{autonomous} (no central coordination), \textit{homogeneous} (each robot executes the same deterministic algorithm), \textit{identical} (indistinguishable by their appearances) and \textit{oblivious} (no persistent memory of the past events). Generally, the robots are assumed to be points in the Euclidean plane. In spite of this, there are numerous studies that consider the deployment region of the robots to be an undirected graph. In this paper, the deployment region of the robots is assumed to be an infinite grid graph, which may be considered as a natural discretization of the plane. The robots are assumed to be \textit{disoriented} (the robots neither have any agreement on the common coordinate system, nor do they have any agreement on a common chirality). They do not have any access to a \textit{global coordinate system}. They do not have any explicit means of communication, i.e., they cannot send any messages to other robots. Each robot has its own local coordinate system with the assumption that the origin is the current position of the robot.

A robot is either active or inactive. When a robot is active, it operates in the \textit{Look-Compute-Move} cycle. In the \textit{look} phase, a robot takes a snapshot of the current configuration in terms of its own local coordinate system. In the \textit{Compute} phase, a robot decides to stay idle or to move to one of its adjacent nodes. Finally, a robot moves towards its destination in the \textit{Move} phase, where the destination node is computed in the Compute phase. If the destination of the robot computed in the Compute phase is its current position, then it performs a null movement. In the existing literature, three types of schedulers are described based on the timing and activation of the robots. In the \textit{fully synchronous} (FSYNC) setting, all the robots are activated simultaneously. The activation phase of all robots is divided into global rounds, which means that all robots perform their actions and computations during each round. In the \textit{semi-synchronous} (SSYNC) setting, not all robots are necessarily activated simultaneously in each round. A subset of robots may be activated together, while others may remain idle in a given round. The most realistic and general model is the \textit{asynchronous} setting, where there is no common notion of time. The duration of the Look, Compute, and Move phases for each robot is finite but unpredictable and is decided by the adversary for each robot. In this paper, we have assumed the scheduler to be fair asynchronous, which means that each robot will eventually perform its $LCM$ cycle within a finite time, ensuring that no robot is unfairly delayed indefinitely.

As the deployment region of the robots is assumed to be an infinite grid graph, a robot can move toward one of its adjacent nodes at any point of time. The movement of the robot is assumed to be instantaneous, i.e., each robot can be observed only at the nodes of the graph. The movement is restricted only along the edges of the graph. During the execution of any algorithm, a node may contain multiple robots. When such a situation occurs, we say that a \textit{multiplicity} exists on that particular node. In this paper, we have assumed that the robots are equipped with \textit{global-weak multiplicity detection capability}. With this capability, the robots can detect whether a node is occupied by any robot multiplicity.

In this paper, we have studied the min-max gathering problem in a scenario where the robots are deployed at the nodes of an infinite grid, given as an input. In this gathering problem, the optimization constraint is to minimize the maximum distance traveled by a robot during the gathering. Bhagat et al. \cite{DBLP:conf/spaa/BhagatM21,DBLP:conf/caldam/BhagatM17} were the first to address the same problem in the continuous domain. To the best of our knowledge, this is the first paper that addresses the following optimal gathering problem in infinite grids. In the continuous domain, there is exactly a unique min-max point, which is the center of the minimum enclosing circle of all the robots. Nevertheless, we have shown that the cardinality of the set of min-max nodes may be greater than one in the infinite grid case. In the later section of the paper, we discuss examples where multiple min-max nodes can exist in a given configuration. The existence of multiple min-max nodes in configurations could indeed make the gathering problem more challenging in this scenario.
\subsection{Motivation}
This paper aims to study the constrained optimal gathering problem in the discrete domain, where the robots are deployed at the nodes of an infinite grid, given as an input. Previous research has mostly tackled the gathering problem in the continuous domain, assuming that robots are represented as points and can make infinitesimal movements. In the continuous domain, it is assumed that the robots move with high accuracy and infinite precision. In some continuous models, robots can perform guided movements along specified curves \cite{DBLP:journals/jpdc/PattanayakMRM19,DBLP:journals/dc/CiceroneSN19}. This implies a high level of control and precision in their motion. Real-life robots have limitations in terms of mechanical capabilities and sensor precision. They cannot move with infinite precision and with such high accuracy. To address these real-world limitations, this paper focuses on a grid-based terrain. In this representation, robots can only move along grid lines and can move to one of their neighboring grid nodes in a single step. This grid-based approach is more realistic for actual robot deployments. 

Another motivation for considering this problem is theoretical. In this paper, we have assumed the constrained optimal gathering problem, where the optimal constraint is to minimize the maximum distance traveled by any robot. This problem was first considered by Bhagat et al. \cite{DBLP:conf/spaa/BhagatM21,DBLP:conf/caldam/BhagatM17} in the continuous domain. To the best of our knowledge, we are the first to study this problem in the discrete domain. In the continuous domain, there is exactly a unique min-max point. However, in the infinite grid case, we have shown that the cardinality of the set of min-max nodes can be more than one. The presence of multiple min-max nodes in configurations can indeed complicate the gathering process. Since robots are often constrained by the availability of limited energy resources (e.g., batteries), minimizing their travel distance can enhance their operational efficiency and prolong their lifespan. It is important for the robots to find pathways or trajectories that will enable them to reach the target location while avoiding unnecessary detours.
\subsection{Related Works}
Numerous studies and analyses have been conducted on the gathering and rendezvous problem, providing insight into the possibilities and impossibilities of various assumptions. The problems have been extensively studied in the continuous domain under different schedulers and under different assumptions of agreement in coordinate axes \cite{10.1007/3-540-45061-0_90,PRENCIPE2007222,doi:10.1137/050645221,doi:10.1137/100796534,bouzid2013gathering,prencipe2007impossibility,izumi2012gathering,flocchini2005gathering}. In the continuous domain, the robots are deployed in the Euclidean plane. However, in the discrete domain, the robots are usually deployed at the nodes of an undirected and unlabeled graph. In the discrete domain, a significant portion of research has indeed focused on specific topological structures, such as rings \cite{DBLP:journals/tcs/KlasingMP08,DBLP:journals/tcs/KlasingKN10,DBLP:journals/jda/DAngeloSN14,DBLP:journals/dc/DAngeloSN14}. In \cite{DBLP:journals/tcs/KlasingMP08}, the authors studied the gathering problem in an anonymous ring and proved that the gathering is impossible in a ring without any multiplicity detection capability of the robots. A distributed deterministic algorithm was proposed for all asymmetric and symmetric configurations with an odd number of robots. The proposed algorithms assume the global-weak multiplicity detection capability of the robots. The paper provides impossibility results by proving that gathering is unsolvable in certain configurations. Specifically, they have shown that gathering is not achievable for periodic configurations, which are characterized by rotations and for configurations that admit edge-edge symmetries. Edge-edge symmetries imply that the axis of symmetry passes through two edges of the ring. Klasing et al. \cite{DBLP:journals/tcs/KlasingKN10} investigated configurations in an anonymous ring that allows symmetries and has an
even number of robots. The robots are endowed with global-weak multiplicity detection capability. They solved the problem for all configurations with more than eighteen
robots.

In the discrete domain, the gathering problem has also been studied in finite \cite{DBLP:journals/tcs/DAngeloSKN16,info12110448,CASTENOW2020289} and infinite grids \cite{DBLP:journals/iandc/StefanoN17,DBLP:conf/icara/GhoshSGS23,DBLP:journals/corr/abs-2204-14042,DBLP:journals/ijfcs/BhagatCDM23,DBLP:journals/fuin/BhagatCDM22}. D'Angelo et al. \cite{DBLP:journals/tcs/DAngeloSKN16} studied the gathering problem in finite grids. They proved that even with global-strong multiplicity detection capability, a configuration is ungatherable if and only if it is periodic or symmetric, with the line of symmetry passing through the edges of the grid. They solved the problem for the remaining configurations without assuming any multiplicity detection capability of the robots. Castenow et al. \cite{CASTENOW2020289} proposed an algorithm for the gathering problem on a grid in the fully synchronous setting for oblivious robots with limited visibility. Poudel et al. \cite{info12110448} studied the problem in finite grids, where the robots agree on the direction and orientation of both the coordinate axes and have limited viewing range. In this paper, a distributed time-optimal algorithm was proposed in the asynchronous setting. In infinite grids, Di Stefano et al. \cite{DBLP:journals/iandc/StefanoN17} studied the optimal gathering problem, where the objective constraint is to minimize the sum of the distances traveled by all the robots in order to accomplish the gathering. The authors proposed a distributed algorithm assuming that the robots have global-strong multiplicity detection capability and assured gathering on a Weber point by letting each robot move along the shortest paths toward such a node. A Weber point is a node of the graph that minimizes the sum of the lengths of the shortest paths from it to each robot. A series of papers concerns the gathering over meeting nodes problem in infinite grids \cite{DBLP:journals/ijfcs/BhagatCDM23,DBLP:journals/fuin/BhagatCDM22}, where the gathering is constrained to a specific set of predefined meeting nodes that are situated at the nodes of an infinite grid. In \cite{DBLP:journals/ijfcs/BhagatCDM23}, the optimal gathering over meeting nodes has been studied, where the objective constraint is to minimize the sum of the total distances traveled by each robot. Shibata et al. \cite{DBLP:journals/ijnc/ShibataOSNKK22} studied the gathering problem of seven robots in infinite triangular grids, where the proposed algorithm is optimal with respect to the visibility range. The robots agree on
the orientation and direction of both axes and they operate under a fully synchronous
scheduler. Goswami et al.\cite{DBLP:journals/corr/abs-2204-14042} studied the gathering problem on an infinite triangular grid for $n \geq 2$ robots, where the robots have limited visibility.

\noindent Cicerone et al. \cite{DBLP:journals/dc/CiceroneSN18} considered a variation of gathering problem in the continuous domain, where the robots need to gather at one of the meeting points, which are some prefixed meeting points on the plane. In this paper, they considered two different objective functions, one is to minimize the sum of the distances traveled by each robot in order to finalize the gathering at a meeting point and the other is to minimize the maximum distance traveled by any robot in order to accomplish the gathering. Bhagat et al. \cite{DBLP:conf/spaa/BhagatM21,DBLP:conf/caldam/BhagatM17} was the first to study the min-max gathering of oblivious robots in the continuous domain. The objective is to minimize the maximum distance traveled by any robot. In \cite{DBLP:conf/caldam/BhagatM17}, it was shown that even in the $\mathcal FSYNC$ model, the constrained optimal gathering problem is not solvable for oblivious robots even with multiplicity detection capability. A distributed algorithm was proposed to solve the problem for a set of $n \geq 5$ asynchronous robots under a persistent memory model, which is implemented by endowing the robots with externally visible lights. There are a finite number of colors that can be assigned to these lights. The algorithm uses only two bits of persistent memory and solves the problem in finite time. In \cite{DBLP:conf/spaa/BhagatM21}, the min-max gathering problem was studied for the oblivious robots, where the robots have local-weak multiplicity detection capability and one-axis agreement. The number of robots was assumed to be at least 6. However, in the case when the robots are deployed on the plane, there is a unique min-max point, which is the center of the minimum enclosing circle of all the robots. In case, when the robots are deployed at the nodes of an infinite grid, the configuration may have multiple min-max nodes. 

\subsection{Our Contributions}
In this paper, we have considered the min-max gathering problem of oblivious robots in infinite grids. The objective of this problem is to minimize the maximum distance traveled by any robot. In the continuous domain, the min-max points defined by the points that minimize the maximum distance traveled by any robot are unique. However, we have shown example configurations where multiple min-max nodes may exist in case the robots are deployed at the nodes of an infinite grid. In this paper, we have shown a complete characterization of the min-max nodes in an infinite grid model. We have characterized configurations for which the problem remains unsolvable, and for the rest of the configurations, a deterministic distributed algorithm has been proposed that ensures the gathering of the robots at one of the min-max nodes for $n \geq 9$.
\subsection{Outline}
The following section, section \ref{sec:models}, describes the robot model and some definitions that are relevant in understanding the problem. Section \ref{sec:problem} describes the problem definition that has been considered in this paper. In section \ref{sec:strategy}, a full characterization of the graph induced by the set of min-max nodes is proposed, and the configuration which are ungatherable has been investigated. Section \ref{sec:algorithm} proposes a deterministic distributed algorithm that ensures the gathering of the robots at one of the min-max nodes. In section \ref{sec:correctness}, the correctness of the algorithm has been described. Finally, in section \ref{sec:conclusion}, we conclude the paper with some future directions to work with. 
\section{Models and Definitions} \label{sec:models}
\subsection{Models}
\begin{itemize}
    \item \textit{Robots:} We have assumed the following robot model:
    \begin{itemize}
    \item \textit{Dimensionless:} The robots are modelled as nodes in the input grid.
    \item \textit{Anonymous:} They have no unique identifiers.
    \item \textit{Autonomous:} There is no central coordination, i.e., the robots operate independently, making decisions and taking actions without any external control.
    \item \textit{Identical:} They are indistinguishable by their appearance.
    \item \textit{Homogeneous}: All robots execute the same deterministic algorithm.
		\item \textit{Oblivious}: The robots do not have a persistent memory of past information.
		\item \textit{Silent}: There is no explicit means of direct communication.
		\item \textit{Disoriented}: The robots do not have access to a \textit{global coordinate system}. There is no common compass and no agreement on chirality.
		\item \textit{Unlimited visibility}: They can perceive the entire graph.
    
\end{itemize}
\item \textit{Look-Compute-Move Cycle:} A robot is either active or in an inactive state. If a robot is active, it operates in \textit{Look-Compute-Move} cycle. In the Look phase, a robot obtains a snapshot of the current configuration and observes the positions of all the other robots. Based on the current configuration, each robot computes a destination node in the Compute phase. The destination node could be the current position of the robot or a new position computed according to a deterministic distributed algorithm. If the destination node computed is different from its current position, the robot moves towards that destination node in the Move phase. Otherwise, it remains stationary, i.e., in that case, it performs a \textit{null movement}. At the end of each LCM cycle, all local memory of each robot are erased.
\item \textit{Asynchronous Scheduler:} In this paper, the scheduler is assumed to be a fair asynchronous ($\mathcal {ASYNC}$). In the $\mathcal{ASYNC}$ model, there is no synchronized global clock or common notion of time. Each robot operates independently, making decisions based on its own internal clock or other means of timing. The adversary determines the duration of the Look, Compute and Move phases, which are finite but unpredictable. The robots are visible to each other while moving. This means that other robots during the Look phase can observe a robot's movements in real time, even if computations were made using outdated information about their locations. The configuration perceived by a robot during the Look phase may significantly change before it makes a move. 
\item \textit{Visibility:} The visibility of the robots is assumed to be global, i.e., each robot can perceive the entire input grid graph and can observe the positions of all the other robots.
\item \textit{Movement:} The movement of the robot is assumed to be instantaneous, i.e., any robot performing a Look operation observes all the other robot’s positions only at the nodes of the input grid graph and not on the edges. The movement of the robots is restricted only along the edges of the graph and a robot can move towards one of its four adjacent nodes.
\end{itemize}
\subsection{Terminologies}
\begin{itemize}
\item \textit{Notations:}
    \begin{itemize}
        \item $P=(\mathbb{Z}$, $E')$: infinite path graph where the vertex or node set corresponds to the set of integers $\mathbb {Z}$ and the edge set $E'$ consist of the ordered pairs $\lbrace (i$, $i+1)|$ $ i\in \mathbb{Z} \rbrace$. Essentially, this is an infinite sequence of vertices connected by edges in a linear fashion.
			\item \textit{Cartesian product of the graph $P\times P$}: input grid graph.
			\item $V$ and $E$: represent the set of nodes (vertices) and edges, respectively, of the input grid graph.
			\item $d_M(u,v)$: \textit{Manhattan} distance between the nodes $u$ and $v$.
			\item $\mathcal R=\lbrace r_1,r_2,\ldots,r_n\rbrace$: a set of robots deployed at the nodes of the input grid graph.
   \item $r_i(t):$ represents the position of the robot $r_i$ at time $t$. In other words, it is a function that maps time $t$ to the position occupied by robot $r_i$ at that time. When there's no ambiguity, $r_i$ represents both the robot and its position.
   \item $\mathcal R(t)=\lbrace r_1(t),r_2(t),...,r_n(t)\rbrace$: multiset of robot positions at time $t$. At $t=0$, i.e., in the initial configuration, the positions of all robots are distinct. However, at $t> 0$, $r_i(t)$ may be equal to $r_j(t)$, for some $r_i(t), r_j(t) \in \mathcal R(t)$. 
   
    \end{itemize}
    \item \textit{Minimum Enclosing rectangle:} Let $\mathcal{MER}$ denote the minimum enclosing rectangle of all robots, i.e., $\mathcal{MER}$ is the smallest grid-aligned rectangle that contains all the robots. Assume that the dimension of $\mathcal{MER}$ is $a \times b$. Without any loss of generality, we assume that $a \geq b$. The number of grid edges on a side of $\mathcal{MER}$ is used to define its length. 
    \item \textit{System Configuration:} Consider the function $\lambda_t: V\rightarrow\lbrace 0, 1, 2 \rbrace$ defined at any time $t \geq 0$ by
    \[ \lambda_t(v)=\begin{cases} 
0 & \text{if} \;v \textrm { $\notin \mathcal R(t)$ i.e., $v$ is an empty node} \\
1 & \text{if} \;v \textrm{ $\in \mathcal R(t)$ and $v$ is a single robot position}\\
2 & \text{if} \;v \textrm{ $\in \mathcal R(t)$ and $v$ contains multiplicity }\\

\end{cases}
\]
Note that in the initial configuration, we have assumed that the robots are located at distinct nodes of the graph, i.e., at $t=0$, $\lambda_{0} \in \lbrace 0, 1 \rbrace$. The ordered pair $C(t)=(G, \lambda_t)$ denotes the system configuration at time $t$. Without any ambiguity, we will denote $\lambda_t$ by $\lambda$.
    \item \textit{Symmetry:} Consider the function $\lambda$ defined from the set of nodes $V$ to the set $\lbrace 0, 1, 2 \rbrace$. An \textit{automorphism} of a graph $G=(V$, $E)$ is a bijective map $\phi:V\rightarrow V$ that preserves the adjacency structure of the graph. In other words, $u$ and $v$ are adjacent if and only if $\phi (u)$ and $\phi (v)$ are adjacent. An \textit{automorphism of a configuration}, which extends the concept of automorphisms of graphs, preserves adjacency relationships between the nodes while respecting the labels assigned by the function $\lambda$. In other words, an automorphism of a configuration $C(t)$ is an automorphism $\phi$ of the input grid graph such that $\lambda(v)=\lambda(\phi(v))$, for all $v\in V$. Therefore, an automorphism of a configuration maps a free node to a free node, a single robot position to a single robot position and a multiplicity to a multiplicity. Indeed, the set of all automorphisms of a configuration forms a group, denoted by $Aut (C(t))$. If $|Aut( C(t))|=1$, then the configuration is said to be \textit{asymmetric}. Otherwise, it is said to be \textit{symmetric}. As the grid is embedded in the Cartesian plane, a grid can admit only three types of symmetries, viz; \textit{translation}, \textit{reflective} and \textit{rotational}. Due to the presence of a finite number of robots, a translational symmetry is not possible.

    \begin{itemize}
        \item  A reflection symmetry is characterized by a line of symmetry. A line of symmetry can be either horizontal, vertical, or diagonal lines. These lines of symmetry can pass through either the nodes or the edges of the graph. 
    \item A rotational symmetry is characterized by a center of rotational symmetry and by an angle of rotation. The center of rotation can be a node, the center of an edge, or the center of a unit square in the grid. The angle of rotation can be either $90^{\circ}$ or $180^{\circ}$.
    \end{itemize}
    
  \item \textit{Configuration View\cite{DBLP:journals/ijfcs/BhagatCDM23}:}Without loss of generality, assume that the corners of $\mathcal {MER}$ are denoted by $A$, $B$, $C$ and $D$. We are further assuming that $a \geq b$, where $|AB|=a$ and $|AD|=b$. We associate a ternary string with each corner of $\mathcal {MER}$. For example, the ternary string $s_{AD}(t)$, associated to the corner $A$ is defined by scanning the grid lines parallel to the side $AD$ and associating the function $\lambda(v)$ with each node $v$ encountered. In case $\mathcal {MER}$ is a non-square rectangle, i.e., $p >q$, the direction parallel to the smaller side $AD$ is defined as the \textit{string direction} for the corner $A$. In case $\mathcal {MER}$ is a square, the string direction is determined by the larger lexicographic string among $s_{AB}(t)$ and $s_{AD}(t)$. If the configuration is asymmetric, there will always be a unique largest lexicographic string among all the strings associated with the corners of the rectangle. Let $s_i$ be the largest lexicographic string among all the strings associated with the corners of $\mathcal {MER}$. The corner corresponding to this largest string is referred to as the \textit{key corner}. Any corner that is not the key corner is referred to as a \textit{non-key corner}. The \textit{configuration view} of a node is defined as the tuple ($d', x$), where $d'$ denotes the distance of a node from the {\it key corner} in the {\it string direction} and $x$ denote the type of the node, i.e., $x$ denote whether the node is an empty node, contains a single robot position or a multiplicity. Without loss of generality, assume that the configuration is asymmetric and $A$ is the key corner. Let $s_{AD}$ be the string direction associated with the corner $A$. Once we have a unique key corner, then the robots can agree on a common coordinate axes. $A$ is taken as the origin, $AB$ is taken as the $x-$ axis and $AD$ is taken as the $y-$ axis. In Figure \ref{figure1}, $s_{CB}$ is the largest lexicographic string and $s_{CB}=01000100010100100000010001000101000$. Here, $C$ is the key corner, and $CB$ is the string direction, as $|CB| < |CD|$. Suppose that the configuration is symmetric due to an automorphism $\phi$. In such a scenario, a robot $r$ can not distinguish its position $r(t)$ from the position of $\phi (r)(t)$. As a consequence, no algorithm can distinguish $r$ from $\phi(r)$. We denote the robots $r$ and $\phi(r)$ to be \textit{equivalent} robots.

\end{itemize}
 \begin{figure}[h]
				\centering
			{
				\includegraphics[width=0.297\columnwidth]{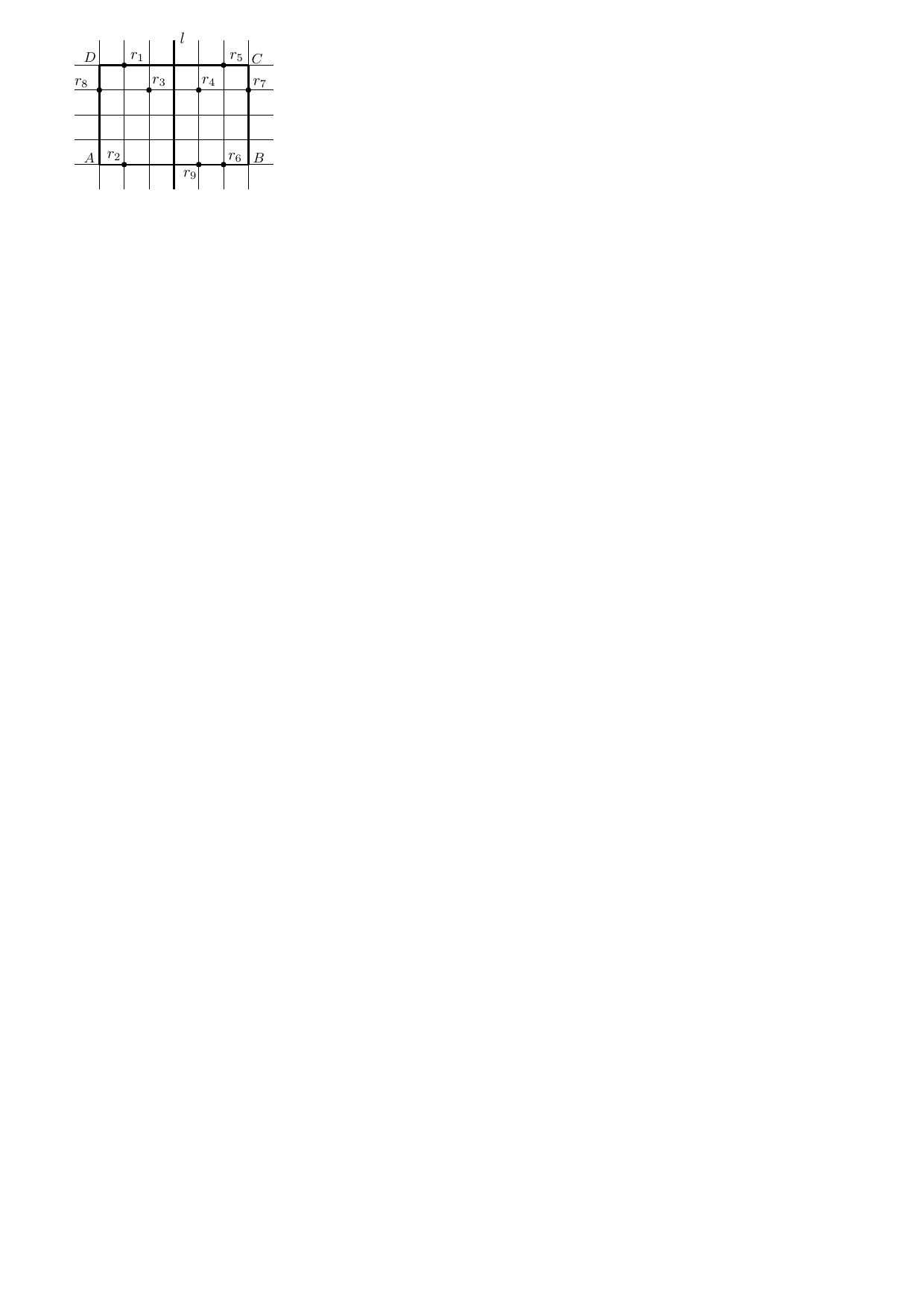}
			}
			\caption{Figure demonstrating the largest lexicographic strings}
			\label{figure1}
		\end{figure}

% \vspace{0.01\linewidth}
% \begin{table}
% \small
% \begin{center}
% \begin{tabular}{ | p{4em} | p{12.5cm}| } 
%    \hline
%    $C(t)$ & The configuration at time $t$ \\
%    \hline
%    $\mathcal{MER}$ & The minimum enclosing rectangle of all robots. \\
%    \hline
%     $L_H(v)$ & For a grid node $v=(x_v,y_v)$ the grid lines passing through $v$ and the line $Y=y_v$\\
%     \hline
%     $L_V(v)$ & For a grid node $v=(x_v,y_v)$ the grid lines passing through $v$ and the line $X=x_v$ \\
%     \hline
%     $D(v)$ & Diagonals of a grid node $v$ \\
%     \hline
%      $N(v)$ & Neighbours of a grid node $v$ \\
%     \hline
%      \Diam & A diamond \\
%     \hline
%      $\mathcal{C}($\Diam$)$ & The center of a diamond. \\
%     \hline
%      $S$(\Diam) & The size of a diamond \\
%     \hline
%      \Diam($\Bd{M}$) & Perimeter of a diamond.\\
%     \hline
%     \ED & Enclosing Diamond.\\
%     \hline
%     \MED & Minimum enclosing diamond.\\
%     \hline
%     $V_M(C(t))$ & The set of all min-max nodes for a configuration $C(t)$ \\
%     \hline
%     $\mathcal{IR(R)}$ & The intersection of all minimum enclosing diamonds \\
%     \hline
%     ${G_M(R)}$ & The subgraph of G induced by the set of nodes $V_M(R)$ \\
%     \hline
%     $\mu_t(v)$  & The number of robots at node $v$ at any time $t$ \\
%     \hline
%     $C_t(m)$  & The centrality of $m$ at time $t$.\\
%     \hline
%     $CB(\mathcal M)$ & Coboundary of a diamond \Diam \\
%     \hline
 
% \end{tabular}
% \vspace{0.01\linewidth}
% \caption{} 
%  \label{TABLE-1}
% \end{center}
% \end{table}
% \vspace{0.01\linewidth}

  \section{Problem Definition} \label{sec:problem}

We first consider the following definitions. These definitions are relevant in understanding the problem definition and the concept of min-max nodes.
\begin{definition}[Grid lines passing through a node ($L_H(v)$ and $L_V(v)$)]
\label{def: L_H(v) and L_V(v)}
For a grid node $v = (x_v,y_v)$, the grid lines passing through $v$ are the lines $Y=y_v$ and $X=x_v$ and are denoted by $L_H(v)$ and $L_V(v)$.
    
\end{definition}
\begin{definition}[Diagonals of a grid node ($D(v)$)]
\label{def: D(v)}
For a grid node $v \in V$, the set $D(V) = \{v' \in V : d_m(v',v) =2$ and $ L_H(v),L_V(v) \ne L_H(v'), L_V(v')\}$ are called the diagonals of $v$. 
\end{definition}
\begin{definition}[Neighbours of a grid node ($N(v)$)]
    \label{def:N(v)}
    For a grid node $v \in V$, the set defined by $N(v) = \{v' \in V : d_m(v,v') =1\}$ is denoted as the neighbours of $v$.
\end{definition}
\begin{definition}[Diamond]
A diamond is a square of side length $k\sqrt{2}$ ($k \in \mathbb{N}$) that is formed from a square after a rotation of the square at an angle of $\frac{\pi}{4}$. In addition, the corners of which, after rotation, are grid nodes. The center of a diamond \Diam, denoted by $\mathcal{C}($\Diam$)$, is defined as the intersection of two diagonals of \Diam. We say that the size of diamond $\mathcal{M}$ is the length of its sides, which is of the form $k \sqrt{2}$. The size of a diamond $\mathcal{M}$ is denoted by  $S$(\Diam)
\end{definition} 

\begin{definition}[Perimeter of a diamond \Diam\  ($\Bd{M}$)]
    \label{def: BD(M)}
    For a diamond \Diam\ of size $k\sqrt{2}$ ($k \in \mathbb{N}$), its perimeter is the set of grid nodes $$\Bd{M} = \{v \in V: d_m(v,\mathcal{C}(\mathcal{M})) = k\}.$$
    
\end{definition}
Moreover, all nodes in a diamond's perimeter except for its four corners can be partitioned into four disjoint classes, $\mathcal{P}_{UR}$ (contains the perimeter nodes that are on the upper right of $\mathcal{C}($\Diam$)$), $\mathcal{P}_{UL}$ (contains the perimeter nodes that are on the upper left of $\mathcal{C}($\Diam$)$), $\mathcal{P}_{DL}$ (contains the perimeter nodes that are on the down left of $\mathcal{C}($\Diam$)$) and $\mathcal{P}_{DR}$ (contains the perimeter nodes that are on the down right of $\mathcal{C}($\Diam$)$). Assume that the corner nodes are denoted by $v_{c_1}$, $v_{c_2}$, $v_{c_3}$ and  $v_{c_4}$. We also assume that $v_{c_1}$ is on $L_H(\mathcal{C}(\mathcal{M}))$ and on the right of $\mathcal{C}($\Diam$)$, $v_{c_3}$ is on $L_H(\mathcal{C}(\mathcal{M}))$ on the left of $\mathcal{C}($\Diam$)$, $v_{c_2}$ is on $L_V(\mathcal{C}(\mathcal{M}))$ above $\mathcal{C}($\Diam$)$ and $v_{c_4}$ is on $L_V(\mathcal{C}(\mathcal{M}))$ below $\mathcal{C}($\Diam$)$. Next, we construct the sets $\mathcal{B}_{UR}, \mathcal{B}_{DR}, \mathcal{B}_{UL}, \mathcal{B}_{DL}$ such that, $\mathcal{B}_{UR} = \mathcal{P}_{UR} \cup \{v_{c_1}, v_{c_2}\}$, $\mathcal{B}_{UL} = \mathcal{P}_{UL} \cup \{v_{c_3}, v_{c_2}\}$, $\mathcal{B}_{DL} = \mathcal{P}_{DL} \cup \{v_{c_3}, v_{c_4}\}$ and $\mathcal{B}_{DR} = \mathcal{P}_{DR} \cup \{v_{c_1}, v_{c_4}\}$. Each of these four sets $\mathcal{B}_{UR}, \mathcal{B}_{DR}, \mathcal{B}_{UL}, \mathcal{B}_{DL}$ are called the boundaries of $\mathcal{M}$. For a pair of boundaries, we call the pair \textit{adjacent} if their intersection is non-empty; otherwise, we call them \textit{opposite}. Note that these directions (right, left, up, down) can be defined according to the string directions corresponding to the corners of $\mathcal{MER}$.
\begin{definition}[Enclosing Diamond (\ED)]
   A diamond \Diam\ is called an enclosing diamond for $\mathcal R$ or, \ED\ if all robots in $\mathcal R$ are either on the boundary or strictly inside of \Diam.
\end{definition}
\begin{definition}[Minimal Enclosing Diamond(\MED)]
   A diamond in \ED, say \Diam,  is called a minimal enclosing diamond for $\mathcal R$, or \MED \  if and only if for any other \Diam$' \in$ \ED, $S$(\Diam$'$) $\ge S$(\Diam).
\end{definition}
\begin{definition}[Min-Max Node]
    We say a grid node $v$ is a min-max node if and only if $\underset{u \in \mathcal R}{\max}\  d_m(u,v) \le \underset{u \in \mathcal R}{\max}\  d_m(u,v')$, for all $v' \in V\setminus \{v\}$. Furthermore, we define the set of all min-max nodes for a configuration $C(t)$ by $V_M(\mathcal R)$, where $\mathcal R$ is the set of all robots in $C(t)$.
\end{definition}
\subsection{\textbf{Problem Definition}} 
Given a configuration $C(t)$, the goal of min-max gathering is to gather all the robots at one of the min-max nodes. In the initial configuration, all the robots are at the distinct nodes of the grid. If the following conditions are satisfied, we consider a configuration to be \textit{final}.
\begin{itemize}
    \item all the robots are at one of the min-max nodes.
    \item each robot is stationary
    \item during the look phase, any robot taking a snapshot will refrain from moving at time $t$.
\end{itemize}
In other words, the goal of the min-max gathering problem is to transform any initial configuration into a final configuration, and this must be accomplished within a finite amount of time.

\section{Technical Difficulties and Overview of our Strategy} \label{sec:strategy}
In this section, we highlight some key difficulties of the problem in the current model. In the continuous domain, it has been observed that there always exists a unique min-max point, which is the center of the minimum enclosing rectangle of all robots \cite{DBLP:conf/caldam/BhagatM17}. However, things are not as straightforward in the discrete case. In general, the number of min-max nodes can be more than one. In Figure \ref{figure2}, $m_1$ and $m_2$ are the min-max nodes of the configuration. Hence, we have the following observation.
\begin{observation}
   It is possible for a configuration in infinite grids to have multiple min-max nodes.
\end{observation}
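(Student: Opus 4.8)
The plan is to prove this observation by explicit construction: exhibit a single configuration $C(t)$ on the infinite grid that provably has at least two distinct min-max nodes. Since the statement is an existence claim (\emph{``It is possible\ldots''}), one well-chosen example suffices, and this is exactly what Figure \ref{figure2} is meant to supply. So the bulk of the work is not a general argument but a careful verification that the two candidate nodes $m_1$ and $m_2$ are genuinely min-max according to the definition, i.e.\ that each minimizes $\max_{u \in \mathcal R} d_m(u,v)$ over all $v \in V$.

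First I would fix notation for the example. Let $r(v) := \max_{u \in \mathcal R} d_m(u,v)$ denote the min-max cost of a node $v$; a node is min-max precisely when it attains $\min_{v \in V} r(v)$. The natural strategy is to place the robots symmetrically so that two nodes tie for the optimum. The cleanest way to force a tie is to exploit the diamond machinery already developed in the excerpt: the min-max cost $r(v)$ equals the smallest $k$ such that the diamond of size $k\sqrt 2$ centered at $v$ is enclosing (an \ED), so $r(v) = k$ where $\mathcal{P}(\mathcal{M})$ with $\mathcal{C}(\mathcal{M}) = v$ just captures all robots. I would choose $\mathcal R$ so that the minimal enclosing diamond can be centered at either $m_1$ or $m_2$ with the same size. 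Concretely, place robots at the extreme Manhattan positions so that for both $m_1$ and $m_2$ the farthest robot is at the same Manhattan distance $k$, while every other grid node $v$ is strictly farther from some robot (cost $> k$). A robot layout that is reflection-symmetric about the perpendicular bisector of the segment $m_1 m_2$, with the ``width'' of the robot set in the $m_1$-to-$m_2$ direction being even, typically produces exactly this even tie between two adjacent central nodes.

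The key steps, in order, are: (1) write down the explicit coordinates of the robots in $\mathcal R$ for the example (matching Figure \ref{figure2}); (2) compute $r(m_1)$ and $r(m_2)$ directly from the Manhattan distances and check $r(m_1) = r(m_2) = k$ for the appropriate $k$; (3) show no node does strictly better, i.e.\ $r(v) \ge k$ for every $v \in V$, which by convexity of the Manhattan ball reduces to checking that shifting away from $\{m_1,m_2\}$ in any of the four grid directions moves $v$ strictly farther from at least one extreme robot; and (4) conclude that $m_1 \ne m_2$ both lie in $V_M(\mathcal R)$, so $|V_M(\mathcal R)| \ge 2$. Step (3) is the only place requiring genuine care, since one must rule out \emph{all} other grid nodes, not just neighbors of $m_1,m_2$; the cleanest route is to show $r$ is a discrete ``unimodal'' function (it is the pointwise maximum of the convex functions $d_m(u,\cdot)$, hence convex along grid lines), so a node minimizing $r$ cannot be improved by a local move, and the two tied minimizers are the only local minima.

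The main obstacle I anticipate is Step (3): verifying optimality against the whole infinite grid rather than a bounded neighborhood. A naive case check over candidate nodes is error-prone and does not obviously terminate. I would neutralize this by leaning on the convexity of $d_m(u,\cdot)$ in each coordinate direction, which guarantees that $r(v) = \max_u d_m(u,v)$ is itself discretely convex along horizontal and vertical grid lines; consequently its minimum over $V$ is attained on a small connected region, and it suffices to check the finitely many nodes in and immediately around that region. This turns an unbounded verification into a bounded one and makes the two-min-max-node phenomenon transparent rather than accidental.
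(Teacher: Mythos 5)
Your proposal is correct and takes essentially the same route as the paper: the paper establishes this observation purely by exhibiting the example of Figure \ref{figure2} and asserting that $m_1$ and $m_2$ are both min-max nodes, with no further verification. Your additional machinery for Step (3) is sound and in fact more careful than what the paper provides --- since $r(v)=\max_{u\in\mathcal R} d_m(u,v)$ is the maximum of convex functions and is coercive, its minimum over the grid is attained in a bounded region, and for a two-node tie one can even certify optimality directly via the triangle inequality $d_m(v,u_1)+d_m(v,u_2)\ge d_m(u_1,u_2)$, which forces $r(v)\ge\lceil d_m(u_1,u_2)/2\rceil$ for every node $v$.
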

\begin{figure}[h]
				\centering
			{
				\includegraphics[width=0.30\columnwidth]{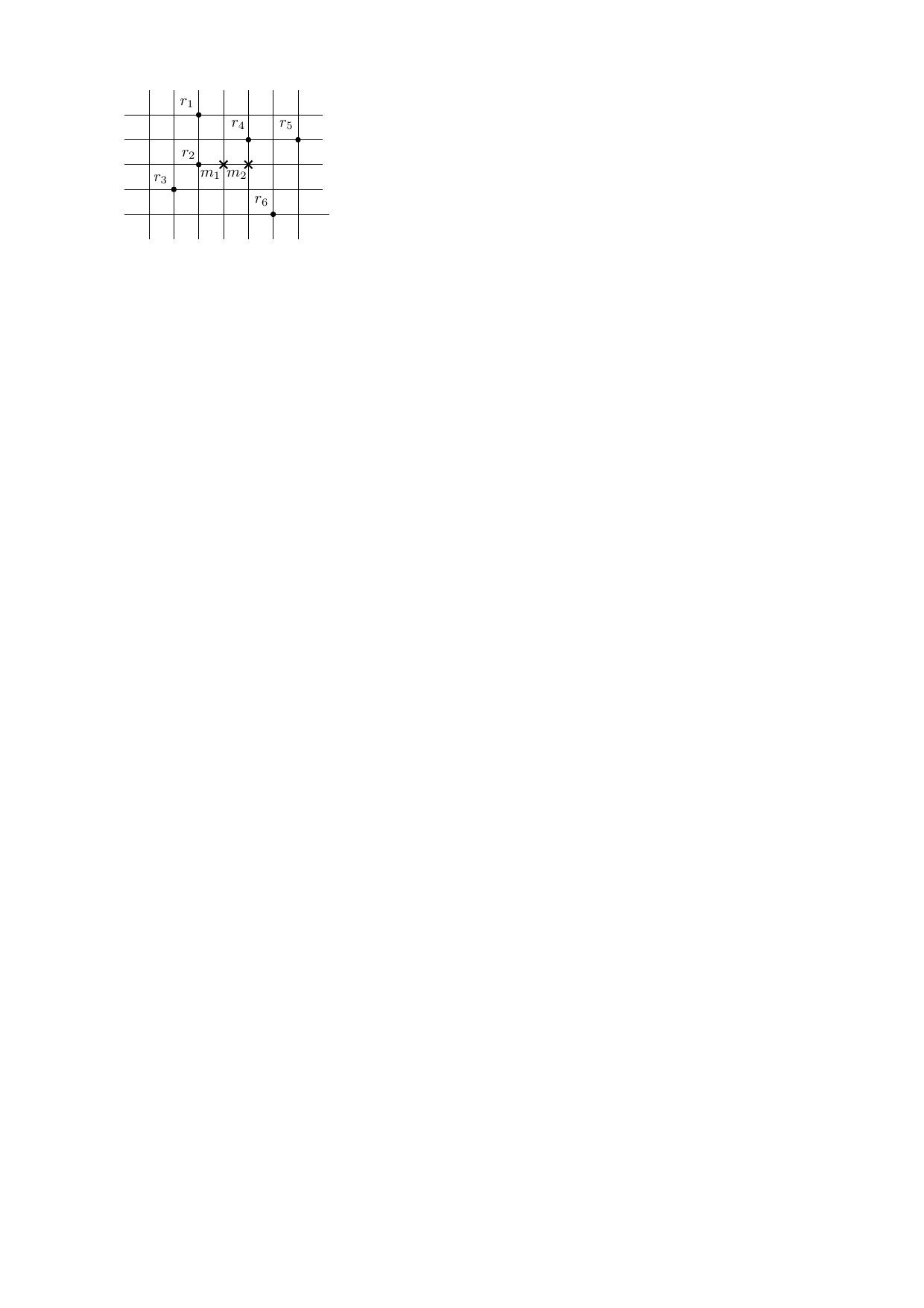}
			}
			\caption{More than one min-max node.}
			\label{figure2}
		\end{figure}
\noindent We next consider the following definition.
\begin{definition}
    The \textit{centrality} of a node $u \in V$ at any time $t$ is defined as $c_t(u)= \sum\limits_{v\in V} d_m(v,u)$ $\mu_t(v)$. A node $u \in V$ is defined as a {\it Weber node} if it minimizes the value $c_t(u)$. The function $\mu_t(v)$ denotes the number of robots at node $v$ at any time $t$. As the robots have weak-multiplicity detection capability, we denote $\mu_t(v)=1$, if $v$ is a robot position and 0 otherwise. In other words, a Weber node $u$ is defined as the node that minimizes the sum of the distances from all the robots to itself.
\end{definition}

  In \cite{DBLP:journals/dc/StefanoN17}, it was proved that a Weber node remains invariant under the movement of a robot towards itself. However, in Figure \ref{figure3}, it may be observed that a min-max node may not remain invariant under the movement of a robot towards itself. In Figure \ref{figure3}, after the movement of the robots $r_1$ and $r_3$ towards $m_1$, $m_1$ does not remain as the min-max node and $m_2$ becomes the unique min-max node in the new configuration. So, we have the following observation.
  \begin{observation}
     If a robot moves towards a min-max node $m$, it may be possible that $m$ is not a min-max node in the new configuration that is formed after the robot moves towards $m$.
  \end{observation}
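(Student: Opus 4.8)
The plan is to establish this existence claim by exhibiting an explicit witness configuration and tracking the min-max cost of two candidate nodes before and after the move; this is exactly the role of Figure~\ref{figure3}. Since the statement only asserts that the phenomenon \emph{can} occur, a single configuration together with a direct verification suffices, and there is no need to argue about asymmetry, solvability, or the scheduler.

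First I would fix a configuration with two adjacent min-max nodes $m_1$ and $m_2$ and record the common min-max radius $R := \max_{u \in \mathcal R} d_m(u, m_1) = \max_{u \in \mathcal R} d_m(u, m_2)$. The cleanest way to certify that $m_1, m_2 \in V_M(\mathcal R)$ is to pass to the rotated coordinates $s = x+y$ and $t = x-y$, in which $d_m$ becomes the $L_\infty$ distance, the enclosing diamond becomes an axis-aligned enclosing square, the min-max radius equals $\lceil \tfrac12\max(s_{\max}-s_{\min},\, t_{\max}-t_{\min})\rceil$, and the min-max nodes are precisely the admissible centers (those satisfying the parity constraint $s\equiv t \pmod 2$ needed for a grid node). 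This makes it routine to check that no grid node attains max-distance below $R$ while both $m_1$ and $m_2$ attain exactly $R$.

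The heart of the argument is the move itself. I would arrange the configuration so that the robot(s) stepping toward $m_1$ are precisely the ones realizing the maximum distance at $m_2$, while a \emph{different}, stationary robot $a$ already realizes the maximum distance $R$ at $m_1$. Because $m_2$ lies on the $m_1$-side of these bottleneck robots, a unit step toward $m_1$ also decreases their distance to $m_2$; hence after the move the new maximum at $m_2$ drops to $R-1$, whereas the maximum at $m_1$ is still $R$, being pinned by $a$. Recomputing in the rotated coordinates then shows the new min-max radius is $R-1$ with $m_2$ as its (unique) center, so $m_1$ is no longer a min-max node, which is exactly the claim.

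The main obstacle is not any single calculation but the \emph{design} of the witness: one must simultaneously guarantee (i) that some distance-$R$ robot at $m_1$ is untouched by the move, so $m_1$'s cost cannot fall, and (ii) that every robot realizing distance $R$ at $m_2$ moves, and moves toward $m_2$ as well as toward $m_1$, so $m_2$'s cost strictly falls. Verifying (ii) is the delicate point, since a step toward $m_1$ shrinks the distance to $m_2$ only for robots lying on the far side of $m_1$ from $m_2$; the configuration of Figure~\ref{figure3} is built to place the bottleneck robots exactly there. Conceptually, the phenomenon is forced by the contrast with Weber nodes: the min-max cost depends only on the extremal robots, so an inward step by a non-extremal robot need not reduce it, whereas the Weber (sum) cost decreases under every inward step, which is why invariance holds in the latter case but fails here.
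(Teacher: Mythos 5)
Your proposal is correct and follows essentially the same route as the paper: the paper's entire argument for this observation is the witness configuration of Figure~3, where after $r_1$ and $r_3$ move towards $m_1$ the maximum distance at $m_1$ stays pinned while $m_2$ achieves a strictly smaller maximum, so $m_1$ drops out of $V_M(\mathcal R)$. Your additional machinery (the rotated coordinates $s=x+y$, $t=x-y$ turning $d_m$ into an $L_\infty$ distance, and the explicit design constraints on the bottleneck robots) is a sound and more systematic way to verify the same example, but it is not a different proof strategy.
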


  \begin{figure}[h]
				\centering
			{
				
    \includegraphics[width=0.60\columnwidth]{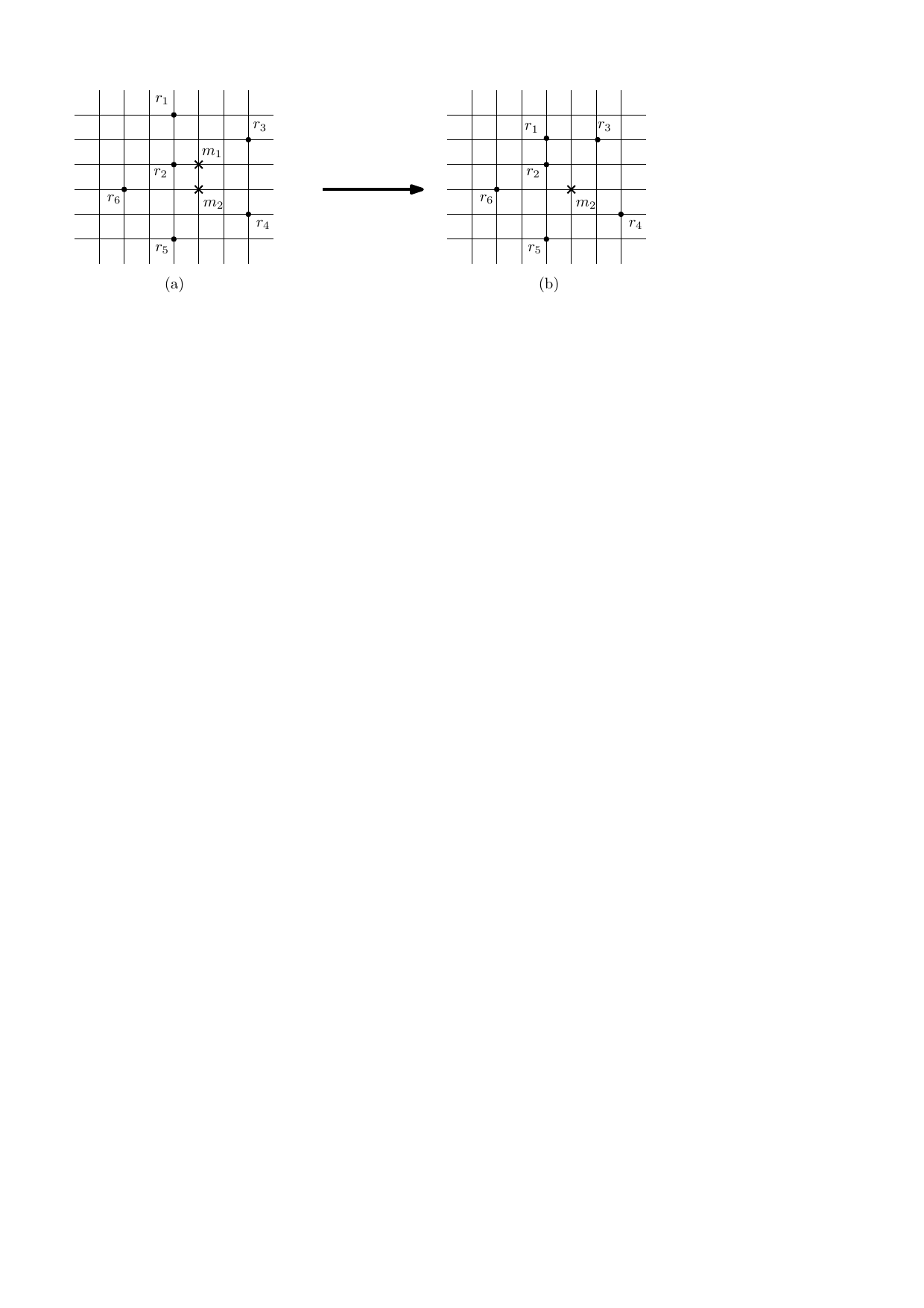}
			}
			\caption{Min-max node is not invariant under the movement of robot towards itself.}
			\label{figure3}
		\end{figure}
  In \cite{DBLP:journals/dc/StefanoN17}, it was also proved that after the movement of a robot towards a Weber node, the set of Weber nodes in the new configuration is a subset of the Weber nodes in the previous case, when the robot was not moving. However, in Figure \ref{figure4}, it may be observed that after the movement of robot $r_1$ towards $m_1$, a new min-max node $m_2$ is created in the new configuration. Next, we have the following observation.
  \begin{observation}
      If a robot moves toward a min-max node $m$, it is possible that after the movement, the new configuration will admit new min-max nodes.
  \end{observation}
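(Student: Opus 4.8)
The statement is a possibility (existence) claim, so the plan is to establish it by exhibiting an explicit witness: a configuration $C(t)$, a designated min-max node $m_1$, and a robot $r_1$ whose single step toward $m_1$ yields a configuration $C(t')$ in which some node $m_2 \notin V_M(\mathcal R(t))$ satisfies $m_2 \in V_M(\mathcal R(t'))$. To make the witness checkable by hand rather than by exhaustive search, I would first reduce the min-max computation to four scalar quantities. Writing each robot as $(x_i,y_i)$ and setting $s_i = x_i + y_i$, $d_i = x_i - y_i$, we have $d_m(r_i,v) = \max(|s_i - (x_v+y_v)|,\, |d_i - (x_v-y_v)|)$, so for a candidate node $v$ with $p = x_v+y_v$ and $q = x_v-y_v$,
\[
\max_{i} d_m(r_i, v) = \max\Big( \max(|s_{\max}-p|,\,|s_{\min}-p|),\ \max(|d_{\max}-q|,\,|d_{\min}-q|) \Big),
\]
where $s_{\max},s_{\min},d_{\max},d_{\min}$ are the extremal values of $s_i,d_i$ over $\mathcal R$. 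This is exactly the statement that the common min-max value is the defining integer $k$ of a \MED\ and that $V_M(\mathcal R)$ is the set of admissible centers $\mathcal{C}(\mathcal{M})$ of minimal enclosing diamonds; in particular $V_M(\mathcal R)$ is governed entirely by these four extremal sums together with the integrality/parity constraint $p \equiv q \pmod 2$.

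With this reduction the construction becomes transparent. I would design the robot set so that exactly one robot, $r_1$, is the \emph{unique} attainer of one extremal value --- say $s_{\max}=s_1$ --- while being strictly interior in the $d$-range, and so that the $d$-span $d_{\max}-d_{\min}$ strictly exceeds the $s$-span $s_{\max}-s_{\min}$. Then the $d$-direction is binding, $k$ equals (half of) the $d$-span, and the admissible centers already form a short segment of nodes sharing the central $q$-value over a range of $p$. The node $m_2$ is placed just outside this segment on the side controlled by $r_1$, so that before the move its worst-case distance strictly exceeds $k$, the violated constraint being $|s_{\max}-p_2|$ realized by $r_1$. A single grid step of $r_1$ toward $m_1$ changes $s_1$ and $d_1$ each by one unit; since $r_1$ is interior in the $d$-range, $d_{\max},d_{\min}$ are untouched and the step lowers $s_{\max}$ by one, relaxing precisely the constraint that excluded $m_2$ while leaving $k$ unchanged. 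The verification then reduces to evaluating $\max_i d_m(r_i,m_2)$ after the move and confirming it has dropped to $k$, so that $m_2 \in V_M(\mathcal R(t'))$ although $m_2 \notin V_M(\mathcal R(t))$. I would take the configuration already drawn in Figure \ref{figure4} as the concrete instantiation and simply record these distance computations.

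The routine but delicate part is the bookkeeping of \emph{which} robot is the farthest (binding) one for each candidate node before and after the move, since a min-max node is certified only by its worst-case robot. I must arrange that the other three extremal sums $s_{\min},d_{\max},d_{\min}$ are attained by robots other than $r_1$, so that the step perturbs only the $s_{\max}$ constraint; that no unintended node becomes a strictly better center (which would lower $k$ and make the comparison vacuous); and that the parity constraint $p\equiv q\pmod 2$ is respected, so that both $m_1$ and the newly admissible $m_2$ are genuine grid nodes of the correct parity. I expect the main obstacle to be exactly this control of the binding constraints and the parity/integrality of the admissible center segment: one must make a single unit move flip $m_2$ from non-optimal to optimal without eliminating $m_1$ or shifting the global min-max value. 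Once the four extremal sums are fixed with $r_1$ as the sole attainer of one of them and interior in the other pair, the remaining checks are direct Manhattan-distance evaluations.
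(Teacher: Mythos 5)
Your proposal is correct in substance, but it takes a noticeably more systematic route than the paper does. The paper establishes this observation purely by exhibiting Figure \ref{figure4} and asserting that, after $r_1$ steps toward $m_1$, the node $m_2$ becomes a min-max node; no distance computations or structural explanation are given. You instead pass to the rotated coordinates $s=x+y$, $d=x-y$, under which $d_m(r_i,v)=\max(|s_i-s_v|,|d_i-d_v|)$, so that $V_M(\mathcal R)$ is controlled by the four extremal values $s_{\max},s_{\min},d_{\max},d_{\min}$ together with the parity constraint $p\equiv q \pmod 2$. This buys two things the paper's by-picture argument does not: (i) a mechanical certificate --- checking membership in $V_M$ reduces to comparing four scalars rather than auditing every robot's distance to every candidate node --- and (ii) an explanation of \emph{why} new min-max nodes appear, namely that the moving robot is the unique attainer of one binding extremal sum while interior in the other, so a single step relaxes exactly one constraint without changing $k$ or the orthogonal span. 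Your reduction also dovetails with the paper's later characterization of $G_M(\mathcal R)$ as (disconnected) step-graphs or four-cycles, since the admissible-center set in $(p,q)$ coordinates is precisely that lattice segment. The only thing still owed is the final bookkeeping you yourself flag: writing down explicit coordinates (e.g., those of Figure \ref{figure4}) and confirming that the step toward $m_1$ is genuinely a shortest-path step that decreases $s_1$, that $r_1$ is the sole attainer of $s_{\max}$, and that $m_2$'s excluded constraint drops from $k+1$ to $k$. These are routine once the four extremal sums are fixed as you prescribe, so I regard the argument as complete in outline and strictly more informative than the paper's.
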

   \begin{figure}[h]
				\centering
			{
				
    \includegraphics[width=0.67\columnwidth]{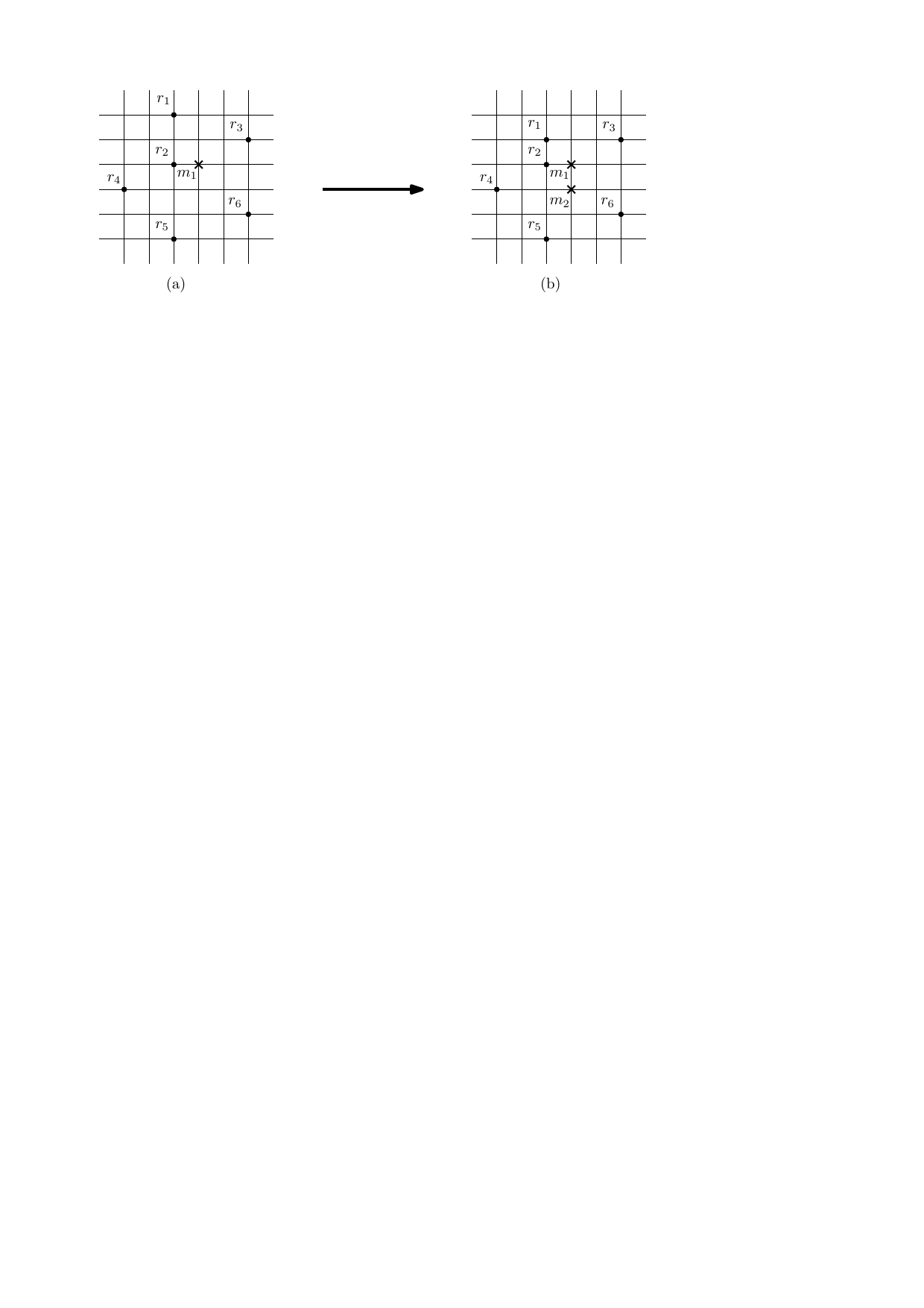}
			}
			\caption{New min-max node is created after the movement of a robot towards a min-max node.}
			\label{figure4}
		\end{figure}
  A common approach to solve any gathering problem is to choose a node that remains invariant under the movement of the robots towards itself. In this way, a multiplicity can be created at one of such nodes and all the other robots can move towards the multiplicity to ensure gathering (robots have multiplicity detection capability). From Observations 1 and 2, it follows that the main challenge with implementing this strategy in our setting is that it is not always true that for any configuration, the set of min-max nodes remains invariant under robot movement. In order to maintain desirable properties, it is necessary to use different strategies and conduct deeper analyses when dealing with min-max nodes.
 
 \subsection{Characterization of the min-max nodes in Infinite Grids}
  First, we consider the following definitions.
  \begin{definition}
     The intersection of all minimal enclosing diamonds (i.e., $\bigcap_{\mathcal{M} \in \mathcal{MED} (\mathcal R)}  \mathcal{M}$) denoted by $\mathcal{IR}(\mathcal R)$ is a rectangle and is called the \textit{Intersection Rectangle} for $\mathcal R$.  
  \end{definition}
Note that the sides of $\mathcal{IR}(\mathcal R)$ are parallel to the sides of the minimal enclosing diamonds. So, we have the following observation.

\begin{observation}
\label{obs:1}
    All robot positions in $\mathcal R$ must be either on the boundaries or strictly inside $\mathcal{IR}(\mathcal R)$. In fact, each boundary
    of $\mathcal{IR}(\mathcal R)$ contains at least one robot position. 
\end{observation}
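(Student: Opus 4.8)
The plan is to pass to the rotated coordinate system $u = x+y$, $v = x-y$, in which the Manhattan distance becomes the Chebyshev distance, $d_m(p,q) = \max(|u_p - u_q|,\,|v_p - v_q|)$ (this is the identity $|a|+|b| = \max(|a+b|,|a-b|)$), and every diamond becomes an axis-parallel square. Concretely, a diamond $\mathcal{M}$ of size $k\sqrt 2$ centered at $\mathcal{C}(\mathcal{M})$ is the product of intervals $[u_c - k,\,u_c + k]\times[v_c - k,\,v_c + k]$, where $(u_c,v_c)$ are the rotated coordinates of its center. Before starting I would record two facts that drive the whole argument: (a) a node is a min-max node precisely when it is the center of some $\mathcal{M}\in\mathcal{MED}(\mathcal{R})$, and the common min-max value equals the size parameter $k$ of every such $\mathcal{M}$; and (b) since the membership conditions on the two rotated coordinates are independent, the intersection of a family of such squares is itself the product of the intersected intervals, i.e. $\bigcap_\alpha (A_\alpha\times B_\alpha) = (\bigcap_\alpha A_\alpha)\times(\bigcap_\alpha B_\alpha)$.

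For the first assertion I would argue directly from the definitions. By definition of $\mathcal{ED}(R)$, every robot of $\mathcal{R}$ lies on the boundary of or strictly inside each diamond of $\mathcal{ED}(R)$, in particular each $\mathcal{M}\in\mathcal{MED}(\mathcal{R})$; hence every robot lies in $\bigcap_{\mathcal{M}\in\mathcal{MED}(\mathcal{R})}\mathcal{M} = \mathcal{IR}(\mathcal{R})$, that is, on the boundary of or strictly inside $\mathcal{IR}(\mathcal{R})$. This step needs nothing beyond the definitions of $\mathcal{ED}(R)$, $\mathcal{MED}(\mathcal{R})$ and $\mathcal{IR}(\mathcal{R})$.

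The substance is the second assertion. Writing $u^R_{\min}, u^R_{\max}$ and $v^R_{\min}, v^R_{\max}$ for the extreme robot coordinates in the rotated frame, fact (a) gives $k = \lceil \tfrac12 \max(u^R_{\max}-u^R_{\min},\, v^R_{\max}-v^R_{\min})\rceil$, and a grid node $c$ is a min-max node exactly when $u_c\in[u^R_{\max}-k,\,u^R_{\min}+k]$ and $v_c\in[v^R_{\max}-k,\,v^R_{\min}+k]$. Using fact (b) I would then compute $\mathcal{IR}(\mathcal{R})$ as the product $[\max_c u_c - k,\,\min_c u_c + k]\times[\max_c v_c - k,\,\min_c v_c + k]$, the extrema being over min-max centers $c$. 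For instance its upper-right boundary lies on the line $u = (\min_c u_c)+k$, so proving that this boundary carries a robot reduces to the identity $\min_c u_c = u^R_{\max}-k$ (and the three analogous identities obtained by swapping $u\leftrightarrow v$ and reversing each axis), which places the boundary on the line $u = u^R_{\max}$ occupied by the robot(s) of maximal $u$. The inequality $\min_c u_c \ge u^R_{\max}-k$ is immediate from the feasibility range, so the real task is the reverse inequality: to exhibit an actual min-max node attaining $u_c = u^R_{\max}-k$ together with an admissible $v_c$.

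Exhibiting that extreme center is where the difficulty concentrates, and it is a genuinely discrete obstacle rather than a geometric one. A min-max node must be a grid node, so $u_c\equiv v_c\pmod 2$; when the feasible range $[v^R_{\max}-k,\,v^R_{\min}+k]$ has length at least one it contains integers of both parities and an admissible $v_c$ is produced at once, and likewise in the generic case both $u$- and $v$-ranges are nondegenerate. The delicate situation is when the binding direction is tight with even span, so that the range collapses to the single value $v_c=\tfrac12(v^R_{\max}+v^R_{\min})$, whose parity is then forced; there one must verify that this parity is compatible with $u_c = u^R_{\max}-k$, relating it to the parities of the extremal robots (each robot satisfies $u_r\equiv v_r\pmod 2$). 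I expect this parity bookkeeping to be the main obstacle, and the careful reader should note that it is exactly the point at which degenerate (small, tightly packed, parity-clashing) configurations must be scrutinized; once the extreme center is secured, the remaining three boundaries follow verbatim by the symmetric argument.
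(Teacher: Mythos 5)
Your treatment of the first assertion is correct, and it is all that can be checked against the paper, because the paper states this observation with no proof at all (only the remark that the sides of $\mathcal{IR}(\mathcal R)$ are parallel to those of the minimal enclosing diamonds). Your reduction of the second assertion is also the right one: in the rotated coordinates the min-max nodes are exactly the grid points (those with $u_c\equiv v_c\pmod 2$) lying in the box $[u^R_{\max}-k,\,u^R_{\min}+k]\times[v^R_{\max}-k,\,v^R_{\min}+k]$, the rectangle $\mathcal{IR}(\mathcal R)$ is the product of intervals you write down, and a given side of $\mathcal{IR}(\mathcal R)$ carries a robot precisely when the corresponding endpoint of the feasible interval is attained by some min-max node. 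One slip along the way: the formula $k=\lceil\tfrac12\max(u^R_{\max}-u^R_{\min},\,v^R_{\max}-v^R_{\min})\rceil$ is not always valid, since the parity constraint $u_c\equiv v_c$ can force $k$ to be one larger when both feasible intervals would otherwise collapse to single points of opposite parity.

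The genuine gap is exactly the step you defer: ``one must verify that this parity is compatible with $u_c=u^R_{\max}-k$.'' That verification cannot be completed, because the compatibility can fail, and when it fails the second sentence of the observation is false. Take the nine robots $(0,0)$, $(1,0)$, $(1,1)$, $(2,1)$, $(3,0)$, $(3,-2)$, $(3,-3)$, $(4,-1)$, $(4,-2)$. Here $u^R_{\max}=3$, $v$ ranges over $[0,6]$, so $k=3$ and the feasible box is $[0,3]\times\{3\}$, whose only grid points are $(1,3)$ and $(3,3)$; hence $V_M(\mathcal R)=\{(2,-1),(3,0)\}$ and $\mathcal{IR}(\mathcal R)$ is the rectangle with corners $(0,0)$, $(2,2)$, $(5,-1)$, $(3,-3)$. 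Its side through $(2,2),(3,1),(4,0),(5,-1)$ lies on the line $u=4$ while no robot has $u>3$, so that boundary is empty of robots. The same failure already occurs with $|V_M(\mathcal R)|=1$: condition $(C2)$ of Lemma~\ref{lemma: condition for unique minmax point} explicitly describes a minimal enclosing diamond with one robot-free boundary, and when the min-max node is unique that diamond \emph{is} $\mathcal{IR}(\mathcal R)$. So your instinct that the parity-clashing degenerate configurations are where the argument must be scrutinized is exactly right, but the scrutiny yields a counterexample rather than a proof. What does survive, by your own interval argument, is the claim for any side whose transverse feasible range is non-degenerate; in particular all four sides carry robots when $G_M(\mathcal R)$ is a four-cycle or a connected step-graph with at least two vertices, and in the disconnected case only the two short sides of $\mathcal{IR}(\mathcal R)$ are guaranteed, which is the form in which the observation is actually invoked later (e.g.\ in Lemma~\ref{lemma 20}).
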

We consider the following definitions necessary for the characterization of min-max nodes.
\begin{definition}[Sub-graph induced by min-max nodes ($G_M(\mathcal R)$)]
Let $G = (V,$ $ E)$ be the input infinite grid graph. Then $G_M(\mathcal R)$ is the sub-graph of $G$ induced by the set of nodes $V_M(\mathcal R)$.
    \label{def:minmax induced subgraph}
\end{definition}

\begin{definition}[Step-Graph]
    \label{def:Step Graph}
    Let $G=(V,$ $E)$ be the input infinite grid graph. A node-induced sub-graph $G' = (V',$ $E')$ of $G$ is called a step-graph if the following conditions are satisfied.
    \begin{enumerate}
        \item $G'$ is a line graph.
        \item For any three vertices $v_1, v_2$ and $v_3$ in $V'$ where $(v_1,v_2), (v_2,v_3) \in E'$, $v_1, v_2$ and $v_3$ must occupy three corners of a unit square in $G$.  
    \end{enumerate}
\end{definition}
In Figure \ref{step}, the nodes $m_1$, $m_2$ and $m_3$ forms a step-graph.
\begin{figure}
    \centering
    \includegraphics[width=0.34\linewidth]{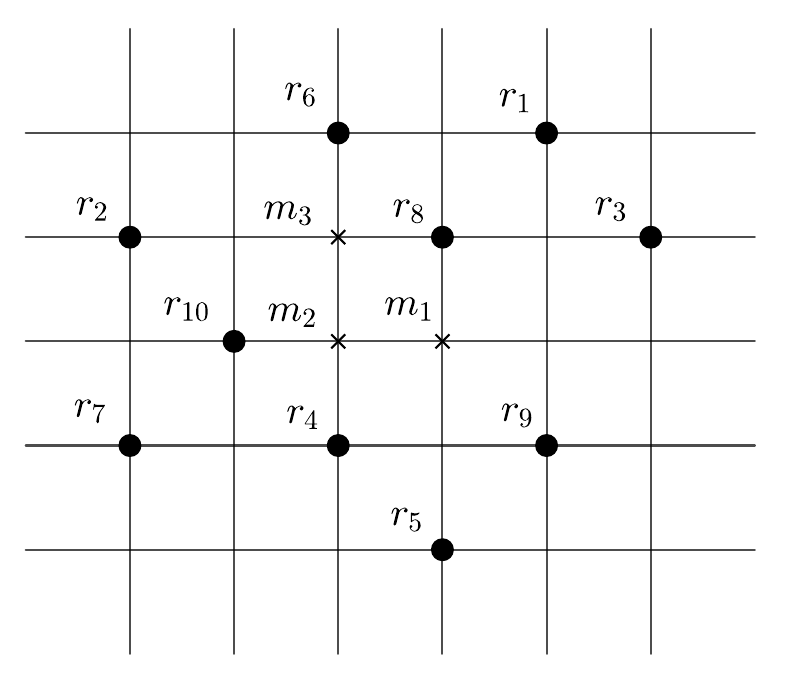}
    \caption{Example of a configuration where $G_M(\mathcal R)$ is a step-graph.}
    \label{step}
\end{figure}
\begin{definition}[Disconnected Step-Graph]
    \label{def: disconnected step graph}
     Let $G=(V,$ $E)$ be the input infinite grid graph. A node-induced sub-graph $G' = (V',$ $E')$ of $G$ is called a Disconnected Step-Graph if the following conditions are satisfied.
     \begin{enumerate}
     \item $|V'| > 1$.
         \item $\forall v \in V'$, $L_V(v)$ and $L_H(v)$ contains no other node of $V'$.
         \item $\forall\  v \in V'$, there is at least one and at most two vertices of $V'$ on $D(v)$.
     \end{enumerate}
\end{definition}
\begin{figure}
    \centering
    \includegraphics[width=0.35\linewidth]{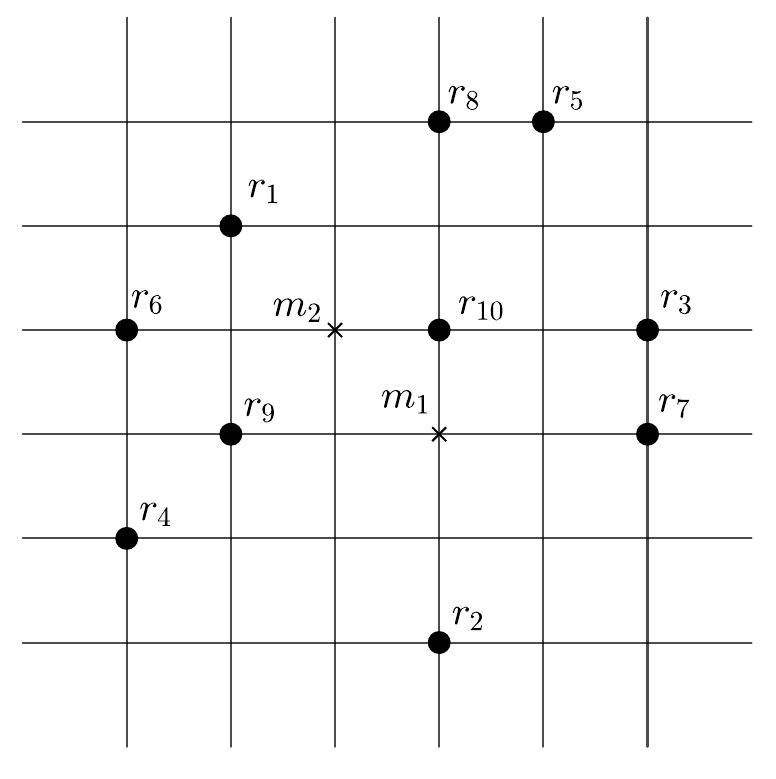}
    \caption{Example configuration, where $G_M(\mathcal R)$ is a disconnected step-graph.}
    \label{disconnected}
\end{figure}
In Figure \ref{disconnected}, the min-max nodes $m_1$ and $m_2$ forms a disconnected step-graph. We next consider the following lemmas.
\begin{lemma}
    \label{lemma: center of a MED(P) is in V_M(P)}
    For a finite set of grid points in $\mathcal R$, if  \Diam $\in$ \MED, then \CNTR $\in V_M(\mathcal R)$.
\end{lemma}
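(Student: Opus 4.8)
The plan is to reduce the whole statement to one elementary fact: a diamond of size $k\sqrt{2}$ centered at a grid node $c$ is exactly the closed Manhattan ball $\{v \in V : d_m(v,c) \le k\}$. First I would record that the center of any diamond is itself a grid node. Indeed, the four corners lie at distance $k$ from the center along $L_H(\mathcal{C}(\mathcal{M}))$ and $L_V(\mathcal{C}(\mathcal{M}))$, so writing $c=(x_0,y_0)$ forces $x_0\pm k$ and $y_0\pm k$ to be integers, whence $x_0,y_0\in\mathbb{Z}$. This is not a cosmetic remark: $V_M(\mathcal R)$ consists of grid nodes, so the claim $\mathcal{C}(\mathcal{M})\in V_M(\mathcal R)$ is only meaningful once $c\in V$. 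Combining the definition of the perimeter ($d_m(\cdot,c)=k$) with ``strictly inside'' ($d_m(\cdot,c)<k$), the enclosing condition for a radius-$k$ diamond about $c$ becomes precisely $\max_{u\in\mathcal R} d_m(u,c)\le k$.

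Next I would translate the minimality of $\mathcal{M}$ into a comparison at an arbitrary candidate node. Fix any grid node $v$ and set $r_v=\max_{u\in\mathcal R}d_m(u,v)$. Since $\mathcal R$ is a set of at least two distinct points, no node can be the common position of all robots, so $r_v\ge 1$ and the Manhattan ball of radius $r_v$ about $v$ is a genuine diamond; by construction every robot satisfies $d_m(u,v)\le r_v$, so this diamond lies in $\mathcal{ED}(\mathcal R)$. Because $\mathcal{M}$ is a minimal enclosing diamond, its size $k\sqrt{2}$ cannot exceed $r_v\sqrt{2}$, giving $k\le r_v$ for every grid node $v$. Coupling this with the enclosing property of $\mathcal{M}$ itself, namely $\max_{u}d_m(u,c)\le k$, yields $\max_{u}d_m(u,c)\le k\le\max_{u}d_m(u,v)$ for all $v\in V$, which is exactly the defining inequality of a min-max node; hence $c=\mathcal{C}(\mathcal{M})\in V_M(\mathcal R)$. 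As a consistency check, taking $v=c$ above forces $k\le\max_u d_m(u,c)$, so in fact $\max_u d_m(u,c)=k$, i.e. the radius of a minimal enclosing diamond equals the min-max value attained at its center.

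The argument is short because all the content sits in the identification of diamonds with Manhattan balls; what I expect to require the most care is the bookkeeping rather than any deep idea. Concretely, I must make sure diamond radii $k\in\mathbb{N}$ and the quantities $r_v$ live on the same integer scale, so that $k\sqrt{2}\le r_v\sqrt{2}$ really gives $k\le r_v$ with no rounding gap; this uses that $d_m$ is integer-valued on grid nodes. I would also guard the single potential pitfall by invoking that $\mathcal R$ is a set of distinct points with $|\mathcal R|\ge 2$: this guarantees $r_v\ge 1$, so the comparison diamond about each $v$ is admissible. A radius-zero ``diamond'' is not permitted, and it is exactly the degenerate case $|\mathcal R|=1$ that would decouple the minimal-diamond radius from the min-max value and break the statement. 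No use of the characterization of $G_M(\mathcal R)$, the step-graph structure, or the intersection rectangle $\mathcal{IR}(\mathcal R)$ is needed here; this lemma is the base fact on which those later results are built.
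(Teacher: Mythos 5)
Your proof is correct, and it reorganizes the paper's argument into a direct one rather than a proof by contradiction. The paper fixes an arbitrary $v' \in V_M(\mathcal R)$ (tacitly relying on $V_M(\mathcal R) \neq \emptyset$), builds the radius-$k$ diamond centered at $v'$, shows it must be a minimal enclosing diamond with a robot on its perimeter so that $\max_{u}d_m(u,v')=k$, and extracts the contradiction $k>k$. You instead verify the defining inequality of a min-max node at $\mathcal{C}(\mathcal{M})$ against \emph{every} grid node $v$: the Manhattan ball of radius $r_v=\max_{u\in\mathcal R}d_m(u,v)$ about $v$ is an enclosing diamond, so minimality gives $k\le r_v$, while the enclosing property of $\mathcal{M}$ itself gives $\max_{u}d_m(u,\mathcal{C}(\mathcal{M}))\le k$. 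The engine is the same in both proofs --- identify a diamond with a closed Manhattan ball and construct a comparison diamond at a candidate center --- but your version buys three things: it does not presuppose that $V_M(\mathcal R)$ is nonempty; it delivers as a byproduct that the min-max value equals $k$, a fact the paper re-derives inside several later lemmas; and it makes explicit the degenerate case $r_v=0$, which the paper ignores and which genuinely matters --- if all robots occupied a single node $p$, the minimal enclosing diamonds (of size $\sqrt{2}$) would have centers at $p$ and its four neighbours, and the neighbours are not min-max nodes, so the statement really does need the robots to occupy at least two distinct positions. Your observation that the center of a diamond is forced to be a grid node is also a worthwhile addition that the paper leaves implicit.
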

\begin{proof}
    Let, the size of the diamond $\mathcal{M}$ be $S($\Diam$) = k \sqrt{2}$, for all \Diam $\in $ \MED, where $k \in \mathbb{N}$. For the sake of contradiction, assume that there exists some diamond \Diamo $\in $ \MED\  such that $\mathcal{C}($\Diamo$) \notin V_M(\mathcal R)$. Since, \Diamo $\in $ \MED, by the definition of a min-max node there must exist $ u' \in \mathcal R$ such that $d_m(u', \mathcal{C}(\mathcal{M'})) = k = \underset{u\in \mathcal R}{\max}\ d_m(u, \mathcal{C}(\mathcal{M}'))$. As ~$V_M(\mathcal R)~\ne~\phi$ consider $v' \in V_M(\mathcal R)$. This implies, 
    \begin{equation}
    \label{eq:1}
        k = \underset{u\in \mathcal R}{\max}\ d_m(u, \mathcal{C}(\mathcal{M}')) > \underset{u\in \mathcal R}{\max}\ d_m(u, v')
    \end{equation}
Now, assume that $\mathcal{M}_{v'}$ is a diamond with size $S(\mathcal{M}_{v'})=k\sqrt{2}$ and $v' = \mathcal{C}(\mathcal{M}_{v'})$. We claim that $\mathcal{M}_{v'} \in \mathcal R$. In order to prove the claim, it is sufficient to show that for all $u \in \mathcal R$, $d_m(u,v') \le k$. If there is a $u_0 \in \mathcal R$ such that $d_m(u_0,v') > k$, then $\underset{u \in \mathcal R}{\max}\ d_m(u,v') \ge d_m(u_0,v') > k$. This contradicts the assumption that $v' \in V_M(\mathcal R)$. Thus, $\mathcal{M}_{v'} \in  \mathcal {MED (R)}$ and so there must be a robot which is on the boundary of $\mathcal{M}_{v'}$. Thus, $\underset{u \in \mathcal R}{\max}\ d_m(u,v') = k$. Now, from equation~\ref{eq:1}, we get $k > k$, which clearly implies a contradiction. Hence, $\mathcal{C}(\mathcal{M}') \in V_M(\mathcal R)$.
\end{proof}

\begin{lemma}
    \label{lemma: any minmax point is a center of some MED}
    If $v' \in V_M(\mathcal R)$, then $\exists \mathcal{M}_{v'} \in  \mathcal {MED (R)}$ such that $v' = \mathcal{C}(\mathcal{M}_{v'})$.
\end{lemma}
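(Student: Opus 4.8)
The plan is to exploit the elementary identification between diamonds and Manhattan balls. First I would observe that a diamond of size $k\sqrt{2}$ centered at a grid node $c$ is precisely the closed Manhattan ball $\{w \in V : d_m(w,c) \le k\}$, its perimeter being exactly the set of nodes at distance $k$ from $c$ (Definition~\ref{def: BD(M)}). Consequently such a diamond belongs to $\mathcal{ED}(R)$ if and only if $\max_{u \in \mathcal R} d_m(u,c) \le k$; in particular, the smallest enclosing diamond centered at a fixed $c$ has size $\big(\max_{u \in \mathcal R} d_m(u,c)\big)\sqrt{2}$, since distances on the grid are integers.

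Next I would pin down the common size of the minimal enclosing diamonds. Write $\rho = \min_{w \in V}\ \max_{u \in \mathcal R} d_m(u,w)$ for the min-max value; note that $\rho$ is an integer, since all robots and all nodes lie on the grid. Minimizing the quantity $\max_{u \in \mathcal R} d_m(u,c)$ over all grid-node centers $c$ shows that the smallest attainable size of an enclosing diamond is exactly $\rho\sqrt{2}$. Hence every $\mathcal{M} \in \mathcal{MED}(\mathcal R)$ satisfies $S(\mathcal{M}) = \rho\sqrt{2}$; that is, the integer $k$ for which $S(\mathcal{M}) = k\sqrt{2}$ equals $\rho$.

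Finally, given $v' \in V_M(\mathcal R)$, the definition of a min-max node gives $\max_{u \in \mathcal R} d_m(u,v') = \rho$. I would then take $\mathcal{M}_{v'}$ to be the diamond of size $\rho\sqrt{2}$ centered at $v'$, which is a valid diamond precisely because $\rho$ is a nonnegative integer. Since $d_m(u,v') \le \rho$ for every $u \in \mathcal R$, all robots lie on the boundary or strictly inside $\mathcal{M}_{v'}$, so $\mathcal{M}_{v'} \in \mathcal{ED}(R)$; as its size equals the minimal size $\rho\sqrt{2}$, it is in fact a minimal enclosing diamond, and by construction $v' = \mathcal{C}(\mathcal{M}_{v'})$, which is the claim.

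The step I expect to be the only real obstacle is the second one: verifying cleanly that the common size of the minimal enclosing diamonds coincides with the min-max radius $\rho$ (equivalently, that $k = \rho$), together with the integrality of $\rho$ that guarantees the diamond of size $\rho\sqrt{2}$ exists as a genuine diamond with grid-node corners. This is essentially the same correspondence already used inside the proof of Lemma~\ref{lemma: center of a MED(P) is in V_M(P)}; once it is recorded, the remaining steps follow immediately from the definitions of $\mathcal{ED}(R)$ and $\mathcal{MED}(\mathcal R)$, so the argument is short and the construction of $\mathcal{M}_{v'}$ is the crux.
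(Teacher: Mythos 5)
Your proof is correct and follows essentially the same route as the paper's: both construct the diamond of size $k\sqrt{2}$ centered at $v'$ and verify it lies in $\mathcal{MED}(\mathcal R)$, the key point being that the common size $k$ of the minimal enclosing diamonds coincides with the min-max value $\max_{u\in\mathcal R} d_m(u,v')$. The paper carries out this verification as a two-case contradiction (no robot on the boundary of $\mathcal{M}_{v'}$ versus a robot strictly outside it), whereas you argue directly via the Manhattan-ball correspondence, but the underlying idea is identical.
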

\begin{proof}
     Let, the size of the diamond $\mathcal{M}$ be  $S($\Diam$) = k \sqrt{2}$, for all \Diam $\in \mathcal {MED(R)}$, where $k \in \mathbb{N}$. Now, let $\mathcal{M}_{v'}$ be a diamond such that $S(\mathcal{M}_{v'}) = k \sqrt{2}$ and  $v' = \mathcal{C}(\mathcal{M}_{v'})$, where $v' \in V_M(\mathcal R)$. Now, we claim that $\mathcal{M}_{v'} \in \mathcal {MED(R)}$. Otherwise, if the claim is not true, then there could be two possible cases. 
     
     \noindent Case 1: All the robot positions are strictly inside $\mathcal{M}_{v'}$. This implies that there are no robots on the boundary of $\mathcal{M}_{v'}$. Thus, the size of the minimal enclosing diamond will be of size at most $(k-1)\sqrt{2}$, which contradicts our assumption about the size of minimal enclosing diamonds.

     \noindent Case 2: There exists at least one robot position which is located outside of $\mathcal{M}_{v'}$. If there exists such a robot $u_0$ which is strictly outside of $\mathcal{M}_{v'}$, then $d_m(u_0,v') > k$. This implies, $$\underset{u \in \mathcal R}{\max}\ d_m(u,v') > \underset{u\in \mathcal R\  \&\  \mathcal{M} \in \mathcal{MED}(\mathcal R)}{\max}\ d_m(u,\mathcal{C}(\mathcal{M})) = k.$$ Thus, $v' \notin V_M(\mathcal R)$ which is a contradiction. Hence, $\mathcal{M}_{v'} \in $ \MED\ is the diamond that serves our purpose.
\end{proof}
\begin{lemma}
    \label{lemma: grid line can not contain more than 2 min max point}
    For any grid node $v \in V$, the grid lines $L_H(v)$ and $L_V(v)$ can not contain more than two members of $V_M(\mathcal R)$.
\end{lemma}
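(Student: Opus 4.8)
The plan is to reduce the two-dimensional question to a one-dimensional analysis of the max-distance function restricted to a single grid line. Fix a horizontal line $L_H(v) = \{Y = c\}$ and, for an integer $x$, write $w_x = (x,c)$ and define $f(x) = \max_{u \in \mathcal R} d_m(u, w_x)$. By Lemmas~\ref{lemma: center of a MED(P) is in V_M(P)} and~\ref{lemma: any minmax point is a center of some MED}, the min-max nodes are precisely the centers of the minimal enclosing diamonds, all of which share a common radius $k$; equivalently, a node $w$ lies in $V_M(\mathcal R)$ if and only if $\max_{u \in \mathcal R} d_m(u,w) = k$, and $k$ is the global minimum of this quantity over all of $V$. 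Hence the members of $V_M(\mathcal R)$ lying on $L_H(v)$ are exactly the integers $x$ with $f(x) = k$, so it suffices to bound the number of such $x$.

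The crucial step is to observe that $f$ collapses to an elementary $V$-shaped function. Writing $h_u = |y_u - c|$ and using $d_m(u,w_x) = |x - x_u| + h_u = \max\big(x - x_u + h_u,\ -x + x_u + h_u\big)$, I pull the maximum over robots inside to get $f(x) = \max\big(x + A,\ -x + B\big)$, where $A = \max_{u \in \mathcal R}(h_u - x_u)$ and $B = \max_{u \in \mathcal R}(h_u + x_u)$ are integer constants depending only on $c$ and the configuration. Thus $f$ is convex and piecewise linear with slope $-1$ to the left and $+1$ to the right of its unique real minimizer $x^\ast = (B-A)/2$, and its minimum value is $\tfrac{1}{2}(A+B)$.

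It then remains only to count integer minimizers of this $V$-shape. If $B - A$ is even then $x^\ast$ is an integer and $f$ attains its minimum at the single point $x = x^\ast$, the two neighbors having value one larger; if $B-A$ is odd then the two consecutive integers $(B-A\mp 1)/2$ both attain the minimum value $\tfrac{1}{2}(A+B+1)$ and are the only minimizers. In either case $f$ has at most two integer minimizers. Since the elements of $V_M(\mathcal R)$ on $L_H(v)$ must realize the global minimum $k$, which can occur only at the line's minimizers, there are at most two of them. The same argument, with the roles of the two coordinates interchanged, handles the vertical line $L_V(v)$.

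The main obstacle is purely the first move of the second paragraph: recognizing that the Manhattan metric lets one rewrite each robot's distance as a maximum of two affine functions of slope $\pm 1$, so that the outer maximum over $\mathcal R$ absorbs into the two scalars $A$ and $B$ and the whole profile along the line becomes a single $V$. Once this algebraic collapse is in hand, convexity makes the count immediate; the only remaining point that demands care is the parity case distinction, which is what pins the number of minimizers to \emph{at most two} (rather than merely finitely many) and simultaneously shows that the two minimizers, when both occur, are consecutive.
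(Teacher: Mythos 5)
Your proof is correct, but it takes a genuinely different route from the paper's. The paper argues by contradiction on three collinear min-max nodes $v_1,v_2,v_3$: it invokes Lemma~\ref{lemma: any minmax point is a center of some MED} to place a robot $u_0$ on the boundary of the minimal enclosing diamond centered at the middle node $v_2$, and then a three-way case analysis on the position of $u_0$ relative to the perpendicular grid line through $v_2$ shows that a shortest path from $v_1$ or $v_3$ to $u_0$ must pass through $v_2$, forcing a distance of at least $k+1$ and hence a contradiction. You instead restrict the objective $f(x)=\max_{u\in\mathcal R} d_m(u,(x,c))$ to the line and observe the algebraic collapse $f(x)=\max(x+A,\,-x+B)$, so that convexity of this $V$-shape bounds the integer minimizers by two; the parity of $B-A$ decides whether there are one or two. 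Both arguments are sound, and your reduction of the max over robots to the two scalars $A$ and $B$ is valid since the outer and inner maxima commute. What your approach buys is strictly more: it shows that the min-max nodes on any grid line form a set of at most two \emph{consecutive} integers, which simultaneously establishes Lemma~\ref{lemma: 2 min max points on same grid line must be neighbours} (the paper proves that adjacency claim separately, by another shortest-path case analysis on the node sandwiched between two non-adjacent candidates). The paper's argument, on the other hand, is more local and self-contained in the geometric vocabulary of diamonds and boundaries that the rest of Section~\ref{sec:strategy} reuses, which is presumably why the authors chose it. One small presentational caveat: your claim that membership in $V_M(\mathcal R)$ is equivalent to attaining the value $k$ follows directly from the definition of a min-max node together with the fact (implicit in Lemmas~\ref{lemma: center of a MED(P) is in V_M(P)} and~\ref{lemma: any minmax point is a center of some MED}) that $k$ is the common radius of all minimal enclosing diamonds and the global minimum of the max-distance function; it is worth stating that the set $\{x: f(x)=k\}$ may be empty for a given line, which does not affect the bound.
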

\begin{proof}
Let, the size of the diamond $\mathcal{M}$ be  $S($\Diam$) = k \sqrt{2}$, for all \Diam $\in $ \MED, where $k \in \mathbb{N}$.  
    Let there be a grid node $v_0 \in V$ such that at least one of $L_H(v_0)$ and $L_V(v_0)$ contains three members of $V_M(\mathcal R)$. Without loss of generality, let $L= L_H(v_0)$ be the grid line that contains the three min-max nodes $v_1, v_2$ and $v_3$. Let $v_2$ be the central node among these three nodes. Lemma~\ref{lemma: any minmax point is a center of some MED}, asserts that for every min-max node, there exists a minimum enclosing diamond that contains the min-max node as its center. As $v_2$ is a min-max node, there exist $\mathcal{M}_{v_2} \in $ \MED\ such that $v_2 = \mathcal{C}(\mathcal{M}_{v_2})$ and there must exist a robot $u_0$ which is on a boundary of $\mathcal{M}_{v_2}$. Thus, $\underset{u \in \mathcal R}{\max}\ d_m(u, v_2) = d_m(u_0,v_2) = k$. Let $L_\perp$ be the grid line passing through $\mathcal{C}(\mathcal{M}_{v_2})$ that is perpendicular to $L$. Without loss of generality, let $v_1$ be on the left of $L_\perp$ and $v_3$ be on the right of $L_\perp$. Depending on whether $u_0$ is on $L_\perp$ or not on $L_\perp$, the following cases are to be considered.

    \noindent \textit{Case I:} $u_0$ is on $L_\perp$. In this case, $u_0$ must lie on the corners of $\mathcal{M}_{v_2}$ that are not on $L$. Now, for any grid node $v \in V$ that is on the left or right of $v_2$ on $L$, a shortest path from $v$ to $u_0$ contains $v_2$. Thus, $$d_m(v,u_0) = d_m(v,v_2)+d_m(v_2,u_0) = d_m(v,v_2) + k > k,\  [\because v \ne v_2].$$ This implies $v \notin V_M(\mathcal R)$ for all $v \in V$ that are on the left of right of $v_2$ on $L$. Thus, $v_1,v_3 \notin V_M(\mathcal R)$, which is a contradiction. 
    
    \noindent \textit{Case II:} $u_0$ is strictly on the left of $L_\perp$. Then there is a shortest path from $v_3$ to $u_0$ that contains $v_2$. Thus,
    $$d_m(v_3,u_0) = d_m(v_3,v_2)+ d_m(v_2,u_0) \geq 1+k > k.$$ This contradicts the assumption that $v_3 \in V_M(\mathcal R)$. So, $u_0$ can not be on the left of $L_\perp$.

   \noindent  \textit{Case III:} $u_0$ is strictly on the right of $L_\perp$. Then there is a shortest path from $v_1$ to $u_0$ that contains $v_2$. Thus,
    $$d_m(v_1,u_0) = d_m(v_1,v_2)+ d_m(v_2,u_0) \geq 1+k > k.$$ This contradicts the assumption that $v_1 \in V_M(\mathcal R)$. So, $u_0$ can not be on the right of $L_\perp$. 

    \noindent Thus there can not be such robot $u_0$ which is on the boundary of $\mathcal{M}_{v_2}$. Thus, $\mathcal{M}_{v_2} \notin $\MED, which is a contradiction to the assumption that $v_2$ is a min-max node. So, for all $v \in V$, there can not be more than two min-max nodes on $L_H(v)$ and $L_V(v)$.
\end{proof}
\begin{lemma}
    \label{lemma: 2 min max points on same grid line must be neighbours}
    If $v \in V_M(\mathcal R)$, then for any grid line $L \in \{L_H(v),L_V(v)\}$, if $L$ contains another grid node $v' (\ne v)$ such that $v' \in V_M(\mathcal R)$, then $v' \in N(v)$.
\end{lemma}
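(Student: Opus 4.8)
The plan is to argue by contradiction and reduce everything to Lemma~\ref{lemma: grid line can not contain more than 2 min max point}, which forbids three members of $V_M(\mathcal R)$ on a single grid line. Suppose $v,v' \in V_M(\mathcal R)$ both lie on $L$ and $d_m(v,v') \ge 2$; the goal is to exhibit a third min-max node strictly between $v$ and $v'$ on $L$, which will be the desired contradiction.

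First I would fix coordinates. Without loss of generality take $L = L_H(v)$ to be horizontal (the case $L = L_V(v)$ is identical after interchanging the two coordinates), and place $v = (0,0)$ and $v' = (d,0)$, where $d = d_m(v,v') \ge 2$. By Lemma~\ref{lemma: any minmax point is a center of some MED}, every min-max node is the center of some minimal enclosing diamond in \MED, and all such diamonds share the common size $k\sqrt 2$. Consequently the common min-max value is $k$: that is, $\max_{u \in \mathcal R} d_m(u,v) = \max_{u \in \mathcal R} d_m(u,v') = k$, and no grid node attains a maximum robot-distance strictly smaller than $k$.

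The crux is a betweenness estimate. Picking any integer point $w = (j,0)$ with $0 < j < d$ (such a point exists precisely because $d \ge 2$), I would show that for every robot $u = (x_u,y_u)$,
\[
d_m(u,w) = |x_u - j| + |y_u| \le \max\big(|x_u|,\,|x_u - d|\big) + |y_u| = \max\big(d_m(u,v),\,d_m(u,v')\big).
\]
This follows because the one-dimensional map $t \mapsto |x_u - t|$ is convex, so on the interval $[0,d]$ it is bounded above by its values at the endpoints $t=0$ and $t=d$. Taking the maximum over all $u \in \mathcal R$ then yields $\max_{u \in \mathcal R} d_m(u,w) \le \max(k,k) = k$; since $k$ is the minimum possible maximum-distance, this forces $\max_{u} d_m(u,w) = k$, and hence $w \in V_M(\mathcal R)$. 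Now $v$, $w$, and $v'$ are three distinct members of $V_M(\mathcal R)$ all lying on $L$, contradicting Lemma~\ref{lemma: grid line can not contain more than 2 min max point}. Therefore $d_m(v,v') = 1$, i.e.\ $v' \in N(v)$.

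I do not expect a genuine obstacle here: the proof is essentially a packaging of the at-most-two-per-line bound together with convexity of the Manhattan distance along a fixed grid line. The only points requiring care are the betweenness inequality (a routine convexity argument) and the bookkeeping that a node whose maximum robot-distance equals $k$ is actually a min-max node, which is immediate from the characterization via \MED in Lemmas~\ref{lemma: center of a MED(P) is in V_M(P)} and~\ref{lemma: any minmax point is a center of some MED}.
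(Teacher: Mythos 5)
Your proof is correct, and it reaches the same contradiction as the paper (three min-max nodes on one grid line, which Lemma~\ref{lemma: grid line can not contain more than 2 min max point} forbids), but it gets there by a genuinely cleaner mechanism. The paper takes the intermediate node $v_1$ adjacent to $v$, uses Lemma~\ref{lemma: grid line can not contain more than 2 min max point} to conclude $v_1 \notin V_M(\mathcal R)$, extracts a robot $u_0$ with $d_m(v_1,u_0) > k$, and then runs a three-way case analysis on the position of $u_0$ relative to the perpendicular line $L_\perp$ through $v$, in each case routing a shortest path through $v_1$ to contradict the min-max property of $v$ or $v'$. Your one-dimensional convexity inequality $|x_u - j| \le \max(|x_u|, |x_u - d|)$ is exactly the contrapositive of that case analysis, packaged in a single line: it shows directly that \emph{every} grid node strictly between two min-max nodes on a common grid line is itself a min-max node, after which Lemma~\ref{lemma: grid line can not contain more than 2 min max point} finishes the argument. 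What your route buys is the elimination of the shortest-path case analysis and a slightly stronger intermediate fact (betweenness of the whole segment, not just one neighbor); what the paper's route buys is that it stays entirely within the shortest-path/decomposition vocabulary used throughout its other lemmas. Your appeal to Lemma~\ref{lemma: any minmax point is a center of some MED} to pin down the common value $k$ is fine, though it is not strictly needed: all min-max nodes attain the same minimal value of $\max_{u\in\mathcal R} d_m(u,\cdot)$ by definition.
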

\begin{proof}
    Let the size of the diamond $\mathcal{M}$ be  $S($\Diam$) = k \sqrt{2}$, for all \Diam $\in $ \MED, where $k \in \mathbb{N}$. Without loss of generality, let $L = L_H(v)$ and there is $v' (\neq v) \in V_M(\mathcal R)$ on $L$ such that $v' \notin N(v)$. Let $v_1$ be the neighbour of $v$ on $L$ between the line segment $\overline{vv'}$. According to Lemma~\ref{lemma: grid line can not contain more than 2 min max point}, note that $v_1 \notin V_M(\mathcal R)$. Thus, there is a robot $u_0$ such that, $d_m(v_1,u_0) > k$. Let $L_\perp$ be the grid line perpendicular to $L$ and passing through $v$. Depending on whether $u_0$ is on $L_\perp$ or not on $L_\perp$, the following cases are to be considered.

    \noindent \textit{Case I:} Let $u_0$ be on $L_\perp$. Then there is a shortest path from $v'$ to $u_0$ that contains $v_1$. Thus, 
    $$d_m(v',u_0) = d_m(v',v_1) + d_m(v_1,u_0) > d_m(v',v_1)+ k > k.$$ This implies, $v' \notin V_M(\mathcal R)$ contrary to the assumption.

\noindent \textit{Case II:} Let $u_0$ be strictly on the left of $L_\perp$. Then there is a shortest path from $v'$ to $u_0$ that contains $v_1$. The rest of the proof follows from the previous case.
    
    \noindent \textit{Case III:} Let $u_0$ be strictly on the right of $L_\perp$. Then there is a shortest path from $v$ to $u_0$ that contains $v_1$. Thus,
    $$d_m(v,u_0) = d_m(v,v_1)+d_m(v_1,u_0)> 1+k >k.$$
    This implies, $v \notin V_M(\mathcal R)$ contrary to the assumption.

    \noindent The above three cases lead to a contradiction of the fact that $v' \notin N(v)$. Thus, if $v' \in L \cap V_M(\mathcal R)$, then $v' \in N(v)$.
\end{proof}

\begin{lemma}
    \label{lemma: two neighbours of a minmax point is minmax condition}
    If there is a \Diam $\in $ \MED\ such that only one $B \in \{\mathcal{B}_{UL}, \mathcal{B}_{DL}, \mathcal{B}_{UR}, \mathcal{B}_{DR}\}$ contains a robot position, then $\exists\  v_1, v_2 \in N(\mathcal{C}(\mathcal{M}))$ such that $v_1, v_2 \in V_M(\mathcal R)$. Furthermore, $v_1, v_2 \in N(\mathcal{C}(\mathcal{M}))$ be such that they are the nearest to $B$ among all nodes in $N(\mathcal{C}(\mathcal{M}))$.
\end{lemma}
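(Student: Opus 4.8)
The plan is to reduce membership in $V_M(\mathcal R)$ to a pure distance bound: I will show that the two designated neighbours of $\mathcal{C}(\mathcal{M})$ have maximum Manhattan distance to $\mathcal R$ at most $k$ (writing $S(\mathcal{M}) = k\sqrt2$), and then upgrade this to $V_M(\mathcal R)$-membership via Lemma~\ref{lemma: center of a MED(P) is in V_M(P)}. First I would fix coordinates placing $\mathcal{C}(\mathcal{M})$ at the origin and, without loss of generality, take $B = \mathcal{B}_{UR}$ to be the unique boundary carrying a robot. With this convention the corners are $v_{c_1} = (k,0)$ and $v_{c_2} = (0,k)$, and $\mathcal{B}_{UR}$ is exactly the set of nodes $(x,y)$ with $x+y = k$ and $x,y \ge 0$. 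A short computation shows that, among the four nodes of $N(\mathcal{C}(\mathcal{M}))$, the two closest to $\mathcal{B}_{UR}$ are the right neighbour $v_1 = (1,0)$ and the top neighbour $v_2 = (0,1)$, each at Manhattan distance $k-1$ from $B$ versus $k+1$ for the other two; these are the $v_1, v_2$ named in the statement.

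The decisive structural observation is that the hypothesis ``only one boundary contains a robot'' forces every perimeter robot to lie in the \emph{interior} of $\mathcal{B}_{UR}$, i.e.\ at a node $(x,y)$ with $x+y = k$ and $x,y \ge 1$. Indeed, each corner of $\mathcal{M}$ lies in two adjacent boundaries (for instance $v_{c_2} \in \mathcal{B}_{UR} \cap \mathcal{B}_{UL}$), so a robot at a corner would make two boundaries carry a robot, contradicting the hypothesis. This is the step I expect to be the crux: it is precisely the exclusion of the corner $v_{c_2} = (0,k)$ (and symmetrically $v_{c_1} = (k,0)$) that keeps the distance from $v_1$ (resp.\ $v_2$) to every far robot from exceeding $k$.

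With this in hand I would bound $\underset{u \in \mathcal R}{\max}\ d_m(u, v_1)$ by a two-case split on the sign of the $x$-coordinate of a robot $u = (u_x, u_y)$, using that every robot satisfies $d_m(u, \mathcal{C}(\mathcal{M})) = |u_x| + |u_y| \le k$. If $u_x \ge 1$, then $d_m(u, v_1) = d_m(u, \mathcal{C}(\mathcal{M})) - 1 \le k-1$. If $u_x \le 0$, then $u$ is not a perimeter robot (perimeter robots now have $u_x \ge 1$), so $d_m(u, \mathcal{C}(\mathcal{M})) \le k-1$ and $d_m(u, v_1) = 1 + d_m(u, \mathcal{C}(\mathcal{M})) \le k$. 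Either way $d_m(u, v_1) \le k$, hence $\underset{u \in \mathcal R}{\max}\ d_m(u, v_1) \le k$; the identical argument with the coordinates interchanged (now using $u_y \ge 1$ for perimeter robots and excluding $v_{c_1}$) gives $\underset{u \in \mathcal R}{\max}\ d_m(u, v_2) \le k$.

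Finally I would close using Lemma~\ref{lemma: center of a MED(P) is in V_M(P)}: since $\mathcal{M} \in \mathcal{MED}(\mathcal R)$, its center is a min-max node with value $k$, so $\underset{u \in \mathcal R}{\max}\ d_m(u, v') \ge k$ for every grid node $v'$. Combining this lower bound with the two upper bounds just established shows that $v_1$ and $v_2$ realize the value $k$ and therefore lie in $V_M(\mathcal R)$, which is the claim. The only genuine subtlety is the corner-exclusion argument of the second paragraph; the distance estimates themselves are routine once the perimeter robots are confined to the open side of $B$.
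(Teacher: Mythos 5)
Your proof is correct and follows essentially the same route as the paper's: both arguments confine the perimeter robots to the interior of the single occupied boundary (your explicit corner-exclusion is exactly the fact the paper uses implicitly when it observes that the two boundaries on the far side of $L_V(\mathcal{C}(\mathcal{M}))$ are empty, forcing $d_m(u,\mathcal{C}(\mathcal{M}))\le k-1$ for robots on that side), split the robots according to which side of that line they lie on, and conclude $d_m(u,v_i)\le k$ in each case, with your direct upper-bound-plus-lower-bound packaging being only a cosmetic variant of the paper's proof by contradiction. The one thing the paper proves that you do not is that the two \emph{far} neighbours of $\mathcal{C}(\mathcal{M})$ are not min-max nodes (via a shortest path through the center to the robot on $B$); this extra exclusivity is how the paper reads the ``furthermore'' clause, but the statement taken literally, and its downstream uses, are already covered by what you establish.
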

\begin{proof}
    Let the size of the diamond $\mathcal{M}$ be $S($\Diam$) = k \sqrt{2}$, for all \Diam $\in $ \MED, where $k \in \mathbb{N}$. According to the statement of the lemma, let \Diam $\in \mathcal{MED}(\mathcal R)$ be such that only one of $\mathcal{B}_{UL}, \mathcal{B}_{DL}, \mathcal{B}_{UR}, \mathcal{B}_{DR}$ contains a robot position. Without loss of generality, assume that the boundary $\mathcal{B}_{UR}$, denoted as $B$, contains a robot position. We also assume that the robot position $u_1$ is on $B$. Let $L= L_H(\mathcal{C}(\mathcal{M}))$ be the horizontal grid line passing through the center and let $L_\perp = L_V(\mathcal{C}(\mathcal{M}))$ be the vertical grid-line passing through the center. Note that $B$ is on the right of $L_\perp$. In this proof, we will first prove the second part of the proof and then proceed to the first part.
    
    \noindent Let $N(\mathcal{C}(\mathcal{M})) = \{v_1,v_2,v_1',v_2'\}$ be such that $v_1$ and $v_2$ are closer to $B$ compared to $v_1'$ and $v_2'$. Note that it is enough to show that $v_1'$ and $v_2'$ are not in $V_M(\mathcal R)$. If $v_1'\in V_M(\mathcal R)$, then there exists a shortest path from $v_1'$ to $u_1$ that contains the node $\mathcal{C}(\mathcal{M})$. This implies that,
    $$d_m(v_1',u_1) = d_m(v_1',\mathcal{C}(\mathcal{M}))+ d_m(\mathcal{C}(\mathcal{M}), u_1) = 1+k > k.$$ which implies that $v_1'$ cannot be a min-max node. The proof holds similarly for $v_2'$. Thus, $v'_1, v'_2 \notin V_M(\mathcal R)$. 

    \noindent Next, we proceed to the first part of the proof, i.e., we have to prove $v_1,v_2 \in V_M(\mathcal R)$. We proof the same for $v_1$. The argument is similar for $v_2$. For the sake of contradiction, let $v_1 \notin V_M(\mathcal R)$. Thus, $\exists\ u_0 \in \mathcal R$ such that $d_m(v_1,u_0) > k$. We claim that $u_0$ can not be on the right of $L_\perp$. Otherwise, there is a shortest path from $\mathcal{C}(\mathcal{M})$ to $u_0$ that contains the node $v_1$, implying that $d_m(\mathcal{C}(\mathcal{M}),u_0) = k+1 > k$. This asserts that $\mathcal{C}(\mathcal{M}) \notin V_M(\mathcal R)$, where $\mathcal{M} \in $ \MED, contradicting Lemma~\ref{lemma: center of a MED(P) is in V_M(P)}. So, $u_0$ is either on $L_\perp$ or on the left of $L_\perp$. Now, among all the boundaries of $\mathcal{M}$, only the boundary $B$ contains robot positions. Let $B_1$ and $B_2$ be the two boundaries of $\mathcal M$ that are on the left of $L_\perp$. Thus, $B_1$ and $B_2$ do not contain any robot positions, implying for all $u \in \mathcal R$ that are on $L_\perp$ or on the left of $L_\perp$, $d_m(u,\mathcal{C}(\mathcal{M}))~\le k-1$, as there exist a shortest path joining $v_1$ and $u$ through $\mathcal C(\mathcal M)$. This again implies that $d_m(v_1,u) = ~d_m(v_1, \mathcal{C}~(\mathcal{M}))+~ d_m(\mathcal{C}(\mathcal{M}), u)~\le 1+~(k-1) =k$. Thus, we get a contradiction as $k < d_m(v_1,u_0) \le k$,  Hence, $v_1 \in V_M(\mathcal R)$.
\end{proof}
\begin{lemma}
    \label{lemma: condition of two neibouring minmax point of a min max point}
     If there is a \Diam $\in$ \MED\ such that only two boundaries $B_1$ and $B_2$ of \Diam\  contain robot positions and the pair $(B_1,B_2) \in $ \{($\mathcal{B}_{UL},\mathcal{B}_{UR}$), ($\mathcal{B}_{UR},\mathcal{B}_{DR}$), ($\mathcal{B}_{DL},\mathcal{B}_{DR}$), ($\mathcal{B}_{DL},\mathcal{B}_{UL}$)\}, then there is exactly one grid node, $v_1 \in N(\mathcal{C}(\mathcal{M}))$ such that $v_1 \in V_M(\mathcal R)$. In fact, $v_1$ is the adjacent node of $\mathcal{C}(\mathcal{M})$ which is nearest to the grid node $B_1 \cap B_2$.
\end{lemma}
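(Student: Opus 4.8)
The plan is to set up coordinates at the center and reduce to one representative pair by symmetry, then test each of the four neighbours of the center against a single threshold. Place $\mathcal{C}(\mathcal{M})$ at the origin, so that $\mathcal{P}(\mathcal{M}) = \{(x,y): |x|+|y|=k\}$, the corners are $v_{c_1}=(k,0)$, $v_{c_2}=(0,k)$, $v_{c_3}=(-k,0)$, $v_{c_4}=(0,-k)$, and $N(\mathcal{C}(\mathcal{M})) = \{(1,0),(0,1),(-1,0),(0,-1)\}$. The four admissible pairs $(B_1,B_2)$ are carried into one another by the grid rotations that fix the center, so without loss of generality I take $(B_1,B_2)=(\mathcal{B}_{UR},\mathcal{B}_{DR})$; then $B_1\cap B_2 = v_{c_1}=(k,0)$, the neighbour nearest to it is $v_1=(1,0)$, and every robot on $\mathcal{P}(\mathcal{M})$ has non-negative $x$-coordinate. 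By Lemma~\ref{lemma: center of a MED(P) is in V_M(P)} the center is a min-max node, and the minimality of $\mathcal{M}$ forces some robot at distance exactly $k$ from the center; hence the min-max value is $k$, and a node $v$ lies in $V_M(\mathcal R)$ iff $\max_{u\in\mathcal R} d_m(u,v) \le k$. So it suffices to compare each neighbour's farthest robot against $k$.

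First I would prove $v_1=(1,0)\in V_M(\mathcal R)$ by showing $d_m(u,(1,0))\le k$ for every robot $u=(x,y)$. If $x\ge 1$ then $d_m(u,(1,0)) = d_m(u,\mathcal{C}(\mathcal{M}))-1 \le k-1$. If $x\le 0$ then $d_m(u,(1,0)) = d_m(u,\mathcal{C}(\mathcal{M}))+1$, so I need $d_m(u,\mathcal{C}(\mathcal{M}))\le k-1$ for such robots. The perimeter nodes with $x\le 0$ are exactly $\mathcal{P}_{UL}\cup\mathcal{P}_{DL}$ together with the corners $v_{c_2},v_{c_3},v_{c_4}$, and all of these lie on the robot-free boundaries $\mathcal{B}_{UL}\cup\mathcal{B}_{DL}$; consequently no robot with $x\le 0$ is on the perimeter, so every such robot is strictly inside and satisfies $d_m(u,\mathcal{C}(\mathcal{M}))\le k-1$. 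This yields the bound in all cases.

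Next I would rule out the remaining three neighbours by producing a robot at distance $k+1$ from each. As $\mathcal{B}_{UR}$ and $\mathcal{B}_{DR}$ are non-empty, fix $u_{UR}=(x_1,y_1)\in\mathcal{B}_{UR}$ with $x_1,y_1\ge 0$, $x_1+y_1=k$, and $u_{DR}=(x_2,y_2)\in\mathcal{B}_{DR}$ with $x_2\ge 0\ge y_2$, $x_2-y_2=k$. Then $d_m((-1,0),u_{UR}) = (x_1+1)+y_1 = k+1$, $d_m((0,-1),u_{UR}) = x_1+(y_1+1) = k+1$, and $d_m((0,1),u_{DR}) = x_2+(1-y_2) = k+1$, so none of $(-1,0),(0,-1),(0,1)$ is a min-max node. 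Together with the previous step this shows $(1,0)$ is the unique min-max node in $N(\mathcal{C}(\mathcal{M}))$, and it is the neighbour nearest to $v_{c_1}=B_1\cap B_2$, as required.

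The distance computations are routine; the crux is the step isolating the robots with $x\le 0$ when proving that $v_1$ is a min-max node. The essential point is that the two opposite boundaries $\mathcal{B}_{UL},\mathcal{B}_{DL}$ being robot-free forbids any robot at the top or bottom corner $(0,\pm k)$, which is exactly what stops $d_m(u,(1,0))$ from reaching $k+1$ for a center-distance-$k$ robot on the vertical axis. This is the same mechanism exploited in Lemma~\ref{lemma: two neighbours of a minmax point is minmax condition}, and the one place demanding care is the bookkeeping of which corner nodes belong to which boundary.
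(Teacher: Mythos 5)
Your proof is correct and follows essentially the same route as the paper's: the paper rules out the three far neighbours via a shortest path through the center to a robot on an occupied boundary (your explicit $k+1$ computations), and certifies the near neighbour by observing that the two robot-free boundaries force every robot on the far side of the center to be strictly interior, which is exactly your $x\le 0$ case written in coordinates. (One trivial terminological slip: in the paper's usage $\mathcal{B}_{UL}$ and $\mathcal{B}_{DL}$ are \emph{adjacent}, not opposite, since they share the corner $v_{c_3}$; this does not affect the argument.)
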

\begin{proof}
    Without loss of generality, let for all \Diam $\in$ \MED, $S($\Diam$) = k\sqrt{2}$, where $k \in \mathbb{N}$. Let there be a \Diam $\in$ \MED\ such that only two boundaries $B_1$ and $B_2$ of \Diam\ contain robot positions and the pair $(B_1,B_2) \in $ \{($\mathcal{B}_{UL},\mathcal{B}_{UR}$), ($\mathcal{B}_{UR},\mathcal{B}_{DR}$), ($\mathcal{B}_{DL},\mathcal{B}_{DR}$), ($\mathcal{B}_{DL},\mathcal{B}_{UL}$)\}. Without loss of generality, let $(B_1, B_2) $ be the pair $(\mathcal{B}_{UR}, \mathcal{B}_{UL})$, where $B_1 = \mathcal{B}_{UR}$ and $B_2 = \mathcal{B}_{UL}$ and let $u_1$ be the robot position on $B_1$ and $u_2$ be the robot position on $B_2$. Then $d_m(\mathcal{C}(\mathcal{M}), u_1) = d_m(\mathcal{C}(\mathcal{M}),u_2) = k$.
    
    \noindent We begin the proof by first proving the second part of the proof. Let $v_1 \in N(\mathcal{C}(\mathcal{M}))$ be such that it is nearest to $B_1 \cap B_2$ compared to other three grid nodes in $N(\mathcal{C}(\mathcal{M}))$. For proving the claim, it is sufficient to prove that the other three grid nodes in $N(\mathcal{C}(\mathcal{M}))$ can't be a member of $V_M(\mathcal R)$. Let $v_1' (\ne v_1) \in N(\mathcal{C}(\mathcal{M}))$ such that it is on $L_V(v_1)$. Then $d_m(v_1',u_1) = d_m(v_1',\mathcal{C}(\mathcal{M})) + d_m(\mathcal{C}(\mathcal{M}), u_1) =1+k >k$. Thus, $v_1' \notin V_M(\mathcal R)$. Similarly, let $v_2$ and $v_2'$ be the neighbours of $\mathcal{C}(\mathcal{M})$ that lies on $L_H(\mathcal{C}(\mathcal{M}))$ and are such that $v_2$($\ne v_1$) is on the right and $v_2'(\ne v_1)$ is on the left of $\mathcal{C}(\mathcal{M})$. The proof that $d_m(v_2,u_2) = 1+k > k$ and $d_m(v_2', u_1) =1+k > k$, follows similarly as before. Hence, $v_2, v_2' \notin V_M(\mathcal R)$.

    \noindent Next, we proceed to the first part of the proof. Here, we have to prove that $v_1 \in V_M(\mathcal R)$. If possible, let $v_1 \notin V_M(\mathcal R)$. Thus, there exists at least one such robot position $u_0$ such that $d_m(u_0, v_1) > k$. Now, it may be observed that $u_0$ can not be above the grid-line $L = L_H(\mathcal{C}(\mathcal{M}))$. Otherwise, if the claim holds true, then there must exist a shortest path from $\mathcal{C}(\mathcal{M})$ to $u_0$ that contains $v_1$ and thus, $d_m(\mathcal{C}(\mathcal{M}), u_0) >k+1 > k$. This contradicts the assumption that \Diam $\in $ \MED. So $u_0$ is either on $L$ or lying below $L$. Note that both the boundaries, lying below $L$ (say $B_1'$ and $B_2'$), do not contain any robot positions. Thus, for any robot position $u$ that is on or below $L$, $d_m(\mathcal{C}(\mathcal{M}),u) \le k-1$. Now, for any such $u$ on or below $L$, there is a shortest path from $v_1$ to $u$ that contains $\mathcal{C}(\mathcal{M})$. Thus, for all robot position $u $ on or below $L$, $$d_m(v_1,u) = d_m(v_1,\mathcal{C}(\mathcal{M}))+ d_m(\mathcal{C}(\mathcal{M}), u) \le 1+(k-1) = k.$$ Thus, $k < d_m(v_1,u_0) \le k$, which is a contradiction. Hence, $v_1 \in V_M(\mathcal R)$. 
\end{proof}
\begin{lemma}
    \label{lemma: condition for no neighbour in minmax points}
    If $v \in V_M(\mathcal R)$, the following statements are equivalent.
    \begin{itemize}
        \item [(1)] For all $v' \in N(v)$, $v' \notin V_M(\mathcal R)$.
        \item [(2)] the minimal enclosing diamond $\mathcal{M}_v$ for which $\mathcal{C}(\mathcal{M}_v) =v$ has at least one pair of boundaries $(B_1, B_2) \in \{(\mathcal{B}_{UR},\mathcal{B}_{DL}), (\mathcal{B}_{UL}, \mathcal{B}_{DR})\}$ such that both $B_1$ and $B_2$ contains robot positions.
    \end{itemize}
% Let $v \in V_M(P)$ be such that $\forall v' \in N(v)$, $v' \notin V_M(P)$. Then the minimal enclosing diamond $\mathcal{M}_v$ for which $\mathcal{C}(\mathcal{M}_v) =v$ has at least one pair of boundaries $(B_1, B_2) \in \{(\mathcal{B}_{UR},\mathcal{B}_{DL}), (\mathcal{B}_{UL}, \mathcal{B}_{DR})\}$ such that both $B_1$ and $B_2$ contains points of $P$.
\end{lemma}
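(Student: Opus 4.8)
The plan is to reduce the statement to a local, purely combinatorial criterion describing exactly when a given neighbour of $v$ is itself a min-max node, and then to combine the four such criteria (one per neighbour) by a short Boolean argument. Throughout, let $S(\mathcal{M}) = k\sqrt{2}$ be the common size of every minimal enclosing diamond, and let $\mathcal{M}_v$ be the (unique) diamond of this size centred at $v$; by Lemma~\ref{lemma: any minmax point is a center of some MED} we have $\mathcal{M}_v \in \mathcal{MED}(\mathcal R)$, so $\max_{u\in\mathcal R} d_m(u,v)=k$. Since $k$ is the optimal (minimum) value of $\max_{u} d_m(u,\cdot)$, a neighbour $v'\in N(v)$ is a min-max node \emph{iff} $\max_{u\in\mathcal R} d_m(u,v')\le k$.

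First I would establish the key directional criterion. Moving from $v$ to its right neighbour $v'=v+(1,0)$, the Manhattan distance $d_m(u,v')$ equals $d_m(u,v)-1$ when the robot $u$ lies strictly right of $L_V(v)$ and equals $d_m(u,v)+1$ when $u$ lies on or to the left of $L_V(v)$. Hence $\max_u d_m(u,v')\le k$ fails exactly when some robot at distance $k$ from $v$ lies on or left of $L_V(v)$. The perimeter nodes satisfying this are precisely $\mathcal{P}_{UL}\cup\mathcal{P}_{DL}\cup\{v_{c_2},v_{c_3},v_{c_4}\}=\mathcal{B}_{UL}\cup\mathcal{B}_{DL}$. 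Thus the right neighbour is a min-max node iff neither $\mathcal{B}_{UL}$ nor $\mathcal{B}_{DL}$ carries a robot. By the symmetric argument in the other three axis directions, the left neighbour is min-max iff $\mathcal{B}_{UR}\cup\mathcal{B}_{DR}$ is robot-free, the upper neighbour iff $\mathcal{B}_{DL}\cup\mathcal{B}_{DR}$ is robot-free, and the lower neighbour iff $\mathcal{B}_{UL}\cup\mathcal{B}_{UR}$ is robot-free.

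With these four equivalences in hand the lemma becomes a finite Boolean check. Encode boundary occupancy by the cyclically arranged indicators for $\mathcal{B}_{UR},\mathcal{B}_{UL},\mathcal{B}_{DL},\mathcal{B}_{DR}$. Statement~(1) asserts that each of the four pairs of \emph{adjacent} boundaries (top, left, bottom, right) contains a robot, while statement~(2) asserts that some \emph{opposite} pair is jointly occupied. For $(2)\Rightarrow(1)$ I would note that if, say, $\mathcal{B}_{UR}$ and $\mathcal{B}_{DL}$ both carry a robot, then each of the four boundary-unions appearing in the directional criteria meets a robot position, so by the criterion no neighbour is a min-max node. For $(1)\Rightarrow(2)$ I would argue by contrapositive: if no opposite pair is jointly occupied, then the occupied boundaries form either a single boundary or two adjacent boundaries, and in each such case one full side of the diamond (its two top, its two bottom, its two left, or its two right boundaries) is robot-free, which by the directional criterion produces a min-max neighbour and contradicts~(1).

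The main obstacle is the directional criterion itself: one must verify carefully that the set of perimeter nodes lying weakly on one side of a central grid line coincides with the union of the two corresponding boundary sets $\mathcal{B}_\bullet$, taking due account of the corner nodes $v_{c_1},\dots,v_{c_4}$ that are shared between adjacent boundaries and that lie exactly on $L_H(\mathcal{C}(\mathcal{M}_v))$ or $L_V(\mathcal{C}(\mathcal{M}_v))$. Once the bookkeeping of corners is settled, the four equivalences follow from the elementary $\pm 1$ behaviour of Manhattan distance under a unit step, and the remaining combination is the routine finite case analysis sketched above.
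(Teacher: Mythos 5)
Your proposal is correct, but it follows a genuinely different route from the paper. The paper proves $(1)\Rightarrow(2)$ by a case analysis on how many boundaries of $\mathcal{M}_v$ are occupied, discharging the one-boundary and two-adjacent-boundary cases by appealing to Lemma~\ref{lemma: two neighbours of a minmax point is minmax condition} and Lemma~\ref{lemma: condition of two neibouring minmax point of a min max point}, and proves $(2)\Rightarrow(1)$ directly via a shortest path through $v$ to a robot on one of the two opposite boundaries. You instead derive, from the elementary fact that a unit step changes $d_m(u,\cdot)$ by exactly $\pm 1$ according to which side of $L_V(v)$ (resp.\ $L_H(v)$) the robot $u$ lies on, an \emph{exact} characterization: a given neighbour of $v$ is in $V_M(\mathcal R)$ if and only if the union of the two boundaries of $\mathcal{M}_v$ on the far side is robot-free. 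This single criterion subsumes both of the paper's auxiliary lemmas and reduces the equivalence to a four-variable Boolean check over the occupancy pattern of $\{\mathcal{B}_{UR},\mathcal{B}_{UL},\mathcal{B}_{DL},\mathcal{B}_{DR}\}$, where $(1)$ says every adjacent pair is hit and $(2)$ says some opposite pair is doubly hit; your verification that these are equivalent (using that occupied sets avoiding every opposite pair are singletons or adjacent pairs, each leaving one ``side'' free) is sound, and your corner bookkeeping ($\mathcal{B}_{UL}\cup\mathcal{B}_{DL}=\mathcal{P}_{UL}\cup\mathcal{P}_{DL}\cup\{v_{c_2},v_{c_3},v_{c_4}\}$, the set of perimeter nodes weakly left of $L_V(v)$) checks out. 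What your approach buys is self-containedness and a sharper statement (an iff per neighbour rather than one-directional implications); what the paper's buys is reuse of machinery it has already built. One trivial point you leave implicit: the occupied set is nonempty because $\mathcal{M}_v\in\mathcal{MED}(\mathcal R)$ forces a robot on its perimeter, though your contrapositive for $(1)\Rightarrow(2)$ actually goes through even without this.
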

\begin{proof}

   \noindent  \textit{(1) $\implies$ (2)}: Depending on the number of boundaries that contain robot positions, the following cases are to be considered. 
    
    \noindent \textit{Case 1:} Only one boundary contains robot positions. According to Lemma~\ref{lemma: two neighbours of a minmax point is minmax condition}, there must exist two neighbors of $v$ that are min-max nodes. However, according to (1), this is not possible. Therefore, this case is not possible.
    
    \noindent \textit{Case 2:} At least two boundaries contain robot positions.
    
    \noindent \textit{Subcase 1:} Exactly two boundaries contain robot positions (say $B_1$ and $B_2$). According to Lemma~\ref{lemma: condition of two neibouring minmax point of a min max point}, the two boundaries cannot be adjacent. Thus, the pair of two boundaries i.e., $(B_1,$ $B_2)$ must be opposite to each other and hence they must be from either the set $\{(\mathcal{B}_{UR},\mathcal{B}_{DL})$ or from $(\mathcal{B}_{UL}, \mathcal{B}_{DR})\}$.

    \noindent \textit{Subcase 2:} More than two boundaries contain robot positions. In that case, there must exist at least two non-adjacent boundaries that contain robot positions. Thus, the proof follows.

   \noindent  \textit{(2) $\implies $ (1)}: Let $B_1$ and $B_2$ be two opposite boundaries of $\mathcal{M}_v$. Let $u_1, u_2 $ be two robot positions such that $u_1 \in B_1$ and $u_2 \in B_2$. For any $v' \in N(v)$, there is a $u_i$, where $i\in \{1, 2\}$ such that, there is a shortest path from $v'$ to $u_i$ that contains $v$. Thus, $d_m(v',u_i) =k+1 >k$. This implies $v' \notin V_M(\mathcal R), \forall\ v' \in N(v)$. 
\end{proof}
\begin{lemma}
    \label{lemma: condition for unique minmax point}
    If any of the following conditions are true, then $|V_M(\mathcal R)| =1$.
    \begin{itemize}
        \item [\textbf{$(C1)$}] For a diamond \Diam $\in $ \MED, all four boundaries of \Diam\  contain robot positions.
        \item[\textbf{($C2$)}] For a diamond \Diam $\in $ \MED, only three boundaries contain robot positions and there is at least one robot position within distance 1 from the boundary that does not contain robot positions.
        \item[\textbf{($C3$)}] For a diamond \Diam $\in $ \MED, only two opposite boundaries contain robot positions and there is at least one robot position within distance 1 from each of the boundaries that do not contain robot positions.
    \end{itemize}
\end{lemma}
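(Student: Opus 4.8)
\emph{Setup and reduction to the open quadrants.} The plan is to fix the given diamond $\mathcal{M} \in \mathcal{MED}(\mathcal R)$, translate the common coordinate frame so that its centre $c = \mathcal{C}(\mathcal{M})$ is the origin, and write $S(\mathcal{M}) = k\sqrt{2}$. By Lemma~\ref{lemma: center of a MED(P) is in V_M(P)} we already have $c \in V_M(\mathcal R)$, and since $\mathcal{M}$ is minimal enclosing, $\max_{u \in \mathcal R} d_m(u,c) = k$. It therefore suffices to prove that every node $v' \ne c$ has \emph{some} robot at distance strictly greater than $k$: this forces $\max_{u} d_m(u,v') > k$, so $v' \notin V_M(\mathcal R)$, giving $V_M(\mathcal R) = \{c\}$. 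In these coordinates the two opposite pairs of boundaries lie on the lines $x+y = \pm k$ (namely $\mathcal{B}_{UR}$, $\mathcal{B}_{DL}$) and $y - x = \pm k$ (namely $\mathcal{B}_{UL}$, $\mathcal{B}_{DR}$). Under each of $(C1),(C2),(C3)$ at least one opposite pair of boundaries carries robots: both pairs under $(C1)$; the pair complementary to the unique robot-free boundary under $(C2)$; the two hypothesised opposite boundaries under $(C3)$. Hence condition $(2)$ of Lemma~\ref{lemma: condition for no neighbour in minmax points} holds at $c$ (the MED centred at $c$ is $\mathcal{M}$ itself, the unique diamond of size $k\sqrt2$ with centre $c$), so no neighbour of $c$ is a min-max node. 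By Lemma~\ref{lemma: 2 min max points on same grid line must be neighbours} any min-max node on $L_H(c)$ or $L_V(c)$ would have to be such a neighbour; thus no min-max node other than $c$ meets these lines, and every remaining candidate $v'=(a,b)$ satisfies $a\ne 0$ and $b\ne 0$, i.e.\ lies in one of the four open quadrants about $c$.

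\emph{The opposite-boundary principle.} The heart of the argument is an elementary distance estimate. If a robot $u$ lies on $\mathcal{B}_{DL}$ (so $x_u \le 0$, $y_u \le 0$, $|x_u| + |y_u| = k$) and $v' = (a,b)$ is in the upper-right open quadrant ($a > 0$, $b > 0$), then $d_m(u,v') = (a + |x_u|) + (b + |y_u|) = a + b + k > k$; hence a robot on any boundary kills every node of the diagonally opposite open quadrant. When the opposite boundary is robot-free but $(C2)/(C3)$ supply a robot $u$ within distance $1$ of it — say within distance $1$ of $\mathcal{B}_{DR}$ — one first argues that, because $\mathcal{B}_{DR}$ together with its two corners is robot-free and $u$ lies inside $\mathcal{M}$, the node $u$ is pinned to the line one unit inside, $x_u - y_u = k-1$, with $x_u \ge 0 \ge y_u$. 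Then for $v'$ in the upper-left open quadrant ($a < 0$, $b > 0$) the same telescoping yields $d_m(u,v') = (x_u - y_u) + |a| + b = (k-1) + |a| + b \ge (k-1) + 1 + 1 > k$.

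\emph{Assembling the cases.} Without loss of generality, using the dihedral symmetry of the diamond, take the robot-free boundary in $(C2)$ to be $\mathcal{B}_{DR}$ and the robot-free pair in $(C3)$ to be $\{\mathcal{B}_{UL}, \mathcal{B}_{DR}\}$. Under $(C1)$ each of the four open quadrants is killed by the robot on its diagonally opposite boundary. Under $(C2)$ the robots on $\mathcal{B}_{UR},\mathcal{B}_{UL},\mathcal{B}_{DL}$ kill the lower-left, lower-right, and upper-right open quadrants, while the robot within distance $1$ of $\mathcal{B}_{DR}$ kills the upper-left quadrant by the near-boundary estimate. Under $(C3)$ the robots on $\mathcal{B}_{UR}$ and $\mathcal{B}_{DL}$ kill the lower-left and upper-right quadrants, and the two near-boundary robots for $\mathcal{B}_{DR}$ and $\mathcal{B}_{UL}$ kill the upper-left and lower-right quadrants. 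In every case each $v' \ne c$ admits a robot at distance greater than $k$, so $V_M(\mathcal R) = \{c\}$.

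\emph{Main obstacle.} I expect the delicate point to be the near-boundary step: one must verify that a robot lying within distance $1$ of an empty boundary, yet inside $\mathcal{M}$, is forced onto the line one unit inside that boundary and into the correct cone (e.g.\ $x_u - y_u = k-1$ with $x_u \ge 0 \ge y_u$ for $\mathcal{B}_{DR}$), so that the clean identity $d_m(u,v') = (x_u - y_u) + |a| + b$ is valid. This needs separate treatment of the subcases where $u$ sits next to a shared corner $v_{c_1}$ or $v_{c_4}$, using crucially that those corners are themselves robot-free precisely because the boundary is empty. The remaining computations are routine sign bookkeeping once the frame at $c$ is fixed.
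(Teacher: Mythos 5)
Your proof is correct and follows essentially the same route as the paper's: both establish $c=\mathcal{C}(\mathcal{M})\in V_M(\mathcal R)$ via Lemma~\ref{lemma: center of a MED(P) is in V_M(P)}, invoke Lemma~\ref{lemma: condition for no neighbour in minmax points} to rule out the neighbours of $c$, and then kill every other candidate $v'$ by exhibiting a robot $u$ with $d_m(v',u)=d_m(v',c)+d_m(c,u)>k$ — your coordinate identity $(k-1)+|a|+b$ is exactly the paper's ``shortest path through $v$'' estimate $2+(k-1)$. The only (harmless) variations are that you dispose of the axis nodes via Lemma~\ref{lemma: 2 min max points on same grid line must be neighbours} and pin the near-boundary robot to the co-boundary line explicitly, where the paper leaves these as ``for all other possible positions of $v'$ we can always find a $u_i$.''
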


\begin{proof}
Our proof is based on the assumption that one condition is true at a time and then we show that $|V_M(\mathcal R)| = 1$.

   \noindent  \textit{Case I:} Let condition $(C1)$ be true. Thus, there exists a \Diam $\in$ \MED\ such that all the boundaries of \Diam\ contain robot positions. Let the center of the diamond be $v$, i.e., $v = \mathcal{C}(\mathcal{M})$. Then, from Lemma 1, it can be asserted that $v \in V_M(\mathcal R)$. If possible, let us assume that, $|V_M(\mathcal R)| >1$. Thus, there exists a $v' (\ne v) \in V_M(\mathcal R)$. Let $u_1, u_2,u_3,u_4 \in \mathcal R$ such that $u_1 \in \mathcal{B}_{UR},\  u_2 \in \mathcal{B}_{UL},\  u_3 \in \mathcal{B}_{DL}$ and $u_4 \in \mathcal{B}_{DR}$. Then, for any possible position of $v'$ we can always find an $u_i$ where $(i \in \{1, 2, 3 , 4\})$ such that there is a shortest path from $v'$ to $u_i$ that contains the node $v$. As $v$ is a min-max node, $d_m(v',u_i) > k$. This leads to a contradiction. So, $v' \notin V_M(\mathcal R)$ and thus, $|V_M(\mathcal R)| =1$.

  \noindent \textit{Case II:} Let condition $(C2)$ be true. Thus, there exists a diamond \Diam $\in$ \MED such that all boundaries except one (say $B$) have robot positions and there is a robot position $u$ that is within distance 1 from $B$. Without loss of generality, let $B = \mathcal{B}_{DR}$. Clearly, from Lemma 1, $v= \mathcal{C}(\mathcal{M})$ is a min-max node. Now, for the sake of contradiction, let $|V_M(\mathcal R)| >1$. This implies that there is some $v' (\ne v) \in V_M(\mathcal R)$. Note that by Lemma~\ref{lemma: condition for no neighbour in minmax points}, no neighbours of $v$ are min-max nodes. So, $d_m(v,v') \ge 2$. Without loss of generality, let $u_1, u_2,u_3$ be the robot positions such that $u_1 \in \mathcal{B}_{UR},\ u_2\in \mathcal{B}_{UL}$ and $u_3 \in \mathcal{B}_{DL}$. Let $v'$ be on the left of $L_V(v)$ and above $L_H(v)$. Then, there is a shortest path from $v'$ to $u$ that contains $v$ which implies, $d_m(v',u) = d_m(v',v)+d_m(v,u) \ge 2+(k-1)=k+1 >k$. For all other possible positions of $v'$ we can always find an $u_i$ where $i \in \{1, 2, 3\}$ such that a shortest path from $v'$ to $u_i$ contains $v$ and thus $d_m(v',u_i) > k$. This implies $v' \notin V_M(\mathcal R)$ and thus $|V_M(\mathcal R)| =1$.

    \noindent \textit{Case III:} Let the condition $(C3)$ be true. The proof of this case proceeds similarly as before, as in \textit{Case II}.
\end{proof}

\begin{lemma}
    \label{sufficiency for condition of one neighbour}
    Let $v \in V_M(\mathcal R)$ has exactly one neighbour $v_1$ such that, $v_1 \in V_M(\mathcal R)$. Then $\mathcal{M}_V  \in$ \MED, for which $\mathcal{C}(\mathcal{M}_v) = v$ has exactly two adjacent boundaries $B_1$ and $B_2$ that contain robot positions. In fact, $v_1$ is nearest to $B_1 \cap B_2$ than any other neighbours of $v$.  
\end{lemma}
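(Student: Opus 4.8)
The plan is to prove the statement by a complete case analysis on which of the four boundaries $\mathcal{B}_{UR}, \mathcal{B}_{UL}, \mathcal{B}_{DL}, \mathcal{B}_{DR}$ of the minimal enclosing diamond centred at $v$ carry robot positions, eliminating every configuration except ``exactly two adjacent boundaries occupied'' by contradiction with the hypothesis that $v$ has exactly one min-max neighbour. First I would invoke Lemma~\ref{lemma: any minmax point is a center of some MED} to fix the minimal enclosing diamond $\mathcal{M}_v$ whose centre is $v$; since this diamond is the $\ell_1$-ball of the fixed radius $k$ about $v$, it is the unique $\mathcal{M}_v \in$ \MED\ with $\mathcal{C}(\mathcal{M}_v)=v$. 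Moreover $\max_{u\in\mathcal R} d_m(u,v)=k$ is actually attained (otherwise every robot would lie within distance $k-1$ and a strictly smaller diamond would enclose $\mathcal R$, contradicting minimality), so the attaining robot sits on the perimeter $\mathcal{P}(\mathcal{M}_v)$ and hence at least one boundary of $\mathcal{M}_v$ contains a robot position.

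Next I would observe that the possible families of occupied boundaries fall into exactly three regimes, each already controlled by an earlier lemma. If only one boundary is occupied, Lemma~\ref{lemma: two neighbours of a minmax point is minmax condition} forces \emph{two} neighbours of $v$ to lie in $V_M(\mathcal R)$, contradicting the hypothesis of a single min-max neighbour. The decisive combinatorial remark is that any family of occupied boundaries that is neither a single boundary nor a pair of adjacent boundaries must contain a complete opposite pair from $\{(\mathcal{B}_{UR},\mathcal{B}_{DL}),(\mathcal{B}_{UL},\mathcal{B}_{DR})\}$: since the four boundaries partition into these two opposite pairs, every three-element and the four-element family contains a whole pair by pigeonhole, while a two-element opposite family is itself such a pair. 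In each of these cases the implication $(2)\Rightarrow(1)$ of Lemma~\ref{lemma: condition for no neighbour in minmax points} gives that \emph{no} neighbour of $v$ lies in $V_M(\mathcal R)$, again contradicting the hypothesis. Hence the only surviving possibility is that exactly two adjacent boundaries $B_1, B_2$ are occupied.

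Finally, with the regime pinned down, I would apply Lemma~\ref{lemma: condition of two neibouring minmax point of a min max point} directly: in the two-adjacent-boundary regime it asserts that $\mathcal{C}(\mathcal{M}_v)=v$ has exactly one neighbour in $V_M(\mathcal R)$, namely the adjacent node nearest to $B_1\cap B_2$. Since the hypothesis already supplies a unique min-max neighbour $v_1$, this forces the neighbour produced by Lemma~\ref{lemma: condition of two neibouring minmax point of a min max point} to coincide with $v_1$, delivering both conclusions simultaneously. I expect the main obstacle to be not any individual distance estimate --- all of those are delegated to the earlier lemmas --- but making the case split genuinely exhaustive; the crux is the pigeonhole observation above, which collapses the ``three'', ``four'', and ``two opposite'' cases into a single application of Lemma~\ref{lemma: condition for no neighbour in minmax points}. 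A minor point to record is that a robot sitting on a shared corner $v_{c_i}$ is, by the set definitions of the boundaries, counted in both adjacent boundaries meeting at that corner, so it never produces a spurious ``one boundary'' configuration and the trichotomy above remains correct.
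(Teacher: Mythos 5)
Your proof is correct and follows essentially the same route as the paper's: rule out the single-occupied-boundary case via Lemma~\ref{lemma: two neighbours of a minmax point is minmax condition}, rule out every configuration containing an occupied opposite pair via the implication $(2)\Rightarrow(1)$ of Lemma~\ref{lemma: condition for no neighbour in minmax points}, and then read off both conclusions from Lemma~\ref{lemma: condition of two neibouring minmax point of a min max point}. Your pigeonhole remark collapsing the three-boundary, four-boundary, and two-opposite-boundary cases into one application of Lemma~\ref{lemma: condition for no neighbour in minmax points} is actually tighter than the paper's handling of the three-boundary case, which is only asserted informally.
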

\begin{proof}
    First note that if there exist robot positions on all the four boundaries of $\mathcal{M}_V$, then according to Lemma~\ref{lemma: condition for no neighbour in minmax points}, there is exactly one min-max node, which is not possible according to the statement of the lemma. Next, if there exist three robot positions on the boundary of $\mathcal{M}_V$, it may be observed that depending on the position of the boundary nodes, both $v$ and $v_1$ cannot be min-max nodes. According to Lemma~\ref{lemma: two neighbours of a minmax point is minmax condition}, it can also be observed that $\mathcal{M}_v$ can not have robot positions in exactly one of its boundaries. Thus, $\mathcal{M}_v$ has exactly two boundaries, (say $B_1$ and $B_2$), that contain robot positions. Now by Lemma~\ref{lemma: condition for no neighbour in minmax points}, $B_1$ and $B_2$ can not be opposite. Hence, $B_1$ and $B_2$ must be two adjacent boundaries of $\mathcal{M}_v$ that contains robot positions. In fact, $v_1$ is nearest to $B_1 \cap B_2$, according to Lemma~\ref{lemma: condition of two neibouring minmax point of a min max point}.
\end{proof}
\begin{lemma}
    \label{sufficiency for condition of two neighbour }
    Let $v \in V_M(\mathcal R)$ has exactly two neighbours $v_1$ and $v_2$ such that, $v_1,v_2 \in V_M(\mathcal R)$. Then $\mathcal{M}_V  \in$ \MED, for which $\mathcal{C}(\mathcal{M}_v) = v$ has exactly one boundary $B_1$ that contains robot positions. In fact, $v_1$ and $v_2$ are the nearest to $B_1$ than any other neighbours of $v$.  
\end{lemma}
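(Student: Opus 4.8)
The plan is to mirror the elimination structure used in the proof of Lemma~\ref{sufficiency for condition of one neighbour}, but now exploiting the hypothesis of \emph{two} min-max neighbours to pin the occupied-boundary count to one. First I would invoke Lemma~\ref{lemma: any minmax point is a center of some MED} to fix a minimal enclosing diamond $\mathcal{M}_v \in$ \MED\ with $\mathcal{C}(\mathcal{M}_v)=v$; since all diamonds in \MED\ share the common size $k\sqrt{2}$, this diamond is uniquely determined by its center $v$, so it is legitimate to speak of ``the'' minimal enclosing diamond centered at $v$. Because $\mathcal{M}_v$ is minimal, at least one of its four boundaries $\mathcal{B}_{UR},\mathcal{B}_{UL},\mathcal{B}_{DL},\mathcal{B}_{DR}$ must carry a robot position (otherwise every robot lies strictly inside and the diamond could be shrunk, as in Case~1 of the proof of Lemma~\ref{lemma: any minmax point is a center of some MED}). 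The whole task is then to eliminate the possibilities of two, three, and four occupied boundaries.

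The key unifying observation is that the two opposite pairs $(\mathcal{B}_{UR},\mathcal{B}_{DL})$ and $(\mathcal{B}_{UL},\mathcal{B}_{DR})$ partition the four boundaries. Hence if four boundaries, three boundaries, or two \emph{opposite} boundaries are occupied, then at least one complete opposite pair carries robot positions; by the implication $(2)\Rightarrow(1)$ of Lemma~\ref{lemma: condition for no neighbour in minmax points}, $v$ would then have \emph{no} min-max neighbour, contradicting the assumption that it has two. The three-boundary subcase is exactly where the pigeonhole step bites: occupying three of the four boundaries necessarily fills one of the two opposite pairs entirely, which is what lets me route it through Lemma~\ref{lemma: condition for no neighbour in minmax points} as well. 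The remaining subcase, two \emph{adjacent} occupied boundaries, is excluded by Lemma~\ref{lemma: condition of two neibouring minmax point of a min max point}: it forces \emph{exactly one} neighbour of $v$ to be a min-max node, again contradicting the exact count of two.

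Once the counts of two, three, and four are ruled out, the only surviving possibility is that exactly one boundary $B_1$ is occupied. Here Lemma~\ref{lemma: two neighbours of a minmax point is minmax condition} applies verbatim: it produces two neighbours of $v=\mathcal{C}(\mathcal{M}_v)$ — precisely the two that are nearest to $B_1$ — lying in $V_M(\mathcal R)$, and its proof simultaneously shows the other two neighbours are \emph{not} min-max nodes. These two neighbours are therefore exactly $v_1,v_2$, which establishes both the claim that $\mathcal{M}_v$ has a single occupied boundary $B_1$ and the refinement that $v_1,v_2$ are the neighbours of $v$ closest to $B_1$.

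The proof involves no genuine computation; the main obstacle is the careful bookkeeping of the elimination. Specifically, I must be precise that each excluded configuration contradicts the \emph{exact} count of two neighbours rather than merely bounding it: the opposite-pair cases must yield zero min-max neighbours (via Lemma~\ref{lemma: condition for no neighbour in minmax points}), and the adjacent-pair case must yield exactly one (via Lemma~\ref{lemma: condition of two neibouring minmax point of a min max point}, which indeed asserts ``exactly one''). The only slightly delicate point is handling robots sitting on a corner shared by two boundaries, but since the counting here is inherited directly from the hypotheses of the cited lemmas, no separate argument is needed.
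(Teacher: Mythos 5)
Your proposal is correct and follows essentially the same route as the paper: rule out four, three, and two occupied boundaries of $\mathcal{M}_v$ via Lemma~\ref{lemma: condition for no neighbour in minmax points} (opposite pairs) and Lemma~\ref{lemma: condition of two neibouring minmax point of a min max point} (adjacent pairs), then apply Lemma~\ref{lemma: two neighbours of a minmax point is minmax condition} to the single remaining occupied boundary to identify $v_1,v_2$. Your write-up is in fact more explicit than the paper's, which compresses the three-boundary case into ``similar to before as in Lemma~\ref{sufficiency for condition of one neighbour}''; your pigeonhole observation that any three occupied boundaries must contain a complete opposite pair makes that step precise.
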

\begin{proof}
   Assume that there exists a node $v \in V_M(\mathcal R)$ that has two neighbours $v_1$ and $v_2$, which are also min-max nodes. Now, according to Lemma~\ref{lemma: grid line can not contain more than 2 min max point}, $v_1,v_2$ and $v$ can not be on the same grid line. According to Lemma~\ref{lemma: condition of two neibouring minmax point of a min max point} and Lemma~\ref{lemma: condition for no neighbour in minmax points} and similar to before as in Lemma~\ref{sufficiency for condition of one neighbour}, $\mathcal{M}_v$ can not have more than two boundaries that contain robot positions. This implies $\mathcal{M}_v$ has exactly one boundary, say $B_1$, that contains robot positions. Now, from Lemma~\ref{lemma: two neighbours of a minmax point is minmax condition}, $v_1$ and $v_2$ are the neighbours of $v$ that are nearest to $B_1$, compared to all the other neighbours of $v$.
\end{proof}
\begin{lemma}
\label{lemma: condition diagonal}
    Let $v \in V_M(\mathcal R)$ and $|V_M(\mathcal R)| > 1$. Also let for all $v' \in N(v)$, $v' \notin V_M(\mathcal R)$. Then, $\exists\ v_0 \in D(v)$ such that $v_0 \in V_M(\mathcal R)$. Furthermore, $\forall\  v_0' \in N(v_0)$, $v_0' \notin V_M(\mathcal R)$.
\end{lemma}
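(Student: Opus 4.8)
The plan is to reduce the whole statement to a single coordinate change that turns Manhattan balls into axis-aligned boxes, and then simply read off the shape of $V_M(\mathcal R)$. First I would pass to the rotated coordinates $s=x+y$ and $d=x-y$ of a node $(x,y)$ and record the elementary identity $d_m(u,c)=\max\{|s_u-s_c|,\,|d_u-d_c|\}$; that is, Manhattan distance becomes Chebyshev distance in the $(s,d)$-plane. Let $k$ be the common value with $S(\mathcal{M})=k\sqrt2$ for every $\mathcal{M}\in\mathcal{MED}(\mathcal R)$. By Lemmas~\ref{lemma: center of a MED(P) is in V_M(P)} and~\ref{lemma: any minmax point is a center of some MED} the min-max nodes are exactly the centers of minimal enclosing diamonds, so $\max_{u}d_m(u,c)\ge k$ for every node $c$, with equality precisely when $c\in V_M(\mathcal R)$. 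Hence $c\in V_M(\mathcal R)$ iff $\max_u d_m(u,c)\le k$, which by the identity factorizes into $\max_u|s_u-s_c|\le k$ and $\max_u|d_u-d_c|\le k$, giving
\[
V_M(\mathcal R)=\big([s^-,s^+]\times[d^-,d^+]\big)\cap\{(s,d):s\equiv d\!\!\pmod2\},
\]
where $s^-=\max_u s_u-k$, $s^+=\min_u s_u+k$, and $d^\pm$ are defined analogously. In these coordinates a neighbour in $N(v)$ is a step $(s_v\pm1,\,d_v\pm1)$ (all four sign combinations) while a diagonal node in $D(v)$ is a step $(s_v\pm2,\,d_v)$ or $(s_v,\,d_v\pm2)$; the parity condition $s\equiv d\pmod2$ is exactly the condition for being a grid node and is preserved by both kinds of step.

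The heart of the argument is that a \emph{genuinely two-dimensional} box forces adjacency. I would prove the sub-claim: if $s^-<s^+$ and $d^-<d^+$, then every $c\in V_M(\mathcal R)$ has a neighbour in $V_M(\mathcal R)$. Indeed, since each interval has integer length at least one, one can pick $\sigma\in\{\pm1\}$ with $s_c+\sigma\in[s^-,s^+]$ and $\tau\in\{\pm1\}$ with $d_c+\tau\in[d^-,d^+]$; then $(s_c+\sigma,\,d_c+\tau)$ is a grid neighbour of $c$ lying in the box, hence in $V_M(\mathcal R)$. Applying this to $c=v$ together with the hypothesis that no neighbour of $v$ is a min-max node rules out the two-dimensional case, so at least one of $[s^-,s^+]$, $[d^-,d^+]$ is a single point; and since $|V_M(\mathcal R)|>1$ the box is not a single point either. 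Thus $V_M(\mathcal R)$ is a nondegenerate segment along exactly one of the two axes. This is precisely where Lemma~\ref{lemma: condition for no neighbour in minmax points} fits in: a pair of opposite boundaries of $\mathcal{M}_v$ carrying robots is exactly what pins one of the two coordinates $s,d$ of every min-max node.

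Assume without loss of generality that the segment is horizontal in the $(s,d)$-plane, i.e. $s^-=s^+=s_v$ and $d^-<d^+$ (the other case is identical after swapping $s$ and $d$). The min-max nodes are then exactly the grid points $(s_v,d)$ with $d\in[d^-,d^+]$, and consecutive ones differ by $2$ in the $d$-coordinate. Because $|V_M(\mathcal R)|>1$, at least one of $(s_v,d_v+2)$ and $(s_v,d_v-2)$ lies in $[s^-,s^+]\times[d^-,d^+]$; call it $v_0$. By the translation table above $v_0\in D(v)$, which establishes the first assertion. For the second assertion, note that $v_0$ has the same $s$-coordinate $s_v=s^-=s^+$ as $v$, so each of its four neighbours $(s_v\pm1,\,d_{v_0}\pm1)$ has $s$-coordinate $s_v\pm1\notin[s^-,s^+]$ and therefore fails to be a min-max node; hence $v_0'\notin V_M(\mathcal R)$ for every $v_0'\in N(v_0)$, as required.

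I expect the main obstacle to be bookkeeping rather than conceptual: verifying the Chebyshev identity and, above all, keeping the parity constraint $s\equiv d\pmod2$ consistent so that the ``segment'' is genuinely a set of grid nodes spaced two apart (matching the diagonal steps $D(v)$) rather than an interval of half-integer phantom points. If one prefers to argue entirely inside the paper's diamond vocabulary, the same proof can be phrased by taking the opposite boundaries $\mathcal{B}_{UR},\mathcal{B}_{DL}$ (or $\mathcal{B}_{UL},\mathcal{B}_{DR}$) supplied by Lemma~\ref{lemma: condition for no neighbour in minmax points}, observing that a robot on $\mathcal{B}_{UR}$ and one on $\mathcal{B}_{DL}$ force $s^-=s^+=s_v$, and then sliding the center of $\mathcal{M}_v$ one diagonal step along the common direction of these two boundaries to locate $v_0$; the inherited no-neighbour property of $v_0$ follows because its minimal enclosing diamond again has robots on the same pair of opposite boundaries.
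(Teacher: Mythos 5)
Your proof is correct, and it takes a genuinely different route from the paper's. The paper argues locally in the diamond vocabulary: it invokes Lemma~\ref{lemma: condition for no neighbour in minmax points} to obtain a pair of opposite boundaries of $\mathcal{M}_v$ carrying robots, uses Lemma~\ref{lemma: condition for unique minmax point} to find an empty boundary $B$ with no robot within distance $1$, and then verifies by explicit shortest-path computations that the diagonal node of $v$ furthest from $B$ is a min-max node whose own minimal enclosing diamond again has robots on opposite boundaries. You instead pass to the rotated coordinates $s=x+y$, $d=x-y$, use the identity $d_m(u,c)=\max\{|s_u-s_c|,|d_u-d_c|\}$ together with Lemmas~\ref{lemma: center of a MED(P) is in V_M(P)} and~\ref{lemma: any minmax point is a center of some MED} to show that $V_M(\mathcal R)$ is exactly the set of correct-parity lattice points in an axis-aligned box $[s^-,s^+]\times[d^-,d^+]$, and then read the lemma off this description: the no-neighbour hypothesis forces one interval to degenerate, $|V_M(\mathcal R)|>1$ forces the other to have length at least $2$, the surviving direction produces the diagonal neighbour $v_0\in D(v)$, and the degenerate coordinate immediately kills every $v_0'\in N(v_0)$. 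I checked the delicate points — the parity bookkeeping (neighbour steps change both $s$ and $d$ by $\pm1$, diagonal steps change exactly one by $\pm2$), the integrality of $s^\pm,d^\pm$ needed for the sub-claim, and the fact that a degenerate $s$-interval with $|V_M(\mathcal R)|>1$ forces $d^+-d^-\ge 2$ — and they all go through. What your approach buys is considerable: the box characterization is a global structure theorem from which Lemmas~\ref{lemma: grid line can not contain more than 2 min max point}, \ref{lemma: 2 min max points on same grid line must be neighbours}, \ref{lemma: 1 component tree implies step graph}--\ref{lemma: disconnected G_M(P) is a disconnected step graph} and Theorem~\ref{Thm: Characterization of $G_M(P)$} would also follow essentially for free (the $4$-cycle case corresponds to both intervals having length $2$ with mismatched endpoint parities), whereas the paper's local argument proves only this one lemma; the cost is that you rely on the equivalence ``min-max node $\Leftrightarrow$ $\max_u d_m(u,c)\le k$'' rather than staying inside the paper's boundary-by-boundary vocabulary, though your closing remark correctly indicates how to translate back.
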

\begin{proof}
    Let $v \in V_M(\mathcal R)$ and let $\mathcal{M}_v \in $\MED\ be such that $\mathcal{C}(\mathcal{M}_v) = v$ and let $S(\mathcal{M}_v) =k \sqrt{2}$, where $k \in \mathbb{N}$. According to Lemma~\ref{lemma: condition for no neighbour in minmax points}, since no neighbour of $v$ are min-max nodes, there must exist at least a pair of opposite boundaries $B_1$ and $B_2$ of $\mathcal{M}_v$ (say) that contains robot positions. Without loss of generality, let $B_1 = \mathcal{B}_{UR}$ and $B_2 = \mathcal{B}_{DL}$. According to Lemma~\ref{lemma: condition for unique minmax point}, since $|V_M(\mathcal R)| >1$, at least one of the remaining two boundaries must not contain robot positions. Without loss of generality, let $B = \mathcal{B}_{DR}$ be such a boundary. Also, by Lemma~\ref{lemma: condition for unique minmax point}, there cannot be any robot position within distance 1 from $B$. Now, we claim that the node $v_0 \in D(v)$ that is furthest from $B$ out of the four diagonal nodes is a min-max node. Otherwise, if $v_0 \notin V_M(\mathcal R)$, there exists a robot position $\ u_0$ such that $d_m(v_0,u_0) > k$. We first show that $u_0$ can't be above $L_H(v)$ or, on the left of $L_V(v)$. Let $v_0'$ be the node $ L_H(v_0) \cap L_V(v)$ and $v_0''$ be the node $L_H(v) \cap L_V(v_0)$. Note that $d_m(v, v_0') = d_m(v,v_0'') = d_m(v_0,v_0') = d_m(v_0,v_0'') =1$. Now, for any robot position $u$ which is above $L_H(v)$, there is a shortest path from $v$ to $u_0$ that contains $v_0'$. Thus, $d_m(v, u_0) = d_m(v,v_0')+ d_m(v_0',u_0) = d_m(v_0,v_0')+d_m(v_0',u_0) = d_m(v_0,u_0) >k$. This is a contradiction to our assumption. In a similar way, $u_0$ can't be on the left of $L_V(v)$ if node $v_0''$ is considered. Thus, $u_0$ must be in the region $\mathcal{RG}$, which contains all grid nodes to the right of $v$ on $L_H(v)$, all grid nodes below $v$ on $L_V(v)$, and all grid nodes to the right $L_V(v)$ and below $L_H(v)$. Note that for all robot positions $u$ that are in $\mathcal{RG}$, $d_m(v,u) \le k-2$ [as $B$ is the only boundary of $\mathcal{M}_v$ in this region and all nodes that are on $B$ and at a distance 1 from $B$ does not contain any robot positions]. Now for all robot positions $u \in \mathcal{RG}$, there is a shortest path from $v_0$ to $u$ that contains $v$. So, $d_m(v_0,u) = d_m(v_0,v)+d_m(v,u) \le 2+k-2 =k$ ( $\forall u \in \mathcal{RG}$ and $u$ is a robot position), which is a contradiction to the assumption that $v_0 \notin V_M(\mathcal R)$. Hence, $v_0 \in V_M(\mathcal R)$.

    \noindent Now, in $\mathcal{M}_v$, $B_1 = \mathcal{B}_{UR}$ and $B_2 = \mathcal{B}_{DL}$ contains robot positions and $B = \mathcal{B}_{DR}$ does not contain any robot positions. Note that since $B = \mathcal{B}_{DR}$ does not contain any robot positions, the boundaries $\mathcal{B}_{UR}$ and $\mathcal{B}_{DL}$ of $\mathcal{M}_{v_0}$ must contain robot positions, where $\mathcal{M}_{v_0} \in $ \MED\ with $\mathcal{C}(\mathcal{M}_{v_0}) = v_0$. Thus by Lemma~\ref{lemma: condition for no neighbour in minmax points}, $v_0' \notin V_M(\mathcal R), \forall\ v_0' \in N(v_0)$.
\end{proof}

\begin{lemma}
    \label{lemma: 1 component tree implies step graph}
    If a component of $G_M(\mathcal R)$ is a tree then it must be a step-graph.
\end{lemma}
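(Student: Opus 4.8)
The plan is to reduce everything to a bound on the degrees of vertices inside $G_M(\mathcal R)$ and then read off both defining properties of a step-graph from that bound together with Lemma~\ref{lemma: grid line can not contain more than 2 min max point}. Concretely, I would first fix any $v \in V_M(\mathcal R)$ lying in the given tree component and inspect its four grid-neighbours. The up- and down-neighbours of $v$ both lie on $L_V(v)$, while the left- and right-neighbours both lie on $L_H(v)$. Since $G_M(\mathcal R)$ is the subgraph of the grid \emph{induced} by $V_M(\mathcal R)$, a grid-neighbour of $v$ is adjacent to $v$ in $G_M(\mathcal R)$ precisely when it too is a min-max node. By Lemma~\ref{lemma: grid line can not contain more than 2 min max point}, the line $L_V(v)$ carries at most two members of $V_M(\mathcal R)$; as $v$ is one of them, at most one of the up/down neighbours is a min-max node. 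The identical argument on $L_H(v)$ bounds the horizontal min-max neighbours by one. Hence every vertex of $G_M(\mathcal R)$ has degree at most two in $G_M(\mathcal R)$, and, more precisely, its neighbours consist of at most one horizontal and at most one vertical grid-neighbour.

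With the degree bound in hand, condition~(1) of Definition~\ref{def:Step Graph} is immediate: a connected acyclic graph of maximum degree two is a path, which is exactly the line-graph required by the definition. Thus the tree component is a path.

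For condition~(2), I would take any three vertices $v_1,v_2,v_3$ of the component with $(v_1,v_2),(v_2,v_3)\in E'$, so that $v_1$ and $v_3$ are two \emph{distinct} min-max neighbours of $v_2$. By the refined degree analysis above, $v_2$ has at most one horizontal and at most one vertical min-max neighbour, so one of $v_1,v_3$ must be the horizontal neighbour and the other the vertical neighbour of $v_2$; otherwise both would lie on $L_H(v_2)$ or both on $L_V(v_2)$, forcing three min-max nodes on one grid line and contradicting Lemma~\ref{lemma: grid line can not contain more than 2 min max point}. Writing $v_2=(x,y)$, this forces $\{v_1,v_3\}=\{(x\pm 1,y),(x,y\pm 1)\}$ for suitable signs, so $v_1,v_2,v_3$ are exactly three corners of a unit square, which is condition~(2).

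The argument is short precisely because the heavy lifting is already contained in Lemma~\ref{lemma: grid line can not contain more than 2 min max point}; the only point needing care is the bookkeeping that separates the horizontal from the vertical neighbour of each internal vertex, which is exactly what makes the path turn at a right angle (an $L$-shape) at every internal vertex instead of continuing straight, so I do not anticipate a serious obstacle. For completeness I would also note that $V_M(\mathcal R)$ is finite, since by Lemma~\ref{lemma: any minmax point is a center of some MED} every min-max node lies within the fixed min-max distance $k$ of each robot and hence inside a finite region; this guarantees the component is a finite staircase rather than an infinite one, matching the intended picture.
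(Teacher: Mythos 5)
Your proof is correct and follows essentially the same route as the paper's: Lemma~\ref{lemma: grid line can not contain more than 2 min max point} is the engine in both arguments, used to forbid three min-max nodes on a single grid line and hence to force the right-angle turn at every internal vertex. Your version is in fact slightly more complete, since the explicit degree-at-most-two bound (at most one horizontal and at most one vertical min-max neighbour of each vertex) is what rules out branch vertices and establishes condition~(1) of Definition~\ref{def:Step Graph}, a point the paper's case analysis leaves implicit.
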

\begin{proof}
    Let $C$ be a component of $G_M(\mathcal R)$ that is a tree. Therefore, it must be connected. First, assume that $C$ contains a single node $v_1$. In that case, it must be a step-graph. If $C$ contains exactly two nodes, then the nodes must be adjacent and hence, $C$ must be a step-graph. Next, assume the case when $C$ contains at least three nodes. This implies that only two cases are possible: $(i)$ there exist three nodes $v_1$, $v_2$ and $v_3$ such that $v_1$ and $v_2$ are adjacent in $C$, $v_2$ and $v_3$ are adjacent in $C$ and occupying three corners of a unit square and $(ii)$ there exist three nodes $v_1$, $v_2$ and $v_3$ such that $v_1$ and $v_2$ are adjacent in $C$, $v_2$ and $v_3$ are adjacent in $C$ and are positioned on the same grid-line. Note that according to Lemma \ref{lemma: grid line can not contain more than 2 min max point}, case $(ii)$ is not possible. Thus, the only case remain is the case $(i)$. Therefore, $C$ must be a step-graph, and hence the lemma is proved.
    
    % This implies either $C$ is not a line graph or, there are three vertices $v_1,v_2,v_3$ in $C$ such that  in $C$ but $v_1,v_2$ and $v_3$ does not occupy three corners of a unit square of the infinite grid $G$. If $C$ is not a line graph then there is a vertex $v$ in $C$ such that $\delta(v) \ge 3$. This implies there are three vertices of $C$ on the grid line $L_H(v)$or, $L_V(v)$. This contradicts Lemma~\ref{lemma: grid line can not contain more than 2 min max point}. So, $C$ is a line graph. Now, if there are three vertices $v_1,v_2,v_3$ in $C$ such that $v_1$ and$v_2$ are adjacent in $C$, $v_2$ and $v_3$ are adjacent in $C$ but $v_1,v_2$ and $v_3$ does not occupy three corners of a unit square of the infinite grid $G$ then, again $v_1,v_2$ and $v_3$ are on the same grid line either on  $L_H(v_2)$or, on $L_V(v_2)$. This again contradicts Lemma~\ref{lemma: grid line can not contain more than 2 min max point}. Thus $C$ must be a step graph.
\end{proof}
\begin{lemma}
    If a component of $G_M(\mathcal R)$ contains a cycle then the component must be a cycle of length four.
    \label{lemma: if G_M(P) contains a cycle then G_M(P) is a cycle of length 4}
\end{lemma}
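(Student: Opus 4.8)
The plan is to first bound the maximum degree of $G_M(\mathcal R)$ and thereby reduce the claim to pinning down the length of a single cycle. By Lemma~\ref{lemma: grid line can not contain more than 2 min max point}, a grid line through any node carries at most two min-max nodes; applied to $L_H(v)$ and $L_V(v)$ this shows that a min-max node $v$ has at most one min-max neighbour on its horizontal line and at most one on its vertical line, so $G_M(\mathcal R)$ has maximum degree at most $2$. A connected graph of maximum degree $2$ that contains a cycle must coincide with that cycle (each cycle vertex already has degree $2$, leaving no room to attach further edges or vertices while staying connected). Hence the component in question is a single cycle $C=(v_0,v_1,\dots,v_{\ell-1})$ in which every vertex has degree exactly $2$. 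Using Lemma~\ref{lemma: grid line can not contain more than 2 min max point} once more, the two neighbours of each $v_i$ cannot be collinear with $v_i$, so one neighbour lies on $L_H(v_i)$ and the other on $L_V(v_i)$; consecutive edges of $C$ are therefore perpendicular, $\ell$ is even, and each $v_i$ is the corner of a unit square spanned by $v_{i-1},v_i,v_{i+1}$. Moreover, since each $v_i$ has exactly two min-max neighbours, Lemma~\ref{sufficiency for condition of two neighbour } tells us that the minimal enclosing diamond centred at $v_i$ has exactly one robot-bearing boundary, toward which this corner opens.

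To control the length I would pass to the rotated coordinates $s=x+y$ and $t=x-y$, in which $d_m$ becomes the Chebyshev distance and every diamond becomes an axis-parallel square; the grid nodes become the even sublattice $L=\{(s,t)\in\mathbb Z^2 : s+t \text{ even}\}$, and two grid nodes are adjacent precisely when their $(s,t)$-coordinates differ by $(\pm1,\pm1)$. In these coordinates $\max_{u\in\mathcal R} d_m(u,\cdot)$ is the maximum of two independent ``V-shaped'' functions, one of $s$ and one of $t$, so its set of minimisers is exactly $L\cap R$ for a single axis-parallel rectangle $R=[s_{\max}-k,\,s_{\min}+k]\times[t_{\max}-k,\,t_{\min}+k]$, where $s_{\min},s_{\max},t_{\min},t_{\max}$ are the extreme rotated coordinates of the robots and $k$ is the common radius with $S(\mathcal M)=k\sqrt2$. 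Thus $G_M(\mathcal R)$ is the subgraph induced on $L\cap R$.

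Now I would analyse $L\cap R$ by the two rotated side lengths of $R$. If one side has length at most $1$, then only two lattice lines of that coordinate meet $R$, and $L\cap R$ is a staircase path under the $(\pm1,\pm1)$-adjacency, which contains no cycle. Hence a cycle forces both rotated sides of $R$ to have length at least $2$. The minimality of $k$ then forces both sides to equal exactly $2$: if one side had length $\ge 3$ while the other had length $\ge 2$, decreasing $k$ by one (which shortens each side by $2$) would still leave a box with integer corners containing a point of $L$, contradicting that $k$ is the min-max value. Finally, the points of $L$ inside a $2\times2$ rotated square are either the four corners of a unit grid square (a $4$-cycle under the $(\pm1,\pm1)$-adjacency) or a grid node together with its four neighbours; the latter is a star $K_{1,4}$, which is impossible here both by Lemma~\ref{lemma: grid line can not contain more than 2 min max point} and because our component is a cycle. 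Therefore $C$ is a unit grid square, i.e. a cycle of length four.

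The main obstacle is the step forcing $R$ to be exactly $2\times2$, which rests on a parity bookkeeping between the even sublattice $L$ and the box $R$: the optimal radius $k$ is inflated precisely when the natural centre of $R$ fails the parity constraint of $L$, and this is exactly what both widens $R$ to a full $2\times2$ box and selects the four-corner (rather than the $K_{1,4}$) arrangement. I would carry this out by comparing, for each candidate radius, the number of admissible lattice lines together with the parity of the box corners, mirroring the $2\times2$ analysis above, to certify that no smaller radius admits a min-max node.
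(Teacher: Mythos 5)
Your proof is correct, but it follows a genuinely different route from the paper's. The paper's argument is local and direct: it splits on whether the component is itself a cycle or properly contains one, and rules out length six or more by noting that such a cycle would force three min-max nodes onto a single grid line, contradicting Lemma~\ref{lemma: grid line can not contain more than 2 min max point}. You instead first extract a maximum-degree-two bound from that same lemma, which cleanly handles the ``component equals the cycle'' reduction that the paper's Case~1 leaves implicit, and then prove a much stronger global statement: in the rotated coordinates $s=x+y$, $t=x-y$ the eccentricity function is a maximum of two one-variable tent functions, so $V_M(\mathcal R)$ is exactly the even sublattice $L$ intersected with an axis-parallel box $R$, and minimality of the radius $k$ forces $R$ either to have a side of length at most one (giving an independent diagonal set or a staircase path, hence no cycle) or to be exactly $2\times 2$ with a centre of odd parity, whose four admissible lattice points are the corners of a unit grid square, i.e.\ a four-cycle. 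I checked the key steps --- the identification of the minimiser set with $L\cap R$, the shrinking argument that excludes a side of length at least three when the other is at least two, and the parity analysis of the $2\times 2$ box that separates the four-cycle from the $K_{1,4}$ configuration (which is indeed excluded, both by Lemma~\ref{lemma: grid line can not contain more than 2 min max point} and because its centre would already realise radius $k-1$) --- and they all hold. What your approach buys is that the paper's entire characterization of $G_M(\mathcal R)$ as a step-graph, a disconnected step-graph, or a four-cycle drops out of a single case analysis on the box's side lengths, with the counting details the paper's terser argument glosses over made explicit; the cost is the coordinate-change machinery, where the paper gets by with one application of Lemma~\ref{lemma: grid line can not contain more than 2 min max point}.
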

\begin{proof}
  Let $C$ be a component of $G_M(\mathcal R)$ which contains a cycle. Depending on whether $C$ is a cycle or not, the following two cases may arise.
  
  \paragraph{Case-1} $C$ is not a cyclic graph but has a proper subgraph $C'$, which is a cycle. We have to prove that $C'$ is a cycle of length four. Otherwise, if not, then the cycle must be of length greater than four. Note that the length of the cycle must be even and must be at least six. In this case, there are at least three grid-lines that include min-max nodes. Additionally, at least two of these grid-lines will contain three or more min-max nodes. This is a contradiction to the assumption, as according to Lemma \ref{lemma: grid line can not contain more than 2 min max point}, three min-max nodes can not be on any grid-line. Therefore, $C'$ must be a cycle of length four.

  \paragraph{Case-2} Suppose the component $C$ is a cycle. We need to prove that the cycle has a length of exactly four. The argument follows the same reasoning as in Case 1, since the cycle's length cannot be greater than six.

 \noindent So, the component $C$ of $G_M(\mathcal R)$ must be a cycle of length exactly four.
  \end{proof}
  
  \begin{corollary}
     \label{cor: connected G_M(P) is either a step graph or a cycle of length 4}
     If $G_M(\mathcal R)$ is connected, then it is either a step-graph or a cycle of length four.
 \end{corollary}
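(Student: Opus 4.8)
The plan is to reduce the statement to the two immediately preceding lemmas by invoking the elementary dichotomy for connected graphs: any connected graph is either acyclic (hence a tree) or contains at least one cycle. Since $G_M(\mathcal R)$ is assumed connected, it constitutes a single component $C$, so there is no need to reason component-by-component, and both Lemma~\ref{lemma: 1 component tree implies step graph} and Lemma~\ref{lemma: if G_M(P) contains a cycle then G_M(P) is a cycle of length 4} apply directly to $C = G_M(\mathcal R)$.

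I would carry out the argument in two cases. First, suppose $C$ is acyclic. Then $C$ is a connected acyclic graph, i.e.\ a tree, and Lemma~\ref{lemma: 1 component tree implies step graph} immediately gives that $C$, being a component of $G_M(\mathcal R)$ that is a tree, is a step-graph. Second, suppose $C$ contains a cycle. Then $C$ is a component of $G_M(\mathcal R)$ containing a cycle, so Lemma~\ref{lemma: if G_M(P) contains a cycle then G_M(P) is a cycle of length 4} forces $C$ to be a cycle of length four. Since these two cases are mutually exclusive and jointly exhaustive for any connected graph, we conclude that $G_M(\mathcal R)$ is either a step-graph or a four-cycle, as claimed.

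There is essentially no hard step here: the corollary is a direct packaging of the two lemmas, with the entire structural content (ruling out longer cycles via Lemma~\ref{lemma: grid line can not contain more than 2 min max point}, and identifying tree components with step-graphs) already discharged in their proofs. The only point worth stating explicitly is that connectedness guarantees a single component, so the phrase ``a component of $G_M(\mathcal R)$'' in each lemma can be replaced by $G_M(\mathcal R)$ itself; the rest is the standard tree-versus-cycle trichotomy on a connected graph.
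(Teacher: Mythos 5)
Your proposal is correct and matches the paper's own argument, which likewise derives the corollary directly from Lemma~\ref{lemma: 1 component tree implies step graph} and Lemma~\ref{lemma: if G_M(P) contains a cycle then G_M(P) is a cycle of length 4}; you merely make explicit the acyclic-versus-cyclic dichotomy that the paper leaves implicit.
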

 \begin{proof}
    The proof follows from Lemma~\ref{lemma: 1 component tree implies step graph} and Lemma~\ref{lemma: if G_M(P) contains a cycle then G_M(P) is a cycle of length 4}. 
 \end{proof}
   \begin{lemma}
    \label{lemma: one component or totally disconnected}
If $|V_M(\mathcal R)| = m$ and $m>1$, then the graph $G_M(\mathcal R)$ is either connected or, has exactly $m$ components.
\end{lemma}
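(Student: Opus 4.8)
The plan is to prove the statement in the sharper form ``if $G_M(\mathcal R)$ contains even one edge, then it is connected'', since ``exactly $m$ components'' just means $G_M(\mathcal R)$ is edgeless, and the edgeless case is immediate. Throughout I fix the common size $S(\mathcal M)=k\sqrt2$ of the minimal enclosing diamonds; by Lemmas~\ref{lemma: center of a MED(P) is in V_M(P)} and~\ref{lemma: any minmax point is a center of some MED} the members of $V_M(\mathcal R)$ are exactly the centres of such diamonds. First I would record two facts. Each node of $V_M(\mathcal R)$ has degree at most two in $G_M(\mathcal R)$: two min-max neighbours on a common grid line, together with their centre, would be three collinear min-max nodes, forbidden by Lemma~\ref{lemma: grid line can not contain more than 2 min max point}, so $v$ cannot have both a left and a right (nor both an up and a down) min-max neighbour. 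Moreover, Lemmas~\ref{lemma: condition for no neighbour in minmax points} and~\ref{sufficiency for condition of one neighbour} (with the analogous two-neighbour characterisation) tie the degree of $v$ to the way robots sit on the boundaries of $\mathcal M_v$: degree $0$ iff an opposite pair of boundaries carries robots, degree $1$ iff exactly two adjacent boundaries do, and degree $2$ iff exactly one boundary does.

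The clean step (call it Claim A) is that \emph{if some min-max node is isolated, then $G_M(\mathcal R)$ is edgeless}. Suppose $w\in V_M(\mathcal R)$ has no min-max neighbour. By Lemma~\ref{lemma: condition for no neighbour in minmax points} an opposite pair of boundaries of $\mathcal M_w$ carries robots; say $u_1\in\mathcal B_{UR}$ and $u_2\in\mathcal B_{DL}$, so $u_1$ lies weakly up-right of $w$, $u_2$ weakly down-left of $w$, and $d_m(u_1,u_2)=d_m(u_1,w)+d_m(w,u_2)=2k$. For any $x\in V_M(\mathcal R)$ we have $d_m(x,u_1),d_m(x,u_2)\le k$, hence $2k=d_m(u_1,u_2)\le d_m(x,u_1)+d_m(x,u_2)\le 2k$; the forced equalities place $x$ on a monotone shortest $u_1$--$u_2$ path with $d_m(x,u_1)=d_m(x,u_2)=k$, so $u_1\in\mathcal B_{UR}$ and $u_2\in\mathcal B_{DL}$ of $\mathcal M_x$. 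By Lemma~\ref{lemma: condition for no neighbour in minmax points}, $x$ is isolated too. Thus every min-max node is isolated and $G_M(\mathcal R)$ has no edge. Contrapositively, the presence of a single edge rules out every isolated node, so each min-max node then has degree $1$ or $2$.

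It remains (Claim B) to show that \emph{when no min-max node is isolated, $G_M(\mathcal R)$ is connected}, and this is the step I expect to be the main obstacle, since ``no isolated vertex'' only gives minimum degree one and does not by itself forbid two separate step-graph components. By Corollary~\ref{cor: connected G_M(P) is either a step graph or a cycle of length 4} each component is a step-graph or a $4$-cycle; the plan is to upgrade the local ``at most two adjacent occupied boundaries'' data into a global confinement. Using Lemma~\ref{lemma: condition for unique minmax point} together with the degree-$2$ case, every empty boundary of an $\mathcal M_v$ forces all robots to avoid its distance-$1$ neighbourhood, and intersecting these exclusions over all centres should pin the entire robot set inside a diagonal band of diagonal width one; the centres of the size-$k$ diamonds enclosing such a band are exactly the grid nodes of one $2\times\ell$ diagonal strip, whose consecutive nodes are grid-adjacent, yielding a single zig-zagging step-graph. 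The delicate part is excluding a genuine gap between two staircase runs: any node lying between two components is within distance $k$ of every robot and is therefore itself a min-max node, so it cannot be absent, while a truly separated second component would reintroduce an isolated node (contradicting Claim A) or force three min-max nodes onto a line (contradicting Lemma~\ref{lemma: grid line can not contain more than 2 min max point}); the all-degree-$2$ sub-case cannot split into several $4$-cycles because a $4$-cycle already requires its four centres to span two units in both diagonal directions, which the width-one band forbids beyond a single cycle. Combining Claims A and B then gives the asserted dichotomy.
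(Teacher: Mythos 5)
Your reduction of the lemma to ``one edge forces connectivity'' is sound, and your Claim~A is correct and complete: if $w$ is isolated, Lemma~\ref{lemma: condition for no neighbour in minmax points} gives robots $u_1\in\mathcal B_{UR}$, $u_2\in\mathcal B_{DL}$ of $\mathcal M_w$ with $d_m(u_1,u_2)=2k$, and the squeezed triangle inequality $2k\le d_m(x,u_1)+d_m(x,u_2)\le 2k$ really does place every min-max node $x$ in the monotone box between $u_1$ and $u_2$ at distance exactly $k$ from each, whence every $x$ satisfies condition (2) of Lemma~\ref{lemma: condition for no neighbour in minmax points} and is itself isolated. This is a genuinely different and arguably cleaner way to dispose of the ``$m$ components'' half than anything in the paper.

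Claim~B, however, is where the lemma actually lives, and what you have written there is not a proof. Three of its load-bearing assertions are unestablished. (i) ``Every empty boundary of $\mathcal M_v$ forces all robots to avoid its distance-$1$ neighbourhood'' does not follow from Lemma~\ref{lemma: condition for unique minmax point}: condition $(C2)$ applies only when three boundaries are occupied and $(C3)$ only when two \emph{opposite} ones are, so neither covers a degree-$1$ vertex (two adjacent occupied boundaries) or a degree-$2$ vertex (one occupied boundary), which are exactly the vertices present once isolated vertices are excluded. (ii) ``Any node lying between two components is within distance $k$ of every robot'' is asserted, not derived. (iii) ``A truly separated second component would reintroduce an isolated node or force three min-max nodes onto a line'' is the whole content of the claim and is given no argument: two step-graph components, each with an edge, sitting in diagonally opposite corners do not visibly violate Lemma~\ref{lemma: grid line can not contain more than 2 min max point} or Claim~A. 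The paper closes exactly this gap by a different route: it first shows (via Lemmas~\ref{lemma: grid line can not contain more than 2 min max point} and~\ref{lemma: 2 min max points on same grid line must be neighbours}) that a second component $C$ must lie strictly beyond one corner of the smallest enclosing rectangle of a multi-node component $C_1$; it then takes the extremal vertex $v$ of $C_1$ facing $C$, uses the degree-to-boundary dictionary (Lemmas~\ref{sufficiency for condition of one neighbour} and~\ref{sufficiency for condition of two neighbour }) to exhibit a robot $u_0$ on the boundary of $\mathcal M_v$ \emph{facing away} from $C$, and concludes $d_m(v_0,u_0)=d_m(v_0,v)+k\ge k+2>k$ for every $v_0\in C$. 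Some such quantitative argument is indispensable; your band-confinement sketch would need each of (i)--(iii) proved before it could replace it.
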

\begin{proof}
We will prove the result by contradiction. Assume by contradiction that $G_M(\mathcal R)$ has more than one but less than $m$ components. Thus, according to the Pigeonhole principle, there must exist a component $C_1$ which has at least two nodes of $V_M(\mathcal R)$. Let $\mathcal{SER}(C_1)$ be the smallest enclosing rectangle for the nodes in $C_1$. Without loss of generality, assume that the grid-lines which are also the boundaries of $\mathcal{SER}(C_1)$ be $L_l$, $L_r$, $L_u$ and $L_d$. Let $L_l$ be the boundary on the left, $L_r$ be the boundary on the right, $L_u$ be the upper boundary and $L_d$ be the lower boundary. Let $C$ be another component of $G_M(\mathcal R)$. It is important to note that all grid-lines between $L_u$ and $L_d$ (and similarly, between $L_l$ and $L_r$), including the nodes in those lines, cannot contain any node in $C$. Otherwise, there will be a grid-line $L$ that satisfies any one of the following conditions:
\begin{enumerate}
    \item $L$ contains three nodes of $G_M(\mathcal R)$.
    \item $L$ contains two nodes of $G_M(\mathcal R)$ but they are not adjacent.
\end{enumerate}
Note that condition 1 violates Lemma~\ref{lemma: grid line can not contain more than 2 min max point} and condition 2 violates Lemma~\ref{lemma: 2 min max points on same grid line must be neighbours}. Consequently, the locations of nodes in $C$ may be any of the following.
\begin{itemize}
    \item Above $L_u$ and on the right of $L_r$.
    \item Above $L_u$ and on the left of $L_l$.
    \item Below $L_d$ and on the right of $L_r$.
    \item Below $L_d$ and on the left of $L_l$.
\end{itemize}
 Without loss of generality, let us assume $C$ is below $L_d$ and on the right side of $L_r$. Let $v$ be a node in $C_1$ such that $v$ is the rightmost node of $C_1$ on $L_d$. Let $\mathcal{M}_v$ be the minimal enclosing diamond for which $\mathcal{C}(\mathcal{M}_v) = v$, with size of $\mathcal{M} = k\sqrt{2}$, for all $\mathcal{M} \in$ \MED. We claim that the boundary $\mathcal{B}_{UL}$ of $\mathcal{M}_v$ always contain robot positions. For that, first note that $deg(v) \le 2$, where $deg (v)$ denotes the degree of $v$ in $C_1$. Depending on the degree of $v$, there can be two cases.
 
 \noindent \textit{Case 1:} $deg(v) =2$. Assume that without loss of generality, $v_1$ and $v_2$ are the neighbors of $v$ that are in $ C_1$. It should be noted that $v_1$ and $v_2$ are the vertices lying directly above $v$ on $L_V(v)$ and directly on the left of $v$ on $L_H(v)$. Thus, for this case $\mathcal{B}_{UL}$ of $\mathcal{M}_v$ is nearest to $v_1$ and $v_2$. So, according to Lemma~\ref{sufficiency for condition of two neighbour }, only $\mathcal{B}_{UL}$ of $\mathcal{M}_v$ must contain a robot position among all the boundaries of $\mathcal{M}_v$. 
 
 \noindent \textit{Case 2:} $deg(v) =1$. Let $v_1$ be the neighbor of $v$ that is in $C_1$. It should be noted that $v_1$ must be lying either directly above $v$ on $L_V(v)$, or directly on the left of $v$ on $L_H(v)$. Also, if $v_1$ is on the left of $v$ on $L_H(v)$, then only for the adjacent boundaries $\mathcal{B}_{UL}$ and $\mathcal{B}_{DL}$, $v_1$ is nearest to $\mathcal{B}_{UL} \cap \mathcal{B}_{DL}$. Thus, according to Lemma~\ref{sufficiency for condition of one neighbour}, exactly the boundaries $\mathcal{B}_{UL}$ and $\mathcal{B}_{DL}$ contain robot positions. On the other-hand, if $v_1$ is above $v$ on $L_V(v)$, then only for the adjacent boundaries $\mathcal{B}_{UL}$ and $\mathcal{B}_{UR}$, $v_1$ is nearest to $\mathcal{B}_{UL} \cap \mathcal{B}_{UR}$. Thus, again according to Lemma~\ref{sufficiency for condition of one neighbour}, exactly the boundaries $\mathcal{B}_{UL}$ and $\mathcal{B}_{UR}$ contain robot positions. 
 
 \noindent Thus, for all the above cases $\mathcal{B}_{UL}$ contains a robot positions. Let $u_0$ be such a robot position and $v_0$ be a node in $C$. Then, there is a shortest path from $v_0$ to $u_0$ that contains $v$. Thus, $d_m(v_0,u_0) = d_m(v_0, v) + d_m(v, u_0)= d_m(v_0,v)+k \ge k+2 > k$ [as $v$ and $v_0$ are in different components $d_m(v_0, v) \ge 2$]. Thus $v_0 \notin V_M(\mathcal R) \implies v_0 \notin G_M(\mathcal R)$. This is a contradiction to our assumption. Thus $G_M(\mathcal R)$ either has only one component (i.e., $G_M(\mathcal R)$ is connected) or $G_M(\mathcal R)$ has exactly $m$ components.
\end{proof}

\begin{lemma}
    \label{lemma: disconnected G_M(P) is a disconnected step graph} If $G_M(\mathcal R)$ is disconnected, then $G_M(\mathcal R)$ is a Disconnected Step-Graph.
\end{lemma}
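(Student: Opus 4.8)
The plan is to verify the three defining conditions of a Disconnected Step-Graph (Definition~\ref{def: disconnected step graph}) one at a time, leaning on the structural lemmas already established. The crucial first step is to pin down the shape of a disconnected $G_M(\mathcal R)$: since $G_M(\mathcal R)$ is disconnected it has more than one component, so Lemma~\ref{lemma: one component or totally disconnected} forces it to have exactly $m = |V_M(\mathcal R)|$ components. By the pigeonhole principle each component is then a single node, i.e. $G_M(\mathcal R)$ is totally disconnected and no two min-max nodes are adjacent. This single observation --- that $N(v) \cap V_M(\mathcal R) = \emptyset$ for every $v \in V_M(\mathcal R)$ --- is what drives the rest of the argument, and recognizing that Lemma~\ref{lemma: one component or totally disconnected} already does the heavy lifting is the key move.

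Condition~(1), $|V_M(\mathcal R)| > 1$, is immediate: a disconnected graph has at least two components, hence $m \ge 2$. For condition~(2) I would argue by contradiction. Fix $v \in V_M(\mathcal R)$ and suppose some other min-max node $v'$ lies on $L_H(v)$ or $L_V(v)$. Then Lemma~\ref{lemma: 2 min max points on same grid line must be neighbours} forces $v' \in N(v)$, contradicting the total disconnectedness established above. Hence neither $L_H(v)$ nor $L_V(v)$ contains a second min-max node.

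Condition~(3) is the substantive part and splits into a lower and an upper bound on $|D(v) \cap V_M(\mathcal R)|$. The lower bound is handed to us directly: since $|V_M(\mathcal R)| > 1$ and no neighbour of $v$ is a min-max node, Lemma~\ref{lemma: condition diagonal} produces a diagonal node $v_0 \in D(v) \cap V_M(\mathcal R)$, so there is at least one. For the upper bound I would first invoke Lemma~\ref{lemma: condition for no neighbour in minmax points}: because $N(v) \cap V_M(\mathcal R) = \emptyset$, the minimal enclosing diamond $\mathcal{M}_v$ with $\mathcal{C}(\mathcal{M}_v) = v$ has an opposite pair of boundaries both carrying robot positions, say $\mathcal{B}_{UR}$ and $\mathcal{B}_{DL}$ (the pair $(\mathcal{B}_{UL},\mathcal{B}_{DR})$ is symmetric). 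Placing $v$ at the origin, the four diagonal nodes are $(\pm 1, \pm 1)$; a robot on $\mathcal{B}_{UR}$ sits at some $(a,b)$ with $a,b \ge 0$ and $a+b = k$, so the shortest path from $(-1,-1)$ to it runs through $v$, giving distance $k+2 > k$, and the lower-left diagonal node is excluded. Symmetrically, a robot on $\mathcal{B}_{DL}$ excludes the upper-right diagonal node. Thus two of the four diagonal nodes are ruled out, leaving at most two, which establishes the upper bound and, together with the lower bound, completes condition~(3).

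The main obstacle I anticipate is organizing the upper-bound computation in condition~(3) cleanly --- in particular being careful that Lemma~\ref{lemma: condition for no neighbour in minmax points} guarantees \emph{some} opposite pair of occupied boundaries rather than a prescribed one, so that the two excluded diagonal nodes are precisely those opposite to the occupied boundaries regardless of which pair occurs. Everything else reduces to a direct Manhattan-distance estimate and straightforward appeals to the already-proved lemmas, so I expect no genuinely new difficulty beyond the bookkeeping of the diamond's geometry.
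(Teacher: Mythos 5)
Your proposal is correct and follows essentially the same skeleton as the paper's proof: both first invoke Lemma~\ref{lemma: one component or totally disconnected} to conclude that every component of $G_M(\mathcal R)$ is a single node, then verify the three conditions of the definition, using Lemma~\ref{lemma: 2 min max points on same grid line must be neighbours} for condition~(2) and Lemma~\ref{lemma: condition diagonal} for the lower bound in condition~(3). The only divergence is the upper bound in condition~(3): the paper disposes of it in one line by noting that three or more min-max nodes on $D(v)$ would place two non-adjacent min-max nodes on a common grid line, again contradicting Lemma~\ref{lemma: 2 min max points on same grid line must be neighbours}, whereas you route through Lemma~\ref{lemma: condition for no neighbour in minmax points} to obtain an opposite pair of occupied boundaries and then exclude two diagonal nodes by a direct Manhattan-distance estimate. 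Both arguments are sound; the paper's is a shorter reduction to an already-proved lemma, while yours is a self-contained geometric computation that also makes explicit \emph{which} two diagonal nodes survive.
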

\begin{proof}
    Let $G_M(\mathcal R) = (V',E')$ be a disconnected graph, where $V' = V_M(\mathcal R)$. Thus, $|V'| > 1$. By Lemma~\ref{lemma: one component or totally disconnected}, each component of $G_M(\mathcal R)$ has exactly one node of $V'$. This implies for all $v \in V'$, if $v' \in N(v)$, then $v' \notin V'$. To prove that $G_M(\mathcal R)$ is a disconnected step-graph, we have to prove three statements according to Definition 14. Firstly, it has been already observed that $|V'| > 1$. Next, we will prove that $\forall v \in V'$, $L_V(v)$ and $L_H(v)$ contains no other node of $V'$. If it contains a node of $V'$, there exists a grid-line that contains two vertices of $V'$ and the two vertices are not neighbors of each other. This contradicts Lemma~\ref{lemma: 2 min max points on same grid line must be neighbours} and hence $\forall v \in V'$, $L_V(v)$ and $L_H(v)$ contains no other node of $V'$. Next, we will prove that $\forall\  v \in V'$, there is at least one and at most two vertices of $V'$ on $D(v)$. If there exist more than two vertices on $D(v)$, then there must exist a grid-line that contains two vertices of $V'$ and they are not neighbors of each other, contradicting Lemma~\ref{lemma: 2 min max points on same grid line must be neighbours}. Otherwise, if there exists no neighbors of $v$ that are in $V'$, this will contradict Lemma~\ref{lemma: condition diagonal}. Hence, if $G_M(\mathcal R)$ is disconnected, then it must be a disconnected step-graph.
\end{proof}
From Corollary~\ref{cor: connected G_M(P) is either a step graph or a cycle of length 4} and Lemma~\ref{lemma: disconnected G_M(P) is a disconnected step graph}, we can state the following theorem.
\begin{theorem}
\label{Thm: Characterization of $G_M(P)$}
The graph $G_M(\mathcal R)$ is any one of the following graphs:
\begin{enumerate}
    \item Step-Graph.
    \item Disconnected Step-Graph.
    \item Cycle of length four.
\end{enumerate}
\end{theorem}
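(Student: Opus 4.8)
The plan is to reduce the theorem to a clean dichotomy on the connectivity of $G_M(\mathcal R)$, since essentially all of the structural analysis has already been carried out in the preceding lemmas. First I would record that $V_M(\mathcal R)$ is never empty: every finite robot set admits a minimal enclosing diamond, and by Lemma~\ref{lemma: center of a MED(P) is in V_M(P)} its center lies in $V_M(\mathcal R)$. Writing $m = |V_M(\mathcal R)|$, the argument then splits according to the value of $m$.

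The degenerate case $m = 1$ must be isolated first, because Lemma~\ref{lemma: one component or totally disconnected} is stated only for $m > 1$. When $m = 1$, the graph $G_M(\mathcal R)$ consists of a single isolated vertex; this is vacuously a line graph and the unit-square condition in Definition~\ref{def:Step Graph} holds trivially, since there are no three distinct vertices to test. Hence $G_M(\mathcal R)$ is a step-graph and falls under outcome (1). It is worth flagging that this single-vertex configuration cannot be classified as a disconnected step-graph, since Definition~\ref{def: disconnected step graph} demands $|V'| > 1$; recording this is what keeps the three listed outcomes mutually consistent with the definitions.

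For $m > 1$ I would invoke Lemma~\ref{lemma: one component or totally disconnected}, which guarantees that $G_M(\mathcal R)$ is either connected or has exactly $m$ components. In the connected case, Corollary~\ref{cor: connected G_M(P) is either a step graph or a cycle of length 4} immediately gives that $G_M(\mathcal R)$ is a step-graph (outcome (1)) or a cycle of length four (outcome (3)). In the remaining case, $G_M(\mathcal R)$ has $m > 1$ components and is therefore disconnected, so Lemma~\ref{lemma: disconnected G_M(P) is a disconnected step graph} yields that it is a disconnected step-graph (outcome (2)). Together these subcases exhaust every possibility for $m$, completing the proof.

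The hard part is not in this final assembly, which is essentially bookkeeping, but in the two cited results on which it rests. The genuine obstacles lie upstream: proving that an intermediate number of components is impossible (Lemma~\ref{lemma: one component or totally disconnected}), which rules out the awkward situation of one component holding two or more min-max nodes while a separate component exists, and proving the connected classification through the grid-line bound of Lemma~\ref{lemma: grid line can not contain more than 2 min max point} and the tree-to-step-graph reduction of Lemma~\ref{lemma: 1 component tree implies step graph}. Assuming those, the theorem is a two-line case analysis, and the only care needed at this level is the $m = 1$ boundary case noted above.
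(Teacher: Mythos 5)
Your proposal is correct and follows essentially the same route as the paper, which likewise obtains the theorem directly from Corollary~\ref{cor: connected G_M(P) is either a step graph or a cycle of length 4} (connected case) and Lemma~\ref{lemma: disconnected G_M(P) is a disconnected step graph} (disconnected case). Your additional bookkeeping --- noting non-emptiness of $V_M(\mathcal R)$ via Lemma~\ref{lemma: center of a MED(P) is in V_M(P)} and isolating the single-vertex case, which the paper absorbs silently into the connected case --- is a harmless and slightly more careful presentation of the same argument.
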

In Figure \ref{Fourcycle}, the min-max nodes $m_1$, $m_2$, $m_3$ and $m_4$ forms a four-cycle graph.
\begin{figure}
    \centering
    \includegraphics[width=0.27\linewidth]{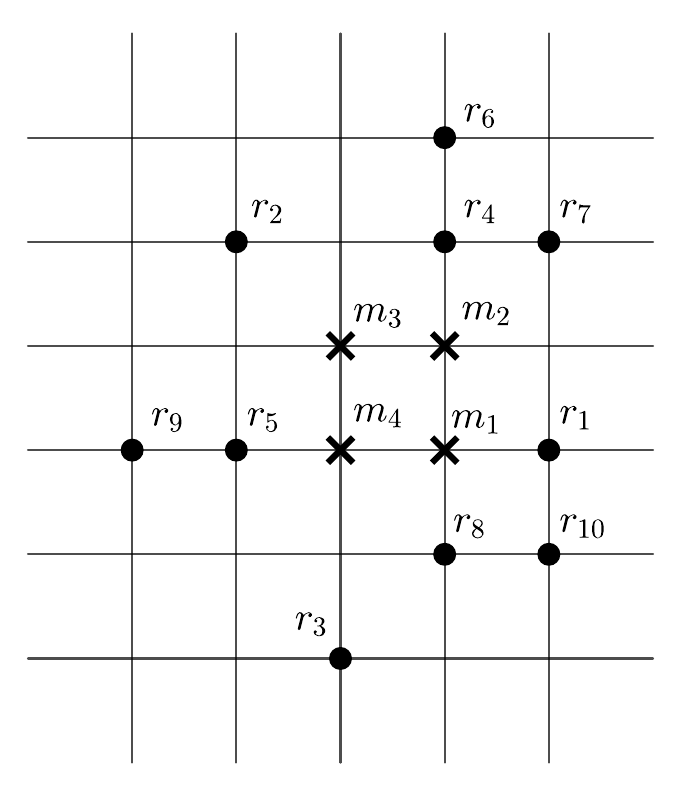}
    \caption{Example of a configuration where $G_M(\mathcal R)$ is a four-cycle graph.}
    \label{Fourcycle}
\end{figure}
\subsection{Impossibility Results}
In this subsection, we discuss all those initial configurations for which the min-max gathering is impossible. First, we consider the definition and an impossibility result related to partitive automorphisms \cite{StefanoN13}.

\begin{definition}[Partitive Automorphism]
Given an automorphism $\phi$ $\in$ $Aut(C(t))$, the cyclic subgroup of order $k$ generated by $\phi$ is given by $\lbrace\phi^{0}, \phi^{1}= \phi, \phi^2= \phi\circ\phi, \ldots, \phi^{k-1} \rbrace$, where $\phi^{0}$ denotes the identity of the cyclic subgroup. We define a relation $\rho$ on the set of nodes $V$, stating that two nodes $x$ and $y$ are related if there exists an automorphism $\gamma$ in some subgroup $H$ of $Aut(C(t))$ such that $\gamma(x)=y$. Note that the relation $\rho$ is an equivalence relation defined on the set of nodes $V$. Due to the equivalence relation defined on the set of nodes $V$, $V$ is partitioned into disjoint subsets called equivalence classes or \textit{orbits}. We denote the orbit of $x$ by $H(x)$.  
\end{definition}
\begin{theorem}[\cite{StefanoN13}]
\label{thm: impossiblePartitive}
    If a configuration admits a partitive automorphism, then it is ungatherable.
\end{theorem}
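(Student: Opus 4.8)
The plan is to exhibit a single adversarial execution under which no deterministic algorithm can ever reach a gathered configuration; since showing unsolvability only requires one such bad execution, this suffices. The structural fact I would extract first is that a partitive automorphism $\phi$ has \emph{no fixed node}: by construction every orbit under the cyclic group generated by $\phi$ has size equal to the order $k \geq 2$ of $\phi$, so $\phi(v) \neq v$ for every $v \in V$. This is precisely the obstruction to gathering, because a gathered configuration designates one distinguished node (the unique multiplicity), and that node must be fixed by every automorphism of that configuration.

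First I would arrange the adversary so that the symmetry $\phi$ is \emph{invariant along the entire execution}. In the fully synchronous case this is immediate: the adversary activates all robots in every round. Since $\phi$ is an automorphism of $C(t)$, two equivalent robots $r$ and $\phi(r)$ obtain snapshots that coincide up to the isometry $\phi$; because the robots are anonymous, oblivious, and run the same deterministic algorithm, the destinations they compute are related by $\phi$ as well. As all robots move in the same round, the configuration $C(t+1)$ produced after the move again admits $\phi$. By induction on the rounds, \emph{every} reachable configuration admits $\phi$.

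Next I would close the argument by contradiction. Suppose the algorithm gathers, producing a configuration in which all $n$ robots occupy a single node $v$ while every other node is empty. Since this configuration admits $\phi$, and $\phi$ maps multiplicities to multiplicities and empty nodes to empty nodes, $v$ is the unique node labelled $2$ and must therefore satisfy $\phi(v) = v$. This contradicts the fact that a partitive automorphism fixes no node. Hence the algorithm never gathers under this execution, and the configuration is ungatherable.

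The main obstacle is upgrading the symmetry-preservation step from the synchronous to the asynchronous scheduler, where a robot may act on an outdated snapshot and equivalent robots need not move together. I would handle this by having the adversary schedule each orbit in lock-step: whenever it grants a Look (respectively Compute, Move) phase to a robot $r$, it grants the corresponding phase to all robots in the orbit $H(r)$ within the same indivisible scheduling step. Because the $\mathcal{ASYNC}$ adversary is free to fix the activation timings, it can enforce this orbit-synchronous behaviour, so that the pending computed destinations of equivalent robots stay related by $\phi$ and are realized simultaneously; under this schedule the invariant ``$C(t)$ admits $\phi$'' is maintained exactly as in the synchronous analysis, and the contradiction above applies verbatim. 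A minor point still to verify is that $\phi$, being a grid automorphism, sends admissible single-edge moves to admissible single-edge moves, so the symmetric destinations are genuinely reachable in one step.
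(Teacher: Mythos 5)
Your proof is correct, and it is the standard symmetry-preservation argument for this result. Note that the paper does not prove this theorem at all: it is imported verbatim from the cited reference of Di Stefano and Navarra, so there is no in-paper proof to compare against; your argument is essentially the one given in that source. Two small points are worth flagging. First, the paper's own definition of a partitive automorphism is incomplete (it defines the orbits $H(x)$ but never states the defining condition); you correctly supply the missing clause, namely that every orbit has cardinality $|H| = k > 1$, which is exactly what yields the no-fixed-node property your contradiction hinges on. Second, in the step where equivalent robots compute $\phi$-related destinations, you should make explicit that the adversary also chooses the local coordinate frames of $r$ and $\phi(r)$ so that their local snapshots are literally identical (this is available because the robots are disoriented, with no agreement on axes or chirality); saying the snapshots ``coincide up to $\phi$'' in the global frame is not by itself enough for a deterministic algorithm to output $\phi$-related moves. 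With that caveat, the orbit-lockstep scheduling you describe for the asynchronous case is a legitimate adversarial execution (fair, and subsumed by $\mathcal{ASYNC}$), and the contradiction at the gathered configuration goes through as you state it.
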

\noindent Assume that the configuration is symmetric with respect to a single line of symmetry $l$. If there exists a robot position on $l$, the configuration can be transformed into an asymmetric configuration by moving a robot on $l$ towards one of its adjacent nodes, away from $l$. Similarly, if there exists a min-max node on $l$, then the gathering is ensured at one of the min-max nodes on $l$. Therefore, consider the case when $l$ contains no robot positions or min-max nodes. Note that, if the configuration is such that there exist no robots or min-max nodes on $l$, then $l$ must pass through the center of the edges, and hence the configuration must be partitive. Hence, we have the following lemma.
\begin{lemma} \label{lemma 16}
    If the configuration is symmetric with respect to a single line of symmetry $l$ and $l \cap (\mathcal R \cup V_M(\mathcal R))= \phi$, then the configuration is ungathgerable. 
\end{lemma}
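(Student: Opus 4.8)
The plan is to reduce the claim to the impossibility result for partitive automorphisms (Theorem~\ref{thm: impossiblePartitive}) by showing that, under the stated hypotheses, the line of symmetry $l$ must be an edge-line whose associated reflection generates a partitive automorphism. First I would recall that a grid admits only reflective and rotational symmetries (translations being excluded by finiteness), and since we are told the configuration is symmetric with respect to a \emph{single} line $l$, the relevant automorphism is the reflection $\phi$ across $l$, which satisfies $\phi^2 = \mathrm{id}$ and thus generates a cyclic subgroup of order two.

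The core geometric step is to locate $l$ precisely. A reflection line in the grid can pass through nodes or through the midpoints of edges. I would argue that the hypothesis $l \cap (\mathcal R \cup V_M(\mathcal R)) = \phi$ forces $l$ to pass through edge-midpoints rather than through nodes: if $l$ passed through grid nodes, then because the configuration is symmetric about $l$, the set of minimal enclosing diamonds would be symmetric about $l$, so their intersection $\mathcal{IR}(\mathcal R)$ is symmetric about $l$, and consequently $V_M(\mathcal R)$ is symmetric about $l$. Invoking the characterization from Theorem~\ref{Thm: Characterization of $G_M(P)$}, $G_M(\mathcal R)$ is a step-graph, a disconnected step-graph, or a four-cycle; in each of these shapes, a symmetry axis through nodes must actually fix at least one min-max node lying on $l$ (for instance, the central node of a step-graph, the center-adjacent node for odd-length configurations, or a vertex/edge-fixed node of the four-cycle), contradicting $l \cap V_M(\mathcal R) = \phi$. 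Hence $l$ does not pass through any node; it must run through edge-centers.

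Once $l$ is an edge-line, I would show the reflection $\phi$ is \emph{partitive}, i.e.\ every orbit $H(v)$ under the generated subgroup has the same cardinality. Since $\phi$ has order two and fixes no node (an edge-line reflection in the grid fixes no grid node), every orbit $\{v, \phi(v)\}$ has exactly two elements, so all orbits are of equal size $2$; this is precisely the partitivity condition. With a partitive automorphism in hand, Theorem~\ref{thm: impossiblePartitive} immediately yields that the configuration is ungatherable, completing the argument.

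The main obstacle I anticipate is the node-line case: rigorously ruling out that $l$ passes through nodes requires carefully checking that each admissible shape of $G_M(\mathcal R)$ forces a fixed min-max node on any node-symmetry-axis. The delicate part is that a node-line could in principle be perpendicular to the ``orientation'' of a step-graph and slice between two nodes; one must verify using the neighbour/diagonal structure from Lemmas~\ref{lemma: 2 min max points on same grid line must be neighbours} and~\ref{lemma: condition diagonal} that such a configuration still pins a min-max node onto $l$, or else that the symmetry would contradict the single-axis assumption. Handling the four-cycle case, where the axis could pass through two opposite vertices (fixing them) or through two opposite edge-midpoints, also needs explicit attention, though in the edge-midpoint sub-case one recovers partitivity directly.
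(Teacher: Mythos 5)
Your argument follows essentially the same route as the paper: both reduce the claim to Theorem~\ref{thm: impossiblePartitive} by asserting that if $l$ meets neither $\mathcal R$ nor $V_M(\mathcal R)$ then $l$ must pass through edge-centers, so the order-two reflection fixes no node and is therefore partitive. The only difference is that the paper states the "edge-line" step without justification, whereas you sketch (and candidly flag as the delicate point) a justification via the symmetry of $V_M(\mathcal R)$ and the characterization in Theorem~\ref{Thm: Characterization of $G_M(P)$}, which is, if anything, more detail than the paper supplies.
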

Next, assume the case when the configuration admits a rotational symmetry with the center of rotation $c$. If $c$ does not contain any robot or min-max node, then note that in that case, $c$ must be either a center of an edge or the center of an unit square. In other words, the configuration must be partitive. Hence, we have the following lemma.
\begin{lemma} \label{lemma17}
    If the configuration is symmetric with respect to rotational symmetry and $\lbrace c \rbrace \cap (\mathcal R \cup V_M(\mathcal R))= \phi$, then the configuration is ungatherable.
\end{lemma}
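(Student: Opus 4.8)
The plan is to reduce the statement to the partitive-automorphism impossibility result, Theorem~\ref{thm: impossiblePartitive}, by showing that under the hypothesis the rotation $\phi$ fixes no grid node. The only point fixed by a rotation is its center $c$, and on the grid $c$ is necessarily a grid node, the midpoint of an edge, or the center of a unit square. Since the grid admits only $90^\circ$ and $180^\circ$ rotations, $\phi$ has order $2$ or $4$. The crux is therefore to rule out the possibility that $c$ is a grid node; once that is done, $c$ is an edge midpoint or a square center, so no node is fixed by any nontrivial power of $\phi$, every orbit of $\langle\phi\rangle$ has size equal to the order of $\phi$ (which is $>1$), the configuration is partitive, and Theorem~\ref{thm: impossiblePartitive} finishes the argument.

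So the heart of the proof, and the step I expect to carry all the content, is the implication: if $c$ is a grid node then $c\in V_M(\mathcal R)$, which would contradict the hypothesis $c\notin V_M(\mathcal R)$. First I would reduce to a $180^\circ$ rotation $\psi$ that fixes $c$: take $\psi=\phi$ if $\phi$ is a half-turn and $\psi=\phi^2$ if $\phi$ is a quarter-turn. In both cases $\psi\in Aut(C(t))$ fixes $c$ and acts as the point reflection $w\mapsto 2c-w$; being a configuration automorphism, $\psi$ permutes $\mathcal R$, and being an isometry it preserves Manhattan distance.

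Arguing by contradiction, suppose $c\notin V_M(\mathcal R)$, so some node $v$ satisfies $R':=\max_{u\in\mathcal R} d_m(v,u) < r$, where $r:=\max_{u\in\mathcal R} d_m(c,u)$. Because $\psi$ permutes $\mathcal R$ and is an isometry, $\max_{u\in\mathcal R} d_m(\psi(v),u)=R'$ as well. The computation I would rely on exploits that $c$ is the midpoint of $v$ and $\psi(v)$ together with the triangle inequality for the $\ell_1$ norm: for every robot $u$,
\[
d_m(c,u)=\Big\|\tfrac{v+\psi(v)}{2}-u\Big\|_1\le \tfrac12\|v-u\|_1+\tfrac12\|\psi(v)-u\|_1=\tfrac12\big(d_m(v,u)+d_m(\psi(v),u)\big)\le R'.
\]
Maximising over $u$ gives $r\le R'<r$, the desired contradiction; note that $v,\psi(v),c$ are all grid points, so the midpoint identity is integral and legitimate. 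Hence $c\in V_M(\mathcal R)$.

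With $c$ shown not to be a grid node, I would conclude exactly as in the single line of symmetry case (Lemma~\ref{lemma 16}): the center is an edge midpoint or a unit-square center, $\langle\phi\rangle$ partitions the nodes into orbits all of the same size $>1$, so $\phi$ is a partitive automorphism, and Theorem~\ref{thm: impossiblePartitive} yields ungatherability. The only delicate point is the node-center/min-max equivalence above; the remaining steps are a routine classification of rotation centers on the grid together with an appeal to the cited impossibility theorem.
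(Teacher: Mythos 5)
Your proposal is correct, and its overall skeleton matches the paper's: classify the possible rotation centers, rule out the case where $c$ is a grid node, conclude that $c$ is an edge midpoint or a unit-square center, and invoke Theorem~\ref{thm: impossiblePartitive}. The difference is in rigor rather than route. The paper's argument for this lemma is essentially a one-sentence assertion ("if $c$ does not contain any robot or min-max node, then $c$ must be either a center of an edge or the center of a unit square"), with no justification of why a grid-node center would have to be a robot position or a min-max node. You supply exactly the missing content: reducing to the half-turn $\psi$ (taking $\phi^2$ if $\phi$ is a quarter-turn), using that $\psi$ permutes $\mathcal R$ and is an $\ell_1$-isometry, and then the midpoint/triangle-inequality computation $d_m(c,u)\le\tfrac12\bigl(d_m(v,u)+d_m(\psi(v),u)\bigr)\le R'$, which forces $c\in V_M(\mathcal R)$ whenever $c$ is a grid node --- in fact a stronger conclusion than the paper's disjunction "robot or min-max node". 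This is a genuine improvement: your version proves the key claim the paper leaves implicit, and it does so without appealing to any of the structural lemmas about $\mathcal{MED}(\mathcal R)$ or $\mathcal{IR}(\mathcal R)$, so it is self-contained. The only minor point worth making explicit is that a $90\degree$ rotation cannot be centered at an edge midpoint, so in every surviving case no grid node is fixed by any nontrivial power of $\phi$ and all orbits have size equal to the order of $\phi$, which is what partitivity requires.
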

Next, we will prove some lemmas that are necessary to characterize the configurations for which gathering is impossible.
\begin{lemma} \label{lemma 18}
   If a configuration $C(t)$ admits either a vertical or horizontal axis of symmetry and $G_M(\mathcal{R})$ is a four cycle, then $C(t)$ must admit a partitive automorphism.
\end{lemma}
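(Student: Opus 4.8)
The plan is to exploit the rigid interaction between the assumed axis of symmetry and the structure of a four-cycle $G_M(\mathcal R)$, which by Theorem~\ref{Thm: Characterization of $G_M(P)$} is a unit square whose four corners $m_1,m_2,m_3,m_4$ are min-max nodes. Since a four-cycle occurs, Corollary~\ref{cor: connected G_M(P) is either a step graph or a cycle of length 4} tells us $G_M(\mathcal R)$ is connected, and its geometric center is the center of a unit square. First I would observe that any automorphism of the configuration $C(t)$ must permute $V_M(\mathcal R)$ among itself, since min-max nodes are defined purely by the metric quantity $\max_{u\in\mathcal R} d_m(u,\cdot)$, which is preserved by automorphisms. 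Hence the given reflection $\sigma$ (vertical or horizontal) restricts to a symmetry of the four-node square $\{m_1,m_2,m_3,m_4\}$, and in particular fixes the center $c_0$ of that unit square.

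Next I would pin down where the axis $l$ lies. Because $l$ is vertical or horizontal and must map the unit square to itself, $l$ passes through $c_0$; but $c_0$ is the center of a unit square, so a vertical or horizontal line through it cannot pass through any grid node — it bisects two opposite edges of the square. Consequently $\sigma$ swaps the two min-max nodes on each side in pairs (e.g.\ $m_1\leftrightarrow m_2$ and $m_4\leftrightarrow m_3$) and fixes no min-max node, and since $l$ runs through edge-midpoints it contains no grid node at all along its length within the relevant region. The crucial consequence is that $l\cap(\mathcal R\cup V_M(\mathcal R))=\emptyset$: no min-max node lies on $l$ by the pairing just described, and no robot lies on $l$ because $l$ passes only through centers of edges, never through grid nodes.

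I would then convert this into the desired conclusion. Consider the reflection $\sigma$ together with the half-turn $\pi$ about $c_0$: since $c_0$ is the center of the unit square of min-max nodes and $\sigma$ is an edge-edge reflection, I expect the configuration to also be invariant under the $180^\circ$ rotation about $c_0$, giving an automorphism group containing at least $\{\mathrm{id},\sigma,\pi,\sigma\pi\}$, a Klein four-group. The key point for partitiveness is that the center $c_0$ is not a node and every orbit of this group has size exactly $|H|$ (no element other than the identity fixes any node, precisely because the only candidate fixed locus consists of edge-midpoints, which are not nodes). By the definition of a partitive automorphism and the characterization recalled before Lemma~\ref{lemma 16}, a symmetry whose axis or center misses all nodes forces the configuration to be partitive; this is exactly the situation established above, so $C(t)$ admits a partitive automorphism.

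The main obstacle I anticipate is rigorously establishing the \emph{existence} of the extra symmetry (the half-turn, or more precisely showing the group acts with all orbits of full size so that the reflection itself is partitive). It is not automatic that a single edge-edge reflection is partitive — one must verify, using that $G_M(\mathcal R)$ is a genuine four-cycle and that all robots lie within $\mathcal{IR}(\mathcal R)$ (Observation~\ref{obs:1}), that $\sigma$ has no fixed nodes and hence every orbit under $\langle\sigma\rangle$ has size two, which is the cyclic-subgroup condition required by the Partitive Automorphism definition. The delicate part is ruling out any robot or node lying on $l$; once the geometry forces $l$ to be an edge-edge axis through $c_0$, this follows, but that geometric step — deducing that a four-cycle $G_M(\mathcal R)$ symmetric under a vertical/horizontal reflection must be centered on the axis with $l$ bisecting edges — is where the real work lies and must be argued from the min-max structure rather than assumed.
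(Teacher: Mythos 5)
Your proof is correct, and it reaches the paper's conclusion by a somewhat different route. The paper pins down the axis of symmetry using the intersection rectangle: it takes $\mathcal{IR}(\mathcal R)=ABCD$ (whose sides are diagonal with respect to the grid, so the diagonal $AC$ is the vertical line through the center), and argues via Observation~\ref{obs:1} that any other vertical axis would reflect a robot on the boundary of $\mathcal{IR}(\mathcal R)$ to a point outside $\mathcal{IR}(\mathcal R)$; hence the axis must be $AC$, which contains no grid nodes. You instead pin down the axis using the min-max structure directly: an automorphism preserves the graph metric and the robot set, hence permutes $V_M(\mathcal R)$, and a vertical reflection preserving a four-point unit square is forced to be the midline between its two columns, which has half-integer abscissa and so contains no grid node. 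Your mechanism is arguably cleaner (it avoids any appeal to where robots sit on the boundary of $\mathcal{IR}(\mathcal R)$) and your closing observation --- that a reflection whose axis misses every node acts freely, so all orbits of $\langle\sigma\rangle$ have size two and $\sigma$ is partitive --- is exactly the point the paper leaves implicit. Two minor remarks: the digression about the half-turn $\pi$ about $c_0$ and the Klein four-group is both unnecessary and unjustified (nothing forces the configuration to admit the extra rotation, and you do not need it, since the single edge-edge reflection is already partitive); and your final paragraph's worry that the geometric step "must be argued rather than assumed" is already discharged by your own second paragraph, so the hedging can be dropped.
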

\begin{proof}
Let $ABCD$ be the intersection rectangle $\mathcal{IR}(\mathcal{R})$, for the configuration $C(t)$. Without loss of generality, let $C(t)$ admits a vertical line of symmetry, i.e., $C(t)$ admits an isomorphism, $\phi_{reflect} \in Aut(C(t))$, such that $<\phi_{reflect}>$= \{$e$, $\phi_{reflect}\}$. Let us consider the vertical line $AC$. We claim that $AC$ is the line of symmetry for the configuration. Otherwise, without loss of generality, let $L$ be the vertical axis of symmetry, which is on the left of $AC$. In that case, let $r$ be a robot on the side $AB$. Note that according to Observation~\ref{obs:1}, such robots always exist on the boundary of $\mathcal{IR}(\mathcal{R})$. Hence, there must exist a robot $r'$ such that, $r'(t)$ is the reflectional image of $r(t)$ about $L$. Thus, $r'(t)$ must be outside of $\mathcal{IR}(\mathcal{R})$ in $C(t)$, a contradiction of Observation~\ref{obs:1}. Similarly, $L$ cannot be an axis of symmetry if it is strictly on the right of $AC$. Thus, $AC$ is the only axis of symmetry. Now, since $AC$ doesn't pass through any grid nodes, the automorphism generated by the $L$ must be a partitive automorphism. Hence, the lemma is proved. The proof is similar when the configuration $C(t)$ admits a horizontal line of symmetry.

    % \textcolor{red}{To prove}
\end{proof}
\begin{figure}[h]
    \centering
    \includegraphics[scale=0.2]{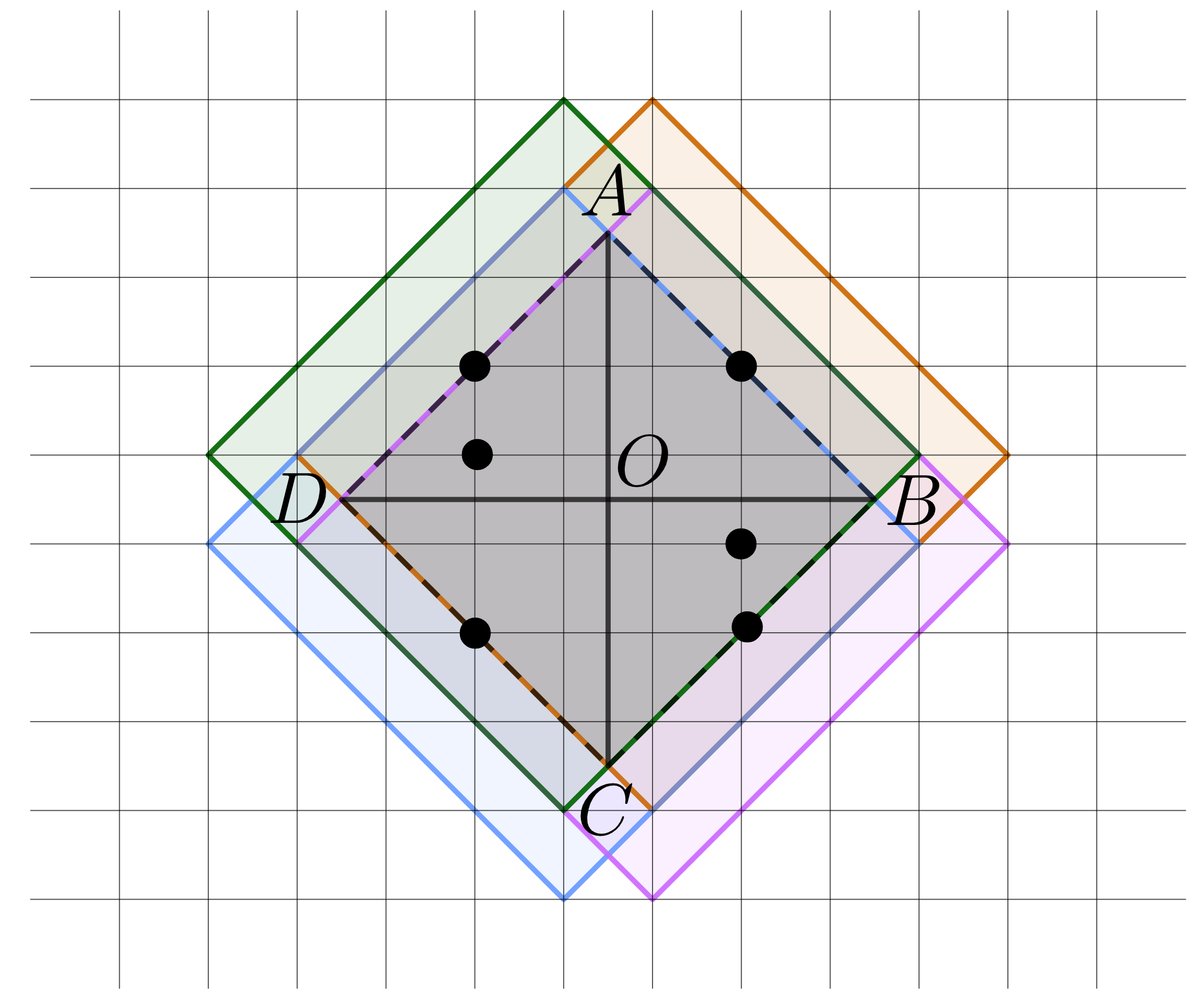}
    \caption{Rotational symmetric configuration with center of rotation $O$, where $V_M(\mathcal{R})$ is a four cycle. }
    \label{fig:enter-label}
\end{figure}
\begin{lemma} \label{impossible3}
   If a configuration $C(t)$ admits a rotational symmetry and if $G_M(\mathcal{R})$ is a four-cycle, then $C(t)$ must admit a partitive automorphism.    
\end{lemma}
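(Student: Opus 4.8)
The plan is to mirror the structure of Lemma~\ref{lemma 18}, but adapt it to the rotational case. The key object is the center of rotation $O$ of the symmetry $\phi_{rot} \in Aut(C(t))$. I would first argue that $O$ must coincide with the center of the four-cycle $G_M(\mathcal{R})$. This is the natural analogue of showing that the vertical axis must pass through $AC$ in the reflective case: since the set $V_M(\mathcal{R})$ is invariant under any automorphism of the configuration (an automorphism maps min-max nodes to min-max nodes, as it preserves $\lambda$ and hence distances, so it permutes $V_M(\mathcal{R})$), the rotation $\phi_{rot}$ must map the four-cycle onto itself. A rotation that fixes a four-cycle setwise must have its center at the geometric center of that four-cycle. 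Because $G_M(\mathcal{R})$ is a four-cycle whose four nodes sit at the corners of a unit square (by Corollary~\ref{cor: connected G_M(P) is either a step graph or a cycle of length 4} combined with the step-graph adjacency condition), the center of that square is the center of a unit square, not a grid node.

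Next I would pin down the order of the rotation. A rotation preserving a four-cycle of min-max nodes arranged on the corners of a unit square is either a $90^{\circ}$ or a $180^{\circ}$ rotation about the center of that unit square. In either case the center $O$ is the center of a unit square and therefore is \emph{not} a grid node, so $O \notin \mathcal{R}$ and $O \notin V_M(\mathcal{R})$ (the min-max nodes being grid nodes). This is exactly the situation flagged in the discussion preceding Lemma~\ref{lemma17}: when the center of rotation is the center of a unit square (or of an edge), the configuration is partitive. The core remaining task is to verify the defining property of a partitive automorphism, namely that every orbit $H(x)$ under the cyclic subgroup $\langle \phi_{rot} \rangle$ has the same cardinality (equal to the order of the group), i.e.\ that $\phi_{rot}$ has no fixed node. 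Since $O$ is the unique fixed point of the rotation in the plane and $O$ is not a grid node, no grid node is fixed, so every orbit has full size $k \in \{2,4\}$, which is precisely the condition for $\phi_{rot}$ to be partitive.

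The main obstacle I anticipate is rigorously establishing that $O$ must be the center of the four-cycle, i.e.\ ruling out the possibility that $\phi_{rot}$ fixes $V_M(\mathcal{R})$ setwise while being centered elsewhere. Here I would invoke Observation~\ref{obs:1} in the same spirit as the proof of Lemma~\ref{lemma 18}: any automorphism of $C(t)$ must map robot positions to robot positions and hence map $\mathcal{IR}(\mathcal{R})$ to itself, forcing $O$ to be the center of $\mathcal{IR}(\mathcal{R})$; and since the four-cycle of min-max nodes is symmetric about the center of $\mathcal{IR}(\mathcal{R})$ as well, the two centers coincide. Once $O$ is identified as the center of the unit square spanned by the four min-max nodes, the rest follows cleanly: $O$ is not a grid node, hence lies in neither $\mathcal{R}$ nor $V_M(\mathcal{R})$, and appealing to the characterization of partitive automorphisms (a rotation with non-grid-node center has all orbits of full size) completes the argument. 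I would close by noting the result then feeds directly into the impossibility characterization via Theorem~\ref{thm: impossiblePartitive}.
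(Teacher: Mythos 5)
Your proposal is correct and follows essentially the same strategy as the paper's proof: locate the center of rotation, show it cannot be a grid node, and conclude that $\phi_{rot}$ is partitive. The only difference is in how the center is pinned down --- the paper identifies it as the intersection of the diagonals of $\mathcal{IR}(\mathcal{R})$, whereas your primary route identifies it as the centroid of the min-max four-cycle (the center of a unit square) via the invariance of $V_M(\mathcal{R})$ under automorphisms, and you additionally spell out the orbit-size verification that the paper leaves implicit; both variants are sound.
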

\begin{proof}
    Let $ABCD$ be the intersection rectangle $\mathcal{IR}(\mathcal{R})$, for the configuration $C(t)$. Without loss of generality, let $C(t)$ admit a rotational symmetry about a point of rotation (i.e., $C(t)$ admits an isomorphism, $\phi_{rot} \in Aut(C(t))$, such that $|<\phi_{rot}>|$= 2 or, 4). Now, it may be observed that the center of rotation must be the point of intersection of the two diagonals $AC$ and $BD$ (Figure \ref{fig:enter-label}). Note that none of the lines $AC$ and $BD$ passes through any grid nodes, implying their point of intersection is not a grid node. This again implies that the center of rotation is either the mid-point of an edge or the center of a unit square. Hence, $\phi_{rot}$ must be a partitive automorphism.
\end{proof}
Now, as a result of Theorem~\ref{thm: impossiblePartitive}, we have the following corollary.
\begin{corollary} \label{corollary2}
  If the initial configuration $C(0)$ is such that $V_M(\mathcal{R})$ is a four-cycle, then min-max gathering is unsolvable if $C(0)$ is any one of the following:
  \begin{enumerate}
      \item $C(0)$ admits reflectional symmetry about a vertical or horizontal axis of symmetry.
      \item $C(0)$ admits rotational symmetry about a node.
  \end{enumerate}
\end{corollary}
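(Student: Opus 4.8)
The plan is to dispose of this corollary by reducing each of its two listed symmetry types to the existence of a partitive automorphism, and then invoking the impossibility result for partitive automorphisms. Since the hypothesis $V_M(\mathcal R)$ being a four-cycle is exactly the extra input required by the two preceding lemmas, the corollary is a pure assembly and I would structure it as a two-case argument followed by a uniform conclusion.

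First I would treat the two symmetry types separately. In the case where $C(0)$ admits reflectional symmetry about a vertical or horizontal axis, I would apply Lemma~\ref{lemma 18} directly: because $V_M(\mathcal R)$ is a four-cycle by hypothesis, that lemma already yields that $C(0)$ admits a partitive automorphism. In the case where $C(0)$ admits a rotational symmetry, I would instead invoke Lemma~\ref{impossible3}, which again uses the four-cycle hypothesis to pin the center of rotation to the intersection of the diagonals $AC$ and $BD$ of the intersection rectangle $\mathcal{IR}(\mathcal R)$; since that intersection is not a grid node, the rotational automorphism is forced to be partitive.

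Second, once a partitive automorphism has been exhibited in each case, I would close the argument uniformly by Theorem~\ref{thm: impossiblePartitive}: any configuration admitting a partitive automorphism is ungatherable, so min-max gathering is unsolvable for $C(0)$ in both cases.

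Because both substantive implications are already packaged inside Lemmas~\ref{lemma 18} and~\ref{impossible3}, I expect essentially no remaining obstacle; the only point deserving care is the apparent tension between item~(2)'s phrase \emph{about a node} and the conclusion of Lemma~\ref{impossible3} that the genuine center of a rotational symmetry of a four-cycle configuration cannot be a grid node. I would reconcile this by observing that, under the four-cycle hypothesis, the true center is always the non-grid intersection of the rectangle's diagonals, so Lemma~\ref{impossible3} applies regardless of how the symmetry is initially described, and the stated case is thereby either handled or vacuous.
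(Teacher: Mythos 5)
Your proposal is correct and matches the paper's own derivation: the paper states this corollary as an immediate consequence of Lemma~\ref{lemma 18}, Lemma~\ref{impossible3}, and Theorem~\ref{thm: impossiblePartitive}, exactly the two-case assembly you describe. Your side remark that item~(2)'s phrase ``about a node'' sits awkwardly with Lemma~\ref{impossible3}'s conclusion (making that case arguably vacuous as literally stated) is a fair observation about the statement's wording, but it does not affect the validity of the argument.
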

As a result, when the initial configuration is such that $V_M(\mathcal R)$ is a four-cycle, the configuration must be either asymmetric or must admit a diagonal line of symmetry. Next, consider the case when $V_M(\mathcal R)$ is a disconnected step-graph. We have the following lemmas that explain the ungatherable configurations for this case.
\begin{lemma}\label{lemma 20}
    Let $C(t)$ be a configuration that is symmetric with respect to either a vertical or horizontal axis of symmetry and $G_M(\mathcal{R})$ is a disconnected step-graph, then $C(t)$ admits a partitive automorphism.
\end{lemma}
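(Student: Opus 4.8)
The plan is to follow the template of Lemma~\ref{lemma 18}: take the reflection supplied by the hypothesis, observe that it permutes the min-max set, and then use the defining property of a disconnected step-graph to force the axis to lie strictly between grid lines, which is exactly what makes the reflection partitive. Without loss of generality I assume $C(t)$ has a vertical axis of symmetry $\ell$, and I let $\phi \in Aut(C(t))$ be the reflection across $\ell$; the horizontal case follows by interchanging the roles of rows and columns.

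As a first step I would record that $\phi$ maps $V_M(\mathcal R)$ onto itself. Since $\phi$ is an automorphism of the configuration it carries robot positions to robot positions and preserves Manhattan distances, so for every node $w$ we have $\max_{u \in \mathcal R} d_m(u,\phi(w)) = \max_{u \in \mathcal R} d_m(u,w)$; hence $\phi$ sends min-max nodes to min-max nodes. If one wishes to pin the axis down geometrically, the argument of Lemma~\ref{lemma 18} using Observation~\ref{obs:1} shows that $\ell$ is the central vertical line of $\mathcal{IR}(\mathcal R)$, but this localization is not actually needed for the conclusion.

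The heart of the proof is to show that $\ell$ passes through no grid node, and I would argue this by contradiction. A vertical reflection preserves the $y$-coordinate of every node, so for each $v = (x_v,y_v) \in V_M(\mathcal R)$ the image $\phi(v)$ lies on the same horizontal grid line $L_H(v)$, and by the previous step $\phi(v) \in V_M(\mathcal R)$. Suppose $\ell$ were a vertical grid line. If some $v \in V_M(\mathcal R)$ were off $\ell$, then $v$ and $\phi(v)$ would be two distinct min-max nodes on $L_H(v)$, contradicting condition~(2) of Definition~\ref{def: disconnected step graph}. Hence every min-max node would have to lie on $\ell$; but $|V_M(\mathcal R)| > 1$, so two distinct min-max nodes would then share the vertical line $\ell$, again contradicting condition~(2). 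Thus $\ell$ cannot be a grid line, so it meets the grid only at the midpoints of the horizontal edges it crosses.

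Since $\ell$ contains no grid node, $\phi$ fixes no vertex, so every orbit of $\langle \phi \rangle$ has size exactly two; as in the conclusion of Lemma~\ref{lemma 18}, this makes $\phi$ a partitive automorphism, which proves the lemma. I expect the contradiction step to be the main obstacle: one must treat both the off-axis pairing and the on-axis coincidence separately, and it is precisely here that the ``no two min-max nodes share a row or a column'' property of a disconnected step-graph (Definition~\ref{def: disconnected step graph}, condition~(2)) is indispensable, since without it a grid-line axis could not be excluded.
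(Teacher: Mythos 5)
Your proof is correct, but it takes a genuinely different route from the paper's. The paper never invokes the action of the reflection on the min-max set at all: it works directly with robot positions, using Observation~\ref{obs:1} (every boundary of $\mathcal{IR}(\mathcal{R})$ carries a robot) and the fact that for a disconnected step-graph $\mathcal{IR}(\mathcal{R})$ is a non-square rectangle, to squeeze the axis strictly between the vertical diagonals $L_1$ and $L_2$ of two consecutive minimal enclosing diamonds; since those two lines are at distance one, no grid line fits between them and the reflection is partitive. You instead observe that a configuration automorphism permutes $V_M(\mathcal{R})$ and preserves rows, and then rule out a grid-line axis purely combinatorially from conditions (1) and (2) of Definition~\ref{def: disconnected step graph} (no two min-max nodes share a row or a column), splitting into the off-axis and all-on-axis cases. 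Your argument is shorter and does not need Observation~\ref{obs:1} or the dimensions of $\mathcal{IR}(\mathcal{R})$, at the price of leaning on the already-established structural characterization of $G_M(\mathcal{R})$ (ultimately Lemmas~\ref{lemma: grid line can not contain more than 2 min max point} and~\ref{lemma: 2 min max points on same grid line must be neighbours}) and on the fact that automorphisms preserve the min-max property, which you correctly justify; the paper's version buys a precise localization of the axis and a template that it reuses almost verbatim in Lemmas~\ref{lemma 18} and~\ref{lemma 22}--\ref{lemma 24}. Both establish the same conclusion: the axis meets no grid node, so every orbit of the reflection has size two and the automorphism is partitive.
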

% \begin{figure}
%     \centering
% \includegraphics[scale=0.09]{figure/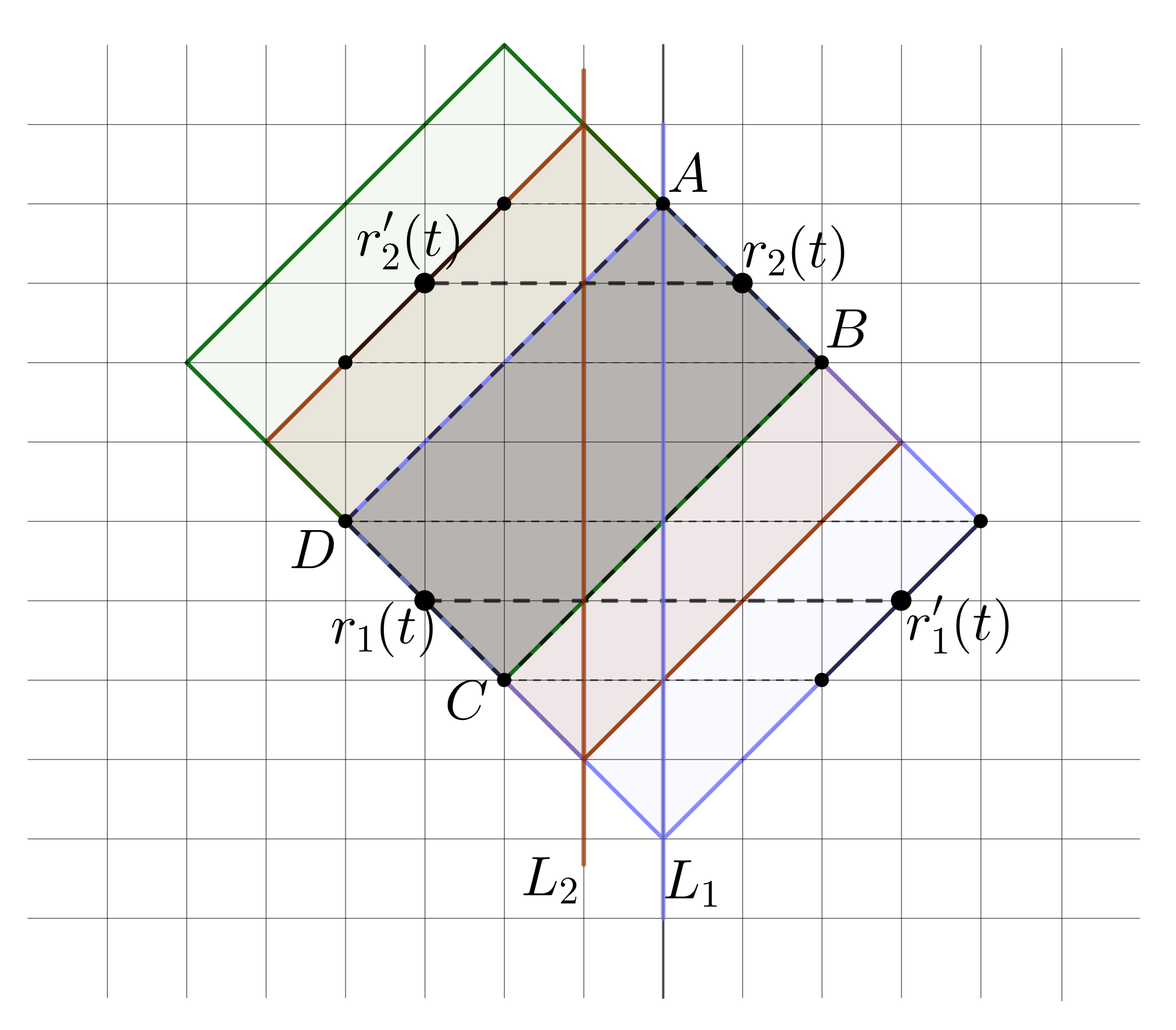}
%     \caption{Reflection of any point $r_1(t)$ on $CD$ about $L_1$ and reflection of any point $r_2(t)$ on $AB$ about $L_2$ is outside of $\mathcal{I}_R(\mathcal{R})$}
%     \label{fig:smmetry reflectional horizon vertical DSG}
% \end{figure}
\begin{proof}
Assume that the graph $G_M(\mathcal{R})$ is a disconnected step-graph and let $G_M(\mathcal{R})$ has $p$ nodes, where $p \geq1$. Let $ABCD$ be the intersecting rectangle $\mathcal{IR}(\mathcal{R})$. Observe that the sides $AD$ and $BC$ are of length $k\sqrt{2}$ and sides $AB$ and $CD$ are of length $(k-p+1)\sqrt{2}$, where $k\sqrt{2}$ is the side of the minimal enclosing diamond. Thus, for $p > 1$, $\mathcal{IR}(\mathcal{R})$ is not a square. Let us order the minimal enclosing diamonds first. We denote a minimal enclosing diamond as $\mathcal{M}_j$ if its center is the $j$-th topmost min-max node in $C(t)$. It should be noted that $BC= \mathcal{B}_{DR}$ of $\mathcal{M}_1, $ and $DA= \mathcal{B}_{UL}$ of $\mathcal{M}_p$. Also, note that $\mathcal{B}_{UL}$ of $\mathcal{M}_1$ and $\mathcal{B}_{DR}$ of $\mathcal{M}_p$ and both of $\mathcal{B}_{DR}$ and $\mathcal{B}_{UL}$ of $\mathcal{M}_j$ does not intersect with $\mathcal{IR}(\mathcal{R})$, where $j \ne 1, p$.

Let $L_1$ be a vertical diagonal of $\mathcal{M}_p$ and $L_2$ be the vertical diagonal of $\mathcal{M}_{p-1}$. Assume that $C(t)$ admits a vertical axis of symmetry, say $L$. We claim that $L$ can not be $L_1$ or any vertical line on the right of $L_1$. Similarly, $L$ can not be $L_2$ or any vertical line on the left of $L_2$. First, we prove that $L$ can not be $L_1$ or any vertical line on the right of $L_1$. Let $r_1(t)$ be a robot position on the side $CD$ (say). Note that $CD$ lies strictly to the left of $L_1$ due to the fact that $ABCD$ is a non-square rectangle. If $L$ is the line $L_1$ or any grid line on the right of $L_1$, the symmetric image of $r_1(t)$, $r_1'(t)$ (say) must be either on $\mathcal{B}_{DR}$ of $\mathcal{M}_p$ (as $CD$ is a segment of $\mathcal{B}_{DL}$ of $\mathcal{M}_p$ and $L_1$ is the vertical diagonal of $\mathcal{M}_p$) or it is outside of the region $\underset{i\in [1,p] \cap \mathbb{N}}{\bigcup} \mathcal{M}_i$. In both case $r_1'(t) \notin \mathcal{IR}(\mathcal{R})$ which is a contradiction to Observation~\ref{obs:1}. So $L$ must be a line lying strictly on the left of $L_1$.

Next, assume that either $L=L_2$ or $L$ is on the left of $L_2$. Consider any robot position on $AB$, say $r_2(t)$. Since $ABCD$ is a non-square rectangle, $AB$ lies strictly to the right of $L_2$. Also, since $AB$ is a segment of $\mathcal{B}_{UR}$ of $\mathcal{M}_{p-1}$ and $L_2$ is the vertical diagonal of $\mathcal{M}_{p-1}$, the image of $r_2(t)$, (say $r_2'(t)$), must be either on $\mathcal{B}_{UL}$ of $\mathcal{M}_{p-1}$ or on further left of it. In both cases, $r_2'(t) \notin \mathcal{IR}(\mathcal{R})$, which is a contradiction to Observation~\ref{obs:1}. Thus, $L$ must be strictly in between $L_1$ and $L_2$, and since there is no other vertical grid line between $L_1$ and $L_2$ (as the distance between $L_1$ and $L_2$ is 1), the line of symmetry does not pass through any grid nodes (Figure \ref{fig:smmetry reflectional horizon vertical DSG}). Similarly, we can show that if $C(t)$ admits a horizontal axis of symmetry, then it must not pass through any grid lines. Hence, we can conclude that if $C(t)$ admits a symmetry about a vertical or horizontal axis of symmetry and $G_M(\mathcal{R})$ is a disconnected step-graph, then $C(t)$ admits a partitive automorphism.
\end{proof}
\begin{figure} 
    \centering
\includegraphics[scale=0.09]{}
    \caption{Reflection of a robot $r_1(t)$ on $CD$ about $L_1$ and reflection of a robot $r_2(t)$ on $AB$ about $L_2$ is outside of $\mathcal{I}_R(\mathcal{R})$.}
    \label{fig:smmetry reflectional horizon vertical DSG}
\end{figure}
\begin{lemma} \label{impossible4}
     If $C(t)$ is a configuration admitting rotational symmetry and such that $G_M(\mathcal{R})$ is a disconnected step graph, where the number of min-max nodes is even, then the configuration admits a partitive automorphism.
     \end{lemma}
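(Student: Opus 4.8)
The plan is to mimic the structure of the proof of Lemma~\ref{lemma 20} for the rotational case: locate the centre of rotation exactly, and then show that the hypothesis that the number of min-max nodes is even forces this centre to be a non-grid point, after which partitiveness is immediate. First I would record the key invariance property: every automorphism of the configuration is a graph automorphism of the grid, hence a $d_m$-isometry that maps horizontal/vertical lines to horizontal/vertical lines, so it sends each minimal enclosing diamond to a minimal enclosing diamond of the same size and therefore fixes the intersection rectangle $\mathcal{IR}(\mathcal R) = \bigcap_{\mathcal M \in \mathcal{MED}(\mathcal R)} \mathcal M$ setwise. Since $\phi_{rot}$ has a unique fixed point, the centre of rotation $c$ must coincide with the centre of $\mathcal{IR}(\mathcal R)$; equivalently, since automorphisms permute $V_M(\mathcal R)$, the point $c$ is the centre of the configuration of min-max nodes.

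Next I would rule out a $90^\circ$ rotation. Reusing the side-length computation from Lemma~\ref{lemma 20}, if $G_M(\mathcal R)$ is a disconnected step-graph on $p$ nodes then $\mathcal{IR}(\mathcal R)=ABCD$ has one pair of sides of length $k\sqrt2$ and the other pair of length $(k-p+1)\sqrt2$, where $k\sqrt2$ is the size of the minimal enclosing diamonds. For $p>1$ these lengths differ, so $\mathcal{IR}(\mathcal R)$ is a non-square rectangle and cannot be invariant under a $90^\circ$ rotation. Hence $\phi_{rot}$ is necessarily a $180^\circ$ rotation, i.e.\ the point reflection through $c$.

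Now I would pin down $c$ using the disconnected step-graph structure (Definition~\ref{def: disconnected step graph}). At each min-max node the lines $L_H$ and $L_V$ carry no further min-max node, while the diagonal set $D(\cdot)$ carries one or two of them; a short check shows that two such diagonal neighbours must be the \emph{opposite} ones (otherwise two min-max nodes would share a grid line, contradicting Lemma~\ref{lemma: 2 min max points on same grid line must be neighbours}). Thus the $p$ min-max nodes are collinear and equally spaced along a single diagonal direction, consecutive ones differing by a step $(\pm1,\pm1)$. The point reflection through $c$ permutes these $p$ equally spaced collinear points, so $c$ is their midpoint. Because $p$ is even, this midpoint is the midpoint of two diagonal-neighbour grid nodes, i.e.\ a point with half-integer coordinates, which is the centre of a unit square and not a grid node. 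Consequently $\{c\}\cap(\mathcal R \cup V_M(\mathcal R))=\emptyset$: the rotation fixes no grid node, every orbit of $\langle \phi_{rot}\rangle$ acting on $V$ has size $2$, and therefore $\phi_{rot}$ is a partitive automorphism, exactly as in Lemma~\ref{lemma17}.

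The main obstacle is the middle step: one must argue cleanly that a disconnected step-graph is a single straight diagonal chain (not a branching or bending configuration), since this is what guarantees that the $180^\circ$ rotation cannot fix any min-max node and that its centre lands strictly between two of them. Once this is in place, the role of the parity hypothesis becomes transparent --- an \emph{odd} number of collinear nodes would place $c$ on the central min-max node (a grid node, hence a point at which gathering could be finalised), whereas an \emph{even} number pushes $c$ to the centre of a unit square, which is precisely the non-grid location that makes the automorphism partitive.
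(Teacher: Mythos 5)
Your proposal is correct and reaches the same conclusion, but the decisive step is carried out differently from the paper. Both proofs begin identically: the rotation must fix the intersection rectangle $\mathcal{IR}(\mathcal R)=ABCD$ setwise, so its centre $c$ is the centre of $ABCD$, and the goal is to show $c$ is not a grid node so that Theorem~\ref{thm: impossiblePartitive} applies. The paper then does a purely metric computation: using the side lengths $k\sqrt2$ and $(k-p+1)\sqrt2$ of $ABCD$ it splits into cases on the parity of $k$ and shows in each case that the relevant median's midpoint sits at a half-integer offset from a grid node. You instead exploit the combinatorial structure of $V_M(\mathcal R)$: the disconnected step-graph is a straight diagonal chain of $p$ equally spaced grid nodes, the $180^\circ$ rotation must reverse this chain, so $c$ is its midpoint, and $p$ even forces that midpoint to be the centre of a unit square. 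Your route is arguably cleaner --- it makes the role of the parity of $p$ transparent and avoids the case split on $k$ --- and your extra step ruling out $90^\circ$ rotations via the non-squareness of $ABCD$ is sound though not strictly needed (once $c$ is not a grid node, partitivity follows for either angle). The one point you correctly flag as the ``main obstacle'' deserves emphasis: your argument that opposite diagonal neighbours force the chain to be straight (via Lemma~\ref{lemma: 2 min max points on same grid line must be neighbours}) rules out bending and branching, but does not by itself exclude two disjoint straight chains; one still needs that the diagonal-adjacency structure is connected, e.g.\ by observing that $V_M(\mathcal R)$ is the set of grid nodes in the intersection of equal-radius diamonds centred at the robots, which is a $45^\circ$-rotated rectangle and hence diagonally convex. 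Note, however, that the paper's own proof silently relies on the very same single-chain fact when it writes the side lengths of $ABCD$ as $k\sqrt2$ and $(k-p+1)\sqrt2$ and orders the minimal enclosing diamonds by their ``$j$-th topmost'' centres, so this is a shared debt rather than a defect specific to your argument.
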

     \begin{proof}
     Let $C(t)$ admit a rotational symmetry about a node $c$. That is, there exists an automorphism $\phi_{rot} \in Aut(C(t))$ such that $|<\phi_{rot}>| >1$. First, observe that if $C(t)$ admits a rotational symmetry, the intersection rectangle $\mathcal{IR}(\mathcal{R}) = ABCD$ remains the same even after the whole configuration is rotated by the angle of rotation about the center of rotation. So, the center of rotation of the configuration coincides with the center of the rectangle $ABCD$. Now, since $G_M(\mathcal{R})$ is a disconnected step-graph, $ABCD$ has sides $AD$ and $BC$ of length $k\sqrt{2}$ and sides $AB$ and $CD$ of length $(k-p+1)\sqrt{2}$, where $k\sqrt{2}$ is the size of the minimal enclosing diamonds in $C(t)$ and $p$ is the number of vertices in $G_M(\mathcal{R})$ (i.e., the number of min-max nodes in $C(t)$). Let us assume that $p =2p'$, for some $p' \in \mathbb{Z}$. Depending on whether $k$ is even or odd, the following two cases are to be considered.

         \textit{Case I:} Let $k$ be of the form $2k'$, for some $k' \in \mathbb{Z}$. Let $E$, $F$, $G$ and $H$ be the central nodes of the sides $AB$, $BC$, $CD$ and $DA$ respectively. Then $d_m(A,E) = d_m(B,E) = \frac{2k'\sqrt{2}}{2} = k'\sqrt{2}$. Since, both $A$ and $B$ are grid nodes, $E$ must also be a grid node. Similarly, $G$ is also a grid node. Now, $d_m(E,G) = d_m(D,A) = d_m(B,C) = (k-p+1)\sqrt{2} = (2k'-2p'+1)\sqrt{2}$, which is of the form $(2x+1)\sqrt{2}$ where $x = k'-p' \in \mathbb{Z}$. Now, the central node $m$ of this median $EG$ is the center of rotation. Thus $d_m(E,m) = d_m(m,G) = (x+\frac{1}{2})\sqrt{2}$. Since, $x+\frac{1}{2} \notin \mathbb{Z}$, $m$ can not be a grid node.

         \textit{Case II:} Let $k$ be of the form $2k'+1$, where $k' \in \mathbb{Z}$. Thus, $BC$ and $DA$ has lengths = $(2k'+1-2p'+1)\sqrt{2}$, which is of the form $2y\sqrt{2}$ where $y= k'-p'+1 \in \mathbb{Z}$. So, $d_m(D,H)= d_m(H,C)= y\sqrt{2}$. Now, since $A$ and $D$ both are grid nodes and $y \in \mathbb{Z}$, $H$ is also a grid node. Similarly, $F$ is also a grid node and the length of the median $HF$ is $k\sqrt{2} = (2k'+1)\sqrt{2}$. So, the central node of side $HF$ denoted as $m$ (which is also the center of rotation of $C(t)$) is $(k'+\frac{1}{2})\sqrt{2}$ distance away from both $H$ and $F$. Since $H$ and $F$ both are grid nodes and $k'+\frac{1}{2} \notin \mathbb{Z}$, $m$ is not a grid node.

         In both the cases, the center of rotation is not a grid node. Hence,  $C(t)$ admits a partitive automorphism. 
     \end{proof}

     Now, from Theorem~\ref{thm: impossiblePartitive}, we can state the following corollary.
\begin{corollary} \label{corollary3}
   If $C(0)$ is a configuration such that $G_M(\mathcal{R})$ is a disconnected step-graph, then min-max gathering is unsolvable if any one of the following holds true for $C(0)$.
   \begin{enumerate}
       \item $C(0)$ admits a reflective symmetry about an axis of symmetry, which is either vertical or horizontal. 
       \item $C(0)$ admits rotational symmetry about a node and $|V_M(\mathcal{R})|$ is even.
   \end{enumerate}
\end{corollary}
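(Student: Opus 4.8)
The plan is to derive this corollary directly from the two structural lemmas that immediately precede it, namely Lemma~\ref{lemma 20} and Lemma~\ref{impossible4}, combined with the impossibility result Theorem~\ref{thm: impossiblePartitive}. The entire substance of the argument has already been packaged into those lemmas: each one converts a geometric symmetry hypothesis about a disconnected step-graph configuration into the single uniform conclusion that the configuration admits a \emph{partitive automorphism}. Once that reduction is available, ungatherability follows at once, so the corollary is essentially a case split feeding into a known impossibility theorem rather than a fresh argument.

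Concretely, I would split on the two listed hypotheses. For the first case, suppose $C(0)$ admits a reflective symmetry about a vertical or horizontal axis; since $G_M(\mathcal R)$ is a disconnected step-graph, the hypotheses of Lemma~\ref{lemma 20} are met verbatim, and that lemma yields that $C(0)$ admits a partitive automorphism. For the second case, suppose $C(0)$ admits rotational symmetry about a node and $|V_M(\mathcal R)|$ is even; then the hypotheses of Lemma~\ref{impossible4} hold (the lemma explicitly requires an even number of min-max nodes, which is exactly the parity assumption stated here), so once more $C(0)$ admits a partitive automorphism. In either branch I would then invoke Theorem~\ref{thm: impossiblePartitive} to conclude that min-max gathering is unsolvable for $C(0)$.

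Because both branches collapse to a single application of a lemma followed by the impossibility theorem, I do not expect any genuine obstacle inside the corollary proof itself; all of the real work already lives in showing that the reflection axis avoids every grid node (Lemma~\ref{lemma 20}) or that the center of rotation fails to be a grid node (Lemma~\ref{impossible4}). The one point I would be careful to flag in the write-up is \emph{why} the parity hypothesis is indispensable in the rotational case: when $|V_M(\mathcal R)|$ is odd the center-of-rotation computation in Lemma~\ref{impossible4} can land on an actual grid node, so that lemma simply does not apply, and this is precisely the reason the corollary restricts to even $|V_M(\mathcal R)|$ rather than asserting ungatherability for all rotationally symmetric disconnected step-graph configurations.
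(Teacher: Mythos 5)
Your proposal is correct and follows exactly the route the paper takes: the paper states this corollary as an immediate consequence of Lemma~\ref{lemma 20}, Lemma~\ref{impossible4}, and Theorem~\ref{thm: impossiblePartitive}, which is precisely your two-branch case split. Your closing remark on why the parity hypothesis is indispensable in the rotational case is a sensible addition but does not change the argument.
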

Next, consider the case when $G_M(\mathcal R)$ is a step-graph. We have the following lemma.
\begin{lemma}\label{lemma 22}
    Let $C(t)$ be a configuration such that the sub-graph induced by the set of min-max nodes of the configuration is a step-graph. If the step-graph has more than two nodes, and the configuration admits either a vertical or horizontal reflectional symmetry, then the configuration $C(t)$ must be partitive.
\end{lemma}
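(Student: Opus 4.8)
The plan is to follow the same template that worked for Lemma~\ref{lemma 20} (the disconnected step-graph case) and Lemma~\ref{lemma 18} (the four-cycle case): show that any admissible vertical or horizontal axis of symmetry is forced to lie strictly between two adjacent grid lines, and hence cannot pass through any grid node, which makes the reflective automorphism partitive. First I would fix $ABCD = \mathcal{IR}(\mathcal R)$ and recall the key geometric fact about a step-graph with $p > 2$ nodes: the min-max nodes form a monotone staircase, so the centers of the corresponding diamonds in \MED\ are arranged diagonally, and the intersection rectangle is a non-square rectangle whose sides $AD, BC$ have length $k\sqrt2$ while $AB, CD$ have length $(k-p+1)\sqrt2$. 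Because $p > 2$, the two dimensions differ by more than one unit, which gives genuine room between the staircase's extremal diagonals.

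**Locating the axis.** Assume without loss of generality that $C(t)$ admits a vertical axis of symmetry $L$. Mimicking Lemma~\ref{lemma 20}, I would argue that $L$ must fall between the vertical diagonals of the two ``outermost'' diamonds of the staircase: if $L$ were on or beyond the rightmost such diagonal, a robot position guaranteed on the right boundary of $\mathcal{IR}(\mathcal R)$ (by Observation~\ref{obs:1}) would reflect to a point strictly outside $\mathcal{IR}(\mathcal R)$, contradicting Observation~\ref{obs:1}; symmetrically for the left. Once $L$ is pinned between two consecutive vertical grid lines that are one unit apart, $L$ cannot coincide with any grid line, so it passes through no grid node. A reflection whose axis misses every grid node partitions $V$ into orbits of size exactly two with no fixed nodes, which is precisely a partitive automorphism; invoking the definition, $C(t)$ is partitive. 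The horizontal case is identical after exchanging the roles of the two pairs of sides.

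**The main obstacle.** The delicate point—and where I expect the real work to be—is establishing that for a step-graph with $p>2$ nodes the axis is squeezed into a gap that genuinely excludes all grid lines, rather than merely excluding the extremal diagonals. In the disconnected step-graph case of Lemma~\ref{lemma 20} the staircase ``steps'' are spaced out so that the relevant gap has width exactly one unit; for a connected step-graph the min-max nodes are mutual neighbours occupying corners of unit squares, so the diamonds overlap heavily and one must check carefully that the extremal vertical diagonals $L_1, L_2$ bounding the feasible region for $L$ are still separated by exactly one grid unit, leaving no admissible grid line strictly between them. I would handle this by writing the $p$ staircase nodes in coordinates relative to the key corner, computing the vertical diagonals of the first and last diamonds explicitly, and verifying that their horizontal separation forces $L$ into an inter-node gap. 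The hypothesis $p > 2$ is what guarantees $\mathcal{IR}(\mathcal R)$ is strictly non-square and thus that such a confining gap exists at all; the cases $p = 1$ (single min-max node, which could lie on $L$) and $p = 2$ are excluded precisely because they would not force non-squareness, and I would note this dependence explicitly rather than leave it implicit.
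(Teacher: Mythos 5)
Your high-level template is the right one and matches the paper's: use Observation~\ref{obs:1} to argue that reflecting a robot guaranteed on a side of $\mathcal{IR}(\mathcal R)$ about any admissible vertical axis would expel it from $\mathcal{IR}(\mathcal R)$, and conclude that the axis can only sit where the induced automorphism is partitive. The gap is precisely in the step you flag as ``the real work,'' and your proposed way of closing it rests on a false premise. For a \emph{connected} step-graph, consecutive min-max nodes are unit neighbours (alternating horizontal and vertical steps), not diagonal neighbours, so the diamond centres of the staircase $(0,0),(0,1),(1,1),(1,2),(2,2),\dots$ have $x$-coordinates $0,0,1,1,2,\dots$; the vertical diagonals of the first and last diamonds are therefore roughly $\lfloor (p-1)/2\rfloor$ grid units apart, not one. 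For $p=5$ your two extremal constraints only confine $L$ to the open strip $0<x<2$, which still contains the grid line $x=1$, so the axis is not pinned into an inter-node gap and partitivity does not follow. The verification you defer to (``their horizontal separation forces $L$ into an inter-node gap'' of width one) is the disconnected-step-graph geometry of Lemma~\ref{lemma 20} transplanted unchanged; the side-length formula you quote ($k\sqrt2$ and $(k-p+1)\sqrt2$) is likewise the disconnected formula, whereas for the connected case the fixed sides have length $(k-\tfrac12)\sqrt2$ (as the paper itself uses in Lemma~\ref{lemma 23}), and for $p=3$ the two dimensions differ by less than one unit, contrary to your ``more than one unit'' claim.

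The repair is to use all four sides of $\mathcal{IR}(\mathcal R)$ rather than only the two controlled by the outermost diamonds, or, more directly, to note that a vertical or horizontal reflection interchanges the two diagonal directions of the grid and hence maps the $45^{\circ}$-tilted rectangle $\mathcal{IR}(\mathcal R)$ to a congruent tilted rectangle with its two side lengths swapped. Since for $p>2$ that rectangle is non-square, no vertical or horizontal reflection preserves $\mathcal{IR}(\mathcal R)$ at all: the feasible set for $L$ is empty, not a unit-width strip, and the lemma holds a fortiori (indeed vacuously). Any correct completion of your computation would discover this emptiness rather than the unit gap you anticipate, so the conclusion survives, but as written the plan's key step would fail for $p\ge 5$. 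This emptiness is also, in essence, what the paper's own terse proof is extracting when it derives a contradiction from the reflected boundary robot leaving $\mathcal{IR}(\mathcal R)$.
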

\begin{proof}
    Let $C(t)$ be a configuration such that the graph induced by the set of min-max nodes is a step-graph. It has been assumed that the step-graph contains more than two nodes. Note that the intersecting rectangle $\mathcal{IR}(\mathcal{R})$ of the set of robots $\mathcal R$ must contain at least one robot position at each of the boundaries and all the robot positions must be either inside or on the boundaries of $\mathcal {IR}(\mathcal R)$. Without loss of generality, assume that the configuration admits a vertical line of symmetry $l$ passing through the fixed line of the diamond. Consider a robot $r$ at one of the sides of the boundary. Note that, with respect to any vertical line of symmetry $l$, the equivalent robot of $r$, i.e., $\phi (r)$, must lie outside the rectangle $\mathcal {IR} (\mathcal R)$. This is a contradiction to the fact that all the robot positions must be either inside or on $\mathcal {IR}(\mathcal R)$. As a result, the configuration cannot admit a vertical line of symmetry passing through the fixed line. In case the configuration admits a horizontal line of symmetry, similar arguments can be provided. Hence, $C(t)$ cannot admit any vertical or horizontal lines of symmetry passing through the fixed line of the diamond, when there exist more than two nodes. Therefore, the line of symmetry must be such that the configuration is partitive.

\end{proof}
\begin{lemma}\label{lemma 23}
    Let $C(t)$ be a configuration such that the sub-graph induced by the set of min-max nodes of the configuration is a step- graph. If the step-graph contains at least two vertices and the configuration admits a diagonal line of symmetry passing through the fixed line of the intersecting rectangle, then $C(t)$ must be partitive. 
\end{lemma}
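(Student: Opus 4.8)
The plan is to follow the template of Lemma~\ref{lemma 22} (the horizontal/vertical case) and Lemma~\ref{lemma 20} (the disconnected step-graph), but to run the argument along the diagonal direction that is parallel to the sides of the minimal enclosing diamonds. Write $\mathcal{IR}(\mathcal R)=ABCD$. Since $G_M(\mathcal R)$ is a step-graph with $p\ge 2$ vertices, the min-max nodes are the centers $\mathcal{C}(\mathcal M_1),\dots,\mathcal{C}(\mathcal M_p)$ of the minimal enclosing diamonds and they lie on a monotone staircase running along one of the two diagonal directions; hence $\mathcal{IR}(\mathcal R)$ is a rectangle (non-square once $p>2$) whose four sides all have slope $\pm 1$, two parallel to the staircase and two crossing it. The only global input about robot positions that the argument needs is Observation~\ref{obs:1}: each of the four sides of $\mathcal{IR}(\mathcal R)$ carries at least one robot, and every robot lies on or inside $\mathcal{IR}(\mathcal R)$.

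First I would pin down the admissible position of the diagonal axis of symmetry $l$. Any $\phi\in Aut(C(t))$ permutes $V_M(\mathcal R)$ and therefore preserves $\mathcal{IR}(\mathcal R)$, so $l$ must itself be an axis of symmetry of the rectangle; among the diagonal lines this leaves only the median parallel to a pair of opposite sides, i.e.\ the ``fixed line'' in the statement, which is the perpendicular bisector of the segment $\mathcal{C}(\mathcal M_1)\,\mathcal{C}(\mathcal M_p)$. To rule out every other diagonal placement I would pick a robot $r$ on one of the two sides that $l$ interchanges and reflect it: if $l$ were shifted toward either extreme diamond, a short $d_m$-estimate of the same flavour as the three-case analyses in Lemmas~\ref{lemma: grid line can not contain more than 2 min max point}--\ref{lemma: condition of two neibouring minmax point of a min max point} forces the image $\phi(r)$ strictly outside $\mathcal{IR}(\mathcal R)$, contradicting Observation~\ref{obs:1}. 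Balancing the two inequalities obtained from the two interchanged sides leaves $l$ exactly on the midline.

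It then remains to show that this midline contains no node of $V$, so that the involution it induces is fixed-point free and hence partitive. The two sides interchanged by $l$ lie on the diagonal grid lines through $\mathcal{C}(\mathcal M_1)$ and $\mathcal{C}(\mathcal M_p)$, whose separation is governed solely by the number of staircase vertices, and $l$ sits halfway between them. I would close the proof with a short parity computation showing that this halfway diagonal falls strictly between two consecutive diagonal grid lines of $G$ and therefore meets $V$ only in edge-midpoints; a reflection whose axis contains no grid node generates only orbits of size two, i.e.\ a partitive automorphism, and Theorem~\ref{thm: impossiblePartitive} then yields ungatherability. I expect this final parity step to be the main obstacle and would write it out in full, because the whole conclusion hinges on the forced midline landing between grid lines rather than on one; this is exactly where the hypothesis that $l$ is the fixed line of $\mathcal{IR}(\mathcal R)$ --- and not an axis passing through one of the staircase's min-max nodes --- is doing the real work.
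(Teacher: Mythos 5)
Your proposal follows essentially the same route as the paper: the axis of symmetry is forced to be the perpendicular bisector of the fixed sides of $\mathcal{IR}(\mathcal R)$, and partitivity is then reduced to a parity argument showing that this midline contains no grid node. The one step you defer --- the parity computation --- is exactly the substance of the paper's proof, which observes that the fixed sides have length $(k-\frac{1}{2})\sqrt{2}$ (where $k\sqrt{2}$ is the size of the minimal enclosing diamonds), so the midpoint through which the axis must pass cannot be a grid node; be sure to supply this half-integer length fact when you write the step out, since without it the claim that the midline falls strictly between consecutive diagonal grid lines is only asserted.
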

\begin{proof}
Let $C(t)$ be a configuration such that the sub-graph induced by the set of min-max nodes is a step-graph. Note that as the number of min-max nodes keeps on increasing in the configuration, there must exist precisely two sides of the boundary of the intersecting rectangle $\mathcal {IR}(\mathcal R)$ whose length of the side remains fixed. The two such boundary lines of the intersecting rectangle whose length remains fixed as the number of min-max nodes keeps on increasing are denoted as \textit{fixed lines} of the intersecting rectangle. We want to prove that if the configuration is such that the sub-graph induced by the set of min-max nodes is a step-graph and the configuration admits a diagonal line of symmetry passing through that fixed lines, then the configuration must always be partitive.

Note that if there exists such a diagonal line of symmetry that passes through the fixed lines, then that line must cross either the grid node of the fixed lines or of the center of two grid points of the fixed lines, i.e., passes through the edges of the grid (Figure \ref{fig:diagonalimpo}). Let $ABCD$ be the intersecting rectangle $\mathcal {IR}(\mathcal R)$, where $AB$ and $CD$ are the fixed lines of the rectangle. Note that the length of side $AB$ is $(k-\frac{1}{2})\sqrt{2}$, where $k\sqrt{2}$ is the side of the minimum enclosing diamond and obviously $k$ must be an integer. If there exists a diagonal line of symmetry passing through the fixed lines, then that line must pass through the mid point of $AB$. However, we can check that the length of $\frac{AB}{2}$ is $(\frac{k}{2}-\frac{1}{4})\sqrt{2}$. If the mid point of $AB$ is grid node, then $(\frac{k}{2}-\frac{1}{4})$ must be an integer. So, if $(\frac{k}{2}-\frac{1}{4})= z$, then $k=2z+\frac{1}{2}$ must be an integer. However, $k$ is not an integer and hence it leads to a contradiction. So, there does not exist a diagonal line of symmetry passing through the fixed lines in $C(t)$. Hence, it can be concluded that the configuration is partitive.
\end{proof}
\begin{figure} 
    \centering
\includegraphics[scale=0.35]{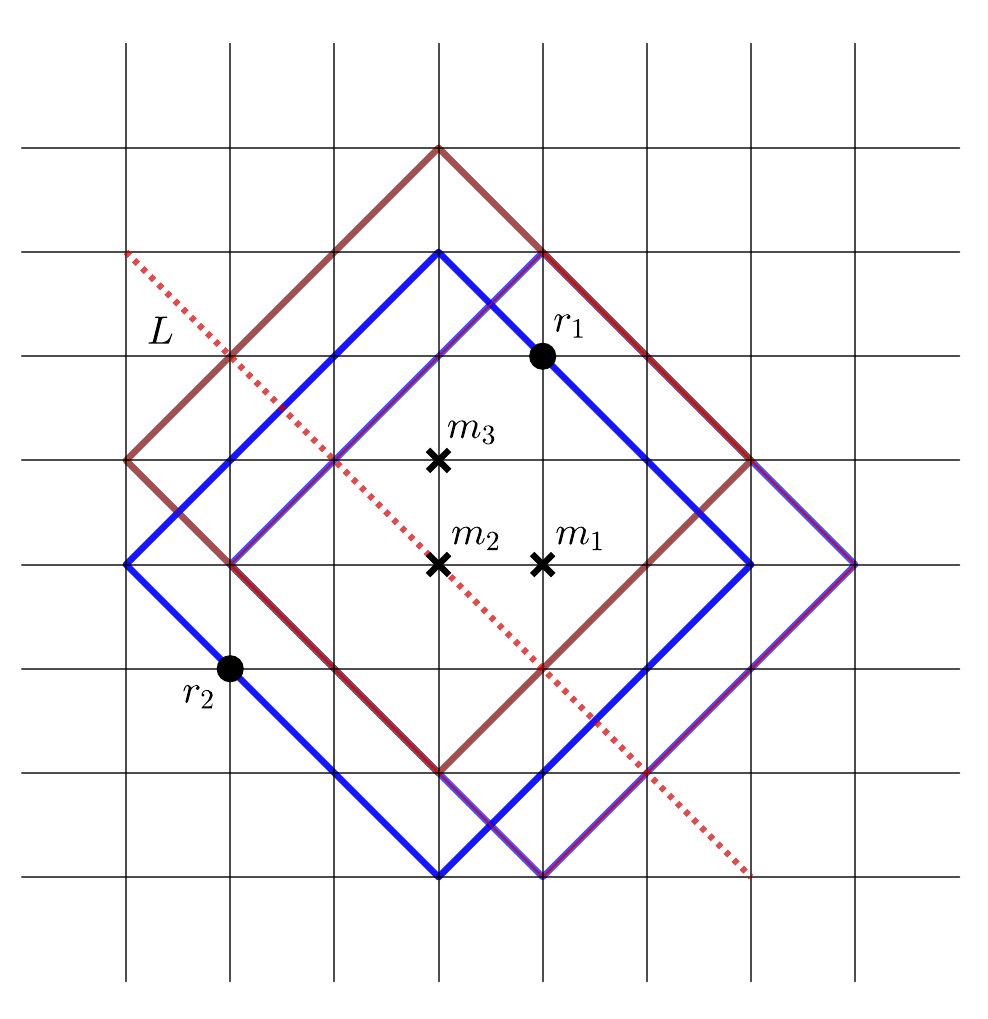}
    \caption{Reflection of a robot $r_1$ about the line of symmetry $L$ is $r_2$ lying outside $\mathcal {IR}(\mathcal R)$, which is a contradiction.}
    \label{fig:diagonalimpo}
\end{figure}
\begin{lemma}\label{lemma 24}
    Let $C(t)$ be a configuration such that the sub-graph induced by the set of min-max nodes of the configuration is a step-graph and the configuration admits rotational symmetry. If the step-graph contains at least two vertices, then the configuration $C(t)$ must be partitive. 
\end{lemma}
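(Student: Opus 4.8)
The plan is to reproduce, for the connected step-graph, the two-part strategy already used for the disconnected step-graph in Lemma~\ref{impossible4} and for diagonal symmetry in Lemma~\ref{lemma 23}: first show that any rotational symmetry is forced to be centred at the centre of the intersection rectangle $\mathcal{IR}(\mathcal R)=ABCD$, and then show that this centre cannot be a grid node, so that the generating automorphism is partitive. Concretely, I would fix $\phi_{rot}\in Aut(C(t))$ (so $|\langle\phi_{rot}\rangle|\in\{2,4\}$) and observe, exactly as in Lemma~\ref{impossible4}, that $\mathcal{IR}(\mathcal R)$ is determined by the configuration and hence preserved by every automorphism; a rotation carrying the rectangle $ABCD$ onto itself must fix its centre $O$. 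Thus the centre of rotation is $O$, and it suffices to prove that $O$ is not a grid node.

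Next I would determine the shape of $\mathcal{IR}(\mathcal R)$ from the staircase structure of the step-graph by passing to diagonal coordinates $a(x,y)=x+y$ and $b(x,y)=x-y$. In these coordinates a grid node is exactly a pair $(a,b)\in\mathbb Z^2$ with $a\equiv b\pmod 2$, and a diamond $\mathcal M_m$ of size $k\sqrt2$ centred at $m$ becomes the axis-aligned box $\{|a-a(m)|\le k,\ |b-b(m)|\le k\}$. By Lemmas~\ref{lemma: center of a MED(P) is in V_M(P)} and \ref{lemma: any minmax point is a center of some MED}, the minimal enclosing diamonds correspond bijectively to the min-max nodes, so
\[
\mathcal{IR}(\mathcal R)=\bigcap_{m\in V_M(\mathcal R)}\mathcal M_m=[\max_m a(m)-k,\ \min_m a(m)+k]\times[\min_m\text{-side in }b].
\]
Since $G_M(\mathcal R)$ is a connected step-graph, consecutive min-max nodes are adjacent and, by Lemmas~\ref{lemma: grid line can not contain more than 2 min max point} and \ref{lemma: 2 min max points on same grid line must be neighbours}, no three are collinear and no two non-adjacent ones share a grid line; this forces the staircase to be monotone, so at least one of $a,b$—say $b$—takes values in only two consecutive integers along the path, giving $\max_m b(m)-\min_m b(m)=1$. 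The corresponding side of $\mathcal{IR}(\mathcal R)$ then has $b$-extent $2k-1$, i.e. Euclidean length $(k-\tfrac12)\sqrt2$, which is exactly the ``fixed line'' length recorded in Lemma~\ref{lemma 23}.

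Finally I would read off the contradiction. The $b$-coordinate of the centre is $b(O)=\tfrac12\bigl(\max_m b(m)+\min_m b(m)\bigr)$, and since these are consecutive integers their sum is odd, whence $b(O)=x_O-y_O\notin\mathbb Z$. As every grid node has integer $b$-value, $O$ is not a grid node; therefore the centre of rotation is the midpoint of an edge or the centre of a unit square, so $\phi_{rot}$ is a partitive automorphism, which is exactly the claim. The identical computation applies when $a$ is the oscillating coordinate and covers both the $180^{\circ}$ and $90^{\circ}$ cases, because the location of $O$ does not depend on the order of $\phi_{rot}$.

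The step I expect to be the main obstacle is the geometric bookkeeping of the second paragraph: justifying rigorously that a connected step-graph must be a monotone staircase (so that exactly one diagonal coordinate has range $1$) and that the minimal enclosing diamonds centred at the min-max nodes are the only diamonds determining $\mathcal{IR}(\mathcal R)$. Monotonicity follows because any reversal of the staircase would place two non-adjacent min-max nodes on a common grid line or create a cycle, contradicting Lemmas~\ref{lemma: grid line can not contain more than 2 min max point}--\ref{lemma: 2 min max points on same grid line must be neighbours} together with $G_M(\mathcal R)$ being a line graph; but this is where care is genuinely needed, since the whole conclusion $b(O)\notin\mathbb Z$ hinges on the oscillating range being exactly $1$ rather than some larger odd value.
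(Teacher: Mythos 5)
Your proposal is correct and follows essentially the same route as the paper: the paper likewise pins the rotation centre at the centre of $\mathcal{IR}(\mathcal R)$ and then invokes the $(k-\tfrac12)\sqrt{2}$ length of the fixed side (established in Lemma~\ref{lemma 23}) to conclude that this centre is not a grid node, hence the automorphism is partitive. Your diagonal-coordinate computation and the explicit monotone-staircase argument simply fill in details that the paper's much terser proof leaves implicit by citing Lemma~\ref{lemma 23}.
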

\begin{proof}
   Let $C(t)$ be a configuration such that the sub-graph induced by the set of min-max nodes is a step-graph. It has also been assumed that the initial configuration admits a rotational symmetry. If the configuration is not partitive, the configuration must be such that the centre of rotation must be a grid node. It should also be noted that if the centre of rotation is a grid node, then it must be the intersection node of the lines passing through the sides of the intersecting rectangle. However, in the proof of the previous lemma \ref{lemma 23}, it has been proved that there do not exist such lines that pass through the grid node of the fixed lines. So, it can be concluded that the intersection node is also not a grid node. Hence, if the configuration is such that the sub-graph induced by the set of min-max nodes is a step-graph and it admits rotational symmetry, then the configuration must be partitive. 
\end{proof}
\begin{corollary} \label{corollary4}
    If the initial configuration $C(0)$ is such that $V_M(\mathcal{R})$ is a step-graph, then min-max gathering is unsolvable if $C(0)$ is any one of the following:
  \begin{enumerate}
      \item $C(0)$ admits vertical or horizontal reflectional symmetry when the step-graph has more than two vertices.
      \item $C(0)$ admits diagonal reflectional symmetry passing through the fixed line of the intersecting rectangle.
      \item $C(0)$ admits rotational symmetry about a node.
     
  \end{enumerate}
\end{corollary}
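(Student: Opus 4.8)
The plan is to reduce each of the three listed cases to the single impossibility engine already available, namely Theorem~\ref{thm: impossiblePartitive}, which guarantees that any configuration admitting a partitive automorphism is ungatherable. The three preceding lemmas were designed precisely to supply the missing link: each shows that, under a step-graph structure on $V_M(\mathcal R)$, the stated symmetry forces $C(0)$ to be partitive. Thus the entire content of the corollary is the bookkeeping that matches each symmetry type to the lemma certifying partitivity, after which the impossibility is immediate.

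Concretely, I would argue case by case. For case (1), the hypothesis is a vertical or horizontal reflective symmetry together with a step-graph on more than two vertices; this is exactly the situation of Lemma~\ref{lemma 22}, which concludes that $C(0)$ is partitive. For case (2), the hypothesis is a diagonal axis of symmetry passing through a fixed line of the intersecting rectangle; Lemma~\ref{lemma 23} shows that such an axis cannot meet a grid node, since the midpoint computation on a fixed line of length $(k-\tfrac12)\sqrt2$ forces a non-integer coordinate, so the configuration is again partitive. For case (3), a rotational symmetry about a node combined with a step-graph of at least two vertices is handled by Lemma~\ref{lemma 24}, which shows the center of rotation cannot be a grid node. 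In all three cases an appeal to Theorem~\ref{thm: impossiblePartitive} finishes the argument.

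The only delicate point is to confirm that the hypotheses of the invoked lemmas are genuinely met and that no degenerate sub-case slips through. In particular, Lemmas~\ref{lemma 22}--\ref{lemma 24} assume the step-graph has enough vertices (more than two in case (1), at least two in cases (2) and (3)); when $V_M(\mathcal R)$ collapses to a single node, the symmetry axis or center of rotation would instead pass through that unique min-max node, which \emph{is} a grid node, so the configuration would not be partitive and gathering would in fact be possible there. I would therefore either fold this exclusion into the statement or verify directly that, under a step-graph with the stated multiplicity of nodes, the listed symmetries cannot coexist with a grid-node axis or center. Since Lemmas~\ref{lemma 22}--\ref{lemma 24} already discharge exactly these geometric verifications, I expect the corollary itself to require no new computation; the main and only obstacle is aligning the number-of-vertices hypotheses, and the ungatherability then follows directly from Theorem~\ref{thm: impossiblePartitive}.
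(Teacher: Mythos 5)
Your proposal matches the paper's own treatment: the corollary is stated as an immediate consequence of Lemmas~\ref{lemma 22}, \ref{lemma 23} and \ref{lemma 24} combined with Theorem~\ref{thm: impossiblePartitive}, with exactly the case-to-lemma correspondence you describe, and the paper supplies no further argument. The hypothesis-alignment caveat you raise (the corollary's cases~(2) and~(3) omit the ``at least two vertices'' condition that Lemmas~\ref{lemma 23} and \ref{lemma 24} assume) is a genuine, if minor, looseness in the paper's statement that your reading handles correctly.
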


\section{Algorithm} \label{sec:algorithm}
In this section, a deterministic distributed algorithm \textit{Gathering()} $\mathcal A$ has been proposed to solve the min-max gathering problem in infinite grids. Let $\mathcal U$ be the set of all ungatherable configurations listed in the Corollaries \ref{corollary2}, \ref{corollary3} and \ref{corollary4}. The algorithm $\mathcal A$ has been proposed to gather the robots at one of the min-max nodes, for all the initial configurations in $\mathcal I \setminus \mathcal U$ and for all configurations with the number of robots at least nine. According to our analysis, the subgraph induced by the set of min-max nodes $G_M(\mathcal R)$ can take one of three forms: a cycle with four nodes, a disconnected step-graph, or a connected step-graph, as defined in Definitions 12 and 13. Therefore, the proposed algorithm is in accordance with the structure of the initial configuration. The main crux of the algorithm is to select a min-max node (say $m$) among all the possible min-max nodes and allow the robots to move towards $m$, such that the robots uniquely identify $m$ as the target min-max node. However, we have seen examples where it has been observed that a min-max node may not remain invariant under the movement of the robot towards itself. The main difficulty lies in selecting the target node, where the gathering must be ensured. The proposed algorithm is divided into the following phases:
\begin{enumerate}
    \item \textit{Target Min-Max Node Selection phase}: In this phase, the target min-max node (say $m$) is selected to ensure gathering on that node.
    \item \textit{Creating multiplicity phase}: In this phase, the robots create a multiplicity at the target $m$ in order to ensure gathering on that node.  
    \item \textit{Finalization phase}: In this phase, the gathering is finalised at $m$. Since the robots have global-weak multiplicity detection capability, the guards can identify the unique multiplicity node $m$ and finalize the gathering at $m$.
\end{enumerate}
  We will provide a detailed explanation of the proposed algorithm in the next subsections. Before doing so, we have categorized the possible initial configurations. All the initial configurations can be partitioned into the following distinct, non-overlapping classes:
 
  \begin{figure}
  
    \centering
\includegraphics[scale=0.39]{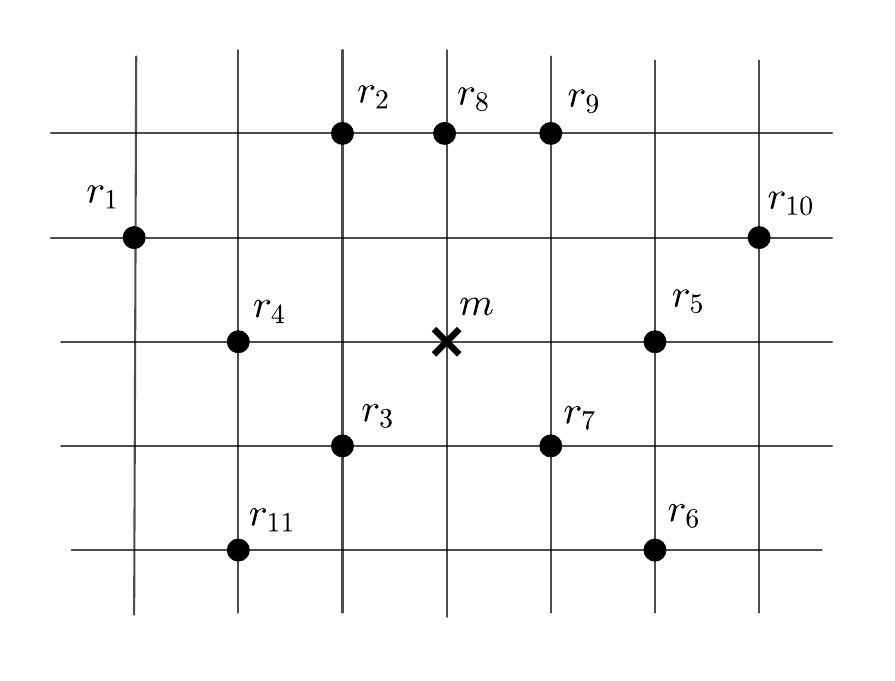}
    \caption{Example of an $\mathcal I_1$ configuration. Configuration admits only one min-max node $m$.}
    \label{I1}
\end{figure}
\begin{enumerate}
    \item $\mathcal I_1$: All configurations for which $|V_M(\mathcal R)|=1$ (Figure \ref{I1}).
    \item $\mathcal I_2$: All configurations for which $|V_M(\mathcal R)| \geq 2$ and the configuration is asymmetric (Figure \ref{I2}).
    \begin{figure}
    \centering
\includegraphics[scale=0.32]{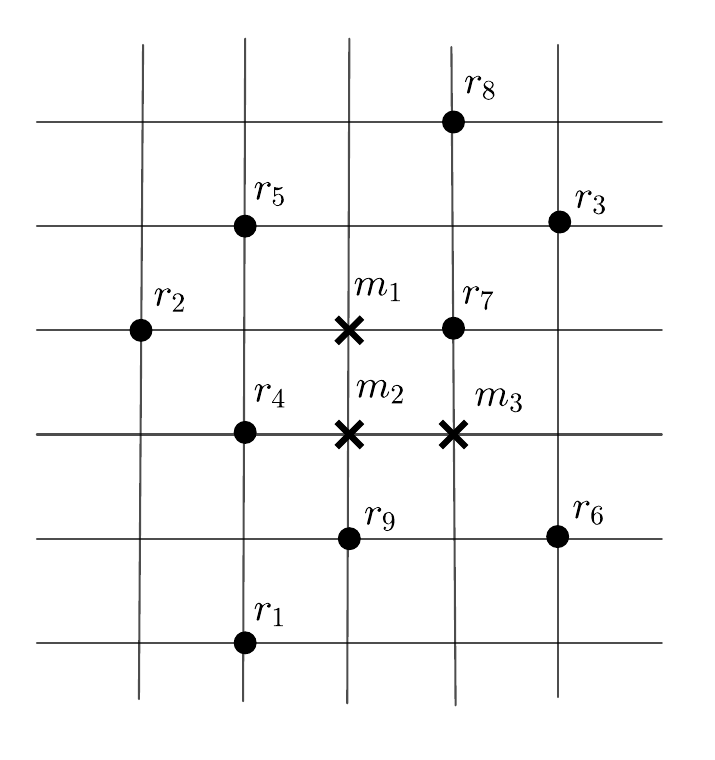}
    \caption{Example of an $\mathcal I_2$ configuration. The configuration is asymmetric and admits three min-max nodes $m_1$, $m_2$ and $m_3$.}
    \label{I2}
\end{figure}
    \item $\mathcal I_3$: All configurations for which $|V_M(\mathcal R)| \geq 2$ and the configuration is symmetric with respect to a single reflectional line of symmetry (Figure \ref{I3}).
    \begin{figure}
    \centering
\includegraphics[scale=0.32]{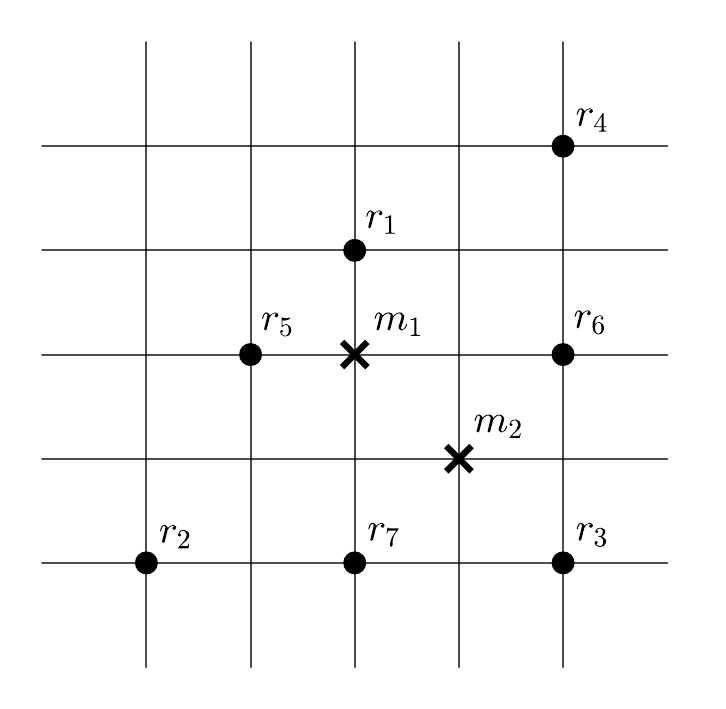}
    \caption{Example of an $\mathcal I_3$ configuration. The configuration is symmetric with respect to a diagonal line of symmetry and admits two min-max nodes $m_1$ and $m_2$.}
    \label{I3}
\end{figure}
    \item $\mathcal I_4$: All configurations for which $|V_M(\mathcal R)| \geq 2$ and the configuration is symmetric with respect to a rotational symmetry (Figure \ref{I4}).
\end{enumerate}
We consider the following observations.
\begin{observation} \label{observation5}
    If the configuration is asymmetric, the robots can reach an agreement on a common coordinate system \cite{DBLP:journals/tcs/BoseAKS20}.
\end{observation}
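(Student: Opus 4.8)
The plan is to exhibit an explicit, snapshot-computable frame of reference and then show that asymmetry forces it to be unique. The natural starting point is the minimum enclosing rectangle $\mathcal{MER}$. Since $\mathcal{MER}$ is the unique smallest grid-aligned rectangle containing all robots, every robot computes the same $\mathcal{MER}$ from its Look-phase snapshot: although snapshots differ between robots by their local frames, $\mathcal{MER}$ is determined by the relative positions of the robots and is therefore invariant. Moreover, any $\phi \in Aut(C(t))$ must carry $\mathcal{MER}$ onto itself, so $\phi$ restricts to an isometry of the rectangle and permutes its four corners $A,B,C,D$ according to the rectangle's symmetry group. This reduces the problem of agreeing on a frame to the problem of singling out one distinguished corner.

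Next I would invoke the configuration-view machinery already set up in Section~\ref{sec:models}. To each corner I associate its ternary configuration string, obtained by scanning the grid lines in the string direction and recording the value $\lambda(v)$ at every node encountered. Since all nodes outside $\mathcal{MER}$ are empty by definition, and the scan records the label of every node inside $\mathcal{MER}$, this string is a complete encoding of the configuration relative to the chosen corner and its orientation. The \emph{key corner} is the corner whose associated string is lexicographically maximal, with the square case broken by comparing $s_{AB}(t)$ and $s_{AD}(t)$ at each corner and the non-square case fixing the string direction along the shorter side. The crux of the whole statement is then the claim: if the configuration is asymmetric, this maximum is attained at a unique corner.

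To establish the claim I would argue by contrapositive. Suppose two distinct corners, say $A$ and $C$, produced equal maximal strings. There is a unique isometry $\psi$ of $\mathcal{MER}$ carrying the pair (corner $A$, its string direction) onto (corner $C$, its string direction), because once a corner and the orientation of the frame it carries are fixed, the rigid motion of the rectangle is determined. Since the two strings agree position by position, $\psi$ sends each node to a node bearing the same label, i.e. $\lambda(v) = \lambda(\psi(v))$ for all $v \in V$, so $\psi$ is a nontrivial automorphism of $C(t)$, contradicting $|Aut(C(t))| = 1$. The same reasoning rules out equality for any pair of corners. Hence asymmetry yields a unique key corner, and the robots agree to place the origin there, take the side carrying the string direction as one axis and the adjacent side as the other; every robot reconstructs this identical frame from its own snapshot, which is exactly the desired common coordinate system.

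The step I expect to be the main obstacle is the square case. When $\mathcal{MER}$ is a square its isometry group is the full dihedral group of order eight rather than the Klein four-group, so there are more candidate isometries $\psi$ to exclude, and the string direction at a corner is no longer forced by unequal side lengths but must itself be selected by the secondary lexicographic comparison between $s_{AB}(t)$ and $s_{AD}(t)$. I would therefore handle the square and non-square cases separately, verifying in the square case that the combined ordering --- first by the larger of a corner's two strings, then by the smaller --- still distinguishes every nontrivial symmetry of the square, so that a tie in this total order at two corners again induces a genuine automorphism and once more contradicts asymmetry.
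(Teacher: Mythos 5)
Your proposal is correct and follows essentially the same route as the paper, which states this as an observation backed by a citation and by the key-corner/configuration-view machinery of Section~\ref{sec:models}: compute $\mathcal{MER}$, attach lexicographic strings to its corners, and note that a tie between two corners would induce a label-preserving isometry of the grid, contradicting $|Aut(C(t))|=1$. You merely supply the details (including the square case and the secondary comparison fixing the string direction) that the paper delegates to the cited reference.
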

\noindent Similarly, we have the following observation.

\begin{observation} \label{observation6}
    If the configuration is symmetric with respect to a single line of symmetry $l$, the nodes on $l$ can be ordered uniquely.
\end{observation}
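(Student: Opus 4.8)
The plan is to exploit the fact that the reflection $\phi_l$ generating $Aut(C(t))=\{e,\phi_l\}$ fixes pointwise every grid node lying on $l$. Consequently, unlike a node $v$ off the axis (which is indistinguishable from its mirror image $\phi_l(v)$), each node on $l$ is mapped to itself, so the symmetry creates no ambiguity among the nodes of $l$ themselves. Since these nodes all lie on a single grid line (horizontal, vertical, or diagonal), they are already linearly arranged along $l$; the only freedom that remains is the choice of a direction in which to traverse $l$. Thus it suffices to single out one of the two directions in a manner that every robot computes identically, and the resulting traversal order is then the desired unique ordering.

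To break this orientation I would associate to each direction a canonical string built exactly as in the \emph{configuration view} construction. First I fix the anchoring: because there are only finitely many robots, all of them lie inside $\mathcal{MER}$, and every slice of the grid perpendicular to $l$ lying outside $\mathcal{MER}$ carries only empty labels. Reading the nodes of $l$ in a chosen direction, to each node $v\in l$ I attach an encoding of the perpendicular slice through $v$. By the reflective symmetry, $\phi_l$ maps each perpendicular slice to itself and acts on it as the reflection about $v$; hence every slice is a palindrome about its node on $l$, and its outward label sequence is well defined independently of which perpendicular side one reads. Concatenating these slice-encodings in the chosen order produces a string $s^{+}$ for one direction and $s^{-}$ for the other, each eventually constant (all empty) and therefore comparable by a finite lexicographic comparison.

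The decisive step is to show that $s^{+}\neq s^{-}$, so that the lexicographically larger of the two selects a unique direction and hence a unique total order on the nodes of $l$. Suppose instead that $s^{+}=s^{-}$. Equality of the two reading orders means the configuration is invariant under the map that reverses $l$ while keeping each perpendicular slice fixed setwise, i.e., either the $180^{\circ}$ rotation about a point of $l$ or the reflection about a line perpendicular to $l$. Such a map is an automorphism of $C(t)$ distinct from both $e$ and $\phi_l$ (it does not fix the nodes of $l$ pointwise), forcing $|Aut(C(t))|>2$ and contradicting the hypothesis that $l$ is the \emph{single} line of symmetry. Therefore $s^{+}\neq s^{-}$, the direction is determined, and the ordering of the nodes on $l$ is unique. (If $l$ passes through no grid node, there are no nodes to order and the statement holds vacuously.)

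I expect the main obstacle to be the rigorous formulation of the slice-encoding and of the direction-reversing map, so that the equality $s^{+}=s^{-}$ genuinely yields a second automorphism in every case. In particular, I would take care to treat the diagonal axis uniformly with the horizontal and vertical ones, and to confirm that the putative extra symmetry is grid-preserving (it maps grid nodes to grid nodes and respects $L_H$, $L_V$) precisely when the two views coincide, which is what legitimately places it in $Aut(C(t))$.
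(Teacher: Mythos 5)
The paper states this observation without any proof (it is presented, like Observation~\ref{observation5}, as a known fact), so there is no in-paper argument to compare against line by line. Your argument is a correct and essentially standard way to justify it: the nodes of $l$ are already linearly arranged, reflection about $l$ fixes them pointwise, and the only remaining ambiguity is the traversal direction; you resolve it by comparing the two direction-dependent canonical strings and observing that their equality would manufacture a second automorphism (a reflection perpendicular to $l$, or equivalently a $180^{\circ}$ rotation about a point of $l$, their composition with $\phi_l$ giving the other), contradicting the hypothesis that $l$ is the unique axis. The contradiction step is sound: for a finite robot set a glide reflection cannot be a symmetry, so the extra map really is a genuine element of $Aut(C(t))$ beyond $\{e,\phi_l\}$, and the palindromicity of each perpendicular slice under $\phi_l$ guarantees that matching slice-encodings forces matching labels regardless of whether the extra map preserves or flips the perpendicular offset.

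The one concrete point you must repair is the coverage of the grid by your slices in the diagonal case. If $l=\{(i,i)\}$, the anti-diagonal slices through the \emph{nodes} of $l$ only contain the grid nodes with $x+y$ even; the odd anti-diagonals are invisible to $s^{+}$ and $s^{-}$, so $s^{+}=s^{-}$ as you defined them does not yet imply invariance of the whole configuration under the direction-reversing map. The fix is routine: interleave into the string the anti-diagonals through the edge-midpoints of $l$ (the lines $x+y=2i+1$), which are likewise mapped to themselves by $\phi_l$ and are palindromic about their intersection with $l$, so their encodings are equally well defined. With that interleaving the equality $s^{+}=s^{-}$ does capture the full configuration and the contradiction goes through; the horizontal and vertical cases need no such adjustment since the perpendicular grid lines through the nodes of $l$ already exhaust $V$. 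You should also pin down the anchoring explicitly (start each string at the extremal node of $l$ whose slice is nonempty), but that is a presentational matter rather than a gap.
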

Next, we consider the following definitions relevant to the selection of the target min-max node in the initial configuration.
\begin{definition}
Weber min-max node: A min-max node $m$ is defined as a Weber min-max node if it minimizes the value $c_t(m)$, where $c_t(m)$ denotes the centrality of the node $m$ at time $t$.
\end{definition}
\begin{figure}
    \centering
\includegraphics[scale=0.39]{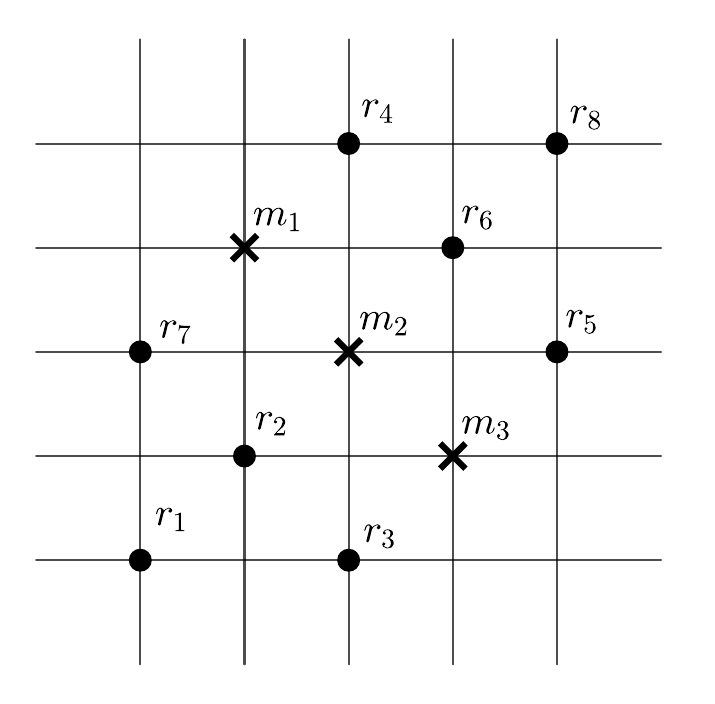}
    \caption{Example of an $\mathcal I_4$ configuration. The configuration is symmetric with respect to rotational symmetry and admits three min-max nodes $m_1$, $m_2$ and $m_3$.}
    \label{I4}
\end{figure}
\begin{definition}
    Co-boundary of a diamond \Diam: For a diamond \Diam\ of size $k\sqrt{2}$ ($k \in \mathbb{N}$), its co-boundary is the set of all grid nodes $$CB(\mathcal M)= \{v \in V: d_m(v,\mathcal{C}(\mathcal{M})) = k-1\}.$$
\end{definition}
That is, the co-boundary of a diamond \Diam\ is defined as the set of all nodes that are at a distance of $k$-1 from the center of \Diam. 

\begin{definition}
Let $CB(\mathcal{M})$ be a co-boundary of a diamond $\mathcal{M}$. A side $S$ of a $CB(\mathcal{M})$ corresponding to a boundary $\mathcal B$ of $\mathcal{M}$ is a set $S_{CB}(\mathcal B)$ =\{$v\in CB(\mathcal{M})$: $\exists v'\in \mathcal B $ for which $d_m(v,v')=1$\}
\end{definition}
We next consider the definition of a movable robot.
\begin{definition}
   Suppose that the condition $(C1)$ from Lemma~\ref{lemma: condition for unique minmax point} holds. A robot is said to be movable if it satisfies the following conditions.
  \begin{enumerate}
    \item Any interior robot
    \item A robot that lies within a boundary of \Diam\ and does not have the maximum view among all the robots on the same boundary.
\end{enumerate}
Suppose that the condition $(C2)$ from Lemma~\ref{lemma: condition for unique minmax point} holds. A robot is said to be movable if it satisfies the following conditions.

\begin{enumerate}
    \item An interior robot not lying on the co-boundary.
    \item An interior robot lying on some side $S$ of the co-boundary corresponding to a boundary of the diamond for which there exist robots on that specific boundary.
    \item A robot that lies on some side $S$ of the co-boundary corresponding to a boundary $B$ of the diamond such that $B$ has no robot positions and the robot under consideration does not have the largest view among all such robots on $S$.
    \item A robot on the side of the boundaries, provided that it is not the robot with the maximum configuration view among all such robots that lie within that particular side of the boundary.
\end{enumerate}
Suppose that the condition $(C3)$ from Lemma~\ref{lemma: condition for unique minmax point} holds. 
 A robot is said to be movable if it satisfies the following conditions.

\begin{enumerate}
    \item An interior robot not lying on the co-boundary.
    \item A robot positioned on a side $S$ of the co-boundary for which robots are located at the boundaries at a distance one from the side $S$.
    \item A robot that lies on the sides of co-boundaries, such that the boundaries at a distance one from the co-boundaries contain no robot positions and the robots under consideration do not have the maximum configuration view among all such robots on the sides.
    \item A robot on the boundaries, provided that it is not the robot with the maximum configuration view among all robots that lies within that particular boundary.
\end{enumerate}
\end{definition}
We next proceed to the description of the algorithm. The algorithm proceeds according to the partition of the initial configurations.
\subsection{$\mathcal I_1$}
Consider the case when $|V_M(\mathcal R)|=1$. This implies there exists a unique min-max node in the initial configuration. First, assume the case when the initial configuration is asymmetric. According to Lemma \ref{lemma: condition for unique minmax point}, if there exists a unique min-max node, there are three possible cases that are to be considered. The algorithm is described according to the conditions $(C1)$, $(C2)$ and $(C3)$ that an initial configuration can satisfy and is listed in the Lemma \ref{lemma: condition for unique minmax point}. 
\subsubsection{Target Min-Max Node Selection Phase}
The unique min-max node (say $m$) is selected as the target min-max node.
\subsubsection{Creating Multiplicity Phase}
This phase of the algorithm is described according to the following cases.

\paragraph {Case 1} \label{case1} Assume that the condition $(C1)$ from Lemma~\ref{lemma: condition for unique minmax point} holds. This implies there exist boundary robots at each boundary of $\mathcal {MED} (\mathcal R)$. The following subcases are to be considered.

\paragraph {Subcase 1} \label{subcase1} Number of interior robots is at least two. A movable robot that is closest to the unique min-max node $m$ is allowed to move towards $m$. If there are multiple such robots, consider the robot that is closest and has the minimum view. While the robot reaches $m$, consider the interior robot, which is closest to $m$ and is not on $m$. If there are multiple such robots, consider the closest robot $r$ with the minimum view. $r$ moves towards $m$ and creates a multiplicity at $m$. 

\paragraph{Subcase 2} \label{subacse2} There exists a unique interior robot. The unique interior robot $r$ starts moving towards $m$. While the robot is at a distance one from $m$, it stops moving. A movable robot on the boundary, which is closest and has the minimum configuration view in case of ties, starts moving towards $m$. Note that such a robot always exists as $n \geq 9$. While the robot moves towards an adjacent node, it becomes an interior robot. The rest of the procedure follows similarly as in Subcase 1. 

\paragraph{Subcase 3} There exists no interior robot. Consider a movable robot from the boundary. If there are multiple such movable robots, consider the movable robot with the minimum configuration view. Note that such a robot always exists as $n \geq 9$. As the robot moves towards an adjacent node, it becomes an interior robot, and the case proceeds similarly to that in Subcase 1.

\paragraph{Case 2} \label{case2} 
Assume that the condition $(C2)$ from Lemma~\ref{lemma: condition for unique minmax point} holds. This implies that there exists at least one boundary robot at the three sides of the boundaries of $\mathcal {MED} (\mathcal R)$. Furthermore, at least one robot exists on the co-boundary, corresponding to which no boundary robot exists. The following subcases are to be considered. 

\paragraph{Subcase 1} \label{subcase11} 
Number of interior robots is at least two. First, consider the case when there exist at least two interior robots that lie inside the co-boundaries. Consider such an interior robot that is closest to $m$. If there are multiple such robots, consider the robot with the minimum configuration view. The robot is allowed to move towards $m$. Another such interior robot (say $r$), which is closest to $m$, starts moving towards $m$ and creates a multiplicity at $m$. No other robot is allowed to move unless $r$ reaches $m$. Consider the scenario in which there is no more than one interior robot that lies within the co-boundary. If there exists exactly one interior robot lying inside the co-boundary, the robot starts moving towards $m$. While the robot is at a distance one from $m$, the robot stops moving. Next, the robot on any co-boundary first checks whether it is movable or not, i.e., whether it is the only robot on that particular co-boundary and at a distance of one unit from the boundary, not containing any robot position. If it is movable, then it starts moving towards $m$ and the procedure proceeds similarly to the previous case. Thus, there exists an instant of time $t>0$ at which a multiplicity is created at $m$. Otherwise, if the robot on the co-boundary is not movable, then a movable robot is considered on the boundary and it moves towards $m$. Thus, the robots create a multiplicity at $m$. If more than one such robot exists, the ties are broken by considering the robot with the minimum configuration view. Note that such a robot always exists as we have assumed that in the initial configuration $n \geq 9$. 

\paragraph{Subcase 2} Number of interior robots is exactly one. In this case, the interior robot must lie on the co-boundary. Note that the interior robot is not movable. Thus, the robot on the co-boundary will not move. As it has been assumed that the number of robots $n \geq 9$, there must exist at least two boundaries of $\mathcal {MED} (\mathcal R)$ that contain robots that are movable. The boundary robot, which is movable, and having the minimum configuration view in case of ties, is selected and starts moving towards $m$. When such a robot is at a distance one from $m$, the other robot from the boundary which is movable, moves towards $m$, and when it becomes an interior robot, a multiplicity creation procedure initiates at $m$. The rest of the procedure follows from Subcase 2 of Case 1.

\paragraph{Subcase 3} There exists no interior robot. Note that this case is not possible, as at least one robot must exist at one of the co-boundaries.

\paragraph{Case 3} \label{case3} 
Assume that the condition $(C3)$ from Lemma~\ref{lemma: condition for unique minmax point} holds. As a consequence, only two opposite boundaries contain robot positions, and at least one robot position lies within a distance of one from each boundary not containing a robot position. The following subcases are to be considered.

\paragraph{Subcase 1} There exist at least two interior robots. First, consider the case when there exist at least two robots that are not lying on the co-boundaries. In this case, first the interior robot closest to $m$ and having the minimum view move towards $m$. Next, the other interior robot not lying on the co-boundary and closest to $m$, moves towards $m$, creating a multiplicity at $m$. If there are multiple such robots, the ties are broken by considering the robot with the minimum configuration view. Next, consider the case when there exists exactly one interior robot not lying on any co-boundary. The unique robot moves towards $m$ and when it is at a distance one from $m$, it stops moving. The robots on the co-boundaries first check whether it is movable or not. If the robots on the co-boundaries are movable, then the closest robot and minimum view in case of ties move towards an adjacent node towards $m$. Thus, there exist at least two interior robots at some time instant $t>0$ and the procedure proceeds similarly to before. If the robots on the co-boundaries are not movable, then the robots on the boundaries, which are movable, move towards $m$ in a sequential manner, according to their distances from $m$. Thus, a multiplicity is created at $m$ at some time instant. The rest of the procedure follows similarly as in Subcase 1 of Case 1.

\paragraph{Subcase 2} There exists at most one interior robot. Note that this case is not possible, as robots must exist on two co-boundaries.
\subsubsection{Finalisation}
The other robots that are not on $m$ can identify the multiplicity at $m$ due to the presence of weak multiplicity-detection capability of the robots. As a result, the other robot moves towards $m$ without creating any other multiplicity. According to Observation \ref{observation5}, since the configuration is asymmetric, the robots are orderable. The robot that is not on $m$ and closest to $m$ is selected as the candidate robot to move towards $m$. If there are multiple such robots, the ties can be broken by considering the robot with the minimum view. Consequently, at any instant of time, a unique robot can be chosen to move towards $m$. Thus, no other multiplicity is created and gathering is finalised at $m$.

\subsubsection{Symmetric cases} 

\noindent \textbf{Configuration is symmetric with respect to a single line of symmetry: } Next, consider the case when the configuration is symmetric with respect to a single line of symmetry (say $l$). According to the Lemma \ref{lemma 16}, there must exist at least one robot position or min-max node on $l$. First, assume that there exists at least one robot position on $l$, and the robot is movable. Here, a robot is said to be movable if there exists an empty node away from $l$, which is adjacent to the robot. If there exist multiple such movable robots on $l$, the movable robot with the minimum configuration view on $l$ moves towards an adjacent node away from $l$. This results in transforming the configuration into an asymmetric configuration, and the procedure follows as before in the case of asymmetric configurations. Thus, consider the case when no robot exists on $l$. This implies that the unique min-max node $m$ is on $l$. $m$ is selected as the target min-max node. Next, the following cases are to be considered.

\paragraph{Case 1} The line of symmetry is either horizontal or vertical. Note that in this case, there must exist at least one robot on each boundary. First, assume the case, when there exist at least two interior robots. In this case, the movable robots that are closest to $l$, are allowed to move towards $m$. If there are multiple such robots, consider the robots that are closest and have the minimum views. Note that there are exactly two such robots. The two robots may either move concurrently or there may be a possible pending move leading to asymmetry of the configuration. If the configuration becomes asymmetric, then the procedure follows similarly as before in asymmetric cases. When the robot reaches $m$, it creates a multiplicity at $m$. Note that if there exists exactly one interior robot, then this robot must lie on $l$ and the robot is movable. The robot moves along the line towards $m$. The two robots that are closest and with minimum views move towards $m$ and the procedure proceeds similarly as before. Next, consider the case when there are no interior robots. Consider a movable robot on the boundary of $\mathcal {MED (R)}$ and its equivalent robot with respect to $l$. If there are multiple such movable robots, then consider the movable robots with the minimum configuration view. As the robots move towards an adjacent node, they become interior robots, and the case proceeds similarly to before when there are at least two interior robots.

% The other robots that are not on $m$ can identify the multiplicity at $m$ due to the weak multiplicity-detection capability of the robots. As a result, the other robot moves towards $m$ without creating any other multiplicity. This can be done by choosing the robot not on $m$ and closest to $m$. If there are multiple such robots, the ties can be broken by considering the robot with the minimum view. 

\paragraph{Case 2} The line of symmetry is a diagonal line of symmetry. First, assume that there exists at least two interior robots. Consider the case, when there exist at least two interior robots that are not lying on the co-boundaries. The procedure follows similarly as in the case when the line of symmetry is either a horizontal or vertical line of symmetry. If there exist robots only on the co-boundaries, the robots on the co-boundaries first check whether they are movable or not, i.e., whether they are the only robots lying on the side of the co-boundaries to which they belong. If it is the unique robot, then it is not movable. Otherwise, the robot on the co-boundary with the minimum configuration view and its symmetric image moves towards $m$ and creates a multiplicity at $m$. Note that the configuration may become asymmetric due to the asynchronous behavior of the scheduler.

In the finalisation phase, the other robots that are not on $m$ can identify the multiplicity at $m$ due to the weak multiplicity-detection capability of the robots. The robots that are not on $m$ and closest to $l$ start moving towards $m$ (say $r$ and $\phi(r)$). If there are multiple such robots, the ties are broken by considering the robot with the minimum views. While moving towards $m$, $r$ and $\phi(r)$ are first aligned along the same grid-line containing $m$ and then start moving towards $m$. As a result, the other robots can move towards $m$ without creating any other multiplicity other than $m$. Thus, the gathering is finalised at $m$.

\noindent \textbf{Configuration is symmetric with respect to rotational symmetry:} The target min-max node is selected as the node $c$, which is the center of rotational symmetry. The point $c$ is also the unique Weber node of the configuration. In this scenario, all boundary positions must include robot positions. As a result, all the robots agree to gather at the unique Weber node, which also serves as the configuration's unique min-max node. All the movable robots that are closest to $m$ start moving towards $m$. If there are multiple such closest robots, the ties are broken by considering the closest robots with minimum views. Note that at any instant of time, only four robots are allowed to move towards $m$. As $m$ is a Weber node, $m$ remains invariant during the movement of the robots (as a Weber node remains invariant while a robot moves towards it). Thus, the gathering is finalised at $m$.

\subsection{$\mathcal I_2$}
Consider the case, when $C(0) \in \mathcal I_2$. This implies that $|V_M(\mathcal R)| \geq 2$ and the configuration is asymmetric. The phases of the algorithm are described in the next subsections.  
\subsubsection{Target Min-Max Node Selection Phase}
First, assume the case when the subgraph $G_M(\mathcal R)$ induced by the set of min-max nodes is a disconnected step-graph. 
% Let $V_M'$ be the set of min-max points on or inside $\mathcal{I}_R(\mathcal{R})$. 
If $|V_M(\mathcal{R})|$ is odd, the target min-max node is selected as the unique central node of the subgraph (say $m$) induced by the set of min-max nodes. Next, if $|V_M(\mathcal{R})|$ is even, there always exist two nodes (say $m_1$ and $m_2$), which are the central nodes of the subgraph induced by the set of min-max nodes. The target min-max node is selected as the min-max node between $m_1$ and $m_2$. First, among the min-max nodes, $m_1$ and $m_2$, if there exists a unique Weber min-max node, the robots select that node as the target min-max node. Otherwise, if there does not exist a unique Weber min-max node, then each robot calculates the distance of $m_1$ and $m_2$ to its closest robot. Assume that the distance of $m_1$ (resp. $m_2$) to its closest robot is $d_1$ (resp. $d_2$). First, assume the case when $d_1 \ne d_2$. Without loss of generality, let us assume that $d_1 >d_2$. The min-max node for which the distance $d_i, i=\lbrace 1, 2 \rbrace$ is smaller is chosen as the target min-max node ($m_2$ in our case). Next, consider the case when $d_1 = d_2$. As the configuration is asymmetric, the min-max nodes are orderable according to Observation \ref{observation5}. The min-max node with the minimum view among $m_1$ and $m_2$ is selected as the target min-max node. 

Next, consider the case when the subgraph induced by the set of min-max nodes is a step-graph. Note that the subgraph is a path graph. The target min-max node is selected similarly as in the case when $G_M(\mathcal R)$ is a disconnected step-graph. 

Finally, assume the case when the subgraph induced by the set of min-max nodes is a four-cycle. Let the four min-max nodes $m_1$, $m_2$, $m_3$ and $m_4$ constitute the four cycles. Among the min-max nodes, first, consider the case when there exists a unique Weber min-max node (say $m_1$). In this case, $m_1$ is selected as the target min-max node. Otherwise, there is more than one Weber min-max node. In that case, the robots first check the distance of the closest robots from each of these Weber min-max nodes. If there exists a unique min-max node among such Weber min-max nodes for which the distance to its closest robot is minimum, then that min-max node is selected as the target min-max node. In case there are multiple such Weber min-max nodes, the target is selected as the min-max node, which has the minimum view among such min-max nodes. Note that since the configuration is asymmetric, there always exists such a min-max node according to Observation \ref{observation5}. 
\subsubsection{Creating Multiplicity Phase} 
First, assume the case when $G_M(\mathcal R)$ is a disconnected step-graph. The following cases are to be considered.

\paragraph{Case 1} \label{case111}
There are at least two interior robots. Here, an interior robot is a robot that lies inside the intersecting rectangle $\mathcal {IR}(\mathcal R)$. In this case, the interior robot that is closest to the target is allowed to move towards the target. If there is more than one such closest robot, the robot with the minimum view among such robots moves towards the target. If the number of central nodes is one, the target remains invariant during the movement of the robots. Otherwise, when the number of central nodes is two, the target might change. However, the algorithm ensures that during the execution of the algorithm, the target remains among $m_1$ and $m_2$. When the robot reaches one of the chosen targets, the target remains invariant during the execution of the algorithm. The robot, which is closest to the target and not on the target, moves towards the target. If there are multiple such closest robots, the ties are broken by considering the closest robot with the minimum view. Thus, a multiplicity is created at the target at some time instant $t>0$.

\paragraph{Case 2} There is exactly one interior robot. First, the interior robot moves towards the target, and when it is one move away from the target, it stops moving. Next, the robot on the boundary, which is movable and has the minimum configuration view among all such robots moves towards $m$. While it moves towards an adjacent node, it becomes an interior robot and the case proceeds similarly to that in Case 1.
 
\paragraph{Case 3} There are no interior robots. The robot on the boundary of the intersecting rectangle $\mathcal {IR}(\mathcal R)$, which is movable and has the minimum view in case of ties, moves towards the target. When it moves towards an adjacent node, it becomes an interior robot. The rest of the procedure follows similarly as in Case 2. 

Next, assume the case when the subgraph $G_M(\mathcal R)$ induced by the set of min-max nodes is a four-cycle graph. The following cases are to be considered. 

\paragraph{Case 1} There are more than two interior robots. The interior robot that is closest to the target min-max node (say $m$) starts moving towards $m$ until it reaches the target. If there are multiple such closest robots, the closest robot with the minimum view moves towards $m$. Note that while the closest robot starts moving towards the target, the target might change. However, our proposed algorithm guarantees that the target remains among one of the four nodes $m_1$, $m_2$, $m_3$ and $m_4$. The target is selected as the Weber min-max node with the minimum distance to its closest robot and with the minimum view in case of ties. When the robot reaches one of such $m_i's$, the target remains invariant during the execution of the algorithm. The next closest robot moves towards the target and creates a multiplicity at the target. 
    
\paragraph{Case 2} There is exactly one interior robot. As in the initial configuration, we have assumed that $n \geq 9$, a boundary of the intersecting rectangle $\mathcal {IR}(\mathcal R)$, must exist that has more than one robot on it. Since the configuration is asymmetric, a unique robot can always be selected from any boundary containing multiple robot positions, which is movable. The selected robot starts moving towards the interior of the intersecting rectangle and becomes an interior robot. Note that, while the robot starts moving from the boundary towards the target, the asymmetry of the configuration remains invariant. When the robot reaches the interior, the rest of the procedure follows from Case 1. 
    
  \paragraph{Case 3} There exist no interior robots. Since $n \geq 9$, there must exist at least two boundaries of the intersecting rectangle $\mathcal {IR} (\mathcal R)$ that has more than one robot on it. As the configuration is asymmetric, a robot can always be selected from a boundary containing multiple robot positions that are movable. The selected robot moves towards the target and enters inside the intersecting rectangle, thus becoming an interior robot. The rest of the procedure follows from the previous case.

  \noindent Finally, assume the case when $G_M(\mathcal R)$ is a step-graph. In this case, the procedure follows similarly as in the case when $G_M(\mathcal R)$ is a disconnected step-graph.

\subsubsection{Finalization}
Suppose there exists a multiplicity $m$ at the target min-max node. All the other robots that are not on the multiplicity $m$ will try to move towards $m$ in order to finalise the gathering process. As the configuration is asymmetric, the robots are orderable according to Observation \ref{observation5}. The interior robots that are not on $m$ will first move towards $m$ sequentially and according to their distances from $m$. Ties are broken by considering the robot with the minimum view. As a result, at each instant of time, only one robot is allowed to move towards the target. When all the interior robots reach $m$, the boundary robots that are movable start moving towards $m$. The boundary robots will move first to the interior in a sequential manner, according to the distance from the target. In this way, one by one, all boundary robots will move to the multiplicity node, and gathering will be finalised.

% Note that, as the configuration is asymmetric in the initial configuration, while the interior robots move towards the target, the configuration may transform to a configuration that admits a diagonal line of symmetry. There might exist cases for which there exist only one interior robot and by moving one robot from the boundary to the interior, the configuration becomes symmetric. In another case, when more than two interior robots exists, then by moving the closest robot to the target position, the configuration becomes symmetric. We will discuss all these cases in the next section.

\subsection{$\mathcal I_3$}
Consider the case, when $C(0) \in \mathcal I_3$, this implies that $|V_M(\mathcal R)| \geq 2$ and the configuration is symmetric with respect to single line of symmetry $l$ (say).

First, assume that the configuration admits either a horizontal or vertical line of symmetry and $V_M(\mathcal R)$ is either a disconnected step-graph or a four-cycle. Note that according to Lemma~\ref{lemma 18} and Lemma~\ref{lemma 20}, the configuration must be partitive. Thus, whenever the configuration is such that $V_M(\mathcal R)$ is either a disconnected step-graph or a four-cycle, the configuration can only admit a diagonal line of symmetry. Next, assume that the configuration admits either a horizontal or vertical line of symmetry and $V_M(\mathcal R)$ is a step-graph. According to Lemma~\ref{lemma 22}, the configuration must be partitive. Lemma \ref{lemma 23} also states that if the step-graph contains at least two vertices and the configuration admits a diagonal line of symmetry passing through the fixed line of the
configuration, then also $C(t)$ must be partitive. This implies that if the configuration admits a diagonal line of symmetry and $V_M(\mathcal R)$ is a step-graph, then the diagonal line must not pass through the fixed line of the configuration. We next proceed to the algorithm description for $\mathcal I_3$. 
\subsubsection{Target Min-Max Node Selection Phase}
First, assume that the configuration is such that $V_M(\mathcal R)$ is a four-cycle, and the configuration admits a diagonal line of symmetry. Note that there exist exactly two min-max nodes on the line of symmetry $l$ (say $m_1$ and $m_2$). If there exists a unique Weber min-max node between $m_1$ and $m_2$, then that min-max node is selected as the target min-max node. Otherwise, both $m_1$ and $m_2$ are Weber min-max nodes. Each robot calculates the distance of $m_1$ and $m_2$ to its closest robot. If there exists a unique min-max node whose distance from it to its closest robot is minimum, then that min-max node is selected as the target min-max node. Otherwise, the target min-max node is selected as the min-max node on the line of symmetry $l$ with the minimum view. Note that, according to Observation \ref{observation6}, the nodes on $l$ can be uniquely orderable. 

Next, assume the case, when $V_M(\mathcal R)$ is a disconnected step-graph and the configuration admits a diagonal line of symmetry. The target min-max node is selected as the min-max node among $m_1$ and $m_2$, which are the central nodes of the subgraph induced by the set of min-max nodes. It may also be possible that there is only one central node in case the path induced by the set of min-max nodes contains an odd number of nodes. The target is selected among $m_1$ and $m_2$. The procedure follows similarly, as in the case when $V_M(\mathcal R)$ is a four-cycle.

Finally, assume the case, when $V_M(\mathcal R)$ is a step-graph. In this case, the configuration can admit only a diagonal line of symmetry not passing through the fixed line of the intersecting rectangle $\mathcal {IR} (\mathcal R)$. The line of symmetry can admit at most two min-max nodes. If there exists a unique min-max node on $l$, then that min-max node is selected as the target min-max node. Otherwise, $l$ contains two min-max nodes (say $m_1$ and $m_2$). The procedure follows similarly, as in the case when $V_M(\mathcal R)$ is a four-cycle.

\subsubsection{Creating Multiplicity Phase}
Let $m$ be the target min-max node selected in the Target Min-Max Node Selection phase. First, assume the case when $G_M(\mathcal R)$ is a disconnected step-graph. The following cases are to be considered.

\paragraph{Case 1} There are at least two robots inside the intersecting rectangle $\mathcal {IR} (\mathcal R)$. The interior robot and its symmetric image, which are closest to $v$, move towards $v$. In the case of a tie, the closest robot and its symmetric image with the minimum views move towards $m$. The equivalent robots can either move concurrently, or there may be a possible pending move. The configuration might become asymmetric due to the presence of a possible pending move. As a result, a multiplicity is created at some time instant $t >0$. Note that, in this case, if there exists a unique central node, then that remains the target during the execution of the algorithm. If there are two central nodes, the target might change during the execution of the algorithm, but the algorithm ensures that the target remains among $m_1$ and $m_2$. When a robot reaches one of $m_1$ and $m_2$, the target remains invariant during the execution of the algorithm.

\paragraph{Case 2} There is exactly one robot inside the intersecting rectangle $\mathcal {IR} (\mathcal R)$. Note that the robot must lie on the line of symmetry. The robot moves towards the target. While it moves towards the target, the configuration admits a unique target min-max node. Next, the robot and its symmetric image on the boundary, which is movable and has the minimum configuration view among all such robots, move towards $m$. Note that such robots always exist as it has been assumed that in the initial configuration $n \geq 9$. While they move towards an adjacent node, the robots become interior and the case proceeds similarly to that in Case 1.

\paragraph{Case 3} There are no robots inside the intersecting rectangle $\mathcal {IR} (\mathcal R)$. As it has been assumed that the initial configuration contains $n \geq 9$, then there must exist at least two robots on the boundary that are movable. Consider such a movable robot and its symmetric image with minimum views. Such robots move towards the target min-max node. While they move towards an adjacent node, the robots become interior and the procedure follows similar to as in Case 1. Note that while the robots move, the configuration may become asymmetric due to a possible pending move because of the asynchronous behaviour of the scheduler.

Next, assume the case when $G_M(\mathcal R)$ is a four-cycle graph. The following cases are to be considered.

\paragraph{Case 1} There are at least two robots inside the intersecting rectangle $\mathcal {IR}(\mathcal R)$. Consider the robots that are closest to the target and have the minimum views in case of ties. The robots move towards the target and a multiplicity is created at the target. Note the target might change when the robots move towards the target. However, the algorithm ensures that the target remains among $m_1$ or $m_2$. When the robots reach one of the chosen targets, the target remains invariant during the execution of the algorithm. Note that the configuration may become asymmetric due to a possible pending move of the robots.

\paragraph{Case 2} There is exactly one robot inside the intersecting rectangle $\mathcal {IR}(\mathcal R)$. The procedure follows similarly, as in the case when $G_M(\mathcal R)$ is a disconnected step-graph.

\paragraph{Case 3} There are no robots inside the intersecting rectangle $\mathcal {IR}(\mathcal R)$. The procedure follows similarly, as in the case when $G_M(\mathcal R)$ is a disconnected step-graph.

Next, assume the case when $G_M(\mathcal R)$ is a step-graph. The procedure follows similarly, as in the case when $G_M(\mathcal R)$ is a disconnected step-graph.

\subsubsection{Finalisation}
Suppose at time $t >0$, there exists a multiplicity at the target $m$. All the other robots that are not on $m$, move towards $m$ either synchronously or there might be a possible pending move due to the asynchronous behaviour of the scheduler. First, the interior robots that are not on $m$, move towards $m$. The robots that are closest to $m$ and minimum views in case of ties move towards $m$. As a result, at each instant of time, only two robots are allowed to move towards $m$. As the two robots lie on different half-planes, no other multiplicity would be created at any node other than $m$ while they move towards $m$. When all the interior robots reach $m$, the boundary robots that are closest to $m$ and minimum views in case of ties move towards $m$. While moving towards $m$, the equivalent robots first aligned along the same grid-line containing $m$ and then start moving towards $m$. As a result, the other robots can move towards $m$ without creating any other multiplicity other than $m$. Thus, the gathering is finalised at $m$.
\subsection{$\mathcal I_4$}
Consider the case when the initial configuration belongs to $\mathcal I_4$. This implies that $|V_M(\mathcal R)| \geq 2$ and the configuration is symmetric with respect to rotational symmetry. Assume that the center of rotational symmetry is $c$. Note that according to Lemma \ref{impossible3} and Lemma \ref{impossible4}, the configuration must be such that $V_m(\mathcal R)$ is a disconnected step-graph with an odd number of min-max nodes.
\subsubsection{Target Min-Max Selection Phase}
Consider the path induced by the set of min-max nodes. Since, the number of min-max nodes is odd, there always exists a unique central node (say $m$). The node $m$ is selected as the target min-max node. Note that $m$ is also the center of rotational symmetry $c$ of the configuration.
\subsubsection{Creating Multiplicity Phase}
Let $m$ be the target min-max node selected in the target min-max selection phase. Note that $m$ is also the unique Weber node in the configuration. As a Weber node always remains invariant during the movement of the robots towards itself, the target $m$ remains invariant during the movement of the robots. The four robots that are closest to $m$ and minimum views in case of ties, start moving towards $m$. As a result, a multiplicity is created at $m$ at some time instant $t>0$.
\subsubsection{Finalisation}
All the other robots that are not on $v$, start moving towards $m$ in the finalisation phase. Note that while the robots move towards $m$, $m$ remains the unique Weber node of the configuration. As a result, the target min-max node remains invariant. Thus, the gathering is finalised at $m$. Note that in the finalisation phase, a multiplicity may be created at the nodes other than $m$. However, since $m$ is the only Weber node of the configuration, the robots can identify $m$ as the unique target node, even when other multiplicities may exist in the configuration. 
\section{Correctness} \label{sec:correctness}
\begin{lemma} \label{lemma25}
    If $C(0) \in \mathcal I_1$, the target min-max node remains invariant in the Creating Multiplicity phase.
\end{lemma}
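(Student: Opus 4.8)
The plan is to prove the statement as an invariant that holds after every move of the phase: I will show that each move prescribed by the Creating Multiplicity phase keeps $m$ the \emph{unique} min-max node, whence, since for an $\mathcal I_1$ configuration the target is by definition the unique min-max node, the target cannot change. Fix a diamond \Diam\ $\in$ \MED\ with \CNTR\ $=m$ and $S(\mathcal{M})=k\sqrt2$. As $C(0)\in\mathcal I_1$ gives $|V_M(\mathcal R)|=1$, no neighbour of $m$ is a min-max node, so Lemma~\ref{lemma: condition for no neighbour in minmax points} forces a pair of opposite boundaries of \Diam\ to carry robots; the phase itself is organised according to which of the three conditions $(C1)$, $(C2)$, $(C3)$ of Lemma~\ref{lemma: condition for unique minmax point} the diamond \Diam\ satisfies, and these three conditions are exactly the cases I shall treat.

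The heart of the argument is one observation, applied in each case. Every robot mobilised in this phase moves one edge along a shortest path toward $m$, so its distance $d_m(\cdot,m)$ strictly decreases and never exceeds $k$; hence the maximum distance from $m$ stays $k$ and \Diam\ remains a minimal enclosing diamond centred at $m$ as long as some robot sits at distance exactly $k$. It therefore suffices to check that the \emph{movable}-robot definitions never empty the structure certifying the active condition. Under $(C1)$ a boundary robot is movable only when it is not the maximum-view robot of its boundary; since every boundary robot is at distance $k$, the maximum-view robot of each of the four boundaries is never mobilised, so all four boundaries stay occupied and $(C1)$ survives the move. Under $(C2)$ and $(C3)$ the movable rules additionally exempt the maximum-view robot on each co-boundary side that witnesses a robot within distance one of an empty boundary; hence the three (respectively two opposite) occupied boundaries stay occupied and each empty boundary keeps a robot within distance one, so $(C2)$ (respectively $(C3)$) survives. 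Substituting the preserved condition back into Lemma~\ref{lemma: condition for unique minmax point} gives $|V_M(\mathcal R)|=1$ after the move, i.e.\ $m$ is still the unique min-max node. The moves that turn a boundary or co-boundary robot into an interior robot, and the moves of the symmetric subcases (robots depart in symmetric pairs, possibly with one pending move), are handled by the same bookkeeping, because the departing robot is by construction never the certifying maximum-view robot. The rotational subcase is more direct still: there $m=c$ is simultaneously the unique Weber node, so the invariance of a Weber node under motion of a robot toward it keeps $m$ a Weber node, while condition $(C1)$ keeps it the unique min-max node.

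I expect the principal difficulty to lie in the asynchronous scheduler rather than the geometry. Since a robot may act on a stale snapshot, I must rule out the scenario in which two moves computed at different times jointly vacate a certifying boundary after the configuration views of its robots have shifted. The way to close this gap is to use that the phase mobilises robots one at a time (the closest movable robot, ties broken by minimum view) and that each scheduled destination is a fixed neighbour of the mover lying on a shortest path to $m$; thus at every intermediate configuration only a non-certifying robot is in flight, its destination lies in $\{v:d_m(v,m)\le k\}$, and its arrival leaves the maximum-view robot of every occupied boundary and co-boundary side in place. Establishing this flight-level invariant for each subcase of Cases~1--3 and their symmetric analogues is the step demanding the most care; once it holds, the active condition among $(C1)$--$(C3)$ is valid at every snapshot and $m$ remains the unique min-max node until the multiplicity is created.
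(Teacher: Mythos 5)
Your proposal is correct and follows essentially the same route as the paper's proof: both argue that the minimal enclosing diamond centred at $m$ is preserved because the \emph{movable}-robot definitions never mobilise the certifying (maximum-view) robots on the occupied boundaries or on the co-boundary sides witnessing the empty boundaries, so whichever of $(C1)$--$(C3)$ from Lemma~\ref{lemma: condition for unique minmax point} holds initially continues to hold and $|V_M(\mathcal R)|=1$ throughout, with the rotational subcase handled via the invariance of the unique Weber node. Your added attention to the asynchronous flight-level invariant is a more careful articulation of what the paper leaves implicit, but it is not a different argument.
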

\begin{proof}
 If $C(0) \in \mathcal I_1$, there exists a unique min-max node $m$. We have to prove that $m$ remains the unique min-max node, while a multiplicity is created at $m$. That is, we have to prove that $m$ remains the min-max node and no more min-max nodes are created during the movement of the robots towards $m$. In order to prove this, we have to prove that during the movement of the robots towards $m$, the initial minimum enclosing diamond remains invariant. First, assume the case when the configuration is asymmetric. The following cases are to be considered.

 \paragraph{Case 1} Assume that the condition $(C1)$ from Lemma~\ref{lemma: condition for unique minmax point} holds. In the creating multiplicity phase, the algorithm ensures that there always exists at least one robot on every boundary of $\mathcal {MED (\mathcal R)}$. The $\mathcal {MED (\mathcal R)}$ remains invariant, while a multiplicity forms at node $m$ because, during the movement of the robots towards $m$, there is always at least one robot positioned on the boundary and only movable robots are allowed to move towards the target. As a result, no more min-max nodes are created during the movement and $m$ remains the unique min-max node. 

 \paragraph{Case 2} Assume that the condition $(C2)$ from Lemma~\ref{lemma: condition for unique minmax point} holds. In the creating multiplicity phase, the algorithm ensures that there always exists at least one robot on three boundaries of $\mathcal {MED (\mathcal R)}$, as only movable robots are allowed to move towards the targets. Furthermore, a robot on the side of the co-boundary corresponding to no boundary robot always exists during the formation of the multiplicity at $m$. Thus, $\mathcal {MED (\mathcal R)}$ remains invariant and hence $m$ remains the unique min-max node after the movement of the robots towards $m$.

 \paragraph{Case 3} Assume that the condition $(C3)$ from Lemma~\ref{lemma: condition for unique minmax point} holds. In the creating multiplicity phase, the algorithm ensures that there always exists at least one robot on two boundaries of $\mathcal {MED (\mathcal R)}$, as only movable robots are allowed to move towards the targets. Furthermore, robots on the side of the co-boundaries corresponding to no boundary robot always exist during the formation of the multiplicity at $m$. Thus, the proof follows.
 
 Next, assume the case when the configuration admits a single line of symmetry $l$. The target $m$ is selected as the unique min-max node on $l$. First, assume the case when every boundary of $\mathcal {MED (\mathcal R)}$ contain robots. The algorithm ensures that $\mathcal {MED (\mathcal R)}$ remains invariant as it only allows movable robots to move towards $m$. The case is similar when exactly two boundaries of $\mathcal {MED (\mathcal R)}$ contain robot positions. Thus, $\mathcal {MED (\mathcal R)}$ and $m$ remain invariant while no more min-max nodes are created. 

 Finally, consider the case when the configuration admits rotational symmetry. The target is selected as the node $c$, which is the center of rotational symmetry. Note that $c$ is also the unique Weber node of the configuration. While the robots move towards $c$, $c$ remains the unique Weber node of the configuration, and thus, the target remains invariant during the creating multiplicity phase. Thus, in the creating multiplicity phase, the target $m$ remains invariant. 
\end{proof}
\begin{lemma} \label{Lemma26}
    If $C(0) \in \mathcal I_1$, then in the creating multiplicity phase, a multiplicity is created at the target min-max node. 
\end{lemma}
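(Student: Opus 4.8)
The plan is to establish termination of the Creating Multiplicity phase together with the fact that the unique multiplicity it produces sits exactly on $m$. By Lemma~\ref{lemma25} the target $m$ stays the unique min-max node and the minimal enclosing diamond $\mathcal{M}$ with $\mathcal{C}(\mathcal{M})=m$ is invariant throughout the phase, so every robot the algorithm activates is guided towards one \emph{fixed} node $m$ along a shortest (Manhattan) path. The overall scheme is therefore: (i) show that at every instant the algorithm can select a well-defined robot to move --- a single robot in the asymmetric case, and an equivalent pair or quadruple in the symmetric cases; (ii) show each permitted move strictly decreases the mover's distance $d_m(\cdot,m)$; and (iii) invoke fairness of the $\mathcal{ASYNC}$ scheduler to conclude that some robot reaches $m$ in finite time, after which one further robot is routed onto $m$ to form the multiplicity.

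For the asymmetric case I would argue case by case, following conditions $(C1)$, $(C2)$, $(C3)$ of Lemma~\ref{lemma: condition for unique minmax point} and their subcases. Since the configuration is asymmetric, Observation~\ref{observation5} makes the robots totally orderable, so ``the movable robot closest to $m$ with minimum view'' is always uniquely defined. Existence of a movable robot in every subcase follows from $n\ge 9$ together with the invariance of $\mathcal{M}$: the diamond need retain only a bounded number of boundary and co-boundary robots to keep $m$ min-max (as quantified in the definition of a movable robot), so with at least nine robots there is always a surplus robot that is movable. Because the selected robot moves along a shortest path to the fixed node $m$, its distance to $m$ drops by one at each activation; the ordering that singled it out is preserved while it is strictly the closest mover, so it remains selected until it arrives at $m$. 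Once a robot occupies $m$, the next closest interior robot (again uniquely chosen, ties broken by minimum view) is moved onto $m$, creating the multiplicity, and Lemma~\ref{lemma25} guarantees $m$ is still the target at that moment.

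For the symmetric subcases the same monotonicity argument applies to the equivalent robots. When the configuration admits a single line of symmetry $l$, the pair $r,\phi(r)$ selected by the unique ordering of the movable robots (Observation~\ref{observation6}) advance towards $m\in l$; if the adversary lets one of them move while the other's move is pending, the configuration becomes asymmetric and is then handled by the asymmetric procedure above with the \emph{same} target $m$. When symmetry is preserved, the two robots reach $m$ and the multiplicity forms on $l$. For rotational symmetry, $m=c$ is the unique Weber node, hence invariant under any robot moving towards it; the four closest robots converge to $m$ and a multiplicity is created there.

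The main obstacle is the asynchronous scheduler: one must rule out both stalling and the accidental creation of a multiplicity at a node other than $m$. I would control this with a potential argument --- every permitted move strictly decreases the distance of the active mover to the fixed target $m$, so no infinite sequence of permitted moves can avoid reaching $m$ --- and by observing that any incomplete symmetric move that breaks symmetry lands in a configuration whose target is still $m$ (Lemma~\ref{lemma25}) and is covered by the asymmetric case. The bound $n\ge 9$ is exactly what guarantees a movable robot is always available without disturbing $\mathcal{M}$, so the phase cannot terminate before a robot stands on $m$ and a second is brought in to produce the multiplicity.
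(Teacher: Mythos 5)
Your proposal is correct and follows essentially the same route as the paper: both rest on Lemma~\ref{lemma25} for invariance of the target, then case-split over conditions $(C1)$--$(C3)$ in the asymmetric setting and over reflectional/rotational symmetry otherwise, using $n\ge 9$ to guarantee a movable robot exists and the view-based ordering to select it uniquely. Your added emphasis on the distance-decreasing potential and scheduler fairness only makes explicit what the paper leaves implicit, so no substantive difference remains.
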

\begin{proof}
    According to Lemma \ref{lemma25}, the target min-max node remains invariant in the creating multiplicity phase. We have to prove that a multiplicity is eventually created at some time $t>0$ in the creating multiplicity phase at $m$. First, assume that the configuration is asymmetric. The following cases are to be considered.

    \paragraph{Case 1} Assume that the condition $(C1)$ from Lemma~\ref{lemma: condition for unique minmax point} holds. If there are at least two interior robots, then robots can move towards the target in a sequential manner and a multiplicity is created at the target. If there exists at most one interior robot, then since the number of robots in the initial configuration is assumed to be at least nine, there exist movable robots on the boundaries. Thus, at least two movable robots are present in the configuration. The movable robots in the boundary move towards the interior and a multiplicity is created at the target.

    \paragraph{Case 2} Assume that the condition $(C2)$ from Lemma~\ref{lemma: condition for unique minmax point} holds. If the number of interior robots is at least two, then the procedure proceeds similarly as before. Otherwise, if the number of interior robots is exactly one, then it cannot move towards the target. If it does, then the target may change. However, since $n \geq 9$, there exist robots on the boundaries that are movable. Such movable robots move towards the interior and create a multiplicity at the target. 

    \paragraph{Case 3} Assume that the condition $(C3)$ from Lemma~\ref{lemma: condition for unique minmax point} holds. Since $n \geq 9$, there exists at least one robot on the boundaries that are movable. Consider such a movable robot that has the minimum view. The robot starts moving towards the target and the case proceeds similarly to that in Case 2.

    Next, consider the case when the configuration is symmetric with respect to a single line of symmetry $l$. As before, it should be noted that if $l$ is a horizontal or vertical line of symmetry, then there must exist robots at each boundary of $\mathcal 
    {MED(\mathcal R)}$. If there exist at least two interior robots, then the robots that are closest to $l$ and having minimum views in case of ties are allowed to move towards $m$. As a result, a multiplicity is created at $m$. Otherwise, if there exists only one interior robot, then that robot should lie on $l$. The robot can move towards $m$ along the line $l$. Next, the procedure follows similarly to the case when at least two interior robots exist. Thus, a multiplicity is created at $m$. Otherwise, if there exist no interior robots and since $n \geq 9$, there must exist boundary robots that are movable. Such movable robots that are closest to $m$ and having minimum views move towards $m$ and thus a multiplicity is created at $m$. In case $l$ is a diagonal line of symmetry, robots may exist on the co-boundaries. Consider the case when there exist at least two interior robots that are not lying on the co-boundaries. The procedure follows similarly to the case when the line of symmetry is either a horizontal or vertical line of symmetry. If a robot on a co-boundary is movable, then it moves towards $m$. Thus, the robot and its symmetric image create a multiplicity at $m$. Otherwise, the movable robots on the boundaries move towards $m$, and as a result, a multiplicity is created at the target in the creating multiplicity phase.

    Finally, assume the case when the configuration admits rotational symmetry. All the robots that are movable, move towards the center, and thus, a multiplicity is created at $m$. Hence, the proof follows.
\end{proof}
\begin{lemma} \label{lemma27}
    If $C(0) \in \mathcal I_1$, the gathering is finalised at the target min-max node.
\end{lemma}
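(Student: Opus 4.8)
The plan is to build directly on the two preceding results of this subsection. By Lemma~\ref{lemma25} the target $m$ stays the unique min-max node throughout the Creating Multiplicity phase, and by Lemma~\ref{Lemma26} a multiplicity is actually formed at $m$ at some instant $t>0$; in the asymmetric and single-line cases this means $m$ is the \emph{unique} multiplicity of the configuration. What remains is to show that every robot not sitting on $m$ can be driven to $m$ so that the three conditions of a final configuration hold, and that the process never stalls. I would split the argument along the three symmetry types handled in the Finalisation subsection.

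First, for the asymmetric case I would invoke Observation~\ref{observation5} to fix a common coordinate system, hence a total order on the robots, so that at every instant a \emph{unique} mover can be named even under the asynchronous adversary. The invariant to preserve is that $m$ remains the only multiplicity. The key geometric fact is that if $r$ is the robot off $m$ closest to $m$ (ties broken by minimum view), then no occupied node lies in the interior of a shortest $r$-to-$m$ path: each step along such a path decreases $d_m(\cdot,m)$ by one, so any robot strictly inside the path would be closer to $m$ than $r$, contradicting the choice of $r$. Thus moving $r$ one edge at a time toward $m$ spawns no new multiplicity until $r$ merges into $m$. Emptying the interior robots first and then the boundary robots, always selecting the closest one, exhausts $\mathcal R$ and finalises the gathering at $m$.

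Next, for a single line of symmetry $l$ I would take the equivalent pair $r,\phi(r)$ nearest to $l$ (ties by minimum view) and use that they lie in opposite open half-planes of $l$; after aligning them on the grid line through $m$ they reach $m$ from opposite sides, so their shortest paths meet only at $m$, and the closest-first rule keeps $m$ the sole multiplicity. If a pending move destroys the symmetry, the configuration becomes asymmetric and the previous paragraph applies. For rotational symmetry the target is the center $c=m$, which by the selection phase is the unique Weber node; here I would invoke the invariance result of \cite{DBLP:journals/dc/StefanoN17} (a Weber node stays a Weber node under any robot move toward it) to guarantee that $m$ remains the unique Weber node for the whole phase, so that even if transient multiplicities arise they cannot be confused with $m$ under weak multiplicity detection. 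At most the four rotational copies move simultaneously, and the phase ends with all robots at $m$.

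The main obstacle is exactly the \emph{weak} multiplicity-detection capability: a robot can only test whether a node carries a multiplicity, not how many robots it holds, so the entire argument hinges on ensuring that at every instant $m$ is either the only multiplicity (asymmetric and single-line cases) or is distinguished as the unique Weber node (rotational case). The shortest-path/closest-robot scheduling that forbids intermediate collisions, the half-plane separation of equivalent robots, and the Weber-invariance lemma are the three devices that enforce this; the residual care is to verify, using the total order of Observation~\ref{observation5} and the unique ordering of the nodes on $l$ from Observation~\ref{observation6}, that the asynchronous scheduler can never be forced to activate two indistinguishable robots in a way that creates a second, unresolvable multiplicity.
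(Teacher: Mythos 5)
Your proposal is correct and follows essentially the same route as the paper: the same three-way case split (asymmetric via the total order from Observation~\ref{observation5}, single line of symmetry via the half-plane separation and alignment of equivalent robots, rotational symmetry via the invariance of the unique Weber node that tolerates transient multiplicities elsewhere). Your added justification that no occupied node can lie strictly between the selected closest robot and $m$ on a shortest path is a detail the paper merely asserts, so your write-up is, if anything, slightly more complete.
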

\begin{proof}
    According to Lemma \ref{Lemma26}, a multiplicity is created at the target min-max node $m$. We have to first prove that in the finalisation phase, no other multiplicities are created in the finalisation phase. First, assume the case when the configuration is asymmetric. The robots are orderable. As a result, the robots that are not on $m$ can move towards $m$ in a sequential manner. Thus, while the robots move, no other multiplicities are created at $m$. Next, consider the case when the configuration is symmetric with respect to a single line of symmetry $l$. The robots that are closest to $m$ and having the minimum view move towards $m$. Note that the equivalent robots that first move towards $m$ lie on different half-planes. The robots first aligned themselves along the grid-line containing $m$ and then move towards $m$. As a result, no other multiplicities are created during the finalisation phase. In case the configuration admits rotational symmetry, the robots can create multiplicities at other nodes. However, since the target is the unique Weber node, the robots can finalise the gathering at $m$. As no other multiplicities are created during the finalisation phase, the robots can identify $m$ as the unique multiplicity node (robots have global weak-multiplicity detection capability). Thus, the robots that are not on $m$, move towards $m$ without creating any other multiplicity node, and thus the gathering is finalised at $m$. 
\end{proof}

 \begin{lemma} \label{disconnected1}
     Suppose $C(0) \in \mathcal I_2$ and the $G_M(\mathcal R)$ is a disconnected step-graph. Further, assume that $|V_M(\mathcal R)|$ is odd. Then, the target min-max node remains invariant in the creating multiplicity phase.
 \end{lemma}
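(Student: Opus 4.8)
The plan is to show that the entire collection of minimal enclosing diamonds, and hence the set $V_M(\mathcal R)$ together with its unique central node, stays frozen throughout the creating multiplicity phase. Since $|V_M(\mathcal R)|$ is odd and, by Lemma~\ref{lemma: disconnected G_M(P) is a disconnected step graph}, $G_M(\mathcal R)$ is a disconnected step-graph, the induced subgraph has a single central node $m$, which is the target; because the configuration is asymmetric, $m$ is unambiguously determined (Observation~\ref{observation5}). It therefore suffices to prove that $V_M(\mathcal R)$ does not change while the robots move, for then its central node cannot change either.

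First I would isolate the invariant that actually controls $V_M(\mathcal R)$. By Lemmas~\ref{lemma: center of a MED(P) is in V_M(P)} and~\ref{lemma: any minmax point is a center of some MED}, the min-max nodes are exactly the centers of the minimal enclosing diamonds, so $V_M(\mathcal R)$ is completely determined by $\mathcal{MED}(\mathcal R)$. Rotating the grid by $\pi/4$ turns each diamond into an axis-aligned square and $d_m$ into the $\ell_\infty$ metric; in these coordinates the common size $k$ of the MEDs and the whole set of their centers depend only on the four extreme coordinate values attained by the robots (the two extreme ``$x+y$'' and the two extreme ``$x-y$'' values), and the rectangle $\mathcal{IR}(\mathcal R)=\bigcap_{\mathcal M}\mathcal M$ is precisely the product of these extreme ranges. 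Each extreme is realized by a robot lying on a side of $\mathcal{IR}(\mathcal R)$, and by Observation~\ref{obs:1} every side of $\mathcal{IR}(\mathcal R)$ carries at least one robot. Consequently a robot strictly interior to $\mathcal{IR}(\mathcal R)$ attains none of the four extremes, and any motion that keeps all four extremes fixed leaves $k$, $\mathcal{MED}(\mathcal R)$, and $V_M(\mathcal R)$ unchanged.

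Next I would check that each movement prescribed in the creating multiplicity phase preserves the four extremes. In Case~1 only interior robots move, each one step at a time along a shortest path toward $m$. Since $m\in V_M(\mathcal R)$ lies inside the rectangle $\mathcal{IR}(\mathcal R)$, a monotone lattice path from an interior robot to $m$ stays inside $\mathcal{IR}(\mathcal R)$, so the mover remains non-extreme at every intermediate grid node; thus no extreme is disturbed and $V_M(\mathcal R)$ is unchanged after each step. In Cases~2 and~3 a movable boundary robot is first sent inward; the definition of a movable robot, combined with Observation~\ref{obs:1}, guarantees that the extreme which this robot helps realize is still attained by another robot on the same side once it steps off, so again none of the four extremes moves. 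Once this robot has become interior, the argument of Case~1 applies verbatim.

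Combining these, the size $k$, the family $\mathcal{MED}(\mathcal R)$, and therefore $V_M(\mathcal R)$ are invariant throughout the phase, so its unique central node remains $m$ and the target is invariant. The step I expect to be the main obstacle is Cases~2 and~3: one must argue carefully that a movable boundary robot always leaves behind a co-extreme partner on its side (this is where $n\ge 9$ and the precise definition of movability enter) and that stepping it inward along a monotone path never transiently shifts a side of $\mathcal{IR}(\mathcal R)$. The interior-robot case, by contrast, follows immediately from monotone-path containment.
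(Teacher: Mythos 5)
Your argument is essentially correct but follows a genuinely different route from the paper's. The paper's proof does \emph{not} claim that $V_M(\mathcal R)$ is frozen: it argues (rather loosely) that as robots move inward the path of min-max nodes may become shorter, but that it only shrinks symmetrically toward its middle, so the unique central node survives as the target. You instead prove the stronger statement that $V_M(\mathcal R)$ itself never changes, by passing to the rotated ($\ell_\infty$) coordinates, observing that $\mathcal{MED}(\mathcal R)$ --- and hence, via Lemmas~\ref{lemma: center of a MED(P) is in V_M(P)} and~\ref{lemma: any minmax point is a center of some MED}, all of $V_M(\mathcal R)$ --- is a function of the four extremal values, identifying $\mathcal{IR}(\mathcal R)$ with the product of the extremal ranges, and checking that each prescribed move preserves all four extremes. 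This is cleaner and more verifiable than the paper's narrative (which is in fact internally inconsistent, asserting both that the interior rectangle ``gradually diminishes'' and that it ``remains invariant''), and your reduction makes precise \emph{why} interior moves are harmless: a single grid step changes each rotated coordinate by exactly one, so a robot strictly inside the bounding box cannot create a new extreme in one step. What your route costs is exactly the two obligations you flag: (i) a shortest-path step toward $m$ is not unique and may move the non-binding rotated coordinate outward, so you must argue it can reach but never exceed an existing extreme (your one-step parity argument does cover this for strictly interior robots); and (ii) in Cases~2 and~3 a departing boundary robot must leave a co-extreme witness behind, which requires handling robots sitting at a corner of $\mathcal{IR}(\mathcal R)$ that realize two extremes simultaneously --- the paper's definition of ``movable'' is stated only for the $\mathcal I_1$ conditions $(C1)$--$(C3)$, so you would need to make the $\mathcal I_2$ notion of movability explicit before invoking it. If either obligation failed, your invariance claim would be too strong and you would have to fall back on the paper's weaker symmetric-shrinkage argument to recover the lemma; as it stands, neither appears to fail, and your proof establishes the stated conclusion.
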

\begin{proof}
    Since $G_M(\mathcal R)$ is a disconnected step-graph, the set of min-max nodes is a line graph, with the nodes lying on the diagonals of the minimum enclosing rectangle $\mathcal {MER}$. As $|V_M(\mathcal R)|$ is odd, the line graph possesses a unique central node (say $m$). The central node $m$ is selected as the target min-max node. Note that as a movable robot moves towards $m$, the path delimited by the set of min-max nodes becomes shorter in length accordingly. Since the path is subsequently becoming shorter in length, the size of the interior rectangle gradually diminishes in size and the central node remains the unique target min-max node of the configuration. As a result, at some instant of time $t>0$, a multiplicity is created at $m$, and the intersecting rectangle remains invariant. Thus, the target min-max node remains invariant in the creating multiplicity phase.
\end{proof}
\begin{lemma} \label{disconnected2}
    Suppose $C(0) \in \mathcal I_2$ and the $G_M(\mathcal R)$ is a disconnected step-graph. Further, assume that $|V_M(\mathcal R)|$ is even. Then, the algorithm ensures that during the movement of the robots towards the target in the creating multiplicity phase, no more target min-max nodes are created during the execution of the algorithm.
\end{lemma}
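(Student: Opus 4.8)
The plan is to mirror the odd-case argument of Lemma~\ref{disconnected1} and to reduce the whole statement to a single invariance property: throughout the creating multiplicity phase the minimal enclosing diamonds of the configuration stay pinned by the same extreme robots, so that the two central nodes of the staircase never leave $V_M(\mathcal R)$ and the target-selection rule can only ever return one of them.

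First I would fix the geometry. Since $G_M(\mathcal R)$ is a disconnected step-graph with an even number $p=|V_M(\mathcal R)|$ of nodes, Lemma~\ref{lemma: disconnected G_M(P) is a disconnected step graph} together with Lemma~\ref{lemma: condition diagonal} shows that the min-max nodes form a monotone diagonal staircase $n_1,n_2,\dots,n_p$ (consecutive nodes being diagonal neighbours) lying along the long diagonal of the intersection rectangle $\mathcal{IR}(\mathcal R)$, with central nodes $m_1=n_{p/2}$ and $m_2=n_{p/2+1}$. Writing $\mathcal{IR}(\mathcal R)=\{v:\ c_1\le x_v+y_v\le c_2,\ c_3\le x_v-y_v\le c_4\}$ exhibits it as a Manhattan-convex region, so a shortest path from any interior node toward a central node can be chosen to stay inside $\mathcal{IR}(\mathcal R)$; I will use this routing throughout.

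The core is the invariance claim, proved along the three cases of the phase. The min-max value $k$ and the set $V_M(\mathcal R)$ depend only on the four extrema $\max_r (x_r+y_r)$, $\min_r (x_r+y_r)$, $\max_r (x_r-y_r)$, $\min_r (x_r-y_r)$ over robot positions, since these four numbers fix the family of minimal enclosing diamonds. A robot that is strictly inside $\mathcal{IR}(\mathcal R)$ (an \emph{interior} robot, at distance at most $k-1$ from every min-max node) attains none of these extrema and, by the routing above, stays interior as it advances; hence in Case~1 moving the closest interior robot toward the target leaves all four extrema, and therefore $V_M(\mathcal R)$, unchanged, so $m_1$ and $m_2$ persist. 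In Cases~2 and~3 a movable boundary robot is first pushed inward; here I would use Observation~\ref{obs:1} and the hypothesis $n\ge 9$ to argue by pigeonhole that the side it leaves still retains a robot --- the algorithm never moves the sole maximum-view robot of a side, so no side of $\mathcal{IR}(\mathcal R)$ is ever emptied and, by Lemma~\ref{lemma: condition for no neighbour in minmax points}, none of the extreme diamonds loses its pinning robot. Once the pushed robot is interior we are back in Case~1, and $V_M(\mathcal R)$ is again preserved.

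Finally I would close with the target-selection rule. Because $V_M(\mathcal R)$ is preserved at every snapshot, the set of central nodes remains exactly $\{m_1,m_2\}$, so every re-evaluation of the target --- by the Weber-min-max test, by the closest-robot distances $d_1,d_2$, or by the minimum-view tie-break of Observation~\ref{observation5} --- returns a node of $\{m_1,m_2\}$. Thus the target may oscillate between $m_1$ and $m_2$ as robots advance and the asynchronous scheduler leaves pending moves that momentarily flip $d_1$ and $d_2$, but it can never jump to a node outside $\{m_1,m_2\}$, which is exactly the assertion that no further target min-max node is created; once a robot actually reaches $m_1$ or $m_2$, the incipient multiplicity fixes the target there. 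I expect the main obstacle to be the boundary-to-interior step of Cases~2 and~3: one must verify carefully that deleting the chosen movable robot from a side of $\mathcal{IR}(\mathcal R)$ never un-pins an extreme minimal enclosing diamond and so shortens the staircase enough to delete a central node before the multiplicity forms --- this needs the boundary-occupancy conditions of Lemma~\ref{lemma: condition for no neighbour in minmax points} and Lemma~\ref{lemma: condition diagonal} combined with the counting afforded by $n\ge 9$, together with the check that the Manhattan-convex routing keeps every moving robot inside $\mathcal{IR}(\mathcal R)$ until it reaches the target.
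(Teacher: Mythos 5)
Your route is genuinely different from the paper's. The paper's proof of this lemma never tries to show that $V_M(\mathcal R)$ is preserved; it splits into the three branches of the target-selection rule (a unique Weber min-max node; both central nodes Weber-optimal but one with strictly smaller closest-robot distance; a full tie broken by configuration view) and argues in each branch, via the invariance of Weber nodes under movement towards them, that the selected target never leaves $\{m_1,m_2\}$. You instead prove one structural fact: the family of minimal enclosing diamonds, the value $k$, and hence all of $V_M(\mathcal R)$ are functions of the four extrema of $x_r+y_r$ and $x_r-y_r$ over robot positions (correct, since $|x|+|y|=\max(|x+y|,|x-y|)$ turns every diamond into a box in rotated coordinates), and an interior robot attains none of them, so whatever the tie-breaking hierarchy returns, it is one of the two unchanged central nodes. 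Where it applies, this is the stronger and cleaner reduction: it avoids applying Weber-node invariance to ``Weber min-max nodes'', i.e.\ to centrality minimizers \emph{within} a set $V_M(\mathcal R)$ that is itself potentially changing (a point the paper's Case 1 glosses over), and it sharpens the paper's earlier observations that a min-max node need not survive a robot's move by isolating exactly when that can happen: only when the mover realizes an extremum.

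The step you flag but do not close is the substantive one. Cases 2 and 3 of the creating-multiplicity phase move a boundary robot of $\mathcal{IR}(\mathcal R)$ inward; if that robot were the sole robot realizing an extremum, the rotated bounding box shrinks, the interval of admissible diamond centers lengthens at one end, and the staircase can gain nodes, displacing the central pair. You would need to verify, from the algorithm's rather loosely specified notion of ``movable'' for $\mathcal I_2$, that the selected boundary robot never un-pins a side of $\mathcal{IR}(\mathcal R)$: $n\ge 9$ and pigeonhole give a side with at least two robots, but you must also confirm the algorithm draws its movable robot from such a side. Your argument additionally requires the mover's route to stay inside $\mathcal{IR}(\mathcal R)$ (else an extremum could \emph{increase}); your Manhattan-convex routing handles this, but the paper's algorithm never pins the step choice down, so this is an assumption about the algorithm rather than a consequence of its stated description. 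The paper's own proof sidesteps both issues by reasoning only about closest robots and Weber centrality rather than about $V_M(\mathcal R)$, at the cost of the looser appeal to Weber invariance noted above. Completing that one verification would make your argument a valid, and in my view tighter, replacement for the paper's.
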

\begin{proof}
  If $|V_M(\mathcal R)|$ is even, then the target min-max node is selected among the two central nodes (say $m_1 $ and $m_2$). The following cases are to be considered.

  \paragraph{Case 1} There exists a unique Weber min-max node (say $m_1$). As a Weber min-max node remains invariant while a robot moves towards it and no more Weber min-max nodes are created during the movement of the robots towards it, $m_1$ remains the unique target min-max node. As a result, the target remains invariant while a multiplicity is created at it. Thus, no more min-max nodes are created during the execution of the algorithm.

  \paragraph{Case 2} The two min-max nodes $m_1$ and $m_2$ are both Weber min-max nodes. However, there exists a unique min-max node among $m_1$ and $m_2$, corresponding to which there exists a unique closest robot. Suppose $m_1$ is the unique Weber min-max node selected as a target and corresponds to which $r_1$ is the closest robot. While $r_1$ moves towards $m_1$, $r_1$ remains the closest robot to $m_1$, and the distance between $r_1$ and $m_1$ decreases gradually. It may be noted that during the movement of $r_1$ towards $m_1$, it may occur that the distance between $m_2$ and its closest robot also decreases. However, since $m_1$ is selected as the target initially, the distance between $m_1$ and $r_1$ is lesser than $m_2$ to its closest robot. As a result, $m_1$ remains the target min-max node during the movement of the closest robot to itself, and it remains the unique target while a multiplicity is created at it.

  \paragraph{Case 3:} The two min-max nodes $m_1$ and $m_2$ are both Weber min-max nodes. There does not exist a unique min-max node, corresponding to which there exists a unique closest robot. The min-max node $m_1$ is selected as the target, which has the minimum view among the possible targets. Let $r_1$ (resp. $r_2$) be the closest robot to $m_1$ (resp. $m_2$) and the corresponding distance between them is $d_1$ (resp. $d_2$). Now, $d_1=d_2$ in the initial configuration. Let at time $t>0$, $r_1$ starts moving towards the target $m_1$. Now, two cases may arise at $t>0$. Either $d_1 \neq d_2$ or $d_1=d_2$. If $d_1 \neq d_2$, the rest of the proof proceeds similarly as in Case 2. If $d_1=d_2$, then it may happen that $m_2$ becomes the min-max node with the minimum view between $m_1$ and $m_2$. It is, however, guaranteed that the target min-max node will be between $m_1$ and $m_2$ (since the target is always selected as the Weber min-max node, and the Weber min-max node remains invariant upon movement). Hence, the proof follows.
\end{proof}
\begin{lemma} \label{fourcycle}
    Suppose $C(0) \in \mathcal I_2$ and the $G_M(\mathcal R)$ is a four-cycle. Then, the algorithm ensures that during the movement of the robots towards the target, no more target min-max nodes are created in the creating multiplicity phase.
\end{lemma}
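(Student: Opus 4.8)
The plan is to mirror the argument of Lemma~\ref{disconnected2}, adapting its target-selection case split to the four candidate nodes of the cycle, while proving throughout that the set of min-max nodes never acquires a node outside $\{m_1,m_2,m_3,m_4\}$. First I would record the structural facts that make these four nodes rigid: by Lemma~\ref{lemma: if G_M(P) contains a cycle then G_M(P) is a cycle of length 4} they are the corners of a unit square, by Lemma~\ref{lemma: grid line can not contain more than 2 min max point} no grid line carries three of them, and by Observation~\ref{obs:1} every side of the intersection rectangle $\mathcal{IR}(\mathcal R)$ holds at least one robot. The target is, by construction, always a Weber min-max node among the four, with ties resolved first by the distance to the nearest robot and then by the configuration view (well defined since $C(0)$ is asymmetric, Observation~\ref{observation5}).

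The core of the proof is an invariance claim: the relocations prescribed in the Creating Multiplicity phase leave $\mathcal{IR}(\mathcal R)$ — and hence every minimal enclosing diamond and the whole set $V_M(\mathcal R)$ — unchanged, so that no fifth min-max node can appear. I would argue that the algorithm only displaces an interior robot, or (in its Cases~2 and~3) a robot taken from a side of $\mathcal{IR}(\mathcal R)$ that still retains another robot after the departure; in either situation no side of $\mathcal{IR}(\mathcal R)$ is ever emptied, the $k\sqrt{2}$-sized diamonds that were enclosing remain enclosing, and the extreme robots pinned to the four diamonds $\mathcal{M}_{m_1},\dots,\mathcal{M}_{m_4}$ by Lemma~\ref{sufficiency for condition of two neighbour } are untouched. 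Hence $V_M(\mathcal R)\subseteq\{m_1,m_2,m_3,m_4\}$ is maintained at every step.

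With the four nodes fixed, I would split exactly as in Lemma~\ref{disconnected2}. If there is a unique Weber min-max node, the invariance of a Weber node under the motion of a robot toward it~\cite{DBLP:journals/dc/StefanoN17} keeps it the designated target throughout. If several of the four tie in centrality but a unique one minimizes the distance to its nearest robot, then as that robot advances the distance strictly decreases while the robot stays nearest, so this node keeps winning the tie-break. In the full tie (equal centralities and equal nearest-robot distances, resolved by minimum view) the target may switch among the tied nodes as centralities evolve, but it is always drawn from $\{m_1,m_2,m_3,m_4\}$; once the first advancing robot lands on some $m_i$, that node strictly minimizes centrality and becomes the permanent target, after which the next robot forms the multiplicity. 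In every branch the target stays within the original four nodes, which is exactly the assertion.

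I expect the delicate point to be the invariance claim of the second paragraph, precisely because Observation~3 shows that in general pushing a robot toward a min-max node \emph{can} spawn a new one. The crux is therefore to verify rigorously that the movement rule never vacates a side of $\mathcal{IR}(\mathcal R)$, so that the family of enclosing diamonds of size $k\sqrt{2}$ is genuinely preserved rather than merely prevented from shrinking; this is where the hypothesis $n\ge 9$ (guaranteeing a spare movable robot on the relevant side) and the boundary-pinning of Lemma~\ref{sufficiency for condition of two neighbour } do the real work. A secondary bookkeeping concern is that asynchrony may leave a move pending and momentarily break the asymmetry of the configuration; I would note that, since every prescribed motion is directed toward a node of the fixed square and keeps each $\mathcal{IR}(\mathcal R)$-side occupied, any intermediate configuration still has its min-max set inside $\{m_1,m_2,m_3,m_4\}$, so the conclusion is unaffected.
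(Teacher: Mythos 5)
Your proposal is correct and follows essentially the same route as the paper: the paper's proof is exactly the three-way case split on the target-selection rule (unique Weber min-max node, unique minimizer of nearest-robot distance, view tie-break), each case deferred to the corresponding case of Lemma~\ref{disconnected2}. Your additional second paragraph, arguing explicitly that the movement rules never vacate a side of $\mathcal{IR}(\mathcal R)$ and hence keep $V_M(\mathcal R)\subseteq\{m_1,m_2,m_3,m_4\}$, is a point the paper leaves implicit, so it strengthens rather than diverges from the published argument.
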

\begin{proof}
    In the initial configuration, the target min-max node is selected among the four nodes forming the four-cycle. Let the min-max nodes be denoted as $m_1$, $m_2$, $m_3$ and $m_4$. The following cases are to be considered.

    \paragraph{Case 1} There exists a unique Weber min-max node between the $m_i's$. The target is selected as the unique Weber min-max node, and the proof follows similarly as in Case 1 of Lemma \ref{disconnected2}.

    \paragraph{Case 2} There exists more than Weber min-max node between the $m_i's$. However, there exists a unique Weber min-max node (say $m_1$) corresponding to which the distance to the closest robot is minimum. The node $m_1$ is selected as the target min-max node, and the proof follows similarly as in Case 2 of Lemma \ref{disconnected2}.

    \paragraph{Case 3} The two min-max nodes $m_1$ and $m_2$ are both Weber min-max nodes. There does not exist a unique min-max node, corresponding to which there exists a unique closest robot. The min-max node $m_1$ is selected as the target, which has the minimum view between the possible targets. The rest of the proof follows similarly as in Case 3 of Lemma \ref{disconnected2}.
\end{proof}
\begin{lemma} \label{step1}
    Suppose $C(0) \in \mathcal I_2$ and the $G_M(\mathcal R)$ is a step-graph. Further, assume that $|V_M(\mathcal R)|$ is odd. Then, the target min-max node remains invariant during the execution of the algorithm.
\end{lemma}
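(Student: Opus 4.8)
The plan is to follow the template of Lemma~\ref{disconnected1}, modified for a connected step-graph. First I would fix the geometric picture: since $G_M(\mathcal R)$ is a step-graph, Definition~\ref{def:Step Graph} tells us that $V_M(\mathcal R)$ is a path whose consecutive nodes occupy three corners of a unit square, so the min-max nodes lie along one diagonal direction of the intersection rectangle $\mathcal{IR}(\mathcal R)$. As $C(0)\in\mathcal I_2$ forces $|V_M(\mathcal R)|\ge 2$ and we assume the count is odd, the path has at least three nodes and hence a unique central node $m$, which is exactly the target chosen by the algorithm. Being interior to the path, $m$ has two min-max neighbours, so Lemma~\ref{sufficiency for condition of two neighbour } guarantees that the minimal enclosing diamond $\mathcal{M}_m$ with $\mathcal{C}(\mathcal{M}_m)=m$ has exactly one boundary, say $B_1$, carrying robot positions.

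Next I would show that $m$ never loses its min-max status during the creating-multiplicity phase. Throughout this phase only \emph{movable} robots advance towards $m$; a movable boundary robot is, by definition, never the unique maximum-view robot on its boundary, and every robot that starts moving does so towards the centre $m$. Hence no occupied boundary of $\mathcal{M}_m$ (in particular $B_1$) is ever emptied and no robot is pushed past the extreme diagonals delimiting $\mathcal{IR}(\mathcal R)$. Consequently the common diamond size $k\sqrt{2}$ of the minimal enclosing diamonds is preserved, and Lemma~\ref{lemma: center of a MED(P) is in V_M(P)} keeps $m\in V_M(\mathcal R)$ at every instant before a multiplicity appears at $m$.

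The heart of the argument is to verify that $m$ stays the \emph{central} node of the (possibly shrinking) step-path, so that the selection rule keeps returning the same node. Here I would split the moves into two kinds. When an interior robot moves towards $m$, none of the diamond boundaries change, so $V_M(\mathcal R)$ is literally unchanged and $m$ stays central. When a boundary movable robot moves inward, the two endpoints of the path---each corresponding, by Lemma~\ref{sufficiency for condition of one neighbour}, to a diamond with two \emph{adjacent} occupied boundaries---are the only nodes that can cease to be min-max; and because the move shortens the longer side of $\mathcal{IR}(\mathcal R)$ from the two opposite corners in equal measure, one node is deleted from each end of the path simultaneously. Thus $|V_M(\mathcal R)|$ stays odd and its midpoint $m$ is unchanged.

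I expect the last point---that the deletions are balanced about $m$ rather than one-sided---to be the real obstacle, since a one-sided deletion would shift the centre and break invariance. To pin it down I would track the two side-lengths of $\mathcal{IR}(\mathcal R)$ explicitly, using the relation between the diamond size $k\sqrt{2}$ and the number of min-max nodes established in the proofs of Lemma~\ref{lemma 20} and Lemma~\ref{lemma 23}, and show that an admissible inward move decreases the longer side by one unit measured from each extremity, removing exactly one min-max node at each end. Since $m$ is equidistant along the path from both ends and the ends recede in lock-step, $m$ remains central and therefore remains the unique target min-max node throughout the creating-multiplicity phase.
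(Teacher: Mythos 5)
Your proposal follows essentially the same route as the paper: the paper's own proof of this lemma merely observes that the odd step-path has a unique central node $m$ and then defers wholesale to Lemma~\ref{disconnected1}, whose argument is exactly your claim that the min-max path only shrinks as movable robots advance, so the central node stays central and hence stays the target. You have in fact made explicit (and correctly flagged as the real obstacle) the balanced-shrinkage step that the paper asserts without detail, so your sketch is, if anything, more careful than the published proof.
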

\begin{proof}
     Since $G_M(\mathcal R)$ is a step-graph, the set of min-max nodes is a line graph, with the nodes lying on the three nodes of a unit square. As $|V_M(\mathcal R)|$ is odd, the line graph has a unique central node (say $m$). The central node $m$ is selected as the target min-max node. The rest of the proof follows similarly as in Lemma \ref{disconnected1}. 
\end{proof}
\begin{lemma} \label{step2}
    Suppose $C(0) \in \mathcal I_2$ and the $G_M(\mathcal R)$ is a step-graph. Further, assume that $|V_M(\mathcal R)|$ is even. Then, the algorithm ensures that during the movement of the robots towards the target, no more target min-max nodes are created during the execution of the algorithm.
\end{lemma}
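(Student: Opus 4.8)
The plan is to mirror the argument of Lemma~\ref{disconnected2} almost verbatim, since the only structural difference here is that $G_M(\mathcal R)$ is a connected step-graph rather than a disconnected one, and the invariance properties of Weber min-max nodes that drive that proof are insensitive to this distinction. First I would observe that, by Corollary~\ref{cor: connected G_M(P) is either a step graph or a cycle of length 4}, a connected $G_M(\mathcal R)$ that is a step-graph is in fact a path graph, so when $|V_M(\mathcal R)|$ is even the path has exactly two central nodes $m_1$ and $m_2$. These are precisely the candidates from which the Target Min-Max Node Selection phase picks the target, so it suffices to track how the target behaves under robot movement.

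Then I would reproduce the same three-case split on the selection rule. \textbf{Case 1}: if there is a unique Weber min-max node among $m_1,m_2$, say $m_1$, it is selected; since a Weber node stays invariant while a robot moves toward it and no new Weber node is spawned by such a move, $m_1$ remains the unique target throughout the creating multiplicity phase. \textbf{Case 2}: if both $m_1,m_2$ are Weber min-max nodes but exactly one of them, say $m_1$, has the strictly smaller closest-robot distance, then as its closest robot advances this distance only shrinks, so $m_1$ stays the target. \textbf{Case 3}: if both are Weber min-max nodes with equal closest-robot distances, the tie is broken by minimum configuration view, which is well defined by Observation~\ref{observation5} since $C(0)$ is asymmetric; after the chosen robot moves, either the distances separate, reducing to Case~2, or they remain equal and the minimum-view tie-break may now favour $m_2$.

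The main obstacle will be Case~3: I must show that even though the identity of the target may oscillate between $m_1$ and $m_2$ under the asynchronous scheduler, it never escapes the pair $\{m_1,m_2\}$ and the process still terminates with a multiplicity. The key point is that the target is always chosen to be a Weber min-max node, and Weber-node invariance guarantees that once a robot actually reaches one of $m_1,m_2$, that node becomes a fixed unique target and the multiplicity forms there. I would therefore conclude by appealing to the corresponding argument of Lemma~\ref{disconnected2}, since the step-graph geometry enters the reasoning only through the existence of the two central nodes $m_1,m_2$, and all subsequent details are identical to the disconnected case.
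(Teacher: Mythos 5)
Your proposal is correct and follows essentially the same route as the paper: the paper's own proof simply notes that the target is selected among the two central nodes $m_1$ and $m_2$ of the path and then defers entirely to the argument of Lemma~\ref{disconnected2}, which is exactly the reduction you carry out (with the three-case split on Weber min-max nodes, closest-robot distances, and the view-based tie-break spelled out in more detail than the paper bothers to). The extra care you take in Case~3 about the target possibly oscillating within $\{m_1, m_2\}$ matches the corresponding discussion in Lemma~\ref{disconnected2} and is a faithful, slightly more explicit rendering of the intended argument.
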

\begin{proof}

If $|V_M(\mathcal R)|$ is even, then the target min-max node is selected among the two central nodes (say $m_1 $ and $m_2$). The rest of the proof follows similarly as in Lemma \ref{disconnected2}.
\end{proof}
\begin{lemma} \label{correctnessI2}
   If $C(0) \in \mathcal I_2$, then in the creating multiplicity phase, a multiplicity is created at the target min-max node. 
\end{lemma}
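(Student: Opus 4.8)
The plan is to prove the statement by the same trichotomy on the shape of $G_M(\mathcal R)$---disconnected step-graph, connected step-graph, or four-cycle---that the algorithm uses, and within each to split according to the number of interior robots exactly as in the Creating Multiplicity phase for $\mathcal I_2$. The invariance half of the work is already done: Lemmas~\ref{disconnected1}, \ref{disconnected2}, \ref{fourcycle}, \ref{step1}, and \ref{step2} guarantee that throughout this phase the target either stays fixed (when $|V_M(\mathcal R)|$ is odd, where a unique central node exists) or stays confined to the candidate set $\{m_1,m_2\}$ (when $|V_M(\mathcal R)|$ is even), respectively $\{m_1,m_2,m_3,m_4\}$ (four-cycle), and that once the first moving robot actually lands on one of these candidates, say $m$, that node is a Weber min-max node and hence remains the unique invariant target. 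So the remaining task is purely to show that two robots successively arrive at $m$.

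First I would treat the base case of at least two interior robots. By Observation~\ref{observation5} the asymmetric configuration admits a common coordinate system, so the robots are totally orderable and at every instant a unique robot---the interior robot closest to the target, ties broken by minimum view---is selected. This robot is directed along a shortest path towards the target; by the invariance lemmas its arrival turns $m$ into the unique, invariant target occupied by a single robot. Then the next closest interior robot not on $m$ is selected (again uniquely, by orderability) and sent to $m$ along a shortest path, producing a second robot on $m$ and hence a multiplicity. Fairness of the asynchronous scheduler ensures each selected robot completes its $LCM$ cycle in finite time, and since only one robot is ever in motion towards $m$ at a time in the asymmetric case, no unintended multiplicity appears elsewhere before $m$ is doubly occupied.

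Next I would reduce the remaining cases to the one above by the \emph{feeding} argument. If there is exactly one interior robot, it is first advanced to distance one from the target (never onto it, which could shift the target); then, because $n \geq 9$, some boundary of $\mathcal{IR}(\mathcal R)$ carries a movable robot, and the asymmetry lets us pick a unique such robot whose move into the interior keeps the configuration asymmetric (exactly what the algorithm's Case~2 prescribes), yielding two interior robots and reducing to the base case. If there is no interior robot at all, the same $n \geq 9$ bound forces at least one boundary (indeed at least two, in the four-cycle case) to hold more than one robot, so a unique movable boundary robot can be pushed inside to create the first interior robot, after which we are back in the preceding case. In each reduction I must confirm that the feeder move neither destroys asymmetry prematurely (handled by choosing the unique minimum-view movable robot, and by the fact that moving a single robot inward from a boundary of an asymmetric configuration cannot create a symmetry) nor enlarges $\mathcal{IR}(\mathcal R)$ in a way that changes the target (handled by moving robots strictly inward, which only shrinks the relevant diagonals and is covered by the invariance lemmas).

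I expect the main obstacle to be the bookkeeping in the few-interior-robot cases: one must verify simultaneously that (i) a movable robot in the required position always exists under the sole hypothesis $n \geq 9$, (ii) the feeding moves preserve asymmetry so that Observation~\ref{observation5} keeps supplying a unique selected robot, and (iii) the target does not escape the candidate set during these preparatory moves, so that the invariance lemmas are invoked only after a robot has genuinely reached a candidate. Once these three points are checked uniformly across the three graph shapes, the conclusion that a multiplicity forms at $m$ in finite time follows by combining the first-robot and second-robot arguments with scheduler fairness.
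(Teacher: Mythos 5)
Your proposal follows essentially the same route as the paper's proof: the same trichotomy on the shape of $G_M(\mathcal R)$, reliance on the same invariance lemmas (Lemmas~\ref{disconnected1}, \ref{disconnected2}, \ref{fourcycle}, \ref{step1}, \ref{step2}) to pin the target, and then the observation that the movable robots prescribed by the algorithm reach the target and form a multiplicity. Your write-up is in fact more detailed than the paper's (which compresses the interior-robot counting and the feeding of boundary robots into a single sentence per case), but the underlying argument is the same.
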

\begin{proof}
    Depending on whether $G_M(\mathcal R)$ is a disconnected step-graph, step-graph or a four-cycle, the following cases are to be considered.

    \paragraph{Case 1} $G_M(\mathcal R)$ is a disconnected step-graph. According to Lemmas \ref{disconnected1} and \ref{disconnected2}, the algorithm ensures that no more target min-max nodes are created during the execution of the algorithm. As a result, the movable robots either on the interior or on the boundary of $\mathcal{IR}(\mathcal R)$ move towards the target $m$, and thus a multiplicity can be created at $m$ at some time instant $t>0$.

    \paragraph{Case 2} $G_M(\mathcal R)$ is a four-cycle. According to Lemma \ref{fourcycle}, the algorithm ensures that no more target min-max nodes are created during the execution of the algorithm. The proof follows similarly from the previous case.

    \paragraph{Case 3} $G_M(\mathcal R)$ is a step-graph. According to Lemmas \ref{step1} and \ref{step2}, the algorithm ensures that no more target min-max nodes are created during the execution of the algorithm. The proof follows similarly from the previous case.
\end{proof}
\begin{lemma} \label{lemma33}
    If $C(0) \in \mathcal I_2$, then in the finalisation phase, the gathering is finalised at one of the target min-max nodes.
\end{lemma}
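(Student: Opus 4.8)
The plan is to build on Lemma~\ref{correctnessI2}: at the instant the finalisation phase begins, a single multiplicity already rests on the target min-max node $m$, and Lemmas~\ref{disconnected1}, \ref{disconnected2}, \ref{fourcycle}, \ref{step1} and \ref{step2} guarantee that $m$ did not drift away during the creating-multiplicity phase. Since the robots have global-weak multiplicity detection and $m$ is the only node labelled $2$, every robot identifies $m$ unambiguously at every later instant. It therefore suffices to prove two facts: (i) throughout the finalisation phase $m$ stays the \emph{unique} multiplicity node, so the common destination is never lost; and (ii) every robot not on $m$ reaches $m$ in finite time. Because $C(0)\in\mathcal I_2$ is asymmetric, by Observation~\ref{observation5} the robots agree on a global coordinate system and all robot positions are totally orderable, so the selection rule ``take the non-$m$ robot closest to $m$, breaking ties by minimum configuration view'' designates a \emph{unique} candidate at each step.

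The heart of the argument, and the step I expect to be the main obstacle, is establishing invariant (i), i.e.\ collision-freeness of the routing. The key claim I would prove is the following: let $r$ be the selected candidate and let $d=d_m(r,m)=\min\{d_m(u,m):u\in\mathcal R(t),\,u\ne m\}$. Then any node $w\ne m$ lying on a shortest grid path from $r$ to $m$ satisfies $d_m(w,m)<d$, so by minimality of $d$ the node $w$ carries no robot. Hence, as $r$ advances one node at a time along such a path, it never shares a node with a stationary robot until it arrives at $m$; the only multiplicity it ever touches is the one at $m$. Once $r$ merges into $m$, the minimum distance among the remaining non-$m$ robots is again at least $d$, and the same reasoning applies to the next candidate, so by induction no second multiplicity is ever created.

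I would then organise the movement exactly as the algorithm prescribes: the interior robots are exhausted first in nondecreasing order of their distance to $m$ (ties by minimum view), and afterwards each boundary robot of $\mathcal{IR}(\mathcal R)$ is brought inside --- again as the current closest robot, so its first inward step also lands on an empty node --- and then routed to $m$ by the same closest-first rule. Each completed move strictly decreases $\sum_{u\in\mathcal R} d_m(u,m)$, and since there are only finitely many robots the process terminates after finitely many moves, at which point every robot occupies $m$ and the final-configuration conditions are met, establishing (ii).

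Finally I would address the asynchronous scheduler, which is where subtlety hides. At every reachable configuration the candidate robot is unique, so any other robot that activates computes a null move and remains stationary; thus no two robots ever move concurrently and no stale ``pending move'' can generate a spurious collision, since a robot acting on an outdated snapshot still either recognises itself as the unique candidate or stays put. Moreover the persistent unique multiplicity at $m$ forbids any automorphism of the current configuration from mapping $m$ elsewhere, so even if the single-robot positions momentarily regain a symmetry, all robots continue to agree on $m$ as the well-defined destination. Combining the collision-free routing, the monotone potential $\sum_{u\in\mathcal R} d_m(u,m)$, and the unique-mover property under $\mathcal{ASYNC}$ yields that the gathering is finalised at the target min-max node $m$.
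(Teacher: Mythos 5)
Your proposal is correct and follows essentially the same route as the paper's proof: invoke Lemma~\ref{correctnessI2} for the multiplicity at $m$, use the asymmetry of $\mathcal I_2$ (Observation~\ref{observation5}) to order the robots and select a unique mover, and rely on global-weak multiplicity detection so that all robots converge on $m$ without creating a second multiplicity. The paper merely asserts the collision-freeness that you justify explicitly via the closest-robot/shortest-path distance argument and the monotone potential, so your write-up is a more detailed rendering of the same argument rather than a different one.
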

\begin{proof}
    According to Lemma \ref{correctnessI2}, a multiplicity is created at one of the target min-max nodes $m$. Since the configuration is asymmetric, the robots can be orderable. As a result, the robots that are not on $m$ can move towards $m$ without creating any other multiplicity. Since the robots have global weak multiplicity detection capability, all the robots can identify $m$ as the unique multiplicity node, and the gathering can be finalised at $m$.
\end{proof}
\begin{lemma} \label{lemma34}
     Suppose $C(0) \in \mathcal I_3$ and the $G_M(\mathcal R)$ is a disconnected step-graph. Then, the algorithm ensures that during the movement of the robots towards the target, no more target min-max nodes are created during the execution of the algorithm.
\end{lemma}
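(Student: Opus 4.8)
The plan is to reduce the statement to the asymmetric analogues already proved in Lemmas~\ref{disconnected1} and~\ref{disconnected2}, after first fixing the geometry forced by the hypotheses and then supplying the extra bookkeeping needed for the symmetric, paired movement. First I would pin down the axis $l$. Since $C(0)\in\mathcal I_3$ is a gatherable configuration it is not partitive, and by Lemma~\ref{lemma 20} a disconnected step-graph cannot coexist with a horizontal or vertical axis of symmetry without forcing the configuration to be partitive; hence $l$ must be diagonal. By Definition~\ref{def: disconnected step graph} the min-max nodes are pairwise diagonal neighbours and therefore collinear on a single diagonal line, and since $V_M(\mathcal R)$ is invariant under the reflection $\phi$ generating the symmetry, this diagonal must coincide with $l$. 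Consequently every min-max node lies on $l$, is fixed by $\phi$, and the min-max nodes are totally ordered along $l$ by Observation~\ref{observation6}. This shows the central node(s) chosen in the Target Selection phase lie on $l$ and are $\phi$-invariant, so the paired motion toward a target is well-defined.

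Next I would split on the parity of $|V_M(\mathcal R)|$. When it is odd there is a unique central node $m\in l$, and the argument of Lemma~\ref{disconnected1} carries over: each admissible move (a symmetric pair, or the single robot lying on $l$) shortens the path of min-max nodes symmetrically, so the intersection rectangle shrinks while keeping $m$ as its center; thus $m$ stays the unique central min-max node and no competing target arises. When it is even there are two central nodes $m_1,m_2\in l$ and the target is fixed by the Weber-min-max, then nearest-robot, then least-view tie-breaking rule. Here I would invoke the invariance of Weber nodes under motion toward them \cite{DBLP:journals/dc/StefanoN17}: once the target is selected as a Weber min-max node it remains a Weber node and no new Weber nodes are created, while the path-shrinking argument keeps both $m_1$ and $m_2$ central; hence the target never leaves $\{m_1,m_2\}$, reproducing Lemma~\ref{disconnected2}.

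The main obstacle is the asynchronous, paired nature of the motion: a robot $r$ and its image $\phi(r)$ need not move concurrently, so a pending move can temporarily break the symmetry. I would handle this by observing that a single pending step is precisely one robot advancing one edge toward the fixed, $l$-lying target, which is exactly the hypothesis of the Weber-invariance result; applying that result once for each of the two robots shows the target survives each intermediate, possibly asymmetric, configuration. I would further verify that when symmetry is lost and the target-selection rule is re-evaluated on the intermediate configuration, it still returns a node of $\{m_1,m_2\}$ (respectively $\{m\}$): under one unit step the diamond sizes and the centralities change by a bounded amount, so the center of the shrinking min-max path cannot jump outside the designated central nodes. This establishes that no target min-max node outside the central set is ever created during the Creating Multiplicity phase, which is the claim of the lemma.
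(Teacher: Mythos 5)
Your overall route is the same as the paper's: establish that the axis of symmetry $l$ must be diagonal (via Lemma~\ref{lemma 20}), identify the central node(s) of the min-max path as the target, and reduce to Lemma~\ref{disconnected1} when there is a unique central node and to Lemma~\ref{disconnected2} when there are two. The extra bookkeeping you supply for pending moves under asynchrony is consistent with the paper's algorithm description, and your observation that Definition~\ref{def: disconnected step graph} forces the min-max nodes to be collinear on a single diagonal is correct (a zig-zag would put two of them on a common grid line, violating condition~2).

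There is, however, one step that does not hold as you state it: from the invariance of $V_M(\mathcal R)$ under the reflection $\phi$ you conclude that the diagonal carrying the min-max nodes ``must coincide with $l$,'' and hence that every min-max node lies on $l$ and is fixed by $\phi$. Invariance of a set under a reflection does not force the set onto the axis: since $\mathcal{IR}(\mathcal R)$ is a non-square rectangle with its long axis along the min-max diagonal, $l$ could equally well be the \emph{perpendicular} diagonal through the center of that rectangle, with the min-max nodes arranged symmetrically about $l$ rather than on it. When $|V_M(\mathcal R)|$ is odd this is harmless (the unique central node still lies on $l$ and is fixed), which is exactly the ``unique min-max node on $l$'' branch of the paper's proof. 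But when $|V_M(\mathcal R)|$ is even, $m_1$ and $m_2$ are exchanged by $\phi$ rather than fixed, so they have identical configuration views and the Weber/nearest-robot/least-view tie-break you invoke cannot select one of them; your claim that ``the paired motion toward a target is well-defined'' then needs a separate argument (or this case must be ruled out geometrically). The paper's own proof only distinguishes ``one'' versus ``more than one'' min-max node on $l$ and is silent on the zero case, so this is a gap you have inherited and made explicit rather than closed; as written, your deduction that the min-max diagonal equals $l$ is the one link in the chain that would fail.
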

\begin{proof}
    Note that, in this case, the configuration can admit only a diagonal line of symmetry. Let $l$ be the line of symmetry. If there exists a unique min-max node on $l$, then note that the min-max node must be the central node of the path induced by the set of min-max nodes. The rest of the proof follows from Lemma \ref{disconnected1}. Otherwise, if there is more than one min-max node on $l$, then there must be either one or more than one central node of the subgraph induced by the set of the min-max nodes. The target is selected as the central node(s) of the subgraph induced by the set of min-max nodes. In case there is a unique central node, then the proof follows similarly to Lemma \ref{disconnected1}. Otherwise, the proof follows similarly to Lemma \ref{disconnected2}.
\end{proof}
\begin{lemma} \label{lemma35}
   Suppose $C(0) \in \mathcal I_3$ and the $G_M(\mathcal R)$ is a four-cycle. Then, the algorithm ensures that during the movement of the robots towards the target, no more target min-max nodes are created during the execution of the algorithm. 
\end{lemma}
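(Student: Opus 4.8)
The plan is to reduce this statement to the case analysis already carried out in Lemma~\ref{fourcycle}, after first pinning down the geometry of the symmetry. Since $C(0)\in\mathcal I_3$ and $G_M(\mathcal R)$ is a four-cycle, Corollary~\ref{corollary2} forces the single line of symmetry $l$ to be diagonal, because a vertical or horizontal reflection (and likewise a rotation about a node) would make the configuration partitive and therefore ungatherable. A diagonal $l$ passes through exactly two of the four min-max nodes; call them $m_1$ and $m_2$, while the remaining two nodes are reflective images of one another under $l$. First I would observe that only $m_1$ and $m_2$ are eligible as the target: by Observation~\ref{observation6} the nodes on $l$ can be uniquely ordered, whereas the off-axis pair is indistinguishable to any algorithm. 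Thus the candidate set for the target is exactly $\{m_1,m_2\}$.

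With the candidate set fixed, I would mirror the three-way split of Lemma~\ref{fourcycle} (equivalently Lemma~\ref{disconnected2}), relying on the invariance of a Weber node under a robot's movement toward it, as established in \cite{DBLP:journals/dc/StefanoN17}. If exactly one of $m_1,m_2$ is a Weber min-max node, it is the target, and since a Weber node remains a Weber node (and spawns no new Weber nodes) as robots approach it, the target persists. If both are Weber min-max nodes but one, say $m_1$, is strictly closer to its nearest robot, then that robot's advance only shrinks $d_1$ while preserving $d_1<d_2$, so $m_1$ stays selected. In the fully tied subcase $d_1=d_2$, the target is fixed by the minimum-view rule along $l$; here I would argue that even if the distances later separate and the view-minimizer switches, the target is always re-chosen from the two Weber min-max nodes, so it never leaves $\{m_1,m_2\}$.

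The step I expect to be the main obstacle is controlling the asynchronous scheduler: a pending move can break the reflective symmetry partway through the Creating Multiplicity phase, so I must verify that the invariant ``the target is a min-max node lying in $\{m_1,m_2\}$'' survives both the symmetric and the transiently asymmetric intermediate configurations. The key device is that the target is always anchored to a \emph{Weber} min-max node, and Weber-invariance holds irrespective of symmetry; hence even when one equivalent robot has moved while its partner has not yet moved, the central pair $\{m_1,m_2\}$ of the four-cycle remains the locus from which the target is drawn, and no node outside this pair can become a target. Once this invariant is checked across all three cases, the conclusion that no additional target min-max nodes are created follows immediately.
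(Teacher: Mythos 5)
Your proposal matches the paper's own argument: the paper likewise observes that exactly two of the four min-max nodes lie on the (necessarily diagonal) line of symmetry $l$, restricts the target to that pair, and then defers to the three-way Weber/closest-robot/minimum-view case analysis of Lemma~\ref{fourcycle}. Your additional remarks on why $l$ must be diagonal and on the target surviving a symmetry-breaking pending move are just explicit versions of what the paper leaves implicit, so the approach is essentially identical.
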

\begin{proof}
    Note that in this case, there exist exactly two min-max nodes that lie on $l$. The target is selected as one of the min-max nodes on $l$. The rest of the proof follows similarly from Lemma \ref{fourcycle}.
\end{proof}

\begin{lemma} \label{lemma37}
    If $C(0) \in \mathcal I_3$, then in the creating multiplicity phase, a multiplicity is created at the target min-max node. 
\end{lemma}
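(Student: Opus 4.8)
The plan is to first pin down the target and then show that symmetric movement forces robots to accumulate there. By Lemma~\ref{lemma34} and Lemma~\ref{lemma35} (together with the analogous step-graph argument), the target min-max node stays among the designated candidate(s) lying on the line of symmetry $l$ throughout the creating multiplicity phase; I would open the proof by invoking these to fix a persistent target node $m \in l$. The remaining task is to exhibit an instant $t>0$ at which two robots occupy $m$, and I would organise this by the number of robots lying strictly inside the intersecting rectangle $\mathcal{IR}(\mathcal R)$.

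When there are at least two interior robots, the interior robot nearest $m$ and its symmetric image (ties broken by minimum view, unique by Observation~\ref{observation6}) are sent towards $m$. Since $m$ sits on $l$, these two mirror trajectories have equal length and converge on $m$, so their arrival creates a multiplicity. When exactly one interior robot exists it must lie on $l$; it first advances along $l$ towards $m$, after which a movable boundary pair of minimum view --- guaranteed by $n\ge 9$ --- enters $\mathcal{IR}(\mathcal R)$, reducing to the previous case. When no interior robot exists, the bound $n\ge 9$ again furnishes a movable boundary pair whose entry into $\mathcal{IR}(\mathcal R)$ reduces to one of the earlier cases. The four-cycle and step-graph sub-cases are covered verbatim, since their creating multiplicity phase is defined identically.

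The genuinely delicate point, which I expect to be the main obstacle, is the asynchronous scheduler: a symmetric pair need not move simultaneously, so a pending move may transiently break the reflective symmetry. I would dispose of this by a dichotomy. If every observed snapshot remains symmetric, the two mirror robots reach $m$ and form the multiplicity directly. If instead a pending move breaks the symmetry, the configuration turns asymmetric while, by Lemma~\ref{lemma34} and Lemma~\ref{lemma35}, $m$ is still a min-max node; the execution then enters the asymmetric regime and a multiplicity is created at the target exactly as in Lemma~\ref{correctnessI2}. The crux is verifying that this hand-off is consistent, namely that the asymmetric tie-breaking selects the same node $m$ rather than migrating the multiplicity to its mirror candidate; this holds because the closest-robot distance used to choose $m$ is strictly decreased by the very move that broke the symmetry, so $m$ remains the preferred target. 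Finally, since $\mathcal{IR}(\mathcal R)$ only contracts as robots advance and acquires no new min-max nodes, the phase terminates with a unique multiplicity at $m$ in finite time.
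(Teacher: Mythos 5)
Your proposal follows essentially the same route as the paper's proof: both reduce the claim to the target-invariance results (Lemma~\ref{lemma34}, Lemma~\ref{lemma35} and the step-graph analogue) and then argue that the movable robots directed at the now-fixed target $m$ must eventually coincide there. Your added discussion of the asynchronous pending-move issue and the hand-off to the asymmetric regime of Lemma~\ref{correctnessI2} is a more careful elaboration of a point the paper's own (very terse) proof leaves implicit, but it does not change the underlying argument.
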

    \begin{proof}
     Depending on whether $G_M(\mathcal R)$ is a disconnected step-graph, step-graph or a four-cycle, the following cases are to be considered.

    \paragraph{Case 1} $G_M(\mathcal R)$ is a disconnected step-graph. According to Lemma \ref{lemma33}, the algorithm ensures that no more target min-max nodes are created during the execution of the algorithm. As a result, the movable robots either on the interior or on the boundary of $\mathcal{IR}(\mathcal R)$ move towards the target $m$ and thus a multiplicity can be created at $m$ at some time instant $t>0$. The movable robots move towards $m$ only when it finds that it is the closest robot to $m$ and having the minimum views in case of ties.

    \paragraph{Case 2} $G_M(\mathcal R)$ is a step-graph. According to Lemma \ref{lemma34}, the algorithm ensures that no more target min-max nodes are created during the execution of the algorithm. The proof follows similarly from the previous case.

    \paragraph{Case 3} $G_M(\mathcal R)$ is a four-cycle. According to Lemma \ref{lemma35}, the algorithm ensures that no more target min-max nodes are created during the execution of the algorithm. The proof follows similarly from the previous case.
\end{proof}
\begin{lemma} \label{lemma38}
    If $C(0) \in \mathcal I_3$, then in the finalisation phase, the gathering is finalised at one of the target min-max nodes.
\end{lemma}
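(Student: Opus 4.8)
The plan is to build directly on Lemma~\ref{lemma37}, which guarantees that once the creating multiplicity phase ends, a multiplicity occupies the target min-max node $m$, and that $m$ lies on the single line of symmetry $l$. Since the robots have global-weak multiplicity detection and, as I will argue, no second multiplicity is ever produced during finalisation, the node $m$ stays the unique multiplicity node; hence every robot can recognise $m$ as the gathering target throughout the phase. The whole argument then reduces to showing that the remaining robots can be funnelled into $m$ without ever creating a spurious multiplicity at any other node.

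First I would treat the genuinely symmetric case, i.e.\ when the configuration still admits $l$ after the multiplicity forms at $m$. The idea is to move robots in equivalent pairs $(r,\phi(r))$, exhausting all interior robots before touching any boundary robot, and within each group activating only the pair closest to $m$, ties broken by minimum view (well defined on $l$ by Observation~\ref{observation6}). The key structural fact is that $r$ and $\phi(r)$ lie in the two half-planes bounded by $l$, so any node off $l$ can receive at most one member of a pair, and the only node on $l$ toward which a pair is routed is $m$ itself. To make this precise I would use the alignment step of the algorithm: the pair is first brought onto the grid line through $m$, so that the two robots face $m$ from opposite directions while remaining equidistant from it, and only then advanced one step at a time toward $m$. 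Two robots that stay symmetric about $m$ on a common line coincide only at $m$, so no intermediate node ever holds two robots and the pair merges into the multiplicity at $m$. A robot lying on $l$ is self-equivalent and simply walks along $l$ to $m$, again creating no off-$m$ multiplicity.

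Second, I would account for the asynchronous scheduler, under which a pending move may break the symmetry and turn the configuration asymmetric while the multiplicity at $m$ persists. In that event the robots become orderable by Observation~\ref{observation5}, and I would invoke exactly the asymmetric finalisation argument of Lemma~\ref{lemma33}: the non-$m$ robots are activated one at a time in order of increasing distance to $m$ (ties by minimum view), so a unique robot moves at each instant and no new multiplicity can arise. Because $m$ remains the unique multiplicity node in either regime, the identification of the target is never lost, and every robot eventually reaches $m$.

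The main obstacle I anticipate is precisely the interplay between the symmetric and asymmetric regimes together with the risk that the two equivalent robots meet on $l$ at a node other than $m$. The alignment-then-approach routing is what rules out the latter, and Observations~\ref{observation5} and~\ref{observation6} are what keep the activation rule deterministic across both regimes. The bulk of the care in a full proof therefore goes into verifying that, at the instant of any pending move, a partially advanced pair cannot deposit a robot onto an already-occupied off-$m$ node; this follows from the half-plane separation maintained by the routing, and once it is established the gathering at $m$ is immediate.
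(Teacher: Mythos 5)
Your proposal follows essentially the same route as the paper's proof: it builds on Lemma~\ref{lemma37}, moves the non-$m$ robots as equivalent pairs (interior first, closest first, ties by view), uses the half-plane separation of $r$ and $\phi(r)$ plus the align-then-approach routing along the grid line through $m$ to rule out spurious multiplicities, and concludes via global-weak multiplicity detection. Your explicit treatment of the symmetry-breaking pending-move case (falling back to the ordering argument of Lemma~\ref{lemma33} via Observation~\ref{observation5}) is a small but welcome addition that the paper leaves implicit.
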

\begin{proof}
  According to Lemma \ref{lemma37}, a multiplicity is created at one of the target min-max nodes. We have to prove that no more multiplicities are created in the finalisation phase. All the other robots that are not on the target $m$, move towards $m$. First, the interior robots that are not on $m$ move towards $m$. The robots that are closest to $m$ and minimum views in case of ties move towards $m$. As a result, at each instant of time, only two robots are allowed to move towards $m$. As the two robots lie on different half-planes, no other multiplicity would be created at any node other than $m$ while they move towards $m$. While moving towards $m$, the equivalent robots first aligned along the same grid-line containing $m$ and then start moving towards $m$. As a result, the other robots can move towards $m$ without creating any other multiplicity other than $m$. Since the robots have global-weak multiplicity detection capability, all the robots can identify $m$ as the unique multiplicity node, and the gathering can be finalised at $m$. 
\end{proof}
\begin{lemma} \label{rotational}
    If $C(0) \in \mathcal I_4$, the target min-max node remains invariant during the execution of the algorithm. 
\end{lemma}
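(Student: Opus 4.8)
The plan is to route the whole argument through the robustness of \emph{Weber nodes} under moves directed toward them, exactly as recorded in the Creating Multiplicity phase for $\mathcal I_4$. Concretely, I would first reduce the admissible structure, then show that the selected target $m$ coincides with the center of rotation and is the unique Weber node, and finally invoke the invariance of Weber nodes to conclude that $m$ can never be displaced as the target, no matter how the asynchronous adversary schedules the moves.

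First I would fix the structure. Since $C(0)\in\mathcal I_4$ is gatherable (hence not in $\mathcal U$) and admits a rotational symmetry with $|V_M(\mathcal R)|\ge 2$, Lemmas~\ref{impossible3} and~\ref{lemma 24} together with Corollaries~\ref{corollary2} and~\ref{corollary4} rule out $G_M(\mathcal R)$ being a four-cycle or a step-graph, and Lemma~\ref{impossible4} with Corollary~\ref{corollary3} rules out a disconnected step-graph with an even number of nodes (each of these forces a partitive automorphism, hence ungatherability by Theorem~\ref{thm: impossiblePartitive}). Thus $G_M(\mathcal R)$ is a disconnected step-graph with an odd number $p\ge 3$ of nodes. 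Being a single diagonal chain, it cannot be preserved by a $90^{\circ}$ rotation (which swaps the two grid-diagonals), so the rotation has order two; as $\phi_{rot}$ permutes $V_M(\mathcal R)$ and reverses the chain, its unique fixed min-max node is the central node $m$, which therefore equals the center of rotation $c$.

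Next I would identify $m=c$ as the unique Weber node. Pairing the robots into $180^{\circ}$-orbits $\{r,\phi_{rot}(r)\}$ with $\phi_{rot}(r)=2c-r$, the triangle inequality yields $d_m(r,u)+d_m(\phi_{rot}(r),u)\ge d_m(r,\phi_{rot}(r))=2\,d_m(r,c)$ for every node $u$, with equality precisely when $u$ lies in the $L_1$ bounding box of $r$ and $\phi_{rot}(r)$; summing over orbits shows that $c$ minimizes the centrality, so $c$ is a Weber node and the whole Weber set equals the intersection of these boxes, namely a grid-rectangle centered at $c$ with half-widths $\min_r|r_x-c_x|$ and $\min_r|r_y-c_y|$ measured along the two grid lines through $c$. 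Collapsing this rectangle to $\{c\}$ amounts to producing a robot on each of $L_H(c)$ and $L_V(c)$, which I would try to derive from the diamond characterization together with Observation~\ref{obs:1} (every side of $\mathcal{IR}(\mathcal R)$ carries a robot, constraining where the farthest robots of $\mathcal{M}_c$ can sit). This uniqueness step is the main obstacle: it is exactly where the global min-max/diamond geometry must be converted into the coordinate-wise median condition characterizing $L_1$-Weber points, and it is the step most likely to demand the full structural hypotheses (odd chain length, $n\ge 9$) and a careful case analysis; small configurations suggest the center is not automatically the sole minimizer without these constraints.

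Finally, the invariance is immediate once uniqueness is in hand. The algorithm only ever moves robots along shortest paths toward $m$, and by the results of~\cite{DBLP:journals/dc/StefanoN17} a Weber node remains a Weber node under such a move while the Weber set can only shrink. Hence at every intermediate configuration the Weber set stays equal to $\{m\}$, so at each activation every oblivious robot recomputes the same unique Weber node $m$ and keeps it as the target. Crucially, this conclusion does not rely on the rotational symmetry surviving the move order, which is why it remains sound even when the asynchronous scheduler produces a transiently asymmetric configuration. Therefore the target min-max node $m$ remains invariant throughout the execution.
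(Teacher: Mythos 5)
Your proposal follows the same overall route as the paper's proof: identify the target with the centre of rotation $c$, argue that $c$ is the unique Weber node, and then invoke the invariance of Weber nodes under movement towards them (from \cite{DBLP:journals/dc/StefanoN17}) to conclude. Your structural preamble is in fact more careful than the paper's: the reduction via Lemmas~\ref{impossible3}, \ref{impossible4} and~\ref{lemma 24} to a disconnected step-graph with an odd number of min-max nodes, the observation that the rotation must have order two and fixes only the central node, and the pairing argument $d_m(r,u)+d_m(\phi_{rot}(r),u)\ge 2\,d_m(r,c)$ showing that $c$ lies in the Weber set are all sound and are only implicit (or simply asserted) in the paper.

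The genuine gap is exactly the one you flag yourself: uniqueness of the Weber node. As you correctly note, for Manhattan distance the Weber set is the product of the two coordinate-wise median intervals, and $180^{\circ}$ symmetry only places $c$ inside those intervals; collapsing them to $\{c\}$ requires a robot on each of $L_H(c)$ and $L_V(c)$ (or an odd $n$, which forces a robot at $c$ itself), and Observation~\ref{obs:1} only guarantees robots on the diagonal boundaries of $\mathcal{IR}(\mathcal R)$, not on those two grid lines. Without uniqueness the final step does not close: if the Weber set is a nondegenerate rectangle, a move towards $m$ keeps $m$ in the Weber set but need not eliminate the other Weber nodes, and the target-selection rules applied after the configuration drifts out of $\mathcal I_4$ (which break ties among Weber min-max nodes by closest-robot distances and views) could then select a different node. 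A related point your last paragraph glosses over is that being the unique Weber node is not by itself the selection criterion once the configuration is reclassified; one still has to check that $m$ remains a min-max node and remains the central node of the (shrinking) step-graph, which is what Lemma~\ref{disconnected1} is doing for the asymmetric case. To be fair to you, the paper's own proof of this lemma is a three-line assertion that commits both of these omissions without acknowledging them; you have essentially reproduced its argument and correctly located its weakest link, but the proof as written is not complete.
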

\begin{proof}
If $C(0) \in \mathcal I_4$, the target min-max node is selected as the min-max node $m$ on the center of rotational symmetry $c$. Due to the fact that $c$ is also the unique Weber node of the configuration, and a Weber node remains invariant during the robots' movement towards it, the node $c$ remains the target min-max node of the configuration. Thus, the proof follows.
\end{proof}
\begin{lemma} \label{rotational1}
  If $C(0) \in \mathcal I_4$, then in the finalisation phase, the robots can gather at the target min-max node.   
\end{lemma}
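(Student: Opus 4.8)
The plan is to build on the two facts already secured for the rotationally symmetric case. By Lemma~\ref{rotational} the target $m$, which coincides with the center of rotation $c$ and is the \emph{unique} Weber node of the configuration, stays fixed throughout the entire execution, and the creating-multiplicity phase has already deposited a multiplicity at $m$. What remains is to argue that, starting from a configuration carrying a multiplicity at the unique Weber node $m$, letting every robot not residing at $m$ walk towards $m$ terminates with all robots gathered at $m$, despite the fact that the symmetry can force equivalent robots to move in lockstep.

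First I would invoke the invariance property of Weber nodes recalled earlier in the paper: whenever a robot moves along a shortest path towards a Weber node, that node remains a Weber node, and the set of Weber nodes never grows under such movements. Since $m$ is the \emph{unique} Weber node at the start of the finalisation phase, these two facts together guarantee that $m$ remains the unique Weber node after any sequence of such moves. Consequently, at every intermediate configuration each robot, having full visibility and being able to compute the centrality $c_t(\cdot)$, can recompute and unambiguously identify $m$ as the single node minimizing the sum of distances to all robots, and hence as the gathering target.

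The delicate point, and the main obstacle, is that the configuration remains symmetric, so the usual device of breaking symmetry and sequencing the moves via an ordering (Observation~\ref{observation5}) is unavailable; equivalent robots are indistinguishable and may be activated simultaneously, which can create transient multiplicities at grid nodes other than $m$. I would neutralise this as follows: although the robots possess global-weak multiplicity detection and therefore see all such multiplicities, the uniqueness of the Weber node $m$ lets them discard every spurious symmetric multiplicity and keep targeting $m$, so no robot ever mistakes a symmetric multiplicity for the gathering point. Thus each active robot off $m$ always moves to an adjacent node lying on a shortest path to $m$, strictly decreasing its own Manhattan distance to $m$, which serves as a nonnegative integer potential that drops with each genuine move.

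Finally I would close the argument by fairness of the asynchronous scheduler: every robot is activated within finite time, so each robot off $m$ performs a distance-reducing move, and since each such potential is bounded, after finitely many moves every robot reaches $m$. At that moment all robots occupy $m$, no robot off $m$ remains, and the configuration is final. Hence the gathering is finalised at the target min-max node $m$, as claimed.
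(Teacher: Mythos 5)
Your proof is correct and follows essentially the same route as the paper's: both rest on Lemma~\ref{rotational} (invariance of the target $m$ as the unique Weber node), both dismiss spurious multiplicities created by symmetric, lockstep moves by appealing to the fact that $m$ remains the unique Weber node and hence remains identifiable as the target. You add an explicit termination argument (a decreasing Manhattan-distance potential plus scheduler fairness) that the paper leaves implicit, which only strengthens the write-up.
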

\begin{proof}
    According to Lemma \ref{rotational}, the target min-max node $m$ remains invariant during the execution of the algorithm. As a result, all the movable robots can move either synchronously or there may be a possible pending move due to the asynchronous behaviour of the scheduler towards $m$. Thus, a multiplicity is created at $m$. All the other robots that are not on $m$, can move towards $m$ and thus the gathering is finalised at $m$. Hence, the proof follows.
\end{proof}
\begin{lemma} \label{lemma41}
    Given $C(0) \in \mathcal {I \setminus U}$, where  $\mathcal I$ denote the set of all initial configurations and $\mathcal U$ denote the set of all ungatherable configurations. Let $t>0$ be an arbitrary instant of time at which at least one robot has completed its LCM cycle. During the execution of the algorithm Gathering (), $C(t) \notin \mathcal U$.
\end{lemma}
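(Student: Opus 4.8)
The plan is to reduce the statement to a single invariant: no configuration reached during the execution admits a \emph{partitive} automorphism. Indeed, by Corollaries~\ref{corollary2}, \ref{corollary3} and~\ref{corollary4}, every member of $\mathcal U$ is shown to be ungatherable precisely because it is partitive, so $\mathcal U$ is contained in the set of partitive configurations, and by Theorem~\ref{thm: impossiblePartitive} it suffices to prove that $C(t)$ is never partitive. Recalling that a partitive automorphism has all its orbits of equal size strictly greater than one, such an automorphism fixes \emph{no} grid node; equivalently, its symmetry element is a reflection axis through edge--midpoints or a rotation centre located at an edge--midpoint or at the centre of a unit square. Hence it is enough to show that, at every instant $t$ at which a robot has just completed an LCM cycle, either $C(t)$ is asymmetric, or every non-trivial automorphism of $C(t)$ fixes some grid node. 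I would prove this by induction on the (discrete) sequence of such completion times, the base case being the hypothesis $C(0)\in\mathcal I\setminus\mathcal U$.

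For the inductive step I would split along the class of $C(0)$ and use the invariance results already established. The two extreme classes are immediate. If $C(0)\in\mathcal I_1$, then by Lemma~\ref{lemma25} the unique min--max node $m$ stays the unique min--max node throughout the creating--multiplicity phase; since any automorphism of $C(t)$ permutes $V_M(\mathcal R)$ and $V_M(\mathcal R)=\{m\}$, every automorphism fixes the grid node $m$, so $C(t)$ is non--partitive. If $C(0)\in\mathcal I_4$, then by Lemma~\ref{rotational} the target $c$ remains the \emph{unique} Weber node and the centre of rotation; as automorphisms preserve the Weber--node set, every automorphism fixes the grid node $c$, and once the four symmetric robots reach $c$ the resulting multiplicity is the unique distinguished grid node. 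In both cases a grid node is fixed, so the inductive invariant is maintained.

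For $C(0)\in\mathcal I_2$ the configuration is asymmetric, so by Observation~\ref{observation5} the robots share a coordinate system and at each step a uniquely determined robot moves to a uniquely determined node; Lemmas~\ref{disconnected1}--\ref{step2} guarantee that the canonically chosen target min--max node $m$ (the central node, or the Weber min--max node selected by smallest closest--robot distance and then smallest view) stays invariant, and that no new target min--max node is created. Consequently, should a transient symmetry $\phi$ ever appear, $\phi$ must preserve all the quantities defining $m$ (being a min--max node, being a Weber min--max node, the distance to its closest robot, and the view), whence $\phi(m)=m$ and $\phi$ fixes the grid node $m$; thus $C(t)$ is again non--partitive. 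The case $C(0)\in\mathcal I_3$ is analogous: here $C(t)$ carries a single reflection axis $l$ with the target $m$ lying on $l$, the symmetric pairs move toward $m$, and by Lemmas~\ref{lemma34} and~\ref{lemma35} the target remains the invariant distinguished node on $l$. Either a pending move destroys the reflection and leaves an asymmetric (hence non--partitive) configuration, or the reflection about $l$ survives and fixes the grid node $m\in l$.

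The delicate point --- and the step I expect to be the main obstacle --- is ruling out that the asynchronous adversary, by freezing one robot of a symmetric pair (in $\mathcal I_3$) or by a single move (in $\mathcal I_2$), manufactures a genuinely \emph{new} partitive symmetry of a different type, for instance turning a reflective configuration into a rotation about an edge--midpoint, or an asymmetric configuration into a reflection through edge--midpoints. I would close this gap uniformly through the fixed--point argument above: any automorphism of $C(t)$, of whatever type, must respect the canonical target--selection rule, which the invariance Lemmas~\ref{disconnected1}--\ref{step2}, \ref{lemma34} and~\ref{lemma35} keep well defined and unchanged; therefore it fixes the grid node $m$ and cannot be partitive. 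Moreover, as soon as a unique multiplicity is created at $m$ (Lemmas~\ref{correctnessI2}, \ref{lemma37} and~\ref{rotational1}), that multiplicity is itself a grid node fixed by every automorphism, which settles the finalisation phase. Combining the four classes completes the induction and yields $C(t)\notin\mathcal U$ for all $t$.
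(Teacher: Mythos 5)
Your overall strategy is genuinely different from the paper's. The paper's proof of Lemma~\ref{lemma41} is a direct class-invariance argument: for each $\mathcal I_i$ it asserts that the algorithm preserves the defining property of the class (uniqueness of the min-max node for $\mathcal I_1$, asymmetry for $\mathcal I_2$, the same single diagonal axis for $\mathcal I_3$, the rotational structure for $\mathcal I_4$), and concludes $C(t)\notin\mathcal U$ directly from Corollaries~\ref{corollary2}--\ref{corollary4}. You instead reduce everything to a single invariant --- no reachable configuration admits a partitive automorphism --- and try to certify it by exhibiting, in every case, a grid node fixed by every automorphism. That reduction is legitimate (every configuration placed in $\mathcal U$ by the corollaries is there because it is partitive, and Theorem~\ref{thm: impossiblePartitive} is the only impossibility tool used), and as a framework it is arguably cleaner and more uniform than the paper's case-by-case assertion.

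However, there is a concrete gap in the step that carries all the weight: the claim that any transient automorphism $\phi$ of $C(t)$ ``must respect the canonical target-selection rule'' and hence satisfies $\phi(m)=m$. Of the four quantities you list, the first three (membership in $V_M(\mathcal R)$, being a Weber min-max node, distance to the closest robot) are genuinely automorphism-invariant, but the fourth --- the configuration view --- is not invariant in the sense you need: under a symmetry $\phi$, the nodes $m$ and $\phi(m)$ have \emph{equal} views, so the minimum-view tie-break selects an orbit, not a node. Consequently, in exactly the case where the target was chosen by the view tie-break (two central nodes $m_1,m_2$, both Weber min-max nodes, equal closest-robot distances --- Case~3 of Lemma~\ref{disconnected2} and of Lemma~\ref{fourcycle}), an automorphism swapping $m_1$ and $m_2$ fixes neither, and your fixed-node certificate of non-partitivity evaporates. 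This is precisely the ``delicate point'' you flag, and the resolution you offer is circular: you invoke the fixed-point argument to rule out the new symmetry, but the fixed-point argument presupposes that no such symmetry exists. Closing the gap requires an independent argument --- for instance, that a single robot's move toward $m_1$ from an asymmetric configuration strictly breaks the $d_1=d_2$ tie in $m_1$'s favour before any symmetry can arise, or a geometric argument that a symmetry exchanging the two central nodes of a (disconnected) step-graph is incompatible with Observation~\ref{obs:1}. (To be fair, the paper's own proof asserts the preservation of asymmetry in Case~2 without argument either, so you are not worse off than the source; but as a self-contained proof your version still has this hole.) Your treatment of $\mathcal I_1$ and $\mathcal I_4$, by contrast, is sound: there the fixed node ($m$, respectively the unique Weber node $c$) is pinned down by automorphism-invariant data alone, matching what Lemma~\ref{lemma25} and Lemma~\ref{rotational} provide.
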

\begin{proof}
    The ungatherable configurations $\mathcal U$ are listed in the Corollaries \ref{corollary2}, \ref{corollary3} and \ref{corollary4}. The following cases are to be considered.
    
    \paragraph{Case 1} Assume the case when $C(0) \in \mathcal I_1$. The algorithm ensures that there exists a unique min-max node at each time $t>0$. Thus, $C(t) \in \mathcal I_1$ and hence, $C(t) \notin \mathcal U$. 
    
    \paragraph{Case 2} Assume the case, when $C(0) \in \mathcal I_2$. The algorithm ensures that the configuration remains asymmetric at any time $t>0$. Thus, $C(t) \in \mathcal I_2$ and hence $C(t) \notin \mathcal U $.

    \paragraph{Case 3} Assume the case, when $C(0) \in \mathcal I_3$. This implies that the configuration can admit only a diagonal line of symmetry. The algorithm ensures that while the robots move towards the target, the configuration either transforms into an asymmetric configuration or it may remain symmetric with respect to the same line of symmetry that was present in the initial configuration. In other words, the line of symmetry remains invariant during the execution of the algorithm. Thus, the configuration $C(t)$ at any time $t>0$, can not admit vertical or horizontal lines of symmetry. In fact, $C(t)$ can not admit a diagonal line of symmetry passing through the fixed lines of the intersecting rectangle if $C(0)$ does not admit such a kind of symmetry. Thus, $C(t) \notin \mathcal U$.

    \paragraph{Case 4} Assume the case, when $C(0) \in \mathcal I_4$. This implies that the configuration can be such that $V_M(\mathcal R)$ is a disconnected step-graph with an odd number of min-max nodes. The algorithm ensures that while the robots move towards the target, the configuration either transforms into an asymmetric configuration or it may remain symmetric with respect to rotational symmetry. If it transforms into an asymmetric configuration, then $C(t) \in \mathcal U$, for any $t >0$. Otherwise, while the interior robots move towards the target, $V_M(\mathcal R)$ remains a disconnected step-graph with an odd number of min-max nodes. Thus, while the robots move towards the target, the path of min-max nodes gradually decreases in length, and the number of min-max nodes becomes one. Hence, $C(t) \in \mathcal U$, for any $t >0$.
\end{proof}

\begin{theorem}
    If the initial configuration belongs to the set $\mathcal{I} \setminus \mathcal{U}$, then algorithm Gathering() ensures gathering over min-max nodes.
\end{theorem}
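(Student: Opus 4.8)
The plan is to prove the theorem by a case analysis that follows the four-way partition of the initial configurations into $\mathcal I_1,\mathcal I_2,\mathcal I_3,\mathcal I_4$ introduced just before the algorithm. These classes are exhaustive and pairwise disjoint, and since $\mathcal U$ is completely characterized by Corollaries~\ref{corollary2}, \ref{corollary3} and \ref{corollary4}, every $C(0)\in\mathcal I\setminus\mathcal U$ lands in exactly one class. Moreover, after deleting the ungatherable configurations the surviving symmetric cases are sharply restricted: for $\mathcal I_3$ only a diagonal axis can survive, by Lemmas~\ref{lemma 18}, \ref{lemma 20} and \ref{lemma 22}, and for $\mathcal I_4$ only a disconnected step-graph with an odd number of min-max nodes survives, by Lemmas~\ref{impossible3} and \ref{impossible4}. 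First I would record this reduction so that in each class the target-selection rule described in the algorithm is applicable.

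For each class I would then stitch together the three-phase correctness chain already established in the preceding lemmas: target invariance (the selected min-max node is well defined and either stays fixed or stays confined to the designated candidate set throughout the Creating Multiplicity phase), multiplicity creation (a multiplicity is eventually formed at the target), and finalization (all remaining robots reach the target with no permanent spurious multiplicity). Concretely, for $\mathcal I_1$ I invoke Lemma~\ref{lemma25}, Lemma~\ref{Lemma26} and Lemma~\ref{lemma27}; for $\mathcal I_2$ the invariance step splits on the shape of $G_M(\mathcal R)$ and is covered by Lemmas~\ref{disconnected1}, \ref{disconnected2}, \ref{step1}, \ref{step2} and \ref{fourcycle}, with creation from Lemma~\ref{correctnessI2} and finalization from Lemma~\ref{lemma33}; for $\mathcal I_3$ I use Lemmas~\ref{lemma34} and \ref{lemma35} for invariance, Lemma~\ref{lemma37} for creation and Lemma~\ref{lemma38} for finalization; and for $\mathcal I_4$ the target is the unique Weber node at the center of rotation, so Lemmas~\ref{rotational} and \ref{rotational1} deliver invariance and gathering directly.

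The delicate ingredient, which I would make explicit in every case, is that the robots are oblivious and the scheduler is asynchronous, so at each LCM cycle the acting robot re-derives the target from its current snapshot rather than recalling a previous choice. Correctness thus hinges on the target-selection rule being evaluated identically across recomputations and on the configuration never degrading into an ungatherable state under pending moves. This is precisely the content of Lemma~\ref{lemma41}, which guarantees $C(t)\in\mathcal I\setminus\mathcal U$ for all $t>0$; I would feed it into each case to rule out the possibility that an asynchronous pending move induces a symmetric ungatherable configuration or spawns a fresh min-max node that diverts the target. Combined with the monotone shrinking of the min-max path in the step-graph and disconnected-step-graph cases, and with the invariance of the Weber node in the rotational case, this certifies finite-time termination at a single min-max node.

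I expect the main obstacle to be the uniform asynchronous-consistency argument, not the individual phase lemmas. Because the chosen target can genuinely migrate between the two candidate central nodes $m_1,m_2$ (or among the four-cycle nodes) while a robot is in transit, I must verify that the target never escapes the designated candidate set and that whichever candidate first acquires a robot then remains fixed for the rest of the execution. The Weber-min-max criterion together with the distance-to-nearest-robot and minimum-view tie-breaks is what suppresses oscillation, and the bulk of the careful work lies in checking that these deterministic rules yield the same decision for every robot at every cycle despite the absence of persistent memory and the transient asymmetries that the adversary can create.
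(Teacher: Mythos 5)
Your proposal is correct and follows essentially the same route as the paper's own proof: a four-way case analysis over $\mathcal I_1,\ldots,\mathcal I_4$, invoking Lemma~\ref{lemma41} to keep $C(t)\notin\mathcal U$ throughout and then chaining the per-class finalization lemmas (Lemmas~\ref{lemma27}, \ref{lemma33}, \ref{lemma38} and \ref{rotational1}). Your additional explicit citations of the target-invariance and multiplicity-creation lemmas, and your discussion of asynchronous target re-computation, only elaborate on dependencies the paper leaves implicit inside those finalization lemmas.
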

\begin{proof}
    Assume that $C(0) \in \mathcal{I} \setminus \mathcal{U}$. If $C(t)$ is not a final configuration for some $t \geq 0$, each active robot executes the algorithm Gathering(). According to the Lemmas \ref{lemma41}, any initial configuration $C(0) \in \mathcal{I} \setminus \mathcal{U}$, would never reach
a configuration $C(t) \in \mathcal U$, at any point of time $t > 0$ during the execution of the algorithm Gathering(). The following cases are to be considered.

\paragraph{Case 1 } $C(0) \in \mathcal I_1$. According to Lemma \ref{lemma27}, the gathering is finalised at the target min-max node.

\paragraph{Case 2} $C(0) \in \mathcal I_2$. According to Lemma \ref{lemma33}, the gathering is finalised at the target min-max node.

\paragraph{Case 3} $C(0) \in \mathcal I_3$. According to Lemma \ref{lemma38}, the gathering is finalised at the target min-max node.

\paragraph{Case 4} $C(0) \in\mathcal I_4$. According to Lemma \ref{rotational}, the gathering is finalised at the target min-max node.

\noindent Thus, for any initial configuration, $C(0) \in \mathcal{I} \setminus \mathcal{U}$, gathering is finalised at one of the min-max nodes and within a finite time by the execution of the algorithm Gathering().
\end{proof}
\section{Conclusion and Future Works} \label{sec:conclusion}
In this paper, we address the min-max gathering problem for oblivious robots on an infinite grid. The aim of the problem is to minimize the maximum distance that any robot must travel during the gathering process. In the continuous domain, it can be proved that the min-max point is unique. However, we have seen examples where there can be more than one min-max node if the robots are deployed at the nodes of an infinite grid. We have characterized the subgraph induced by the set of min-max nodes and observed that it may be either a step-graph, a disconnected step-graph or a four-cycle. All those initial configurations for which gathering under objective constraint cannot be ensured have also been characterized. For the remaining configurations, a deterministic distributed algorithm has been proposed to ensure that the robots are gathered under the objective constraint.

One of the future directions to work with can be to consider the problem in the presence of meeting nodes. In \cite{DBLP:journals/ijfcs/BhagatCDM23}, the min-sum gathering problem with meeting nodes was studied, where the goal was to gather the robots at one of several predefined fixed nodes on the grid, minimizing the total distance traveled by all robots. An extension of this could focus on the min-max gathering problem with meeting nodes, where the objective is to gather the robots at one of the meeting nodes while minimizing the maximum distance traveled by any individual robot. Another direction for future research would be to consider the problem without any multiplicity detection capability. As certain initial symmetric configurations prevent gathering at min-max nodes, exploring randomized algorithms for these configurations could be an interesting direction for future research.
\bibliographystyle{elsarticle-num}
\bibliography{main}
 % \bibliographystyle{elsarticle-num} 
 % \bibliography{ref.bib}

%% else use the following coding to input the bibitems directly in the
%% TeX file.

\end{document}